\documentclass[12pt]{article}
\parskip=\medskipamount

\DeclareFontEncoding{LS1}{}{}
\DeclareFontSubstitution{LS1}{stix}{m}{n}
\DeclareSymbolFont{stixsymbols}{LS1}{stixscr}{m}{n}
\SetSymbolFont{stixsymbols}{bold}{LS1}{stixscr}{b}{n}
\DeclareMathSymbol{\kay}{\mathalpha}{stixsymbols}{"6B}
\DeclareMathSymbol{\hay}{\mathalpha}{stixsymbols}{"68}

\usepackage{xcolor}

\usepackage{verbatim}
\usepackage{tensor}
 
\usepackage{amsmath,amssymb,calc, amsthm,bbm, epsfig,psfrag, mathtools}
\usepackage{latexsym,bm,amsfonts}

\newtheorem{thm}{Theorem}[section]

\usepackage{graphicx, enumerate}

 \usepackage[all,cmtip]{xy}
\usepackage[numbers,sort&compress]{natbib}
\usepackage[dvipsnames]{xcolor}
\usepackage{mathrsfs} 
\usepackage{graphicx,tikz,tikz-cd}
\usepackage{quiver}
\usetikzlibrary{shadows}
\usetikzlibrary{shapes.arrows}

\usepackage{braket}
\usepackage{physics}

\allowdisplaybreaks

\makeatletter
\newcommand*\bigcdot{\mathpalette\bigcdot@{.5}}
\newcommand*\bigcdot@[2]{\mathbin{\vcenter{\hbox{\scalebox{#2}{$\m@th#1\bullet$}}}}}
\makeatother
\newcommand{\deq}{\stackrel{\bigcdot}{=}}

\usepackage{bbold}

\newcommand{\Comment}[1]{{}}
\definecolor{darkblue}{rgb}{0.15,0.35,0.55}
\definecolor{reddish}{rgb}{0.65, 0.2, 0.2}
\usepackage[linktocpage=true]{hyperref}
\hypersetup{
colorlinks=true,
citecolor=darkblue,
linkcolor=reddish,
urlcolor=darkblue,
pdfauthor={},
pdftitle={},
pdfsubject={}
}

\makeatletter
\renewcommand\section{\@startsection {section}{1}{\z@}%
                                   {-3.5ex \@plus -1ex \@minus -.2ex}
                                   {2.3ex \@plus.2ex}%
                                   {\normalfont\large\bfseries}}
\renewcommand\subsection{\@startsection{subsection}{2}{\z@}%
                                     {-3.25ex\@plus -1ex \@minus -.2ex}%
                                     {1.5ex \@plus .2ex}%
                                     {\normalfont\bfseries}}
\makeatother

\newfont{\goth}{ygoth.tfm scaled 1200}                   

 \numberwithin{equation}{section}

\newcommand{\overbar}[1]{\mkern 1.5mu\overline{\mkern-1.5mu#1\mkern-1.5mu}\mkern 1.5mu}

\def\TT{{T\overbar{T}}}

\begin{document}
\begin{titlepage}
\begin{flushright}
\today
\end{flushright}
\vspace{5mm}


\begin{center}
{\Large \bf 
The classical Yangian symmetry of
\\
Auxiliary Field Sigma Models
}
\end{center}

\begin{center}

{\bf
Daniele Bielli${}^{a,b}$,
Christian Ferko${}^{c,d}$,
Michele Galli${}^{e}$,\\
Gabriele Tartaglino-Mazzucchelli${}^{e}$
} \\
\vspace{5mm}

\footnotesize{
${}^{a}$
{\it 
Asia Pacific Center for Theoretical Physics (APCTP), \\ 
Postech, Pohang 37673, 
Korea}
 \\~\\
${}^{b}$
{\it 
High Energy Physics Research Unit, Faculty of Science \\ 
Chulalongkorn University, Bangkok 10330, Thailand
}
 \\~\\
${}^{c}$
{\it 
Department of Physics, Northeastern University, Boston, MA 02115, USA
}
 \\~\\
${}^{d}$
{\it 
The NSF Institute for Artificial Intelligence
and Fundamental Interactions
}
  \\~\\
${}^{e}$
{\it 
School of Mathematics and Physics, University of Queensland,
\\
 St Lucia, Brisbane, Queensland 4072, Australia}
}
\vspace{6mm}
~\\
\texttt{d.bielli4@gmail.com,
c.ferko@northeastern.edu,
m.galli@uq.edu.au,
g.tartaglino-mazzucchelli@uq.edu.au
}\\
\vspace{2mm}

\end{center}

\begin{abstract}
\baselineskip=14pt

\noindent Integrable field theories exhibit infinitely many symmetries which underlie their solvability, but the structure of these symmetries can become obscured after performing an integrable deformation such as $\TT$ or an auxiliary field deformation. In this paper, we present a systematic organizing principle for understanding deformed charges and their Yangian structure in a broad class of integrable sigma models and their auxiliary field deformations. We generalize the recursive procedure of Brezin, Itzykson, Zinn-Justin, and Zuber (BIZZ) for generating non-local charges, and give sufficient conditions under which the resulting charges obey a Yangian algebra. We apply these results to many examples of integrable sigma models and their auxiliary field deformations, finding a Yangian algebra and Maillet bracket structure in all cases. This offers a unified explanation for the persistence of Hamiltonian integrability and Yangian symmetry across a wide landscape of deformed sigma models.
\end{abstract}
\vspace{5mm}

\vfill
\end{titlepage}

\newpage
\renewcommand{\thefootnote}{\arabic{footnote}}
\setcounter{footnote}{0}

\tableofcontents{}
\vspace{1cm}
\bigskip\hrule


\allowdisplaybreaks

\section{Introduction}
It is one of the mantras of modern physics that the more symmetry a theory possesses, the more tractable it is, since the conservation laws arising from such symmetry often provide simplifications and ultimately better control over the theory. The defining feature of integrable (quantum) field theories (IQFTs) is the presence of an infinite tower of conserved charges, which render such systems extremely powerful theoretical laboratories where it is sometimes possible to leverage the exceptional amount of symmetry in order to exactly compute quantities of interest. It is indeed often the case that observables computed with integrability methods are completely inaccessible otherwise, or perhaps can only be computed in some limit using perturbative techniques.

One of the beautiful algebraic structures that often emerges in IQFTs is the Yangian algebra \cite{Drinfeld:1985rx,Drinfeld:1986in,Drinfeld:1987sy} -- see also the comprehensive reviews \cite{Loebbert:2016cdm,MacKay:2004tc,Torrielli:2010kq,Torrielli:2011gg,Bernard:1992ya} -- an infinite-dimensional extension of a Lie algebra which nicely encodes the hidden symmetries of integrable systems and was introduced along the search for solutions to the quantum Yang-Baxter equation \cite{Yang:1967bm,Yang:1968rm,Baxter:1972hz}. In certain two-dimensional IQFTs, a Yangian symmetry arises from conserved non-local charges \cite{Luscher:1977uq,deVega:1984wk,deVega:1983ogx,Bernard:1990jw}, imposing strong constraints on physically relevant quantities, among which the two-particle S-matrix, that plays a fundamental role in these settings \cite{Zamolodchikov:1978xm,Dorey:1996gd,Bombardelli:2016scq}. In the context of the AdS/CFT correspondence, integrability has proven to be extremely powerful to probe the holographic duality between AdS$_5\times S^5$ and ${\cal N}=4$ SYM -- see for example \cite{Metsaev:1998it,Minahan:2002ve,Frolov:2003qc,Bena:2003wd,Dolan:2003uh,Dolan:2004ps,Dolan:2004ys,Kazakov:2004qf,Beisert:2005fw,Beisert:2005tm,Drummond:2008vq,Drummond:2009fd} and the reviews \cite{Arutyunov:2009ga,Beisert:2010jr,Bombardelli:2016rwb,Demulder:2023bux} for a more complete list of references on these topics. It played a crucial role in understanding the worldsheet scattering theory of strings in AdS$_5\times S^5$, where one can \textit{fully fix} the form of the bound states' S-matrix by requiring invariance under the Yangian of centrally extended $\mathfrak {su}(2|2)$ \cite{Beisert:2007ds,deLeeuw:2008dp, Arutyunov:2009mi}. In light of all this, it is highly desirable, whenever one is presented with an IQFT, to understand if and how a concrete realisation of the Yangian symmetry algebra can be determined.

In spite of an intrinsically quantum mechanical definition, Yangian symmetries have also turned out to play an interesting role in the realm of classical physics. Starting from a family of two-dimensional classically integrable sigma models known as Principal Chiral Models (PCM) it has indeed been shown \cite{MacKay:1992he} that the Poisson brackets of certain non-local charges generate, quite non-trivially, an algebra which can be interpreted as the classical counterpart of the Yangian. These non-local charges, first discovered by L{\"u}scher and Pohlmeyer in \cite{Luscher:1977rq} were then given a systematic derivation in terms of the Brezin, Itzykson, Zinn-Justin, and Zuber (BIZZ) construction \cite{BREZIN1979442} and were subsequently studied and related to the Yangian in various other examples of sigma models \cite{Abdalla:1993sc,Saltini:1995xr,Hatsuda:2004it,Kawaguchi:2010jg,Kawaguchi:2011mz,Kawaguchi:2012ve,Itsios:2014vfa,Klose:2016qfv}.

In this work we consider a generalisation of the BIZZ construction \cite{BREZIN1979442} and successively extend the argument of \cite{MacKay:1992he} by focussing on a certain class of current algebras and finding necessary conditions for this to give rise to a classical Yangian. The guiding principle in our approach is the search for a Yangian symmetry underlying infinite families of integrable irrelevant deformations of various classes of two-dimensional sigma models.  Such deformations include $T\overline  T $ \cite{Zamolodchikov:2004ce, Cavaglia:2016oda}, root-$T\overline  T$ \cite{Ferko:2022cix} and more generally functions of the stress-tensor and higher-spin conserved currents of the Smirnov-Zamolodchikov type \cite{Smirnov:2016lqw}.

By now these deformations have been extensively studied, and in particular the $T \overline T$ deformation of a two-dimensional QFT has been shown to exhibit several unique properties among theories with irrelevant interactions. For example the finite volume spectrum of a $T\overline T$-deformed QFT can be computed from the undeformed theory via a hydrodynamic equation \cite{Cavaglia:2016oda}. Similarly, many other quantities get deformed in a surprisingly controlled fashion, including the torus partition functions \cite{Cardy:2018sdv}, correlators \cite{Cardy:2018sdv,Li:2026ecl,Guica:2019vnb}, classical Lagrangians \cite{Cavaglia:2016oda}, and more. Interestingly, it has also been shown that the Virasoro symmetry of a CFT survives the $T\overline T$ deformation in a specific way: namely, despite breaking conformal symmetry, the generators of the conformal algebra get deformed into non-local charges which nonetheless generate a Virasoro algebra \cite{Chen:2025jzb}. We will show that a very similar mechanism underpins the Yangian algebra generated by a generalisation of the BIZZ charges, in a much larger class of irrelevant deformations which includes $T\overline T$. We will do so by using the recently constructed auxiliary field realisation of these deformations \cite{Ferko:2024ali,Bielli:2024ach,Bielli:2025uiv,Baglioni:2025tsc}, which represent a natural tool to ``hide'' all the non-locality of the deformation at the price of working partially on-shell. The paper is organised as follows:

In Section \ref{section:BIZZ-charges} we begin by reviewing the construction of the BIZZ tower of conserved charges, which relies on the existence of a flat and conserved current $L$, and proceed by extending it to the case where the system is characterised by two separate currents $A$ and $B$, respectively flat and conserved and restricted to satisfy certain relations.\\
\indent
In Section \ref{section:Yangians} we briefly recall some defining properties of the Yangian and then extend the derivation of \cite{MacKay:1992he}, also taking inspiration from \cite{Klose:2016qfv},  by applying the generalised BIZZ construction of Section \ref{section:BIZZ-charges} to a chosen broad class of current algebras satisfied by $A$ and $B$, determining under which conditions these give rise to a classical Yangian.\\
\indent
In Section \ref{section:application-to-AFSM} we study in detail several classes of sigma models and their auxiliary field deformations, showing their relation to the generalised BIZZ construction and how they satisfy the conditions of the theorems in Sections \ref{section:BIZZ-charges} and \ref{section:Yangians}, which are in fact inspired by such theories. This way we constructively prove the existence of a classical Yangian symmetry in a wealth of deformed non-linear sigma models.
We also discuss a few interesting observations resulting from this long survey of theories and finally study the Maillet structure of the Lax connections arising from the current algebra of Sections \ref{section:BIZZ-charges} and \ref{section:Yangians}.
\\
\indent
In Section \ref{section:conclusion} we finally reflect on the results presented in the main body and conclude by discussing some possible future research directions.\\
\indent
Technical aspects are collected in a few appendices. In Appendix \ref{appendix:A} we summarise notation, conventions and useful relations about Lie groups and algebras. Appendix \ref{appendix:B} contains the details of the proof highlighted in Section \ref{section:Yangians} and
Appendix \ref{appendix:C} contains detailed derivations of the Poisson brackets characterising the models described in Section \ref{section:application-to-AFSM}.

\section{Generalised Brezin, Itzykson, Zinn-Justin, Zuber currents}\label{section:BIZZ-charges}
There are a few ways to obtain towers of conserved non-local charges in integrable models, and the algebra they satisfy is typically a model-dependent feature which should be studied on a case-by-case basis. One possibility is expanding the monodromy matrix built from the Lax connection -- see for example the book/reviews \cite{Babelon:2003qtg,Torrielli:2016ufi,Loebbert:2016cdm,Driezen:2021cpd} and \cite{Itsios:2014vfa} for relations to the Yangian. Another way, particularly useful in the context of sigma models, is to begin with a ``seed'' conserved quantity and attempt to systematically generate more conserved quantities from the seed using a recursive/inductive procedure. This second method was pursued by Brezin, Itzykson, Zinn-Justin and
Zuber \cite{BREZIN1979442} and successively related to the Yangian for principal chiral models in \cite{MacKay:1992he}. In this section we first review this construction, which relies on the existence of a flat and conserved current, and successively generalise it by taking inspiration from AF sigma models, namely by considering the possibility that a system might be characterised by two separate currents, one which is flat and one which is conserved, constrained to satisfy certain commutator identities.

\subsection{Short review of the BIZZ construction}
Let us begin by briefly reviewing the derivation of non-local conserved quantities developed in \cite{BREZIN1979442}. We use the language of $\mathfrak{g}$-valued forms on a $2d$ Lorentzian manifold and refer to Appendix \ref{appendix:A} for further details about conventions and useful identities.
\begin{thm}[BIZZ]\label{BIZZ-theorem}
Consider a Lie-algebra-valued 1-form $L$ satisfying 
\begin{equation}\label{j-properties-definition}
\begin{aligned}
\text{Conservation:}& \qquad
0=\mathrm{d}(\star L) \, ,
\\
\text{Flatness:}& \qquad 0=\mathrm{d}L+\tfrac{1}{2}[L,L] \,\, .
\end{aligned}
\end{equation}
Then the following tower of currents is conserved:
\begin{equation}\label{original-BIZZ-tower}
\begin{aligned}
J^{(0)}&:=\beta \, L
\quad \quad \quad \quad \,\,\, \forall \, \beta \in \mathbb{R} \, ,
\\
J^{(1)}&:=\star L+\tfrac{1}{2\beta}[L,\chi^{(0)}] \, ,
\\
J^{(n)}&:=\nabla_{L}\chi^{(n-1)} \qquad \forall \, n \geq 2 \qquad \text{with} \qquad \nabla_{L}:=\mathrm{d}+[L,-] \,\, ,
\end{aligned}
\end{equation}
with $\chi^{(i)}$ Lie-algebra-valued 0-form potentials locally defined via current conservation
\begin{equation}\label{BIZZ-potentials}
\mathrm{d}(\star J^{(i)})=0 
\qquad \Longleftrightarrow \qquad
\star J^{(i)}=\mathrm{d}\chi^{(i)} \,\, , 
\qquad \qquad \forall \, i\geq 0
\,\, .
\end{equation}
\end{thm}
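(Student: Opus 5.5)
Conservation of each current in the tower amounts to $\mathrm{d}\star J^{(n)}=0$, and I will prove it by induction on $n$. The main tool is a set of identities for $\mathfrak g$-valued forms on a 2d Lorentzian manifold, taken from Appendix \ref{appendix:A}:
\begin{itemize}
\item $\star\star=+1$ on 1-forms;
\item $\star[\alpha,\beta]=-[\star\alpha,\star\beta]$ for 1-forms $\alpha,\beta$ (up to the sign fixed by the signature conventions);
\item $[\alpha,\star\beta]=[\beta,\star\alpha]$;
\item $\mathrm{d}[\alpha,\chi]=[\mathrm{d}\alpha,\chi]-[\alpha,\mathrm{d}\chi]$ for a 1-form $\alpha$ and a 0-form $\chi$.
\end{itemize}
The key structural fact is that the covariant derivative $\nabla_L$ obeys $\nabla_L^2\chi=[\mathrm{d}L+\tfrac12[L,L],\chi]=0$ by flatness. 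Combining this with conservation $\mathrm{d}\star L=0$ gives a "covariant Laplacian" identity that powers the recursion.

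\textbf{Base cases $n=0,1$.} $J^{(0)}=\beta L$ is conserved by hypothesis, so $\chi^{(0)}$ exists locally with $\mathrm{d}\chi^{(0)}=\beta\star L$. For $J^{(1)}$, I compute
\begin{equation*}
\mathrm{d}\star J^{(1)}=\mathrm{d}L+\tfrac{1}{2\beta}\,\mathrm{d}\star[L,\chi^{(0)}].
\end{equation*}
Expanding $\star[L,\chi^{(0)}]=[\star L,\chi^{(0)}]$ gives
\begin{equation*}
\mathrm{d}[\star L,\chi^{(0)}]=[\mathrm{d}\star L,\chi^{(0)}]-[\star L,\mathrm{d}\chi^{(0)}]=-\beta[\star L,\star L].
\end{equation*}
This must be converted, using $\star\star=1$ and the bracket/Hodge identity, into a multiple of $[L,L]$. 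Flatness then cancels it against $\mathrm{d}L=-\tfrac12[L,L]$. The factor $\tfrac{1}{2\beta}$ is chosen precisely so that this cancellation is exact; the sign conventions of Appendix \ref{appendix:A} decide the relative sign, which I will track carefully.

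\textbf{Inductive step, $n\ge2$.} Assume $J^{(n-1)}$ is conserved, so $\chi^{(n-1)}$ exists. Then
\begin{equation*}
\star J^{(n)}=\star\,\mathrm{d}\chi^{(n-1)}+[\star L,\chi^{(n-1)}],
\end{equation*}
and
\begin{equation*}
\mathrm{d}\star J^{(n)}=\mathrm{d}\star\mathrm{d}\chi^{(n-1)}-[\star L,\mathrm{d}\chi^{(n-1)}]
\end{equation*}
after using $\mathrm{d}\star L=0$. Substituting $\mathrm{d}\chi^{(n-1)}=\star J^{(n-1)}$ turns this into
\begin{equation*}
\mathrm{d}\star\mathrm{d}\chi^{(n-1)}=\mathrm{d}J^{(n-1)}\quad\text{and}\quad [\star L,\star J^{(n-1)}]\propto\star[L,J^{(n-1)}],
\end{equation*}
so conservation reduces to the 2-form identity $\nabla_L J^{(n-1)}:=\mathrm{d}J^{(n-1)}+[L,J^{(n-1)}]=0$ (with appropriate sign).

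I will prove this covariant flatness of $J^{(n-1)}$ as part of a strengthened induction hypothesis.
\begin{itemize}
\item For $J^{(m)}=\nabla_L\chi^{(m-1)}$ with $m\ge2$, it is immediate: $\nabla_L\nabla_L\chi=0$ by flatness of $L$.
\item For $m=1$, I check $\mathrm{d}J^{(1)}+[L,J^{(1)}]=0$ directly. The term $\mathrm{d}\star L+[L,\star L]$ vanishes, since $[L,\star L]=0$ in 2d by the symmetric identity above. The remaining terms involve $\tfrac{1}{2\beta}\big(\mathrm{d}[L,\chi^{(0)}]+[L,[L,\chi^{(0)}]]\big)$, which by flatness and Jacobi reduces to terms of the form $[L,\star L]$, and these vanish.
\end{itemize}
Alternatively, one can note that $J^{(1)}$ is, up to normalisation, $\nabla_L\chi^{(0)}/\beta$ plus pieces handled by the same identities.

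\textbf{Main obstacle.} The hard part is the $n=1$ and $n=2$ steps, where the $[L,\chi^{(0)}]$ term with coefficient $\tfrac1{2\beta}$ must be shown covariantly flat (or the chain $J^{(1)}\to J^{(2)}$ otherwise verified). I would handle this by grinding through the Jacobi identity $[L,[L,\chi]]=\tfrac12[[L,L],\chi]$ together with $\mathrm{d}\chi^{(0)}=\beta\star L$. If the full covariant flatness fails, the fallback is to verify $\mathrm{d}\star J^{(2)}=0$ directly, where the needed cancellation is $\star\mathrm{d}\star J^{(1)}$-type terms against $[L,J^{(1)}]$.
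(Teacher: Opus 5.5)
Your proposal is correct and follows the paper's route. You reduce $\mathrm{d}\star J^{(n)}$ to $\nabla_L J^{(n-1)}$ using $\mathrm{d}\star L=0$ and $\star J^{(n-1)}=\mathrm{d}\chi^{(n-1)}$, kill it with $\nabla_L^2=0$ from flatness, and handle $J^{(1)}$ by direct cancellation. Your separate check that $\nabla_L J^{(1)}=0$ is in fact needed at $n=2$, because the paper's chain $\nabla_L J^{(n-1)}=\nabla_L\nabla_L\chi^{(n-2)}$ does not literally apply to $J^{(1)}\neq\nabla_L\chi^{(0)}$: after normalising the $\star L$ term, the coefficients of $[L,\chi^{(0)}]$ differ by a factor of two.

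Fix two sign slips in your identity list; on a Lorentzian worldsheet the correct identities are $[\alpha,\star\beta]=-[\beta,\star\alpha]$ and $[\star\alpha,\star\beta]=-[\alpha,\beta]$. The antisymmetry in the first, not the symmetric version you wrote, is what gives $[L,\star L]=0$. Your $\star[\alpha,\beta]=-[\star\alpha,\star\beta]$ equates a 0-form with a 2-form. With these signs the base case closes as $\mathrm{d}L-\tfrac12[\star L,\star L]=\mathrm{d}L+\tfrac12[L,L]=0$.
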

\begin{proof}
Conservation of $J^{(0)}$ follows trivially from \eqref{j-properties-definition}, while for $J^{(1)}$ one finds
\begin{equation}
\begin{aligned}
\mathrm{d}(\star J^{(1)})&=\mathrm{d}(L+\tfrac{1}{2\beta}[\star L,\chi^{(0)}])
\\
&=\mathrm{d}L-\tfrac{1}{2\beta}[\star L, \mathrm{d}\chi^{(0)}]
\\
&=\mathrm{d}L-\tfrac{1}{2\beta}[\star L,\star J^{(0)}]
\\
&=\mathrm{d}L+\tfrac{1}{2\beta}[ L, J^{(0)}]
\\
&=-\tfrac{1}{2}[L,L]+\tfrac{1}{2}[L,L]
\\
&=0 \,\, ,
\end{aligned}
\end{equation}
where in the first line we used conservation of $L$, in the second the definition of $\chi^{(0)}$ via $\mathrm{d}\chi^{(0)}=\star J^{(0)}$, in the third the Hodge-star identity \eqref{form-commutator-star-identities}, in the fourth the definition of $J^{(0)}$, and finally the flatness of $L$.
Conservation of $J^{(n)}$ for $n\geq 2$ follows instead from
\begin{equation}\label{BIZZ-important-relations}
\mathrm{d}(\star \nabla_{L}f)=\nabla_{L}(\star \mathrm{d}f)
\qquad \text{and} \qquad 
\nabla_{L}(\nabla_{L}f)=0 \qquad \forall \, f\in \Omega^{0}(\Sigma,\mathfrak{g}) \,\, ,
\end{equation}
as it can be verified explicitly:
\begin{equation}
\mathrm{d}(\star J^{(n)})=\mathrm{d}(\star \nabla_{L}\chi^{(n-1)})=\nabla_{L}(\star \mathrm{d}\chi^{(n-1)})=\nabla_{L}(J^{(n-1)})=\nabla_{L}(\nabla_{L}\chi^{(n-2)})=0 \,\, .
\end{equation}
This establishes that the tower of currents \eqref{original-BIZZ-tower} is indeed conserved,
\begin{equation}
\mathrm{d}(\star J^{(n)})=0 \qquad \forall \, n\in \mathbb{N} \,\, .
\end{equation}
The first identity in \eqref{BIZZ-important-relations} follows from the definition of $\nabla_{L}$ and conservation of $L$:
\begin{equation}
\begin{aligned}
\mathrm{d}(\star \nabla_{L}f)&=\mathrm{d}(\star \mathrm{d}f+[\star L,f])
\\
&=\mathrm{d}(\star \mathrm{d}f)+[\mathrm{d}(\star L),f]-[\star L,\mathrm{d}f]
\\
&=\mathrm{d}(\star \mathrm{d}f)+[L,\star \mathrm{d}f]
\\
&=\nabla_{L}(\star \mathrm{d}f) \,\, ,
\end{aligned}
\end{equation}
while the second relation in \eqref{BIZZ-important-relations} is a result of Jacobi identity \eqref{useful-Jacobi-1},
\begin{equation}
\begin{aligned}
\nabla_{L}(\nabla_{L}f)&=\nabla_{L}(\mathrm{d}f+[L,f])
\\
&=\mathrm{d}^2f+[L,\mathrm{d}f]+[\mathrm{d}L,f]-[L,\mathrm{d}f]+\bigl[L,[L,f]\bigr]
\\
&=[\mathrm{d}L,f]+\bigl[L,[L,f]\bigr]
\\
&=-\tfrac{1}{2}\bigl[[L,L],f\bigr]+\bigl[L,[L,f]\bigr]
\\
&=0 \,\, .
\end{aligned}
\end{equation}\qedhere
\end{proof}
The BIZZ construction of Theorem \ref{BIZZ-theorem} relies on the assumption \eqref{j-properties-definition}, namely on the existence of a current which is both conserved and flat. These conditions are known to arise in various classes of $2d$ integrable models, characterised by the existence of a Lax connection depending on the current $L$ and on a complex spectral parameter $z$ as
\begin{equation}\label{BIZZ-Lax}
\mathfrak{L}=\frac{L+z \,\star L}{1-z^2} \,\, ,
\end{equation}
and whose flatness precisely encodes the conditions \eqref{j-properties-definition}:
\begin{equation}\label{BIZZ-Lax-flatness}
\begin{aligned}
\mathrm{d}\mathfrak{L}+\tfrac{1}{2}[\mathfrak{L},\mathfrak{L}]=&\frac{1}{1\!-\!z^2}\Bigl( \mathrm{d}L+z\,\mathrm{d}(\star L)+\frac{1}{2(1\!-\!z^2)} [L+z\,\star L,L+z\,\star L]  \Bigr)
\\
=&\frac{1}{1\!-\!z^2}\Bigl( \mathrm{d}L+z\,\mathrm{d}(\star L)+\frac{1}{2(1\!-\!z^2)} \bigl( [L,L]+2z\,[L,\star L]-z^2[L,L]  \bigr)\Bigr)
\\
=&\frac{1}{1\!-\!z^2}\Bigl( \mathrm{d}L+\tfrac{1}{2}[L,L]  +z\,\mathrm{d}(\star L)\Bigr) \,\, .
\end{aligned}
\end{equation}

\subsection{Generalised BIZZ construction}\label{subsec:generalised-BIZZ}
We consider now a possible extension of the BIZZ derivation reviewed in the previous subsection. The generalisation consists in replacing the initial assumption of having a single flat and conserved current, with that of having two currents which are separately flat and conserved, restricted by certain commutator identities. In Section \ref{section:application-to-AFSM} these conditions will find application in families of deformed $2d$ sigma models, which inspired them.

\begin{thm}\label{gBIZZ-theorem}
Consider two Lie-algebra valued 1-forms $A$ and $B$ satisfying
\begin{equation}\label{A,B-definition}
\begin{aligned}
\text{Conservation:}& \qquad
0=\mathrm{d}(\star B)  \, ,
\\
\text{Flatness:}& \qquad 0=\mathrm{d}A+\tfrac{1}{2}[A,A] \, ,
\end{aligned}
\end{equation}
and commutation relations
\begin{equation}\label{A,B-commutators}
[A,A]=[B,B]
\qquad \text{and} \qquad 
[A,\star B]=[B,\star A]=0 \,\, .
\end{equation}
Then the following currents generalise the BIZZ tower \eqref{original-BIZZ-tower} and are conserved:
\begin{equation}\label{generalised-BIZZ-tower}
\begin{aligned}
J^{(0)}&:=\beta \, B \qquad \qquad \qquad \qquad \qquad \qquad \quad \quad \,\, \forall \, \beta \in \mathbb{R}\,\, ,
\\
J^{(1)}&:= \star A + \tfrac{1}{2\beta}[B,\chi^{(0)}] \, ,
\\
J^{(2)}&:=B+\tfrac{1}{2\beta}[\star A,\chi^{(0)}]+[B,\chi^{(1)}] \, ,
\\
J^{(n)}&:=J^{(n-2)}+[\star A,\chi^{(n-2)}]+[B,\chi^{(n-1)}] \qquad \forall \, n\geq 3 \,\, ,
\end{aligned}
\end{equation}
with $\chi^{(i)}$ local potentials defined as in \eqref{BIZZ-potentials}
\begin{equation}\label{gBIZZ-potentials}
\mathrm{d}(\star J^{(i)})=0 
\qquad \Longleftrightarrow \qquad
\star J^{(i)}=\mathrm{d}\chi^{(i)} \,\, , 
\qquad \qquad \forall \, i\geq 0
\,\, .
\end{equation}
\end{thm}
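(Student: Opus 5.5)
The plan is induction on $n$. The only tools are the ones already used in the proof of Theorem~\ref{BIZZ-theorem}: the Leibniz rule $\mathrm{d}[\star X,f]=[\mathrm{d}\star X,f]-[\star X,\mathrm{d}f]$ for a 1-form $X$ and a 0-form $f$, the Hodge identity $[\star X,\star Y]=-[X,Y]$ from \eqref{form-commutator-star-identities}, and the Jacobi identity \eqref{useful-Jacobi-1} in the form $[X,[X,f]]=\tfrac12\bigl[[X,X],f\bigr]$. Combined with flatness of $A$ and $[A,A]=[B,B]$, the latter gives
\[
[A,[A,f]]=[B,[B,f]]=-[\mathrm{d}A,f].
\]
This single identity is the engine of every cancellation below.

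\textbf{Low orders.} Conservation of $J^{(0)}$ is immediate. For $J^{(1)}$ the computation mirrors the BIZZ case:
\[
\mathrm{d}(\star J^{(1)})=\mathrm{d}A-\tfrac{1}{2\beta}[\star B,\beta\star B]=\mathrm{d}A+\tfrac12[B,B]=\mathrm{d}A+\tfrac12[A,A]=0 .
\]
For $J^{(2)}$ I take $\star$ of each term, differentiate, and substitute $\mathrm{d}\chi^{(i)}=\star J^{(i)}$. This yields
\[
\tfrac{1}{2\beta}\Bigl([\mathrm{d}A,\chi^{(0)}]-\beta[A,\star B]+[B,[B,\chi^{(0)}]]\Bigr)+[B,\star A].
\]
Here $[A,\star B]=[B,\star A]=0$ by \eqref{A,B-commutators}, and the remaining two terms cancel by the boxed identity.

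\textbf{Inductive step.} For $n\ge3$, assume $\mathrm{d}(\star J^{(k)})=0$ for all $k<n$. Differentiating the recursion gives
\[
\mathrm{d}(\star J^{(n)})=E_n:=[\mathrm{d}A,\chi^{(n-2)}]-[A,\star J^{(n-2)}]+[B,J^{(n-1)}].
\]
I then substitute the recursion for both $J^{(n-1)}$ and $\star J^{(n-2)}$ and sort the resulting terms.
\begin{itemize}
\item The term $[B,[B,\chi^{(n-2)}]]$ cancels $[\mathrm{d}A,\chi^{(n-2)}]$.
\item The term $-[A,[A,\chi^{(n-4)}]]$ becomes $+[\mathrm{d}A,\chi^{(n-4)}]$.
\item The mixed terms are
\[
[B,[\star A,\chi^{(n-3)}]]-[A,[\star B,\chi^{(n-3)}]].
\]
I expect these to vanish by graded Jacobi, using $[A,\star B]=[B,\star A]=0$ together with the star-commutator identities from Appendix~\ref{appendix:A}. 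Concretely, both reduce to $\mp[\star A,[B,\chi]]$-type terms which cancel once one uses that $[\star X,Y]$ and $[X,\star Y]$ are related for 1-forms in two dimensions.
\end{itemize}
What is left is exactly
\[
[\mathrm{d}A,\chi^{(n-4)}]-[A,\star J^{(n-4)}]+[B,J^{(n-3)}]=E_{n-2}.
\]
So the strengthened statement I actually induct on is $E_n=E_{n-2}$, hence $E_n=0$ given the base cases. The base cases are $E_3$ and $E_4$. Because $J^{(1)}$ and $J^{(2)}$ carry the special $\tfrac1{2\beta}$ factors and the $\star A$ term, these must be checked directly, but by the same manipulations as for $J^{(2)}$ above.

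\textbf{Expected obstacle.} The main obstacle is the mixed-term cancellation in the third bullet. It is the only place where $[A,\star B]=[B,\star A]=0$ enters beyond the base cases. Getting it right requires careful bookkeeping of signs: graded Jacobi for two 1-forms and a 0-form, and how $\star$ passes through commutators. The other delicate point is the $\beta$-normalisation in $E_3,E_4$, where one must check that the $\tfrac1{2\beta}$ factors combine with $J^{(0)}=\beta B$ so that these base cases still reduce to $0$ via the same identities.
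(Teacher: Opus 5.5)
Your proposal is correct and follows essentially the paper's proof: you check $J^{(0)},J^{(1)},J^{(2)}$ directly, then write $\mathrm{d}(\star J^{(n)})=E_n$ (the paper's $X^{(n-1)}$), establish the two-step recursion $E_n=E_{n-2}$ using $[A,[A,f]]=[B,[B,f]]=-[\mathrm{d}A,f]$ and the mixed-term cancellation, and close with the base cases $E_3,E_4$ (the paper's $X^{(2)},X^{(3)}$). The mixed-term step you flag goes through as you sketch: $[B,[\star A,\chi]]$ and $[A,[\star B,\chi]]$ both equal $-[\star A,[B,\chi]]$, by \eqref{useful-Jacobi-2} and $[A,\star\omega]=-[\star A,\omega]$ respectively, so their difference vanishes; and in $E_3,E_4$ the $\tfrac{1}{2\beta}$ factors only multiply pairs of terms that cancel among themselves, so the $\beta$-normalisation you worried about plays no role there.
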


\begin{proof} It is straightforward to recognise that for $A=B=L$ the commutators \eqref{A,B-commutators} are identically satisfied and the conditions \eqref{A,B-definition} become \eqref{j-properties-definition}, such that the tower \eqref{generalised-BIZZ-tower} reduces to \eqref{original-BIZZ-tower}. The proof of conservation can be divided into two sectors.

\textbf{Conservation of $J^{(0)},J^{(1)},J^{(2)}$.} \qquad This can be verified by direct computation: for $J^{(0)}$ it follows trivially from \eqref{A,B-definition}, while for $J^{(1)}$ one finds 
\begin{equation}
\begin{aligned}
\mathrm{d}(\star J^{(1)})&=\mathrm{d}(A+\tfrac{1}{2\beta}[\star B,\chi^{(0)}])
\\
&=\mathrm{d}A+\tfrac{1}{2\beta}[\mathrm{d}(\star B),\chi^{(0)}]-\tfrac{1}{2\beta}[\star B,\mathrm{d}\chi^{(0)}]
\\
&=-\tfrac{1}{2}[A,A]-\tfrac{1}{2\beta}[\star B,\star J^{(0)}]
\\
&=-\tfrac{1}{2}[A,A]+\tfrac{1}{2\beta}[ B, J^{(0)}]
\\
&=-\tfrac{1}{2}[A,A]+\tfrac{1}{2}[ B, B]
\\
&=0 \,\, ,
\end{aligned}
\end{equation}
where in going from the third to the fourth line we picked up a sign on the second term using \eqref{form-commutator-star-identities} and in the last step the two terms cancel due to \eqref{A,B-commutators}. Similarly, for $J^{(2)}$ 
\begin{equation}
\begin{aligned}
\mathrm{d}(\star J^{(2)})&=\mathrm{d}(\star B+\tfrac{1}{2\beta}[A,\chi^{(0)}]+[\star B,\chi^{(1)}])
\\
&=\tfrac{1}{2\beta}[\mathrm{d}A,\chi^{(0)}]-\tfrac{1}{2\beta}[A,\mathrm{d}\chi^{(0)}]-[\star B,\mathrm{d}\chi^{(1)}]
\\
&=\tfrac{1}{2\beta}[\mathrm{d}A,\chi^{(0)}]-\tfrac{1}{2\beta}[A,\star J^{(0)}]-[\star B,\star J^{(1)}]
\\
&=\tfrac{1}{2\beta}[\mathrm{d}A,\chi^{(0)}]-\tfrac{1}{2\beta}[A,\star J^{(0)}]+[ B, J^{(1)}]
\\
&=-\tfrac{1}{4\beta}\bigl[[A,A],\chi^{(0)}\bigr]-\tfrac{1}{2}[A,\star B]+[B,\star A]+\tfrac{1}{2\beta}\bigl[B,[B,\chi^{(0)}]\bigr]
\\
&=\tfrac{1}{4\beta}\bigl[\chi^{(0)},[A,A]\bigr]+\tfrac{1}{2\beta}\bigl[B,[B,\chi^{(0)}]\bigr]
\\
&=0 \,\, ,
\end{aligned}
\end{equation}
where in the first line we exploited conservation of $B$, in the second the definition of the potentials \eqref{BIZZ-potentials}, in the third the identities \eqref{form-commutator-star-identities}, in the fourth the commutators \eqref{A,B-commutators}, and in the last again \eqref{A,B-commutators} and the Jacobi identity \eqref{useful-Jacobi-1}.

\textbf{Conservation of $J^{(n)}$ for $n\geq3$.} \qquad To show conservation of the remaining currents in the tower we proceed in three steps. First, notice that conservation for $n\geq3$ reads
\begin{equation}
\begin{aligned}
\mathrm{d}(\star J^{(n)})= [\star A,J^{(n-2)}]+[B,J^{(n-1)}]+\tfrac{1}{2}\bigl[\chi^{(n-2)},[A,A]\bigr] =: X^{(n-1)} \,\, .
\end{aligned}
\end{equation}
By direct computation one indeed finds that
\begin{equation}
\begin{aligned}
\mathrm{d}(\star J^{(n)})&=\mathrm{d}(\star J^{(n-2)}+[A,\chi^{(n-2)}]+[\star B,\chi^{(n-1)}])
\\
&=[\mathrm{d}A,\chi^{(n-2)}]-[A,\mathrm{d}\chi^{(n-2)}]-[\star B,\mathrm{d}\chi^{(n-1)}]
\\
&=-\tfrac{1}{2}\bigl[ [A,A],\chi^{(n-2)}\bigr]-[A,\star J^{(n-2)}]-[\star B,\star J^{(n-1)}]
\\
&=\tfrac{1}{2}\bigl[\chi^{(n-2)},[A,A]\bigr]+[\star A, J^{(n-2)}]+[B, J^{(n-1)}]
\\
&=: X^{(n-1)} \,\, ,
\end{aligned}
\end{equation}
where we rearranged using \eqref{BIZZ-potentials}, flatness \eqref{A,B-definition} of $A$, and commutator identities \eqref{form-commutator-star-identities}. The second step consists of noting that, thanks to the recursive definition of the currents for $n \geq 3$, the conservation conditions are recursively related as
\begin{equation}
X^{(n+1)}=X^{(n-1)} \qquad \forall \, n\geq 3 \,\, .
\end{equation}
This can once more be checked explicitly:
\begin{equation}
\begin{aligned}
X^{(n+1)}=&+[\star A,J^{(n)}]+[B,J^{(n+1)}]+\tfrac{1}{2}\bigl[\chi^{(n)},[A,A]\bigr] 
\\
=&+[\star A,J^{(n-2)}]+\bigl[\star A,[\star A,\chi^{(n-2)}] \bigr]+\bigl[\star A,[B,\chi^{(n-1)}]  \bigr]
\\
&+[B,J^{(n-1)}]+\bigl[B,[\star A,\chi^{(n-1)}] \bigr]+\bigl[B,[B,\chi^{(n)}] \bigr]+\tfrac{1}{2}\bigl[\chi^{(n)},[A,A]\bigr]
\\
=&+\bigl[\star A,[B,\chi^{(n-1)}]  \bigr]+\bigl[B,[\star A,\chi^{(n-1)}] \bigr]
\\
&+\bigl[B,[B,\chi^{(n)}] \bigr]+\tfrac{1}{2}\bigl[\chi^{(n)},[A,A]\bigr]+X^{(n-1)}
\\
=&X^{(n-1)} \,\, .
\end{aligned}
\end{equation}
In the first line we substituted the recursive definition of $J^{(n)}$ and $J^{(n+1)}$ and in the second we collected the terms defining $X^{(n-1)}$, after noting that 
\begin{equation}
\bigl[\star A,[\star A,\chi^{(n-2)}] \bigr]=-\bigl[ A,[ A,\chi^{(n-2)}] \bigr] =+\tfrac{1}{2}\bigl[ \chi^{(n-2)},[A,A]\bigr] \,\, ,
\end{equation}
as a result of \eqref{form-commutator-star-identities} and the Jacobi identity \eqref{useful-Jacobi-1}. In going to the last line we exploited that the four extra terms cancel pairwise as a result of the Jacobi identities \eqref{useful-Jacobi-1}, \eqref{useful-Jacobi-2}, and the commutator identity \eqref{A,B-commutators}. At this point we have shown that as a result of the recursive definition \eqref{generalised-BIZZ-tower} for currents $J^{(n)}$ with $n\geq 3$, their conservation conditions also exhibit a recursive structure of the form
\begin{equation}
\mathrm{d}(\star J^{(n)})=X^{(n-1)}=X^{(n-3)}=X^{(n-5)} = \ldots \, ,
\end{equation}
and to show conservation of the whole tower it is sufficient to check the vanishing of the first non-trivial even and odd cases, namely $X^{(2)}=0$ and $X^{(3)}=0$, which constitutes the third and final step of the proof. For $X^{(2)}$ one finds that
\begin{equation}
\begin{aligned}
X^{(2)}=&+[\star A,J^{(1)}]+[ B,J^{(2)}]+\tfrac{1}{2}\bigl[\chi^{(1)},[A,A]\bigr]
\\
=&+[\star A,\star A]+\tfrac{1}{2\beta}\bigl[\star A,[B,\chi^{(0)}]\bigr]
\\
&+[B,B]+\tfrac{1}{2\beta}\bigl[B,[\star A,\chi^{(0)}]\bigr]+\bigl[B,[B,\chi^{(1)}]\bigr]+\tfrac{1}{2}\bigl[\chi^{(1)},[A,A]\bigr]
\\
=&-[A,A]+[B,B]+
\\
&+\tfrac{1}{2\beta}\bigl[\star A,[B,\chi^{(0)}]\bigr]+\tfrac{1}{2\beta}\bigl[B,[\star A,\chi^{(0)}]\bigr]
\\
&-\tfrac{1}{2}\bigl[\chi^{(1)},[B,B]\bigr]+\tfrac{1}{2}\bigl[\chi^{(1)},[A,A]\bigr]
\\
=&0 \,\, ,
\end{aligned}
\end{equation}
where in the first line we substituted the definitions \eqref{generalised-BIZZ-tower} of $J^{(1)}$ and $J^{(2)}$, in the second we exploited the identities \eqref{form-commutator-star-identities}, and in the third we noticed that terms cancel pairwise after using \eqref{A,B-commutators} and the Jacobi identity \eqref{useful-Jacobi-2}. Similarly, for $X^{(3)}$ one finds
\begin{equation}
\begin{aligned}
X^{(3)}=&+[\star A,J^{(2)}]+[B,J^{(3)}]+\tfrac{1}{2}\bigl[\chi^{(2)},[A,A]\bigr]
\\
=&+[\star A,B]+\tfrac{1}{2\beta}\bigl[\star A,[\star A,\chi^{(0)}]\bigr]+\bigl[\star A,[B,\chi^{(1)}]\bigr]+\tfrac{1}{2}\bigl[\chi^{(2)},[A,A]\bigr]
\\
&+[B,J^{(1)}]+\bigl[B,[\star A,\chi^{(1)}]\bigr]+\bigl[B,[B,\chi^{(2)}]\bigr]
\\
=&-\tfrac{1}{2\beta}\bigl[A,[A,\chi^{(0)}]\bigr]+\bigl[\star A,[B,\chi^{(1)}]\bigr]+\tfrac{1}{2}\bigl[\chi^{(2)},[A,A]\bigr]
\\
&+[B,\star A]+\tfrac{1}{2\beta}\bigl[B,[B,\chi^{(0)}]\bigr]+\bigl[B,[\star A,\chi^{(1)}]\bigr]+\bigl[B,[B,\chi^{(2)}]\bigr]
\\
=&+\tfrac{1}{4\beta}\bigl[\chi^{(0)},[A,A]\bigr]-\tfrac{1}{4\beta}\bigl[\chi^{(0)},[B,B]\bigr]+
\\
&+\bigl[\star A,[B,\chi^{(1)}]\bigr]+\bigl[B,[\star A,\chi^{(1)}]\bigr]
\\
&+\tfrac{1}{2}\bigl[\chi^{(2)},[A,A]\bigr]-\tfrac{1}{2}\bigl[\chi^{(2)},[B,B]\bigr]
\\
=&0 \,\, ,
\end{aligned}
\end{equation}
where in the first line we substituted the definitions of $J^{(2)}$ and $J^{(3)}$ in \eqref{generalised-BIZZ-tower}, and in the second we substituted the expression for $J^{(1)}$ in \eqref{generalised-BIZZ-tower} and simplified a few terms using \eqref{A,B-commutators}, \eqref{form-commutator-star-identities}. In the third line we rearranged terms using the Jacobi identity \eqref{useful-Jacobi-1} and in the fourth one we just exploited cancellations resulting from \eqref{A,B-commutators} and \eqref{useful-Jacobi-2}.
\end{proof}
The generalised BIZZ construction in Theorem \ref{gBIZZ-theorem} represents a straightforward extension of Theorem \ref{BIZZ-theorem}, where the flat and conserved current $L$ is replaced by two currents $A$ and $B$ which are respectively only flat and only conserved, while being at the same time not completely independent, but related via the conditions \eqref{A,B-commutators}. The latter requirement is not only crucial in showing the conservation of the tower \eqref{generalised-BIZZ-tower}, as seen in the above proof, but also allows us to recover the flatness and conservation \eqref{A,B-definition} of $A$ and $B$ from the flatness of a Lax connection which represents the natural generalisation of \eqref{BIZZ-Lax},
\begin{equation}\label{gBIZZ-Lax}
\mathfrak{L}=\frac{A+z \,\star B}{1-z^2} \,\, ,
\end{equation}
as one can see by following steps analogous to \eqref{BIZZ-Lax-flatness} and using the condition \eqref{A,B-commutators}:
\begin{equation}
\begin{aligned}
\mathrm{d}\mathfrak{L}+\tfrac{1}{2}[\mathfrak{L},\mathfrak{L}]=&\frac{1}{1\!-\!z^2}\Bigl( \mathrm{d}A+z\,\mathrm{d}(\star B)+\frac{1}{2(1\!-\!z^2)} [A+z\,\star B,A+z\,\star B]  \Bigr)
\\
=&\frac{1}{1\!-\!z^2}\Bigl( \mathrm{d}A+z\,\mathrm{d}(\star B)+\frac{1}{2(1\!-\!z^2)} \bigl( [A,A]+2z\,[A,\star B]-z^2[B,B]  \bigr)\Bigr)
\\
=&\frac{1}{1\!-\!z^2}\Bigl( \mathrm{d}A+\frac{1}{2}[A,A]  +z\,\mathrm{d}(\star B)\Bigr) \,\, .
\end{aligned}
\end{equation}
Clearly, the above construction reduces to the standard BIZZ result of Theorem \ref{BIZZ-theorem} when both $A$ and $B$ coincide with $L$, in which case the conditions \eqref{A,B-commutators} are trivially satisfied. While in Section \ref{section:Yangians} we will try to keep a model-independent point of view and consider a broad class of generalised BIZZ currents encoding a classical Yangian algebra, in Section \ref{section:application-to-AFSM} we will look at various explicit examples of $2d$ integrable sigma models exhibiting the structure of Theorem \ref{BIZZ-theorem}, as well as to a set of deformations illustrating Theorem \ref{gBIZZ-theorem}.

\section{Classical Yangians for generalised BIZZ currents}\label{section:Yangians}
An important application of the Theorem \ref{BIZZ-theorem} of Brezin, Itzykson, Zinn-Justin, Zuber is that the tower of conserved currents \eqref{original-BIZZ-tower} can be exploited to show the emergence of an underlying classical Yangian algebra -- understood as the Poisson algebra of appropriately defined charges -- for certain theories which are characterised by a flat and conserved current \eqref{j-properties-definition}, such as principal chiral models \cite{MacKay:1992he} or symmetric-space sigma models \cite{Klose:2016qfv}. The existence of such a large symmetry, briefly reviewed below in subsection \ref{subsec:Yangian-first-realisation}, does not however only depend on the existence of the BIZZ currents, but is also tightly connected to the Poisson bracket structure of the flat and conserved current $L$ out of which all the elements in the tower \eqref{original-BIZZ-tower} ultimately depend. For this reason, knowing the BIZZ currents does not automatically allow us to conclude that a certain theory does enjoy a Yangian symmetry, and a case-by-case proof of such a property should in fact be sought. In light of these observations, our main objective throughout this section will be showing how a classical Yangian symmetry can also emerge in theories characterised by a generalised BIZZ construction of Theorem \ref{gBIZZ-theorem}, when the Poisson brackets of the $A$ and $B$ currents are assumed to take a specific, yet quite broad, structure. In particular, as will become clear in Section \ref{section:application-to-AFSM}, such a structure for the Poisson brackets is meant to at least encompass, and is inspired by, quite a few examples of so-called auxiliary field sigma models. The latter represent infinite families of integrable deformations of known integrable theories and our analysis will show the existence of an underlying classical Yangian symmetry for any element of such families, hence including the undeformed settings as special cases.

\subsection{Short review of the Yangian's first realisation}\label{subsec:Yangian-first-realisation}
There exist various ways of defining a Yangian \cite{Drinfeld:1985rx,Drinfeld:1986in,Drinfeld:1987sy}; see e.g. the comprehensive reviews \cite{Loebbert:2016cdm,MacKay:2004tc,Torrielli:2010kq,Torrielli:2011gg,Bernard:1992ya}. For this reason, before showing in the next subsection \ref{subsec:Yangian-theorem} how these can arise from certain classes of generalised BIZZ currents, it will be useful to briefly summarise the definition to which we will be referring, which is also known as the ``first realisation''. Our analysis is inspired by and closely follows the approaches \cite{MacKay:1992he,Klose:2016qfv} and we also recommend \cite{Itsios:2014vfa} as an excellent guidebook with various important explicit calculations. 

The Yangian $Y(\mathfrak{g})$ of a finite-dimensional simple Lie algebra $\mathfrak{g}$ -- see Appendix \ref{appendix:A} for notation -- is an infinite-dimensional algebra defined via the relations between the level-0 generators $T^{A}$, namely the Lie algebra, and level-1 generators $M^{A}$
\begin{equation}\label{HM-Yangian-general-level0-and-level1-brackets}
[T^{A},T^{B}]=f^{AB}{}_{C}T^{C} 
\qquad \text{and} \qquad 
[ T^{A},M^{B} ]=f^{AB}{}_{C}M^{C} \,\, .
\end{equation}
Higher-level generators can be iteratively obtained from commutators of the lower ones, subject to the following consistency condition, known as first Serre relation\footnote{\label{footnote:symmetrisation-of-indices}Square brackets $[A_{1}...A_{n}]$ on the left hand side of \eqref{HM-Serre-relations} denote antisymmetrisation of indices and curved brackets $(A_{1}...A_{n})$ on the right hand side of \eqref{HM-Serre-relations} and \eqref{second-serre}  denote summation over all permutations of the enclosed generators. Both are understood here with an overall counting factor $\tfrac{1}{n!}$.}
\begin{equation}\label{HM-Serre-relations}
\begin{aligned}
f_{D}{}^{[AB}[M^{C]},M^{D}]=&\tfrac{\alpha^2}{12}a^{ABC}{}_{PQR}T^{(P}T^{Q}T^{R)},  
\\
\text{with} \qquad a^{ABC}{}_{PQR}:=&f^{A}{}_{PI}f^{B}{}_{QJ}f^{C}{}_{RK}f^{IJK} 
\quad \text{and} \quad \alpha\in \mathbb{R}.
\end{aligned}
\end{equation}
In fact, the Yangian is equipped with a Hopf algebra structure -- we refer to the papers and reviews mentioned above for more details on this topic -- and the condition \eqref{HM-Serre-relations} arises from the requirement that the coproduct should be a homomorphism. This should furthermore be supplemented by the second Serre relation
\begin{equation}\label{second-serre}
\begin{aligned}
f_{E}{}^{AB}&\bigl[[M^{C},M^{D}],M^{E}\bigr]+f_{E}{}^{CD}\bigl[[M^{A},M^{B}],M^{E}\bigr]
\\
=&\tfrac{\alpha^2}{12}(a^{ABE}{}_{PQR}f_{E}{}^{CD}+a^{CDE}{}_{PQR}f_{E}{}^{AB})T^{(P}T^{Q}M^{R)} ,
\end{aligned}
\end{equation}
which is however implied by \eqref{HM-Serre-relations} -- see e.g. \cite{Itsios:2014vfa} for a proof -- unless one is considering the case of $\mathfrak{g}=\mathfrak{sl}(2)$, where \eqref{HM-Serre-relations} trivialises and \eqref{second-serre} becomes the relevant constraint.

\subsection{A class of generalised BIZZ currents with Yangian symmetry}\label{subsec:Yangian-theorem}
As previously anticipated, in this subsection we are going to show how the classical version of the Yangian structure \eqref{HM-Yangian-general-level0-and-level1-brackets} and \eqref{HM-Serre-relations} can emerge from a setup involving a flat current $A$ and a conserved current $B$ characterising the generalised BIZZ construction of Theorem \ref{gBIZZ-theorem}. This is something that cannot be shown in full generality, as the recovery of the desired relations ultimately depends on the Poisson brackets obeyed by the components of the currents, and for this reason we will have to restrict our scope by making some assumptions on such a structure.
Our restriction will only be motivated a posteriori, eventually through the survey of examples discussed in Section \ref{section:application-to-AFSM}, which will all be included in the chosen Poisson brackets and hence exhibit a classical Yangian symmetry.

\begin{thm}\label{Yangian-theorem} Consider two Lie-algebra valued 1-forms $A, B \in \mathfrak{g}$, with $\mathfrak{g} \neq \mathfrak{sl} ( 2 )$, satisfying the generalised BIZZ construction of Theorem \ref{gBIZZ-theorem} and the Poisson brackets
\begin{align}\label{Yangian-general-brackets}
\{ B_{\tau}{}^{A}(\sigma),B_{\tau}{}^{B}(\sigma') \} =& m_{1}f^{AB}{}_{C} \, B_{\tau}{}^{C}(\sigma)\delta(\sigma-\sigma')+m_{2}\gamma^{AB}\partial_{\sigma}\delta(\sigma-\sigma') \, ,
\notag \\
\{ B_{\tau}{}^{A}(\sigma),A_{\sigma}{}^{B}(\sigma') \} =& m_{3}f^{AB}{}_{C} \, A_{\sigma}{}^{C}(\sigma)\delta(\sigma-\sigma')+m_{4}C^{AB}(\sigma') \, \partial_{\sigma}\delta(\sigma-\sigma') \, ,
\\
\{ A_{\sigma}{}^{A}(\sigma),A_{\sigma}{}^{B}(\sigma') \} =& f^{AB}{}_{C}\Bigl(m_{5}A_{\sigma}{}^{C}+m_{6}B_{\tau}{}^{C}\Bigr)\delta(\sigma-\sigma')+m_{7}\gamma^{AB}\partial_{\sigma}\delta(\sigma-\sigma') \,\, ,
\notag
\end{align}
with $m_{1},...,m_{7} \in \mathbb{R}$ constant coefficients, $\gamma^{AB}$ the Cartan-Killing form on the underlying Lie-algebra and $C^{AB}(\sigma)$ a possibly field-dependent symmetric tensor.\footnote{Notice that compatibility of the brackets \eqref{Yangian-general-brackets} with Jacobi identities imposes constraints on $C^{AB}(\sigma')$. We refer to Appendix \ref{appendix:B} for further details and comments on this point.
} Then, the charges 
\begin{equation}\label{Yangian-charges-def}
Q^{(i)A}=\int_{-\infty}^{+{\infty}} \mathrm{d}\sigma \,  J_{\tau}^{(i)A} \,\, ,
\end{equation}
with $J_{\tau}^{(i)A}$ the timelike components of the generalised BIZZ currents \eqref{generalised-BIZZ-tower}, obey the classical version $Y_{\text{C}}(\mathfrak{g})$ of the Yangian algebra in subsection \ref{subsec:Yangian-first-realisation}, namely
\begin{equation}\label{Yangian-general-level0-and-level1-brackets}
\{Q^{(0)A},Q^{(0)B}\}=f^{AB}{}_{C}Q^{(0)C} 
\qquad , \qquad 
\{ Q^{(0)A},Q^{(1)B}  \}=f^{AB}{}_{C}Q^{(1)C} \,\, ,
\end{equation}
and \footnote{With the conventions in footnote \ref{footnote:symmetrisation-of-indices}, for classically commuting quantities one has $Q^{(0)(P}Q^{(0)Q}Q^{(0)R)}=Q^{(0)P}Q^{(0)Q}Q^{(0)R}$, which differs by a factor $1/6$ from e.g. \cite{MacKay:1992he} or \cite{Itsios:2014vfa}.}
\begin{equation}\label{Serre-relations}
f_{D}{}^{[AB}\{Q^{(1)C]},Q^{(1)D}\}=\tfrac{\alpha^2}{12}f^{A}{}_{PI}f^{B}{}_{QJ}f^{C}{}_{RK}f^{IJK}Q^{(0)P}Q^{(0)Q}Q^{(0)R} 
\quad \text{with} \quad
\alpha:=m_{1}^2 \,\, ,
\end{equation}
provided that $\beta$ in \eqref{gBIZZ-theorem} is chosen as $\beta:=m_{1}^{-1}$ and the following relations are satisfied:
\begin{equation}\label{Yangian-theorem-condition}
\begin{aligned}
m_{3} =&m_{1} \qquad \quad \text{and} \qquad\quad f_{D}{}^{[AB}\mathcal{V}_{1}^{C]D}=0 \,\, , 
\qquad\quad \text{with} 
\\
\mathcal{V}_{1}^{CD}:=&
\int_{-\infty}^{+\infty}\mathrm{d}\sigma' \,   B_{\tau}{}^{E}(\sigma')\Bigl(f_{E}{}^{CF}C_{F}{}^{D}(\sigma')-f_{E}{}^{DF}C_{F}{}^{C}(\sigma')\Bigr) \,\, .
\end{aligned}
\end{equation}
\end{thm}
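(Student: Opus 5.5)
The plan is to follow the strategy of MacKay for the principal chiral model, adapted to the two-current setting. The first step is to write the charges explicitly in terms of $A$ and $B$.

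\textbf{Level-0 and level-1 charges.} With $\star$ exchanging components (up to signs fixed in Appendix \ref{appendix:A}), we have $Q^{(0)A}=\beta\int B_\tau{}^A$. For $J^{(1)}$ we use $\chi^{(0)}(\sigma)=\beta\int^\sigma B_\tau$. Choosing the symmetric (principal-value) prescription $\chi^{(0)}(\sigma)=\tfrac{\beta}{2}\int \mathrm{d}\sigma'\,\epsilon(\sigma-\sigma')B_\tau(\sigma')$, we obtain
\begin{equation*}
Q^{(1)A}=\int \mathrm{d}\sigma\, A_\sigma{}^A(\sigma)+\tfrac{1}{4}f_{BC}{}^{A}\int\!\!\int \mathrm{d}\sigma\,\mathrm{d}\sigma'\,\epsilon(\sigma-\sigma')B_\tau{}^B(\sigma)B_\tau{}^C(\sigma')\,,
\end{equation*}
up to an overall sign convention. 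Here $\beta=m_1^{-1}$ has been used, so that no $\beta$ survives in the bilocal term.

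\textbf{Level-0 relations.} The relation $\{Q^{(0)},Q^{(0)}\}$ follows from the first bracket in \eqref{Yangian-general-brackets}. Integrating over both variables kills the $\partial_\sigma\delta$ Schwinger term, assuming fields decay at infinity, and the factor $\beta^2 m_1=\beta$ reproduces $f^{AB}{}_C Q^{(0)C}$.

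For $\{Q^{(0)},Q^{(1)}\}$ there are two contributions.
\begin{itemize}
\item The local piece contributes $\beta m_3 f^{AB}{}_C\int A_\sigma{}^C$. The $m_4 C^{AB}$ term integrates to zero after integrating $\partial_\sigma\delta$ by parts over the full line with decaying fields. This gives $f^{AB}{}_C\int A_\sigma^C$ exactly when $m_3=m_1$, which is the first condition in \eqref{Yangian-theorem-condition}.
\item In the bilocal piece, $Q^{(0)}$ acts as the adjoint action on each $B_\tau$ factor (again using $\beta m_1=1$). By invariance of $f$ this reproduces $f^{AB}{}_C$ times the bilocal part of $Q^{(1)C}$.
\end{itemize}

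\textbf{Serre relation: the main obstacle.} Here we compute $\{Q^{(1)C},Q^{(1)D}\}$ and contract with $f_D{}^{[AB}$, splitting into local--local, local--bilocal and bilocal--bilocal contributions.

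\emph{Local--local.} The term $\{\int A_\sigma,\int A_\sigma\}$ gives $f^{CD}{}_E(m_5\int A_\sigma+m_6\beta^{-1}Q^{(0)})$. Both pieces are level-0 or level-1-type adjoint tensors of the form $f^{CD}{}_E X^E$. They vanish after contraction, since $f_D{}^{[AB}f^{C]D}{}_E=0$ by the Jacobi identity.

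\emph{Local--bilocal cross terms.} These produce three kinds of terms.
\begin{itemize}
\item From $m_3=m_1$, terms of the schematic form $f\,f\int\!\!\int\epsilon\, A_\sigma B_\tau$. These cancel in the antisymmetrised combination by the same Jacobi manipulation used by MacKay.
\item From $m_4$, terms where $\partial_\sigma\delta$ hits $\epsilon(\sigma-\sigma')$, giving $2\delta$. These produce exactly a local expression $\int B_\tau{}^E(f_E{}^{CF}C_F{}^D-f_E{}^{DF}C_F{}^C)$, i.e.\ $\mathcal V_1^{CD}$. This is killed by the second assumed condition in \eqref{Yangian-theorem-condition}.
\item Symmetric pieces in $C^{AB}$ that are antisymmetric under $C\leftrightarrow D$ need to be tracked carefully. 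I expect this to be the most delicate bookkeeping.
\end{itemize}

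\emph{Bilocal--bilocal.} This uses only the $B$–$B$ bracket. The $m_1$ part gives a trilocal term with products of step functions $\epsilon(\sigma_1-\sigma_2)\epsilon(\sigma_2-\sigma_3)$. Using the identity $\epsilon_{12}\epsilon_{23}+\epsilon_{23}\epsilon_{31}+\epsilon_{31}\epsilon_{12}=-1$ together with Jacobi, this collapses to a product of three $Q^{(0)}$'s, with coefficient $\tfrac{\alpha^2}{12}$ in the normalisation of \eqref{Serre-relations}. The $m_2$ Schwinger term produces $\partial\epsilon=2\delta$ pieces that are local adjoint-type terms and drop out via Jacobi after contraction.

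Matching the overall normalisation of the cubic term so that it yields exactly $\alpha=m_1^2$ is where I expect sign and factor errors. I would fix it by comparing with the principal chiral model case $A=B$ of \cite{MacKay:1992he}. Finally, $\mathfrak g\neq\mathfrak{sl}(2)$ ensures that \eqref{HM-Serre-relations} is non-trivial and implies \eqref{second-serre}.
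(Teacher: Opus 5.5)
\emph{Verdict: there is a genuine gap in your treatment of the non-ultralocal terms.}

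\textbf{What is correct.} Your skeleton matches the paper's, which is itself modelled on MacKay:
\begin{itemize}
\item $\beta=m_1^{-1}$ comes from the level-0 bracket, and $m_3=m_1$ from the local part of $\{Q^{(0)},Q^{(1)}\}$;
\item the $A_\sigma A_\sigma$ terms and the $m_3$ cross terms are central by Jacobi;
\item $\mathcal{V}_1$ arises where $\partial\delta$ is traded for $\partial\epsilon=2\delta$;
\item the cubic term comes from the $m_1$ part of the $B_\tau B_\tau$ bracket.
\end{itemize}
Your symmetric $\chi^{(0)}$ is harmless on the full line. It differs from the paper's one-sided $\theta$-prescription by a piece $\propto f_{CD}{}^{A}Q^{(0)C}Q^{(0)D}=0$. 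The cyclic $\epsilon$-identity is a valid substitute for the paper's rewriting of $Z_1$ as a total derivative of $\chi^{(0)}\chi^{(0)}\chi^{(0)}$; both give $-\tfrac{m_1}{12\beta^3}T_{CFG}{}^{AB}Q^{(0)C}Q^{(0)F}Q^{(0)G}$.

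\textbf{The gap.} The problem is the non-ultralocal terms, above all $m_4C^{AB}(\sigma')\partial_\sigma\delta(\sigma-\sigma')$. You discard them by ``integrating by parts over the full line with decaying fields''. But $C^{AB}$ is field dependent and need not decay: in the main application it is $K^{AB}$ of \eqref{K-map-definition}, a projector of fixed rank at every point. The full-line integrals then depend on the order of integration.
\begin{itemize}
\item In $\{Q^{(0)},Q^{(1)}\}$, integrating over the $A_\sigma$ position first leaves a term $\propto m_4\bigl(C^{AB}(+\infty)-C^{AB}(-\infty)\bigr)$. This is symmetric in $AB$, so it cannot be absorbed into $f^{AB}{}_CQ^{(1)C}$.
\item In $\{Q^{(1)},Q^{(1)}\}$, the integration by parts that converts $\partial\delta$ into $\partial\epsilon=2\delta$ produces, besides your $\mathcal{V}_1$, a boundary term $\mathcal{V}_2\propto\bigl(f_D{}^{AE}C_E{}^B-f_D{}^{BE}C_E{}^A\bigr)\big|_{\pm\infty}\int B_\tau{}^D$. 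This multiplies the full charge rather than a decaying integrand. It is generically not central: \eqref{J-K-algebraic-relation-components} only makes such a combination central when it is contracted with a current at the same point as $K$, not with $Q^{(0)}$.
\item The $m_2$ Schwinger term in $\{Q^{(0)},\text{bilocal part of }Q^{(1)}\}$, which you do not treat, is likewise ambiguous. It gives either $0$ or a multiple of $m_2f^{AB}{}_DQ^{(0)D}$.
\end{itemize}

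\textbf{What is missing.} You need the regularisation prescription of \cite{Klose:2016qfv}, which the paper adopts. Each piece of the charges is placed on its own interval:
\begin{itemize}
\item $Q^{(0)}$ on $[-\sigma_1,\sigma_2]$;
\item the bilocal part of $Q^{(1)}$ on $[-\sigma_3,\sigma_4]$;
\item the local part of $Q^{(1)}$ on $[-\sigma_5,\sigma_6]$.
\end{itemize}
The cutoffs are then removed in the order $\sigma_6<\sigma_4<\sigma_2$ and $\sigma_5<\sigma_3<\sigma_1$. This kills the boundary terms $X_1$, $X_2$ and $\mathcal{V}_2$ identically. It leaves $\mathcal{V}_1$ as the only non-ultralocal remnant, which is exactly what the hypothesis $f_D{}^{[AB}\mathcal{V}_1^{C]D}=0$ is there to handle. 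Without such a prescription, the brackets of the full-line charges are not well defined for field-dependent $C^{AB}$. Your argument therefore does not establish the theorem in the generality stated.

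\textbf{A minor point.} The $m_2$ part of the bilocal--bilocal bracket leaves a \emph{bilocal} term $\propto f^{AB}{}_Df_{CE}{}^D\int\!\!\int\epsilon\,B_\tau{}^CB_\tau{}^E$, not a local one. It is still central, so your conclusion there stands.
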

To prove Theorem \ref{Yangian-theorem} we will closely follow \cite{MacKay:1992he}, supplemented by the analysis in Section 5.2 of \cite{Klose:2016qfv} to deal with the presence of $C^{AB}(\sigma')$ in the relevant brackets \eqref{Yangian-general-brackets}. Studying the structures \eqref{Yangian-general-level0-and-level1-brackets}, \eqref{Serre-relations} requires us to consider the first two currents in \eqref{generalised-BIZZ-tower},
\begin{equation}\label{Yangian-currents-we-need}
J_{\tau}^{(0)}=\beta \, B_{\tau}
\qquad \text{and} \qquad 
J_{\tau}^{(1)}=s \, A_{\sigma}+\tfrac{1}{2\beta}[B_{\tau},\chi^{(0)}] \,\, ,
\end{equation}
where $s=\pm 1$ represents a sign ($s^2=1$) which depends on the choice of convention for the Hodge star operator, and computing the following brackets:
\begin{align}\label{Yangian-brackets-to-compute}
\{Q^{(0)A}(\sigma),Q^{(0)B}(\sigma')\}=&\int_{-\infty}^{+\infty}\!\!\!\!\!\!\mathrm{d}\sigma
\mathrm{d}\sigma' \, \beta^2 \{B_{\tau}{}^{A}(\sigma),B_{\tau}{}^{B}(\sigma')\} \, ,
\\
\{Q^{(0)A}(\sigma),Q^{(1)B}(\sigma')\}=& \int_{-\infty}^{+\infty} \!\!\!\!\!\! \mathrm{d}\sigma
\mathrm{d}\sigma' \, \beta s \, \{B_{\tau}{}^{A}(\sigma),A_{\sigma}{}^{B}(\sigma')\}+\tfrac{1}{2} \, \{B_{\tau}{}^{A}(\sigma),[B_{\tau}(\sigma'),\chi^{(0)}(\sigma')]^{B}\} \, ,
\notag \\
\{Q^{(1)A}(\sigma),Q^{(1)B}(\sigma')\}=& \int_{-\infty}^{+\infty} \!\!\!\!\!\! \mathrm{d}\sigma
\mathrm{d}\sigma' \, s^2 \, \{A_{\sigma}{}^{A}(\sigma),A_{\sigma}{}^{B}(\sigma')\} \, ,
\notag \\
+&\tfrac{s}{2\beta} \, \{A_{\sigma}{}^{A}(\sigma),[B_{\tau}(\sigma'),\chi^{(0)}(\sigma')]^{B}\}+\tfrac{s}{2\beta} \, \{[B_{\tau}(\sigma),\chi^{(0)}(\sigma)]^{A},A_{\sigma}{}^{B}(\sigma')\}
\notag \\
+&\tfrac{1}{4\beta^2}
\{ [B_{\tau}(\sigma),\chi^{(0)}(\sigma)]^{A} ,[B_{\tau}(\sigma'),\chi^{(0)}(\sigma')]^{B}  \} \,\, .
\notag
\end{align}
Before proceeding, however, a few important comments are in order:
\begin{enumerate}
\item The potential $\chi^{(0)}$ was introduced in \eqref{gBIZZ-potentials} via the relation $J^{(0)}=\star \mathrm{d}\chi^{(0)}$, which rewritten in terms of the current $B$, and in components, reads \cite{MacKay:1992he}
\begin{equation}\label{Yangian-B-chi0-relation}
\!\!B_{\tau}\!=\!\tfrac{s}{\beta}\,\partial_{\sigma}\chi^{(0)} \quad \longleftrightarrow \quad \chi^{(0)}\!(\sigma)\!=\!\beta s\!\!\int_{-\sigma_{1}}^{+\sigma_{2}}\!\!\!\!\!\!\!\!\!\!\mathrm{d}\sigma' \, \theta(\sigma\!-\!\sigma')B_{\tau}(\sigma')\quad \text{for} \quad \sigma\!\in\![-\sigma_{1},+\sigma_{2}] \, ,
\end{equation}
with $s$ the same sign as in \eqref{Yangian-currents-we-need} and $\theta(x)$ the Heaviside function defined as
\begin{equation}\label{Heaviside-definition}
\theta(x)=
\begin{cases}
1 \qquad x >0
\\
\tfrac{1}{2} \qquad x=0
\\
0 \qquad x<0
\end{cases}
\qquad \text{satisfying} \qquad
\begin{cases}
\theta(x)+\theta(-x)=1 
\\
\partial_{x}\theta(x)=\delta(x)
\\
\partial_{x}\theta(x-x_{0})=\delta(x-x_{0})
\end{cases}
\,\, .
\end{equation}
This rewriting of the potential in terms of $B_{\tau}$ allows us to cast each contribution to the Poisson brackets \eqref{Yangian-brackets-to-compute} in terms of combinations of the brackets 
\begin{equation}\label{key-brackets}
\{B_{\tau}{}^{A}(\sigma),B_{\tau}{}^{B}(\sigma')\} \quad, \quad \{B_{\tau}{}^{A}(\sigma),A_{\sigma}{}^{B}(\sigma')\} \quad , \quad \{A_{\sigma}{}^{A}(\sigma),A_{\sigma}{}^{B}(\sigma')\} \,\, ,
\end{equation}
and the emergence of a classical Yangian structure encoded in the form \eqref{Yangian-general-level0-and-level1-brackets} and \eqref{Serre-relations} is hence dependent on the detailed structure of \eqref{key-brackets}. This naturally brings up an intriguing fundamental question: what is the most general form of these brackets which remains compatible with the emergence of a Yangian? Finding an answer is certainly a quite complicated task, which we will not dare to tackle here. Having in mind the applications in Section \ref{section:application-to-AFSM}, we will be restricting ourselves to \eqref{Yangian-general-brackets}.
\item The presence of a non-ultralocal term in the second line of \eqref{Yangian-general-brackets}, characterised by a symmetric tensor $C^{AB}(\sigma')$ possibly dependent on the spatial coordinates, requires extra care, as noted in \cite{Klose:2016qfv}. We can proceed with a similar approach, namely by formally putting the contributions to each charge \eqref{Yangian-charges-def} on finite $\sigma$-domains, ultimately sending the limits of integration to infinity following an ordering prescription which allows us to get rid of unwanted boundary terms generated by the non-ultralocal structure, ultimately recovering \eqref{Yangian-charges-def}. We will hence effectively consider the charges
\begin{equation}\label{charges-with-boundaries}
\begin{aligned}
\!\!\!\!\! Q^{(0)A}\!=\!\!\int_{-\sigma_{1}}^{+\sigma_{2}} \!\!\!\!\!\!\! \mathrm{d}\sigma \, \beta B_{\tau}{}^{A}(\sigma) 
\,\,\,;\,\,\,
Q^{(1)A}\!=\!\!\int_{-\sigma_{5}}^{+\sigma_{6}} \!\!\!\!\!\!\! \mathrm{d}\sigma \, s A_{\sigma}{}^{A}(\sigma)+\!\!\int_{-\sigma_{3}}^{+\sigma_{4}} \!\!\!\!\!\!\! \mathrm{d}\sigma \, \tfrac{1}{2\beta}[ B_{\tau}(\sigma),\chi^{(0)}(\sigma)]^{A} \,\, ,
\end{aligned}
\end{equation}
with the labelling of the cutoffs $\sigma_i$ following the structure in \cite{Klose:2016qfv} for better comparison. We will also refer to these limits of integration as ``boundaries''.
\item For later purposes, it is also important to take a closer look at the relation between the potential $\chi^{(0)}$ in \eqref{Yangian-B-chi0-relation} and the first Yangian charge $Q^{(0)}$ in \eqref{charges-with-boundaries}. Substituting the first relation in \eqref{Yangian-B-chi0-relation} into $Q^{(0)}$ in \eqref{charges-with-boundaries} leads to
\begin{equation}
 Q^{(0)A}=s\int_{-\sigma_{1}}^{+\sigma_{2}} \!\!\!\!\!\!\! \mathrm{d}\sigma \, \partial_{\sigma} \chi^{(0)A}(\sigma) = s\Bigl( \chi^{(0)A}(+\sigma_{2})-\chi^{(0)A}(-\sigma_{1})\Bigr) \,\, ,
\end{equation}
while evaluating the second relation in \eqref{Yangian-B-chi0-relation} at the boundaries one finds
\begin{equation}
\begin{aligned}
\chi^{(0)}(+\sigma_{2})&=\beta s\int_{-\sigma_{1}}^{+\sigma_{2}}\!\!\!\!\!\!\!\!\!\!\mathrm{d}\sigma' \, \theta(\sigma_{2}\!-\!\sigma')B_{\tau}(\sigma')=\beta s\int_{-\sigma_{1}}^{+\sigma_{2}}\!\!\!\!\!\!\!\!\!\!\mathrm{d}\sigma' \, B_{\tau}(\sigma')=sQ^{(0)} \, ,
\\
\chi^{(0)}(-\sigma_{1})&=\beta s\int_{-\sigma_{1}}^{+\sigma_{2}}\!\!\!\!\!\!\!\!\!\!\mathrm{d}\sigma' \, \theta(-\sigma_{1}\!-\!\sigma')B_{\tau}(\sigma')=0 \,\, ,
\end{aligned}
\end{equation}
since $\sigma'\in[-\sigma_{1},+\sigma_{2}]$ and hence $\theta(\sigma_{2}-\sigma')=1 $ for all $\sigma'$ and $\theta(-\sigma_{1}-\sigma')=0$ for all $\sigma'$. 
These relations connect the charge $Q^{(0)}$ to the value of the potential $\chi^{(0)}$ at the boundaries, which should ultimately be sent to infinity, leading to
\begin{equation}\label{relation-charges-potential}
Q^{(0)}=s\chi^{(0)}(+\infty) 
\qquad \text{and} \qquad 
\chi^{(0)}(-\infty) = 0 \,\, .
\end{equation}
\item The Serre relations \eqref{Serre-relations} will be recovered by first computing the third Poisson bracket in \eqref{Yangian-brackets-to-compute} and successively contracting the result with the structure constants and antisymmetrising the indices appropriately. In this procedure, the result of the third bracket in \eqref{Yangian-brackets-to-compute} is characterised by two main tensorial structures
\begin{equation}\label{level1-bracket-tensorial-structures-def}
C_{E_{1}..E_{n}}{}^{AB}:=X_{E_{1}...E_{n}}{}^{P}f_{P}{}^{AB}
\qquad \text{and} \qquad
T_{CFG}{}^{AB}:=f_{CD}{}^{[A}f_{EF}{}^{B]}f^{DE}{}_{G}
 \,\, ,
\end{equation}
with $X_{E_{1}...E_{n}}{}^{P}$ unspecified and possibly carrying any number $n$ of lower indices. $C$-structures enjoy the crucial property of being ``central'', in the sense that they vanish when contracted with the structure constants and antisymmetrised,
\begin{equation}\label{central-tensors-def}
f_{B}{}^{[CD}C_{E_{1}...E_{n}}{}^{A]B}=0 \,\, ,
\end{equation}
as shown in \eqref{vanishing-central-terms} and needed for the Serre relations \eqref{Serre-relations}. The second relevant structure in \eqref{level1-bracket-tensorial-structures-def}, the $T_{CFG}{}^{AB}$ tensor, is by definition antisymmetric in the upper indices and has the property of being fully symmetric, up to central contributions, in the lower three indices, as shown in  \eqref{T-sym-first-two-indices}, \eqref{T-sym-last-two-indices} and \eqref{T-sym-first-and-last-indices}:
\begin{equation}\label{T-symmetry}
T_{CFG}{}^{AB}=T_{(CFG)}{}^{AB} + C_{CFG}{}^{AB} \,\, .
\end{equation}
Importantly, while this symmetry follows from the antisymmetrisation in the upper indices, when the lower indices are contracted with three copies of the same object, $T$ becomes automatically antisymmetric in the upper indices:
\begin{equation}\label{T-tensor-contraction}
\!T_{PQR}{}^{CD}O^{P}O^{Q}O^{R}\!=\!f_{PE}{}^{[C}f_{FQ}{}^{D]}f^{EF}{}_{R}O^{P}O^{Q}O^{R}\!=\!f_{PE}{}^{C}f_{FQ}{}^{D}f^{EF}{}_{R}O^{P}O^{Q}O^{R} \, .
\end{equation}
Contracting with the structure constants and antisymmetrising one finds
\begin{equation}\label{Serre-structure-T-tensor-contraction}
\begin{aligned}
f_{D}{}^{[AB}T_{PQR}{}^{C]D}O^{P}O^{Q}O^{R}&=-f^{A}{}_{PI}f^{B}{}_{QJ}f^{C}{}_{RK}f^{IJK}O^{P}O^{Q}O^{R}
\\
&=-a^{ABC}{}_{PQR}O^{P}O^{Q}O^{R}
\,\, ,
\end{aligned}
\end{equation}
as shown in \eqref{T-contraction-intermediate}, \eqref{T-contraction-antisymmetrisation} and needed for the Serre relations \eqref{Serre-relations}.
\end{enumerate}
We now have all the necessary ingredients to prove Theorem \ref{Yangian-theorem}. Since we will proceed by direct evaluation of the brackets \eqref{Yangian-general-level0-and-level1-brackets} and \eqref{Serre-relations}, and various contributions will have to be taken into account, it is convenient to first provide a brief roadmap of the calculation.
\begin{itemize}
\item $\{Q^{(0)A},Q^{(0)B}\}$ has two contributions. One of them vanishes identically, while the other straightforwardly leads to the desired result upon choosing $\beta:=m_{1}^{-1}$ in \eqref{generalised-BIZZ-tower}.
\item $\{Q^{(0)A},Q^{(1)B}\}$ has contributions $Y_{1}^{AB},Y_{2}^{AB},X_{1}^{AB},X_{2}^{AB}$. The last two are boundary terms and cannot contribute to $Q^{(1)C}$, which must be recovered from the first two.
\begin{itemize}
\item $Y_{1}^{AB}$ provides the first part of $Q^{(1)C}$ in \eqref{charges-with-boundaries} and imposes the ordering conditions $\sigma_{6}<\sigma_{2}$ and $\sigma_{5}<\sigma_{1}$ on the boundaries.
\item $Y_{2}^{AB}$ provides the second part of $Q^{(1)C}$ in \eqref{charges-with-boundaries} and imposes that $\sigma_{4}<\sigma_{2}$ and $ \sigma_{3}<\sigma_{1}$. Combining this term with $Y_{1}^{AB}$ further requires $m_{3}=m_{1}$.
\item $X_{1}^{AB}$ and $X_{2}^{AB}$ separately vanish provided that $(\sigma_{6}<\sigma_{2} \, , \, \sigma_{5}<\sigma_{1})$ and $(\sigma_{4}<\sigma_{2} \, , \, \sigma_{3}<\sigma_{1})$, which is compatible with the requirements imposed by $Y_{1}^{AB},Y_{2}^{AB}$.
\end{itemize}
\item $\{Q^{(1)A},Q^{(1)B}\}$ contains three contributions $U^{AB},V^{AB},Z^{AB}$.
\begin{itemize}
\item $U^{AB}$ is immediately found to be central in the sense of \eqref{central-tensors-def} and hence does not contribute to the Serre relations.
\item $Z^{AB}$ is further divided into two more contributions $Z_{1}^{AB}$ and $Z_{2}^{AB}$.
\begin{itemize}
\item $Z_{1}^{AB}$ precisely leads to the Serre relations. This means that the remaining terms, namely $Z_{2}^{AB}$ and $V^{AB}$, should either vanish or cancel each other or be central, not to spoil the desired result.
\item $Z_{2}^{AB}$ splits into two further terms; one of them vanishes identically while the other is central. Altogether neither of them affects the Serre relations.
\end{itemize}
\item $V^{AB}$ contains in turn three new terms
$\mathcal{V}_{1}^{AB},\mathcal{V}_{2}^{AB},\mathcal{V}_{3}^{AB}$.
\begin{itemize}
\item $\mathcal{V}_{3}^{AB}$ is central and does not contribute to the Serre relations.
\item $\mathcal{V}_{2}^{AB}$ is a boundary term which vanishes identically if $\sigma_{6}<\sigma_{4}$ and $\sigma_{5}<\sigma_{3}$. This requirement is compatible with the previous ones and fully fixes the ordering to be used in sending the boundaries to infinity as
\begin{equation}
\sigma_{6} <\sigma_{4}<\sigma_{2}
\qquad \text{and} \qquad 
\sigma_{5} <\sigma_{3}<\sigma_{1} \,\, .
\end{equation}
\item The term $\mathcal{V}_{1}^{AB}$ is the only one remaining and potentially spoiling the Serre relations, leading to the constraint \eqref{Yangian-theorem-condition}.
\end{itemize}
\end{itemize}
\end{itemize}

\begin{proof} For a smoother presentation, here we quote the expressions for the various contributions highlighted above, referring to Appendix \ref{appendix:B} for the more technical computations.

\textbf{Calculation of $\{Q^{(0)A}(\sigma),Q^{(0)B}(\sigma')\}$.} \quad Using \eqref{charges-with-boundaries} and \eqref{Yangian-general-brackets} one finds
\begin{align}
\{Q^{(0)A}(\sigma),Q^{(0)B}(\sigma')\}=&\beta^2\int_{-\sigma_{1}}^{+\sigma_{2}}\!\!\!\!\!\!\!\!\!\!\mathrm{d}\sigma\mathrm{d}\sigma' \, \{ B_{\tau}{}^{A}(\sigma),B_{\tau}{}^{B}(\sigma') \}
\notag\\
=& m_{1}\beta^2 f^{AB}{}_{C} \!\!\int_{-\sigma_{1}}^{+\sigma_{2}}\!\!\!\!\!\!\!\!\!\!\mathrm{d}\sigma\mathrm{d}\sigma' \, B_{\tau}{}^{C}(\sigma)\delta(\sigma\!-\!\sigma')+m_{2}\beta^2\gamma^{AB}\!\!\int_{-\sigma_{1}}^{+\sigma_{2}}\!\!\!\!\!\!\!\!\!\!\mathrm{d}\sigma\mathrm{d}\sigma' \,\partial_{\sigma}\delta(\sigma\!-\!\sigma')
\notag\\
=&m_{1}\beta^2 f^{AB}{}_{C}\!\!\int_{-\sigma_{1}}^{+\sigma_{2}}\!\!\!\!\!\!\!\!\!\!\mathrm{d}\sigma \, B_{\tau}{}^{C}(\sigma)+m_{2}\beta^2\gamma^{AB}\!\!\int_{-\sigma_{1}}^{+\sigma_{2}}\!\!\!\!\!\!\!\!\!\!\mathrm{d}\sigma'[\delta(+\sigma_{2}\!-\!\sigma')\!-\!\delta(-\sigma_{1}\!-\!\sigma')]
\notag\\
=&m_{1}\beta f^{AB}{}_{C}Q^{(0)C} \,\, ,
\end{align}
where in the last step we recombined the first integral into the definition of $Q^{(0)}$ and noticed that the two $\delta$-function contributions in the second integral cancel each other. The final expression agrees with the first relation in \eqref{Yangian-general-level0-and-level1-brackets} after choosing $\beta=m_{1}^{-1}$.

\textbf{Calculation of $\{Q^{(0)A}(\sigma),Q^{(1)B}(\sigma')\}$.} \quad
Using \eqref{charges-with-boundaries} and \eqref{Yangian-general-brackets} one arrives, after some manipulations described in \eqref{Q0-Q1-bracket-intermediate}, at the expressions
\begin{equation}\label{Q0Q1-first-step}
\begin{aligned}
\{Q^{(0)A},Q^{(1)B}\}=&Y_{1}^{AB}+Y_{2}^{AB}+X_{1}^{AB}+X_{2}^{AB} \, , \qquad \text{with}
\\
Y_{1}^{AB}:=& m_{3}\beta f^{AB}{}_{C}  \!\! \int_{-\sigma_{1}}^{+\sigma_{2}}\!\!\!\!\!\!\!\!\!\!\mathrm{d}\sigma
\!\! \int_{-\sigma_{5}}^{+\sigma_{6}}\!\!\!\!\!\!\!\!\!\!\mathrm{d}\sigma' \,sA_{\sigma}{}^{C}(\sigma)\delta(\sigma \!-\! \sigma') \, ,
\\
Y_{2}^{AB}:=&\tfrac{m_{1}\beta s}{2}f_{CD}{}^{B}f^{AC}{}_{E}\!\!\!\int_{-\sigma_{1}}^{+\sigma_{2}}\!\!\!\!\!\!\!\!\!\!\mathrm{d}\sigma \!\! \int_{-\sigma_{3}}^{+\sigma_{4}}\!\!\!\!\!\!\!\!\!\!\mathrm{d}\sigma'\mathrm{d}\sigma''  \epsilon(\sigma'\!-\!\sigma'') B_{\tau}{}^{D}(\sigma'')B_{\tau}{}^{E}(\sigma)\delta(\sigma-\sigma') \, ,
\\
X_{1}^{AB}:=&m_{4}\beta s \int_{-\sigma_{1}}^{+\sigma_{2}}\!\!\!\!\!\!\!\!\!\!\mathrm{d}\sigma
\int_{-\sigma_{5}}^{+\sigma_{6}}\!\!\!\!\!\!\!\!\!\!\mathrm{d}\sigma' \,C^{AB}(\sigma')\partial_{\sigma}\delta(\sigma-\sigma') \,\, ,
\\
X_{2}^{AB}:=&\tfrac{m_{2}\beta s}{2}f_{CD}{}^{B}\!\!\!\int_{-\sigma_{1}}^{+\sigma_{2}}\!\!\!\!\!\!\!\!\!\!\mathrm{d}\sigma \!\! \int_{-\sigma_{3}}^{+\sigma_{4}}\!\!\!\!\!\!\!\!\!\!\mathrm{d}\sigma'\mathrm{d}\sigma''  \epsilon(\sigma'\!-\!\sigma'') B_{\tau}{}^{D}(\sigma'')\gamma^{AC}\partial_{\sigma}\delta(\sigma-\sigma') \, .
\end{aligned}
\end{equation}
Aiming at recombining the four contributions so as to recover the second relation in \eqref{Yangian-general-level0-and-level1-brackets}, with $Q^{(1)C}$ given in \eqref{charges-with-boundaries}, one can notice that $X_{1}^{AB}$ and $X_{2}^{AB}$ are boundary terms which cannot contribute to $Q^{(1)C}$ and should hence cancel each other or vanish separately, while $Y_{1}^{AB},Y_{2}^{AB}$ should be rearranged so as to match the expression in \eqref{charges-with-boundaries}.

The term $Y_{1}^{AB}$, the only one involving $A_{\sigma}$, can be immediately brought to the desired form by requiring the boundaries to satisfy $\sigma_{6}<\sigma_{2}$ and $\sigma_{5}<\sigma_{1}$, hence reproducing the first contribution to $Q^{(1)C}$ in \eqref{charges-with-boundaries} up to the prefactor $m_{3}\beta$:
\begin{equation}
Y_{1}^{AB}=m_{3} \beta f^{AB}{}_{C} \int_{-\sigma_{5}}^{+\sigma_{6}}\mathrm{d}\sigma'
\, sA_{\sigma}{}^{C}(\sigma') \,\, .
\end{equation}
Similarly, after a little manipulation described in \eqref{manipulation-Y2AB} and requiring the boundaries to satisfy $\sigma_{4}<\sigma_{2}$ and $\sigma_{3}<\sigma_{1}$, one arrives at the second contribution to $Q^{(1)C}$ in \eqref{charges-with-boundaries},
\begin{equation}
Y_{2}^{AB}= m_{1}\beta f^{AB}{}_{C} \int_{-\sigma_{3}}^{+\sigma_{4}}\mathrm{d}\sigma \,\, \tfrac{1}{2\beta}[B_{\tau}(\sigma),\chi^{(0)}(\sigma)]^{C} \,\, .
\end{equation}
The two terms can hence be recombined, such that
\begin{equation}
\{Q^{(0)A},Q^{(1)B}\}=m_{1}\beta f^{AB}{}_{C} Q^{(1)C}+X_{1}^{AB}+X_{2}^{AB}
\qquad \text{iff} \qquad m_{3}\equiv m_{1} \,\, ,
\end{equation}
and the boundaries are sent to infinity in the order
\begin{equation}\label{boundary-order-intermediate}
\sigma_{4},\sigma_{6} <\sigma_{2}
\qquad \text{and} \qquad 
\sigma_{3},\sigma_{5}<\sigma_{1} \,\, .
\end{equation}
These conditions should be kept in mind and remain consistent with further ones arising at later stages. Reproducing the result \eqref{Yangian-general-level0-and-level1-brackets} relies at this point on showing that $X_{1}^{AB}+X_{2}^{AB}=0$, and in fact it turns out that these terms separately vanish in light of the conditions \eqref{boundary-order-intermediate}, as discussed in more detail in \eqref{vanishing-X1AB} and \eqref{vanishing-X2AB}. Altogether, the second relation in \eqref{Yangian-general-level0-and-level1-brackets} is recovered, with the prescription \eqref{boundary-order-intermediate}, after imposing that
\begin{equation}\label{all-conditions-from-Q0-Q1}
m_{3}\equiv m_{1}
\qquad \text{and} \qquad 
\beta\equiv m_{1}^{-1} \,\, .
\end{equation}

\textbf{Calculation of $\{Q^{(1)A}(\sigma),Q^{(1)B}(\sigma')\}$.} \quad 
This bracket is instrumental to the recovery of the Serre relations \eqref{Serre-relations}, which should follow after contraction with the structure constants and antisymmetrisation. Using \eqref{charges-with-boundaries} and $\chi^{(0)}$ from \eqref{Yangian-B-chi0-relation}, one obtains
\begin{equation}\label{Q1-Q1-bracket-contributions}
\{Q^{(1)A}(\sigma),Q^{(1)B}(\sigma')\}=U^{AB}+V^{AB}+Z^{AB}
\qquad \text{with} \qquad V^{AB}:=V_{1}^{AB}+V_{2}^{AB} \, ,
\end{equation}
and each contribution explicitly defined as
\begin{align}\label{Q1-Q1-bracket-explicit-contributions-list}
U^{AB}\!\!:=&\int_{-\sigma_{5}}^{+\sigma_{6}}\!\!\!\!\!\!\!\!\mathrm{d}\sigma\mathrm{d}\sigma' \, \{ A_{\sigma}{}^{A}(\sigma),A_{\sigma}{}^{B}(\sigma') \} \, ,
\\
V_{1}^{AB}\!\!:=&\tfrac{1}{2}f_{CD}{}^{B}\int_{-\sigma_{5}}^{+\sigma_{6}}\!\!\!\!\!\!\!\!\mathrm{d}\sigma \int_{-\sigma_{3}}^{+\sigma_{4}}\!\!\!\!\!\!\!\!\mathrm{d}\sigma'\mathrm{d}\sigma'' \, \theta(\sigma'-\sigma'')\{A_{\sigma}{}^{A}(\sigma),B_{\tau}{}^{C}(\sigma')B_{\tau}{}^{D}(\sigma'')\} \, ,
\notag\\
V_{2}^{AB}\!\!:=&\tfrac{1}{2}f_{CD}{}^{A}\int_{-\sigma_{5}}^{+\sigma_{6}}\!\!\!\!\!\!\!\!\mathrm{d}\sigma' \int_{-\sigma_{3}}^{+\sigma_{4}}\!\!\!\!\!\!\!\!\mathrm{d}\sigma\mathrm{d}\sigma'' \, \theta(\sigma-\sigma'')\{B_{\tau}{}^{C}(\sigma)B_{\tau}{}^{D}(\sigma''),A_{\sigma}{}^{B}(\sigma')\} \, ,
\notag\\
Z^{AB}\!\!:=&\tfrac{1}{4}f_{CD}{}^{A}f_{EF}{}^{B}\!\!\!\int_{-\sigma_{3}}^{+\sigma_{4}}\!\!\!\!\!\!\!\!\!\!\mathrm{d}\sigma\mathrm{d}\sigma' \mathrm{d}\sigma'' \mathrm{d}\sigma'''  \theta(\sigma\!-\!\sigma'')\theta(\sigma'\!-\!\sigma''')  \{B_{\tau}{}^{C}(\sigma)B_{\tau}{}^{D}(\sigma''),B_{\tau}{}^{E}(\sigma')B_{\tau}{}^{F}(\sigma''')\} .
\notag
\end{align}
The term $U^{AB}$, computed in \eqref{UAB}, does not contribute to the Serre relations \eqref{Serre-relations} due to its central structure \eqref{level1-bracket-tensorial-structures-def}. We thus proceed with $Z^{AB}$, which after some manipulations of the definition in \eqref{Q1-Q1-bracket-explicit-contributions-list} becomes
\begin{align}
Z^{AB}\!\!=&\tfrac{1}{4}f_{CD}{}^{A}f_{EF}{}^{B}\!\!\!\int_{-\sigma_{3}}^{+\sigma_{4}}\!\!\!\!\!\!\!\!\!\!\mathrm{d}\sigma\mathrm{d}\sigma' \mathrm{d}\sigma'' \mathrm{d}\sigma'''  \theta(\sigma\!-\!\sigma'')\theta(\sigma'\!-\!\sigma''')  \{B_{\tau}{}^{C}(\sigma)B_{\tau}{}^{D}(\sigma''),B_{\tau}{}^{E}(\sigma')B_{\tau}{}^{F}(\sigma''')\} 
\notag \\
=&\tfrac{1}{4}f_{CD}{}^{A}f_{EF}{}^{B}\!\!\!\int_{-\sigma_{3}}^{+\sigma_{4}}\!\!\!\!\!\!\!\!\!\!\mathrm{d}\sigma\mathrm{d}\sigma' \mathrm{d}\sigma'' \mathrm{d}\sigma''' \, \epsilon(\sigma\!-\!\sigma'')\epsilon(\sigma'\!-\!\sigma''')B_{\tau}{}^{C}(\sigma)B_{\tau}{}^{E}(\sigma')\{B_{\tau}{}^{D}(\sigma''),B_{\tau}{}^{F}(\sigma''')\}
\notag\\
=&Z_{1}^{AB}+Z_{2}^{AB} \,\, ,
\end{align}
where we first used the Leibniz rule \eqref{def-PB-Leibniz} to break down the Poisson bracket into four building blocks and recombined them, introducing $\epsilon(\sigma-\sigma')$ defined in \eqref{epsilon-def}, after using antisymmetry of the structure constants and relabelling of the integration variables. In the second step we exploited the bracket \eqref{Yangian-general-brackets} and for simplicity defined 
\begin{align}\label{Z1-Z2-def}
Z_{1}^{AB}\!\!\!:=&\tfrac{m_{1}}{4}\!f_{CD}{}^{A}\!f_{EF}{}^{B}\!f^{DF}{}_{G}\!\!\!\int_{-\sigma_{3}}^{+\sigma_{4}}\!\!\!\!\!\!\!\!\!\!\!\mathrm{d}\sigma\mathrm{d}\sigma' \mathrm{d}\sigma'' \mathrm{d}\sigma'''  \!\epsilon(\sigma\!-\!\sigma'')\epsilon(\sigma'\!\!-\!\sigma''')\!B_{\tau}{}^{C}(\sigma)\!B_{\tau}{}^{E}(\sigma')\!B_{\tau}{}^{G}(\sigma'')\delta(\sigma''\!\!-\!\sigma''') ,
\notag \\
\\
Z_{2}^{AB}\!\!\!:=&\tfrac{m_{2}}{4}f_{CD}{}^{A}\!f_{EF}{}^{B}\gamma^{DF}\!\!\!\int_{-\sigma_{3}}^{+\sigma_{4}}\!\!\!\!\!\!\!\!\!\!\!\mathrm{d}\sigma\mathrm{d}\sigma' \mathrm{d}\sigma''\mathrm{d}\sigma'''   \epsilon(\sigma\!-\!\sigma'')\epsilon(\sigma'\!\!-\!\sigma''')B_{\tau}{}^{C}(\sigma)B_{\tau}{}^{E}(\sigma')\partial_{\sigma''}\delta(\sigma''\!\!-\!\sigma''') \,\, .
\notag 
\end{align}
As shown below \eqref{Z1-first-rearrangement}, the first contribution $Z_{1}^{AB}$ can be rearranged as 
\begin{equation}\label{ZAB-final-result}
Z_{1}^{AB}=-\tfrac{m_{1}s^3}{12\beta^3}T_{CFG}{}^{AB} \, \chi^{(0)C}(+\sigma_{4}) \chi^{(0)F}(+\sigma_{4}) \chi^{(0)G}(+\sigma_{4}) +\Bigl(\text{terms}\propto\chi^{(0)}(-\sigma_{3})\Bigr) \,\, ,
\end{equation}
and once all boundaries are sent to infinity, the relation \eqref{relation-charges-potential} ensures that expression \eqref{ZAB-final-result} exhibits the same structure as the right-hand side of the Serre relations \eqref{Serre-relations}:
\begin{equation}
\lim_{\sigma_{3},\sigma_{4}\rightarrow\infty}Z_{1}^{AB}=-\tfrac{m_{1}}{12\beta^3}T_{CFG}{}^{AB} \, Q^{(0)C}Q^{(0)F}Q^{(0)G} \,\, .
\end{equation}
Combining this result with the fact that $U^{AB}$ is central, that the $T$-tensor enjoys property \eqref{Serre-structure-T-tensor-contraction}, and the previously found requirements \eqref{all-conditions-from-Q0-Q1}, it is then immediate to realise that
\begin{equation}\label{Q1-Q1-contraction-almost-Serre}
\begin{aligned}
f_{D}{}^{[AB}\{Q^{(1)C]},Q^{(1)D}\} =&+\tfrac{m_{1}^4}{12}f^{A}{}_{PI}f^{B}{}_{QJ}f^{C}{}_{RK}f^{IJK} \, Q^{(0)P}Q^{(0)Q}Q^{(0)R}
\\
&+f_{D}{}^{[AB}Z_{2}^{C]D}+f_{D}{}^{[AB}V^{C]D} \,\, .
\end{aligned}
\end{equation}
The Serre relations \eqref{Serre-relations} are hence precisely recovered up to extra contributions depending on the details of $Z_{2}^{AB}$ and $V^{AB}$, which should either cancel each other or separately vanish or be central, not to spoil the result. As shown below equation \eqref{Z2AB-intermediate}, it turns out that $Z_{2}^{AB}$ contains both an identically vanishing contribution and a central one and hence disappears altogether from \eqref{Q1-Q1-contraction-almost-Serre}. We are thus left to examine the structure of $V^{AB}$.

As shown around equation \eqref{V2AB-manipulations}, one finds that
\begin{align}\label{VAB-contributions}
V^{AB}=&\mathcal{V}_{1}^{AB}+\mathcal{V}_{2}^{AB}+\mathcal{V}_{3}^{AB} \,\,,
\qquad \text{with} \qquad 
\\
\mathcal{V}_{1}^{AB}:=&-m_{4}\int_{-\sigma_{5}}^{+\sigma_{6}}\!\!\!\!\!\!\!\!\!\!\mathrm{d}\sigma' \!\!\! \int_{-\sigma_{3}}^{+\sigma_{4}}\!\!\!\!\!\!\!\!\!\!\mathrm{d}\sigma\mathrm{d}\sigma''  \delta(\sigma\!-\!\sigma')\delta(\sigma\!-\!\sigma'')B_{\tau}{}^{D}\!(\sigma'')\Bigl(\!f_{D}{}^{AC}C_{C}{}^{B}\!(\sigma')\!-\!f_{D}{}^{BC}C_{C}{}^{A}\!(\sigma')\!\Bigr) \, ,
\notag \\
\mathcal{V}_{2}^{AB}:=&+\tfrac{m_{4}}{2} \!\! \int_{-\sigma_{5}}^{+\sigma_{6}}\!\!\!\!\!\!\!\!\!\!\mathrm{d}\sigma' \!\!\! \int_{-\sigma_{3}}^{+\sigma_{4}}\!\!\!\!\!\!\!\!\!\!\mathrm{d}\sigma\mathrm{d}\sigma'' \,\,  \partial_{\sigma}\Bigl(\!\epsilon(\sigma\!-\!\sigma'')\delta(\sigma\!-\!\sigma')\!\Bigr)B_{\tau}{}^{D}\!(\sigma'')\Bigl(\!f_{D}{}^{AC}C_{C}{}^{B}\!(\sigma')\!-\!f_{D}{}^{BC}C_{C}{}^{A}\!(\sigma')\!\Bigr) \, ,
\notag \\
\mathcal{V}_{3}^{AB}:=&\!-\!\tfrac{m_{3}}{2}(f_{D}{}^{AC}f^{B}{}_{CE}\!-\!f_{D}{}^{BC}f^{A}{}_{CE})\!\!\!\int_{-\sigma_{5}}^{+\sigma_{6}}\!\!\!\!\!\!\!\!\!\!\mathrm{d}\sigma' \!\!\! \int_{-\sigma_{3}}^{+\sigma_{4}}\!\!\!\!\!\!\!\!\!\!\mathrm{d}\sigma\mathrm{d}\sigma''  \,\, \epsilon(\sigma\!-\!\sigma'')B_{\tau}{}^{D}\!(\sigma'')A_{\sigma}{}^{E}\!(\sigma)\delta(\sigma\!-\!\sigma') \,\, .
\notag 
\end{align}
The last contribution $\mathcal{V}_{3}$ is easily seen to be central due to the Jacobi identity \eqref{def-Jacobi-components},
\begin{equation}\label{VAB-first-line-Jacobi}
(f_{D}{}^{AC}f^{B}{}_{CE}\!-\!f_{D}{}^{BC}f^{A}{}_{CE})=-f^{AB}{}_{C}f^{C}{}_{ED} \,\, ,
\end{equation}
and hence does not contribute to the Serre relations. $\mathcal{V}_{2}$ and $\mathcal{V}_{1}$ originate from the same non-ultralocal term, the second line in \eqref{VAB-result-intermediate}, and should thus be separately central and/or vanishing. Examining $\mathcal{V}_{2}$, as detailed around equation \eqref{calV2-appendix}, reveals that this term vanishes provided the boundaries are sent to infinity in the order $\sigma_{6}<\sigma_{4}$ and $\sigma_{5}<\sigma_{3}$. Comparing these conditions with the ones previously found in \eqref{boundary-order-intermediate} completely removes the remaining ambiguity, establishing the ordering prescription
\begin{equation}\label{conditions-on-boundaries-summary}
\sigma_{6}<\sigma_{4}<\sigma_{2} 
\qquad \qquad \text{and} \qquad \qquad 
\sigma_{5}<\sigma_{3}<\sigma_{1} \,\, .
\end{equation}
Altogether, the bracket \eqref{Q1-Q1-contraction-almost-Serre} becomes
\begin{equation}\label{Q1-Q1-contraction-really-almost-Serre}
f_{D}{}^{[AB}\{Q^{(1)C]},Q^{(1)D}\} =\tfrac{m_{1}^4}{12}f^{A}{}_{PI}f^{B}{}_{QJ}f^{C}{}_{RK}f^{IJK} \, Q^{(0)P}Q^{(0)Q}Q^{(0)R}+f_{D}{}^{[AB}\mathcal{V}_{1}^{C]D} \,\, ,
\end{equation}
so that, after simplifying $\mathcal{V}_{1}^{AB}$ in \eqref{VAB-contributions} in light of the conditions \eqref{conditions-on-boundaries-summary} as
\begin{equation}
\mathcal{V}_{1}^{AB}=-m_{4}\int_{-\sigma_{5}}^{+\sigma_{6}}\!\!\!\!\!\!\!\!\!\!\mathrm{d}\sigma' \,   B_{\tau}{}^{D}(\sigma')\Bigl(\!f_{D}{}^{AC}C_{C}{}^{B}(\sigma')\!-\!f_{D}{}^{BC}C_{C}{}^{A}(\sigma')\!\Bigr) \,\, ,
\end{equation}
and sending all the boundaries to infinity, recovers the Serre relation \eqref{Serre-relations} subject to the requirement \eqref{Yangian-theorem-condition}, which concludes the proof.
\end{proof}
We close this section by noting that while \eqref{Yangian-theorem-condition} could in principle be satisfied by an identically vanishing $\mathcal{V}_{1}$, it is in practice more reasonable to regard it as a condition on its integrand, which should exhibit the central structure \eqref{level1-bracket-tensorial-structures-def}, namely
\begin{equation}\label{B-C-relation}
B_{\tau}{}^{D}(\sigma')\Bigl(\!f_{D}{}^{AC}C_{C}{}^{B}(\sigma')\!-\!f_{D}{}^{BC}C_{C}{}^{A}(\sigma')\!\Bigr)\equiv X^{D}(\sigma')f_{D}{}^{AB} \, ,
\end{equation}
so as not to spoil the desired result. 

The condition \eqref{B-C-relation}, derived here as a requirement for a general set of Poisson brackets \eqref{Yangian-general-brackets} to reproduce the Serre relations \eqref{Serre-relations}, is automatically satisfied with $X^{D}(\sigma')\equiv 2 B_{\tau}{}^{D}(\sigma')$ when $C^{CB}(\sigma')\equiv \gamma^{CB}$. However, more general choices of $C^{CB}(\sigma')$ can interestingly exhibit the form of identity \eqref{J-K-algebraic-relation-components}, which in the context of symmetric-space sigma models is satisfied by construction with $C^{AB}(\sigma')$ being the $K$-map \eqref{K-map-definition} and $B=X$ the flat and conserved current $L$ of \eqref{SSSM-J-definition}. This ensures that all classes of sigma models studied in Section \ref{section:application-to-AFSM} \emph{do} exhibit a central term $\mathcal{V}_{1}^{AB}$, whose contribution drops out of \eqref{Q1-Q1-contraction-really-almost-Serre} hence reproducing the Serre relations \eqref{Serre-relations}. However, this analysis additionally shows that if any other model enjoyed the Poisson bracket structure \eqref{Yangian-general-brackets} for some other choice of $C^{CB}(\sigma')\neq K^{CB}(\sigma')\neq\gamma^{CB}$, then this model would necessarily have to either respect \eqref{B-C-relation} or make $\mathcal{V}_{1}^{AB}$ vanish in order for the correct Yangian structure to arise.

\section{Application to Auxiliary Field Sigma Models}\label{section:application-to-AFSM}

Here we apply the results of Sections \ref{section:BIZZ-charges} and \ref{section:Yangians} to auxiliary field sigma models (AFSMs), first introduced as deformations of principal chiral models \cite{Ferko:2024ali} then successively generalised to more complicated classes of seed theories, and in fact representing the main source of inspiration in deriving the results of the previous sections.\\
\indent
To begin, we will briefly review AFSMs and highlight relevant common features using the framework introduced in Section 4.1 of \cite{Bielli:2025uiv}, recognising how (at least a subset of) such theories correctly satisfy the assumption of Theorem \ref{gBIZZ-theorem}, hence allowing for the infinite tower of generalised BIZZ conserved currents \eqref{generalised-BIZZ-tower} and a Lax of the form \eqref{gBIZZ-Lax} written in terms of a flat current $A$ and a conserved current $B$.\\
\indent
The second step will consist of showing, on a case-by-case basis, how each of the analysed models exhibits Poisson brackets within the structure \eqref{Yangian-theorem} and satisfies the requirements \eqref{Yangian-theorem-condition}, hence ensuring the existence of an underlying classical Yangian symmetry \eqref{Yangian-general-level0-and-level1-brackets} and \eqref{Serre-relations}. To achieve this goal it will be very instructive, and in fact basically sufficient, to first go through the construction of the relevant Poisson brackets in the respective undeformed theories -- whose details are given in Appendix \ref{appendix:C}. Interestingly, thanks to a recurrent pattern by which the relevant currents can always be written in a form which only depends explicitly on the physical fields, this preliminary analysis will also turn out to hold true in the presence of the deformation, which will exhibit (at least) two different mechanisms by which a flat and conserved current $L$ in the seed theory is transformed into a flat current $A$ and a conserved current $B$ in the deformed one.\\
\indent
As a byproduct of the above analysis, it will then be natural to study the Maillet bracket structure of the Lax arising from the generalised BIZZ construction \eqref{gBIZZ-Lax} in combination with the Poisson brackets \eqref{Yangian-general-brackets}, finally showing explicitly the Hamiltonian integrability of the examined auxiliary field sigma models.

\subsection{AFSM and generalised BIZZ}
Inspired by the $4d$ Ivanov-Zupnik formalism \cite{Ivanov:2002ab,Ivanov:2003uj}, auxiliary field deformations of $2d$ sigma models have been constructed by coupling auxiliary fields $v$ to a given seed theory, in such a way that the undeformed model is recovered for vanishing $v$-self-interactions \cite{Ferko:2024ali}. The AFSM theories thus considered exhibit the common structure \cite{Bielli:2025uiv}
\begin{equation}\label{AF-lagrangian-common-structure}
S^{\text{E}}[\phi,v] := \int_{\Sigma}\mathrm{d}\sigma^{+}\mathrm{d}\sigma^{-} \, \mathcal{L}^{\text{E}}(\phi,v) 
\qquad \text{with} \qquad
\mathcal{L}^{\text{E}}(\phi,v) := \mathcal{L}(\phi,v) + E(v) \, ,
\end{equation}
where $\Sigma$ denotes a flat 2d Lorentzian worldsheet with lightcone coordinates $\sigma^{\pm}$ and
\begin{itemize}
\item $\phi$ schematically denotes the fundamental fields of the theory -- typically interpreted as coordinates on some background $\mathcal{M}$ and regarded as maps $\phi: \Sigma \rightarrow \mathcal{M}$.
\item $v$ denotes the auxiliary fields -- $\mathfrak{g}$-valued as in \cite{Ferko:2024ali,Bielli:2024khq,Bielli:2024ach,Bielli:2024fnp,Bielli:2024oif}, naturally defined on $\Sigma$, and transforming with the fundamental fields under the isometries of $\mathcal{M}$ \cite{Bielli:2024khq}.
\item $E(v)$ is an unspecified self-interaction function for the auxiliary fields, only required to depend on them through the combinations \cite{Bielli:2025uiv}
\begin{equation}\label{newE}
E(v):=E(\nu_{+2},...,\nu_{+N},\nu_{-2},...,\nu_{-N}) 
\quad \text{with} \quad
\nu_{\pm n}:=\mathrm{tr}(v_{\pm}^n) 
\quad \forall \, n\in \{2,...N\} \, .
\end{equation}
This ensures classical integrability of the deformed theory via the relation
\begin{equation}\label{deltapm_in_review_sec}
[\Delta_{\pm}, v_{\mp}] \deq 0
\qquad \text{with} \qquad 
\Delta_{\pm}:=\delta_{v_{\mp}}E(\nu_{2},...,\nu_{N}) \, ,
\end{equation}
with $\deq$ denoting an equality which holds when the auxiliary field equations of motion are satisfied, and $N$ being a large enough integer to guarantee having a complete set of algebraic structures describing completely symmetric invariant tensors -- for instance, $N$ could be the rank of the Lie algebra. 
\end{itemize}
\noindent
The Lagrangian $\mathcal{L}^{\text{E}}(\phi,v)$ is engineered so as to reduce to the original undeformed theory $\mathcal{L}(\phi)$ when integrating out the auxiliary fields for $E=0$, while non-trivial interaction functions generically induce complicated deformations of the seed theory by operators $\mathcal{O}$, whose structure depends on the explicit form of $E$ and on the variables $\nu_{\pm n}$ it depends on. Specific choices of $E$ have been shown to induce $T\bar{T}$, root-$T\bar{T}$ and deformations by operators built out of higher-spin conserved currents, and various classes of seed theories have been deformed within the above framework, exhibiting interesting common features. Among these is the following set of EOM for Lie-algebra-valued one-forms $\mathcal{A},\mathcal{B},\mathcal{C}$ \cite{Bielli:2025uiv}:
\begin{align}
D_{+}\mathcal{B}_{-}\!+\!D_{-}\mathcal{B}_{+} \deq & \, 0 
\quad\,\,\,\, \text{with} \quad\,\,\,\,
\mathcal{B}_{\pm} := -(\mathcal{A}_{\pm}+2v_{\pm}) \, , \label{AF-general-EOM-for-identity}
\\
\mathcal{A}_{\pm}+v_{\pm}+\Delta_{\pm} \deq & \, 0 
\quad\,\,\,\, \text{with} \quad\,\,\,\, 
\Delta_{\pm}:= \delta_{v_{\mp}}E(v) 
\quad \text{and} \quad [\Delta_{\pm},v_{\mp}]\deq \, 0 \, , \notag
\\
D_{+}\mathcal{A}_{-} \!-\!D_{-}\mathcal{A}_{+} \!+\! a \,[\mathcal{A}_{+},\mathcal{A}_{-}] \deq & \, 0 
\quad\,\,\,\, \text{with} \quad\,\,\,\, 
 D_{\pm}:=\partial_{\pm}\!+\![\mathcal{C}_{\pm},-]
\quad \text{and} \quad a \! \in \! \mathbb{R} \, \text{const.} \, , \notag
\end{align}
whose detailed structure depends on the specific model under consideration.
\begin{itemize}
\item AF-PCMs are characterised by $a=1$ and
\begin{equation}
\mathcal{A}_{\pm}:= j_{\pm} 
\qquad , \qquad 
\mathcal{C}_{\pm}:=0
\qquad \text{with} \qquad 
j_{\pm}:=g^{-1}\partial_{\pm}g \, .
\end{equation}
\item AF-non-Abelian T-dual models are characterised by $a=1$ and
\begin{equation}
\mathcal{A}_{\pm} := \tilde{j}_{\pm}
\qquad , \qquad \mathcal{C}_{\pm}:=0
\qquad \text{with} \qquad 
\tilde{j}_{\pm}:= {\pm}\frac{1}{1\pm\mathrm{ad}_{X}}(\partial_{\pm}X\mp 2v_{\pm}) \, .
\end{equation}
\item AF-(bi)-Yang-Baxter models are characterised by $a=(1-c^2\eta^2+\tilde{c}^2\zeta^2)$ and
\begin{equation}\label{AF-summary-AFYB}
\begin{aligned}
\mathcal{A}_{\pm}:=  J_{\pm}^{\zeta}
\qquad &, \qquad 
\mathcal{C}_{\pm}:= \pm\zeta\tilde{\mathcal{R}}(\mathcal{B}_{\pm})
\qquad \text{with} \qquad 
\\
J_{\pm}^{\zeta}:= -(\mathfrak{J}_{\pm}^{\zeta}\!+\!2v_{\pm})
\qquad \text{a}&\text{nd} \qquad 
\mathfrak{J}_{\pm}^{\zeta}\!:=  \!-\frac{1}{1\!\mp\! \eta \mathcal{R}_{g} \!\mp\! \zeta \tilde{\mathcal{R}}}(j_{\pm}\!+\!2v_{\pm}) \, .
\end{aligned}
\end{equation}
\item AF PCM+WZ models\footnote{\label{footnote-WZ-rescaling}This model was imprecisely grouped with the AF-PCM in \cite{Bielli:2025uiv}: while the last two equations in \eqref{AF-general-EOM-for-identity} can in fact be characterised by $a\equiv 1$, $\mathcal{A}_{\pm}\equiv j_{\pm}$ and $\mathcal{C}_{\pm}\equiv 0$, as for the PCM, the relation $\mathcal{B}_{\pm}=-(\mathcal{A}_{\pm}+2v_{\pm})$ is then not respected by these identifications, since the correct conserved current for this theory is $\mathcal{B}_{\pm}$ in \eqref{AF-PCM+WZ-currents}, which differs from $\mathfrak{J}_{\pm}$. The identifications \eqref{AF-PCM+WZ-currents} arise from the Lax in \cite{Bielli:2024ach} and the flatness of $\mathcal{A}_{\pm}$ in \eqref{AF-PCM+WZ-currents} reduces indeed to the flatness of $j_{\pm}$ when $\mathcal{B}_{\pm}$ is conserved. Even though in the rest of this work we will use \eqref{AF-PCM+WZ-currents}, it should be noted that such currents still do break some of the relations in \eqref{AF-general-EOM-for-identity}. To begin, $\mathcal{B}_{\pm}=-(\mathcal{A}_{\pm}+2v_{\pm})$ is not respected unless the currents in \eqref{AF-PCM+WZ-currents} are rescaled as $\tilde{\mathcal{A}}_{\pm}=(1\mp\tfrac{\kay}{\hay})^{-1}\mathcal{A}_{\pm}$ and $\tilde{\mathcal{B}}_{\pm}=(1\mp\tfrac{\kay}{\hay})^{-1}\mathcal{B}_{\pm}$, which do satisfy $\tilde{\mathcal{B}}_{\pm}=-(\tilde{\mathcal{A}}_{\pm}+2v_{\pm})$. Moreover, even if one were to perform such a rescaling, the $v$-EOM, namely the second equation in \eqref{AF-general-EOM-for-identity}, would still read 
\begin{equation}
j_{\pm}+v_{\pm}+\Delta_{\pm}\deq 0 \,\, , 
\qquad \text{such that} \qquad j_{\pm}\neq \mathcal{A}_{\pm}\neq\tilde{\mathcal{A}}_{\pm}  \,\, .
\end{equation}
This difference comes from the fact that the WZ term is not modified by the auxiliary fields and hence the $v$-EOM must by construction agree with the one of the simpler AF-PCM. In the spirit of looking for a more general set of EOM, able to possibly encompass even more examples of AF-deformed sigma models, it is interesting to notice how the current $j_{\pm}$ appearing in the $v$-EOM could in fact be re-expressed as
\begin{equation}
j_{\pm}=(1\pm\tfrac{\kay}{\hay})^{-1}\Bigl(\tilde{\mathcal{A}}_{\pm}\pm \tfrac{\kay}{\hay} \tilde{\mathcal{B}}_{\pm} \Bigr)
\qquad \xrightarrow[\kay \rightarrow 0]
\qquad \qquad 
j_{\pm}=\tilde{\mathcal{A}}_{\pm}=\mathcal{A}_{\pm}\,\, ,
\end{equation}
a feature that one might interpret as more generally arising in the presence of a WZ term.
} are characterised by $a=1$, $\mathcal{C}_{\pm}:=0$ and
\begin{equation}\label{AF-PCM+WZ-currents}
\mathcal{A}_{\pm}:= (j_{\pm} \mp\tfrac{\kay}{\hay}\mathfrak{J}_{\pm})
\quad , \quad 
\mathcal{B}_{\pm}:= (\mathfrak{J}_{\pm} \mp\tfrac{\kay}{\hay}j_{\pm})
\qquad \text{with} \qquad 
\mathfrak{J}_{\pm}:=-(j_{\pm}+2v_{\pm}) \, .
\end{equation}
\item AF-Symmetric Space models, are characterised by $a=0$ and 
\begin{equation}
\mathcal{A}_{\pm}:=j_{\pm}^{(2)}
\quad , \quad 
\mathcal{C}_{\pm}:=j_{\pm}^{(0)} 
\qquad \text{with} \qquad
v_{\pm} \rightarrow v_{\pm}^{(2)} \, .
\end{equation}
\end{itemize}
A crucial feature of the above theories, which naturally relates to (and in fact inspires) the generalised BIZZ construction of Section \ref{section:BIZZ-charges}, is that the auxiliary field EOM 
\begin{equation}
\mathcal{A}_{\pm}+v_{\pm}+\Delta_{\pm}\deq 0 \,\, , 
\quad \text{together with}
\quad 
\mathcal{B}_{\pm}:=-(\mathcal{A}_{\pm}+2v_{\pm})
\quad \text{and} \quad
[\Delta_{\pm},v_{\mp}]\deq 0 \,\, ,
\end{equation}
implies the fundamental commutator identities\footnote{Note that for the AF-PCM-WZ model \eqref{AF-PCM+WZ-currents} these relations are ensured by the dependence on $\mathfrak{J}_{\pm}$.}
\begin{equation}
[\mathcal{B}_{+},\mathcal{B}_{-}]\deq [\mathcal{A}_{+},\mathcal{A}_{-}]
\qquad \text{and} \qquad 
[\mathcal{A}_{+},\mathcal{B}_{-}]\deq [\mathcal{B}_{+},\mathcal{A}_{-}] \,\, .
\end{equation}
These are used to show the Lax integrability of the above models and represent the lightcone components version of the generalised BIZZ requirements \eqref{A,B-commutators}. At the same time, in various classes of explicit models, the first and third EOM in \eqref{AF-general-EOM-for-identity} correspond to the remaining two conditions \eqref{A,B-definition}, and have indeed been encoded in Lax connections 
\begin{equation}\label{Lax-lightcone}
\mathfrak{L}_{\pm}=\frac{A_{\pm}\pm zB_{\pm}}{1-z^2} \,\, ,
\end{equation}
which represent the lightcone components version of the generalised BIZZ Lax \eqref{gBIZZ-Lax}.
More specifically, the expression \eqref{Lax-lightcone} has been obtained for AF-PCMs \cite{Ferko:2024ali}, AF-non-Abelian-T-dual models \cite{Bielli:2024khq,Bielli:2024ach}, AF-YB models\footnote{In this work we will not consider the case of bi-YB models, but for the sake of comparison, we briefly recall the expressions for the Lax connections written in terms of $\mathcal{A,B,C}$ in \cite{Bielli:2025uiv}, 
\begin{equation}\label{common-Lax}
\mathfrak{L}_{\pm}=l_{1}^{\pm}(z)\frac{\mathcal{A}_{\pm}\pm z\mathcal{B}_{\pm}}{1\pm z} + l_{2}^{\pm}(z)\Theta(v_{\pm})+l_{3}^{\pm}(z) \mathcal{C}_{\pm} \,\, ,
\end{equation}
with suitable choices of operators $\Theta:\mathfrak{g}\rightarrow\mathfrak{g}$ and functions $l^\pm_1(z)$, $l^\pm_2(z)$, $l^\pm_3(z)$.} \cite{Bielli:2024fnp} and AF-PCM-WZ models \cite{Bielli:2024ach} with 
\begin{equation}\label{A,B-identifications-PCM-TD-YB-PCMWZ}
A_{\pm}:=a\mathcal{A}_{\pm}
\qquad \text{and} \qquad 
B_{\pm}:=a\mathcal{B}_{\pm}
\,\, ,
\end{equation}
where $a\in \mathbb{R}$ is the constant characterising the AFSM EOM \eqref{AF-general-EOM-for-identity}. A similar relation also exists for AF-SSSMs, even though the expression proposed in \cite{Bielli:2024oif} does not technically take the form \eqref{Lax-lightcone}: it is indeed known, already for undeformed SSSM, that a gauge invariant Lax connection $\mathfrak{L}'(z)$ can be obtained from the non-invariant Lax $\mathfrak{L}(z)$ by taking the combination $\mathfrak{L}'(z)=\mathrm{g}(\mathfrak{L}(z)-j)\mathrm{g}^{-1}$. Doing this for the AF-SSSM Lax in \cite{Bielli:2024oif} leads to: \begin{equation}
\begin{aligned}\label{Lax-gauge-transformation-AF-SSSM}
\mathfrak{L}_{\pm}'(z)=\mathrm{Ad}_{\mathrm{g}}\Bigl( j_{\pm}^{(0)}+\frac{(z^2+1)j_{\pm}^{(2)}\mp2z\, \mathfrak{J}_{\pm}^{(2)}}{z^2-1} -j_{\pm}\Bigr) =\frac{-2\mathrm{Ad}_{\mathrm{g}}(j_{\pm}^{(2)})\pm 2z \mathrm{Ad}_{\mathrm{g}}(\mathfrak{J}_{\pm}^{(2)})}{1-z^2} \,\, ,
\end{aligned}
\end{equation} 
which matches \eqref{Lax-lightcone} after redefining $z\rightarrow-z$ and introducing
\begin{equation}\label{A,B-AF-SSSM}
A_{\pm}:=-2\mathrm{Ad}_{\mathrm{g}}(j_{\pm}^{(2)})=-2\mathrm{Ad}_{\mathrm{g}}(\mathcal{A}_{\pm})
\qquad , \qquad 
B_{\pm}:=-2\mathrm{Ad}_{\mathrm{g}}(\mathfrak{J}_{\pm}^{(2)})=-2\mathrm{Ad}_{\mathrm{g}}(\mathcal{B}_{\pm}) \,\, .
\end{equation}
The above identifications highlight the relation between the AFSM currents $\mathcal{A},\mathcal{B}$ in \eqref{AF-general-EOM-for-identity} and the generalised BIZZ currents $A,B$ in \eqref{A,B-definition}, subject to the constraints \eqref{A,B-commutators}, establishing the existence of the tower of conserved currents \eqref{generalised-BIZZ-tower} in these deformed theories.

\subsection{Poisson brackets}
In this subsection our task is to explicitly compute the Poisson brackets of the currents $A$ and $B$ for the classes of AFSMs discussed above, showing how they exhibit the structure \eqref{Yangian-general-brackets} and satisfy the requirements \eqref{Yangian-theorem-condition}, ensuring the existence of an underlying classical Yangian algebra obeying the relations \eqref{Yangian-general-level0-and-level1-brackets} and \eqref{Serre-relations}. This result in turn implies that such an underlying symmetry is also present for all the respective undeformed theories, which are recovered by integrating out the auxiliary fields when $E=0$ and are characterised by the single current $L=A=B$ and a Lax connection of the form 
\begin{equation}\label{Lax-undeformed-models}
\mathfrak{L}_{\pm}=\frac{L_{\pm}}{1\mp z} \,\, ,
\end{equation}
which is indeed the lightcone component version of \eqref{BIZZ-Lax} with the flat and conserved current $L$ in \eqref{j-properties-definition} depending on the specific theory.
In fact, as mentioned above, to determine the Poisson brackets of the currents $A,B$ it will be extremely useful, and basically sufficient, to first go through the analogous construction for the flat and conserved current $L$. As will become clear by the end of this section, the deformation introduced via the auxiliary fields acts indeed by splitting the current $L$ into $A$ and $B$ with a specific mechanism which depends on the theory under consideration but follows a systematic pattern, already observed in \cite{Ferko:2024ali} for PCMs and in \cite{Cesaro:2024ipq} for SSSM. This splitting mechanism does not alter the relation between the physical momenta of the theory and the currents, which can be entirely expressed in terms of physical fields and only implicitly depend on the auxiliaries. In turn, together with the feature discussed in the next paragraph, this will imply that the deformed Poisson brackets can be obtained from the undeformed ones by replacing each component of $L_{\tau}$ with $B_{\tau}$ and each component of $L_{\sigma}$ with $A_{\sigma}$. 

\paragraph{On the canonical structure of AFSM.} Before diving into a long survey of models and their respective auxiliary field deformations, we would like to briefly pause for an aside about another common feature underlying such theories, which are examples of constrained systems \cite{Dirac:1964:LQM}.   See also \cite{henneaux1992quantization} for a detailed discussion and \cite{mann2018lagrangianandhamiltoniandynamics} for a concise and effective introduction. We refer to section 4 of \cite{Ferko:2024ali}, section 3.3 of \cite{Bielli:2024ach}, section 3.2 of \cite{Cesaro:2024ipq} and section 3.4 of \cite{Cesaro:2025msv} for the case of AFSM, which we also summarise below. \\
\indent
Coupling a Lagrangian to a set of auxiliary fields generally has the side effect, when switching to the Hamiltonian description, of introducing constraints on the phase space, which gets effectively reduced by the fact that the new fields are by definition non-dynamical and hence the deformed Lagrangian is independent of their time derivatives. One should hence take into account the time evolution of such constraints, check whether this generates new conditions and in turn whether or not these are preserved in time, leading to further constraints in a recursive process which ultimately provides a list of all the conditions that should be taken into account. When some of these constraints, named second-class, do not commute, one should replace the Poisson brackets by the notion of Dirac brackets, which carry this information via extra contributions proportional to a matrix whose entries encode the result of such non-commutativity.\\
\indent
In AFSMs, the Lagrangian $\mathcal{L}_{\text{AFSM}}=\mathcal{L}_{\text{AFSM}}(Z^{\mu},\partial_{\sigma}Z^{\mu},\partial_{\tau}Z^{\mu},v_{\tau}{}^{A},v_{\sigma}{}^{A})$ is a function of the background coordinates $Z^{\mu}$, their derivatives with respect to the worldsheet coordinates and the $\tau$ and $\sigma$ components of the Lie-algebra valued auxiliary fields. The phase space is hence spanned by $(Z^{\mu},\pi_{\mu},v_{\alpha}{}^{A},\mathfrak{p}_{A}{}^{\alpha})$, with the momenta defined as
\begin{equation}
\pi_{\mu}:=\frac{\mathcal{L}_{\text{AFSM}}}{\partial(\partial_{\tau}Z^{\mu})}
\qquad \text{and} \qquad 
\mathfrak{p}_{A}{}^{\alpha}:=\frac{\mathcal{L}_{\text{AFSM}}}{\partial (\partial_{\tau}v_{\alpha}{}^{A})} \,\, .
\end{equation}
While the first relation can be inverted to write the so-called canonical Hamiltonian $H_{\text{AFSM}}$ in terms of $\{Z^{\mu},\pi_{\mu},v_{\alpha}{}^{A}\}$, the second is identically vanishing  $\mathfrak{p}_{A}{}^{\alpha}=0$ since $\partial_{\tau}v_{\alpha}{}^{A}$ is never introduced in the Lagrangian.
It hence leads to a set of constraints on the phase space
\begin{equation}
\phi_{i}(Z^{\mu},\pi_{\mu},v_{\alpha}{}^{A},\mathfrak{p}_{A}{}^{\alpha}):=\mathfrak{p}_{A}{}^{\alpha}  \approx 0
\qquad \text{with} \qquad i=1,...,2\text{dim}(\mathfrak{g}) \,\, ,
\end{equation}
which are named primary, as they come from the definition of momenta and do not involve EOM. The canonical Hamiltonian is uniquely defined on the constrained phase space where $\phi_{i}\approx 0$, but there are many continuations to the full phase space, so one needs to consider a modified Hamiltonian where the constraints are imposed via Lagrange multipliers $u^{i}=u^{i}(Z^{\mu},\pi_{\mu},v_{\alpha}{}^{A},\mathfrak{p}_{A}{}^{\alpha})$ that need to be determined
\begin{equation}
\mathcal{H}_{\text{AFSM}}=H_{\text{AFSM}}+u^{i}\phi_{i} \,\, .
\end{equation}
On the constrained phase space, the total Hamiltonian evolves quantities in time via
\begin{equation}
\dot{F}=\{F,\mathcal{H}_{\text{AFSM}}\}
=\{F,H_{\text{AFSM}}\}+u^{i}\{F,\phi_{i}\} 
\end{equation}
and one should make sure that the desired constraints are preserved under time evolution. Here starts a recursive procedure where new constraints are generated whenever the time evolution of a previous constraint cannot be set to zero by appropriately choosing the Lagrange multipliers. In the case of AFSM one finds that
\begin{equation}\label{phi-dot-eq}
\dot\phi_{i}=\{H_{\text{AFSM}},\phi_{i}\}+u^{j}\{\phi_{i},\phi_{j}\}=\{H_{\text{AFSM}},\mathfrak{p}_{A}{}^{\alpha}\}=\frac{\partial H_{\text{AFSM}}}{\partial v_{\alpha}{}^{A}}
=:\psi_{i} \,\, ,
\end{equation}
where in the second equality we used that $\{\phi_{i},\phi_{j}\}=0 \,\, \forall i,j$, since these are by construction canonical momenta obeying $\{\mathfrak{p}_{A}{}^{\alpha},\mathfrak{p}_{B}{}^{\beta}\}=0$, and in the third one we used that canonical momenta generate translations in the conjugate variables. Time evolution of the primary constraints corresponds to the auxiliary field EOM and since we cannot solve \eqref{phi-dot-eq} for the multipliers $u^j$, we end up with new constraints $\psi_{i}\approx 0$, called secondary. We then need to proceed further and check whether these are preserved under time evolution, namely
\begin{equation}\label{secondary-constraints}
\begin{aligned}
\dot\psi_{i} =\{H_{\text{AFSM}},\psi_{i}\}+u^{j}\{\psi_{i},\phi_{j}\} \approx 0 \,\, .
\end{aligned}
\end{equation}
This step is now different from the previous one, since the matrix
\begin{equation}
M_{ij}:=\{\psi_{i},\phi_{j}\}=\frac{\partial^2H_{\text{AFSM}}}{\partial v_{\alpha}{}^{A}\partial v_{\beta}{}^{B}}
\end{equation}
is generally invertible when the auxiliary field EOM, i.e. $\psi_{i}$, can be solved for the auxiliary fields\footnote{\label{footnote:M-invertibiliy}This represents an assumption that we will always make, since one would eventually like to integrate out the auxiliary fields and obtain the explicit deformed theory in terms of the physical fields.}. Hence at this stage the Dirac algorithm terminates and one can preserve the secondary constraints along time evolution by solving \eqref{secondary-constraints} for the multipliers
\begin{equation}
u^{i}=-M^{ij}\{H_{\text{AFSM}},\psi_{j}\}
\qquad \text{with} \qquad
M_{ij}M^{jk}=\delta_{i}{}^{k} \,\, .
\end{equation}
This fixes the modified Hamiltonian and all the resulting constraints can be enumerated
\begin{equation}
\Omega_{I}=(\phi_{i},\psi_{j})
\qquad \text{with} \qquad 
I=1,...,4\text{dim}(\mathfrak{g}) \,\, .
\end{equation}
The procedure then requires computing the Poisson brackets of all the constraints and building up a matrix out of the non-commuting ones, which in this case has the structure
\begin{equation}
D_{IJ}=
\begin{pmatrix}
\{\phi_{i},\phi_{k}\} & \{\phi_{i},\psi_{l}\} \\
\{ \psi_{j},\phi_{k} \} &  \{ \psi_{j},\psi_{l} \} 
\end{pmatrix}
=
\begin{pmatrix}
0 & -M_{li}^{T} \\
M_{jk} &  N_{jl} 
\end{pmatrix}
\,\, .
\end{equation}
The top left diagonal block vanishes again due to $\{\mathfrak{p}_{A}{}^{\alpha},\mathfrak{p}_{B}{}^{\beta}\}=0$ and in principle at this point one needs to invert $D$. There is a closed-form expression for this when $M$ is invertible -- which is the case for situations of physical interest where one is able to integrate out the auxiliary fields, as mentioned in footnote \ref{footnote:M-invertibiliy} -- but it turns out that for AFSM it is in fact not necessary, since all one needs is the structure of the inverse
\begin{equation}\label{D-inverse}
D^{IJ}= 
\begin{pmatrix}
* & * \\
* & 0
\end{pmatrix}
\qquad \text{where} \qquad 
D_{IJ}D^{JK}=\delta_{I}{}^{K} \,\, ,
\end{equation}
which immediately allows to compute the Dirac brackets of any two quantities
\begin{align}
\{f,g\}_{D}=&\{f,g\}-\{f,\Omega_{I}\} D^{IJ}\{\Omega_{J},g\}
\\
=&\{f,g\}-\{f,\phi_{i}\}D^{\phi_{i}\phi_{j}}\{\phi_{j},g\}-\{f,\phi_{i}\}D^{\phi_{i}\psi_{j}}\{\psi_{j},g\}-\{f,\psi_{i}\}D^{\psi_{i}\phi_{j}}\{\phi_{j},g\} \,\, .
\notag 
\end{align}
Importantly, since the bottom-right diagonal block in \eqref{D-inverse} vanishes, one can observe that the Dirac bracket only differs from the Poisson bracket by terms which involve Poisson brackets of the functions $f,g$ with at least one of the primary constraints $\phi_{i}:=\mathfrak{p}_{A}{}^{\alpha}$. Hence, since by construction in Hamiltonian mechanics one has that
\begin{equation}
\{Z^{\mu},\pi_{\nu}\}=\delta_{\nu}{}^{\mu} \qquad \text{and} \qquad \{v_{\alpha}{}^{A},\mathfrak{p}_{B}{}^{\beta}\}=\delta_{\alpha}{}^{\beta}\delta_{B}{}^{A} \,\, ,
\end{equation}
while all the other brackets are vanishing, and in particular 
\begin{equation}
\{\mathfrak{p}_{A}{}^{\alpha},\pi_{\mu}\}=0=\{\mathfrak{p}_{A}{}^{\alpha},Z^{\mu}\} \,\, ,
\end{equation}
one finds that for $f$ and $g$ only functions of the physical variables $Z^{\mu}$ and $\pi_{\mu}$, all the extra terms vanish and the Dirac and Poisson brackets coincide:
\begin{equation}
\{f(Z^{\mu},\pi_{\mu}),g(Z^{\nu},\pi_{\nu})\}_{D}=\{f(Z^{\mu},\pi_{\mu}),g(Z^{\nu},\pi_{\nu})\} \,\, ,
\end{equation}
ultimately implying that Poisson brackets are sufficient to study quantities of the above form. As it will become clear throughout the rest of this section, for all the AFSM under consideration it will be possible to completely hide the dependence on the auxiliary fields in the definition of the currents and to re-expressed them in terms of physical variables.

\subsubsection{Principal Chiral models}\label{subsubsec:PCM}

\paragraph{Generalities.} These theories represent one of the simplest examples of $2d$ sigma models, the only relevant field being the metric of the background geometry on a Lie group $G$.\footnote{We refer to Appendix \ref{appendix:A} for related conventions and notation.} This special structure has allowed to study them in great detail \cite{Luscher:1977rq,Luscher:1977uq,Wiegmann:1984pw,Polyakov:1984et,Faddeev:1985qu,Ogievetsky:1987vv} and to exploit them as powerful seed theories in the construction of various types of integrable deformations, e.g. by a Wess-Zumino term \cite{Wess:1971yu,Novikov:1982ei,Abdalla:1982yd,Witten:1983ar}, Yang-Baxter deformations \cite{Klimcik:2002zj,Klimcik:2008eq}, $\lambda$-deformations \cite{Sfetsos:2013wia}, and we refer to \cite{Zarembo:2017muf,Lacroix:2018njs,Seibold:2020ouf,Hoare:2021dix,Borsato:2023dis} for detailed discussions and references. Information about the background metric is encoded in the Lie-algebra-valued 1-form $j$ defined in \eqref{MC-form-def}, the Maurer-Cartan form, which is by construction flat and plays the role of a vielbeine on $G$ with the tangent space metric being the Cartan-Killing form \eqref{Cartan-Killing-def} on the Lie algebra $\mathfrak{g}$. This property plays a very important role, since it implies invertibility of $j$ and, as shown below, a special form of the canonical momenta. In fact, as will be clear by the end of Section \ref{section:application-to-AFSM}, this is also crucial for various other classes of sigma models built out of an underlying Lie group structure, as well as for their auxiliary field deformations.

\paragraph{The $L$-current.} PCMs can be entirely defined through the Maurer-Cartan form, which is flat by construction and conserved as a result of the EOM, hence
\begin{equation}
L:=j \,\, .
\end{equation}

\paragraph{Lagrangian and momentum.} The Lagrangian for these theories takes the form\footnote{Conventions for switching between lightcone and $(\tau,\sigma)$ coordinates are given in \eqref{def-lightcone-to-tau-sigma}.}
\begin{equation}
\mathcal{L}_{\textrm{PCM}}=-\tfrac{1}{2}\mathrm{tr}[j_{+}j_{-}]=-\tfrac{1}{2}\mathrm{tr}[j_{\tau}^2-j_{\sigma}^2]=-\tfrac{1}{2}\gamma_{AB}(\partial_{\tau}Z^{\mu}\partial_{\tau}Z^{\nu}-\partial_{\sigma}Z^{\mu}\partial_{\sigma}Z^{\nu})j_{\mu}{}^{A}j_{\nu}{}^{B} \, ,
\end{equation}
and consequently the associated canonical momentum \eqref{def-canonical-momentum} reads, in terms of $L$,
\begin{equation}\label{PCM-momentum-rel}
\pi_{\mu}=-\gamma_{AB}L_{\mu}{}^{A}L_{\tau}{}^{B}
\qquad \longleftrightarrow \qquad
L_{\tau}{}^{A}=-\gamma^{AB}L_{B}{}^{\mu}\pi_{\mu} \,\, .
\end{equation}
As highlighted above, the second relation is a result of the fact that $L$ plays the role of a vielbeine on the background, whose metric reads $G_{\mu\nu}=L_{\mu}{}^{A}L_{\nu}{}^{B}\gamma_{AB}$, and hence has an inverse $L_{A}{}^{\mu}$ such that $L_{\mu}{}^{A}L_{A}{}^{\nu}=\delta_{\mu}{}^{\nu}$ and $L_{\mu}{}^{A}L_{B}{}^{\mu}=\delta_{B}{}^{A}$.

\paragraph{Poisson brackets.} The components of $L$ satisfy the brackets
\begin{equation}\label{PCM-brackets-summary}
\begin{aligned}
\{ L_{\tau}{}^{A}(\sigma),L_{\tau}{}^{B}(\sigma') \} &= f^{AB}{}_{C} \, L_{\tau}{}^{C}(\sigma)\delta(\sigma-\sigma') \, ,
\\
\{ L_{\tau}{}^{A}(\sigma),L_{\sigma}{}^{B}(\sigma') \} &= f^{AB}{}_{C} \, L_{\sigma}{}^{C}(\sigma)\delta(\sigma-\sigma')-\gamma^{AB} \, \partial_{\sigma}\delta(\sigma-\sigma') \, ,
\\
\{ L_{\sigma}{}^{A}(\sigma),L_{\sigma}{}^{B}(\sigma') \} &= 0 \, .
\end{aligned}
\end{equation}
The details of this computation will not be reported in this work. It closely follows the procedure in Appendix C of \cite{Bielli:2024ach}, to which we refer for brevity.

\subsubsection{AF-Principal Chiral models}\label{subsubsec:AFPCM}

\paragraph{Generalities.} These models were introduced in \cite{Ferko:2024ali} and represent the first example of $2d$ integrable seed theory deformed à la Ivanov-Zupnik \cite{Ivanov:2002ab,Ivanov:2003uj}, also inspired by \cite{Borsato:2022tmu,Ferko:2023wyi}. They were furthermore derived from a 4D Chern-Simons\footnote{We refer to the seminal papers \cite{Costello:2019tri,Vicedo:2019dej,Delduc:2019whp}. See also the interesting recent works \cite{Boujakhrout:2023qlz,Lacroix:2024wrd,Cole:2024sje,Lacroix:2025ias,Sakamoto:2025hwi,Fukushima:2025tlj,Cole:2025zmq,Bittleston:2026tdr,Stedman:2026awg,Ashwinkumar:2026dwd} in relation to this topic and \cite{Lacroix:2021iit,Cole:2024hyt,Liniado:2025sin} for pedagogical treatments and a more complete list of references.} perspective in \cite{Fukushima:2024nxm}.

\paragraph{The $A$ and $B$ currents.} The Lax \eqref{gBIZZ-Lax} for these models is characterised by 
\begin{equation}\label{A-B-currents-AF-PCM}
A:=j
\qquad \text{and} \qquad 
B:=-(j+2v) \,\, . 
\end{equation}

\paragraph{Lagrangian and momentum.} Using \eqref{def-lightcone-to-tau-sigma}, the Lagrangian can be written as
\begin{equation}
\begin{aligned}
\label{eq:lag_af_pcm}
\mathcal{L}_{\textrm{AF-PCM}}=&+\tfrac{1}{2}\mathrm{tr}[j_{+}j_{-}]+\mathrm{tr}[j_{+}v_{-}+j_{-}v_{+}]+\mathrm{tr}[v_{+}v_{-}]+E(v)
\\
=&+\tfrac{1}{2}\mathrm{tr}[j_{\tau}^2-j_{\sigma}^2]+2\mathrm{tr}[j_{\tau}v_{\tau}-j_{\sigma}v_{\sigma}]+\mathrm{tr}[v_{\tau}^2-v_{\sigma}^2]+E(v)
\\
=&+\tfrac{1}{2}\gamma_{AB}(\partial_{\tau}Z^{\mu}\partial_{\tau}Z^{\nu}-\partial_{\sigma}Z^{\mu}\partial_{\sigma}Z^{\nu})j_{\mu}{}^{A}j_{\nu}{}^{B}+2\gamma_{AB}\partial_{\tau}Z^{\mu}j_{\mu}{}^{A}v_{\tau}{}^{B}
\\
&-2\gamma_{AB}\partial_{\sigma}Z^{\mu}j_{\mu}{}^{A}v_{\sigma}{}^{B}+\gamma_{AB}(v_{\tau}{}^{A}v_{\tau}{}^{B}-v_{\sigma}{}^{A}v_{\sigma}{}^{B})+E(v) \, ,
\end{aligned}
\end{equation}
so that the momentum \eqref{def-canonical-momentum} takes the form
\begin{equation}
\pi_{\mu}=\gamma_{AB}\partial_{\tau}Z^{\nu}j_{\mu}{}^{A}j_{\nu}{}^{B}+2\gamma_{AB}j_{\mu}{}^{A}v_{\tau}{}^{B} \,\, .
\end{equation}
Comparing with \eqref{A-B-currents-AF-PCM} and using invertibility of $j_{\mu}{}^{A}$ one arrives at the expressions
\begin{equation}\label{AF-PCM-momentum-inversion}
\pi_{\mu}=-\gamma_{AB}A_{\mu}{}^{A}B_{\tau}{}^{B}
\qquad 
\longleftrightarrow \qquad 
B_{\tau}{}^{A}=-\gamma^{AB}A_{B}{}^{\mu}\pi_{\mu} \,\, ,
\end{equation}
which represent the deformed analogue of \eqref{PCM-momentum-rel}. In particular this implies that both $B_{\tau}^{A}$ and $A_{\sigma}^{A}$ only depend on the physical fields, the latter being in fact undeformed.

\paragraph{Poisson brackets.} The components of $A$ and $B$ satisfy the following relations:
\begin{equation}\label{AFPCM-brackets-summary}
\begin{aligned}
\{ B_{\tau}{}^{A}(\sigma),B_{\tau}{}^{B}(\sigma') \} &= f^{AB}{}_{C} \, B_{\tau}{}^{C}(\sigma)\delta(\sigma-\sigma') \, ,
\\
\{ B_{\tau}{}^{A}(\sigma),A_{\sigma}{}^{B}(\sigma') \} &= f^{AB}{}_{C} \, A_{\sigma}{}^{C}(\sigma)\delta(\sigma-\sigma')-\gamma^{AB} \, \partial_{\sigma}\delta(\sigma-\sigma') \, ,
\\
\{ A_{\sigma}{}^{A}(\sigma),A_{\sigma}{}^{B}(\sigma') \} &= 0 \, .
\end{aligned}
\end{equation}
We refer again to appendix C of \cite{Bielli:2024ach} for the details of the derivation and highlight how \eqref{AFPCM-brackets-summary} naively differ from the undeformed brackets \eqref{PCM-brackets-summary} by the simple replacements $L_{\tau}\rightarrow B_{\tau}$ and $L_{\sigma}\rightarrow A_{\sigma}$, a pattern which will repeatedly appear throughout this section. It should also be stressed at this point how the above brackets fall within the class \eqref{Yangian-general-brackets} and respect the conditions \eqref{Yangian-theorem-condition} in light of the fact that $C^{AB}(\sigma'):=\gamma^{AB}$. This ensures that all members of this family of deformed theories exhibit an underlying classical Yangian symmetry and for the undeformed setting \eqref{PCM-brackets-summary} this had already been shown in \cite{MacKay:1992he}.

\subsubsection{Symmetric-space sigma models}\label{subsubsec:SSSM}

\paragraph{Generalities.} These theories represent another important class of sigma models built out of Lie groups and play an important role in many contexts of physical interest. Compared to PCMs, the underlying Lie algebra of these models is characterised by an extra ingredient, namely the existence of an order-2 automorphism which induces the splitting in \eqref{SSSM-Lie-algebra-commutators} and a coset structure, leading to a much richer picture \cite{Eichenherr:1979ci,Eichenherr:1979hz,Forger:1991cm,Forger:1991ty,Laartz:1992rw,Forger:1992kn}. Particularly relevant to the analysis carried out below is the reference \cite{Forger:1991cm}, from which various identities have been summarised in the second part of Appendix \ref{appendix:A}.

\paragraph{The $L$-current.} SSSM can be defined through the projections of the Maurer-Cartan form \eqref{j-projections}. The whole 1-form is still flat by construction, but decomposes as in \eqref{def-sssm-MC-eq-decomposition}, such that the relevant flat current which is also conserved on-shell reads  
\begin{equation}
L:=-2\mathrm{Ad}_{\mathrm{g}}(j^{(2)}) \,\, .
\end{equation}

\paragraph{Lagrangian and momentum.}
The Lagrangian for this class of models is defined as 
\begin{equation}
\begin{aligned}
\mathcal{L}_{\textrm{SSSM}}&=-\tfrac{1}{2} \mathrm{tr}[j^{(2)}_{+}j^{(2)}_{-}]= -\tfrac{1}{2}\mathrm{tr}\bigl[\bigl(j^{(2)}_{\tau}\bigr)^{2}-\bigl(j^{(2)}_{\sigma}\bigr)^{2}\bigr] 
\\
&=-\tfrac{1}{2}\gamma_{ab}(\partial_{\tau}Z^{\mu}\partial_{\tau}Z^{\nu}-\partial_{\sigma}Z^{\mu}\partial_{\sigma}Z^{\nu})j^{(2)a}_{\mu}j_{\nu}^{(2)b} \, ,
\end{aligned}
\end{equation}
and consequently the associated canonical momentum \eqref{def-canonical-momentum} reads
\begin{equation}\label{SSSM-canonical-momentum}
\pi_{\mu}= -\gamma_{ab} \, j_{\mu}^{(2)a}j_{\tau}^{(2)b}
\qquad \longleftrightarrow \qquad 
j_{\tau}^{(2)a}=-\gamma^{ab}j_{b}^{(2)\mu}\pi_{\mu} \,\, .
\end{equation}
The second relation is a result of the fact that $j^{(2)}$ is a vielbeine on the background, i.e. defines a metric $G_{\mu\nu}=j_{\mu}^{(2)a}j_{\nu}^{(2)b}\gamma_{ab}$ and has inverse $j_{a}^{(2)\mu}=\gamma_{ab}j_{\nu}^{(2)b}G^{\nu\mu}$, which is defined by requiring invertibility of the metric $G_{\mu\nu}G^{\nu\rho}=\delta_{\mu}{}^{\rho}$. For these reasons $j^{(2)}$ and its inverse satisfy $j_{\mu}^{(2)a}j_{a}^{(2)\nu}=\delta_{\mu}{}^{\nu}$ and $j_{\mu}^{(2)a}j_{b}^{(2)\mu}=\delta_{b}{}^{a}$. Multiplying the second relation in \eqref{SSSM-canonical-momentum} by $-2$ and contracting with $W_{a}{}^{A}$ one further arrives at 
\begin{equation}\label{SSSM-Jtau-momentum}
L_{\tau}{}^{A}=-\pi_{\mu}L_{B}{}^{\mu}\gamma^{BA} 
\qquad \text{with} \qquad 
L_{B}{}^{\mu}=\gamma_{BA}G^{\mu\nu}L_{\nu}{}^{A} \,\, ,
\end{equation}
after having exploited the definition of the inverse vielbeine and equation \eqref{SSSM-J-definition}. Finally, this also allows us to obtain the following important relations:
\begin{equation}\label{SSSM-J-vielbeine-and-momentum(J)}
L_{\mu}{}^{A}L_{A}{}^{\nu}=4\delta_{\mu}{}^{\nu} \qquad , \qquad L_{\mu}{}^{B}L_{A}{}^{\mu}=4K_{A}{}^{B} \qquad \text{and} \qquad \pi_{\mu}=-\tfrac{1}{4}\gamma_{AB}L_{\mu}{}^{A}L_{\tau}{}^{B}\,\, .
\end{equation}
Differentiating the second relation in \eqref{SSSM-J-vielbeine-and-momentum(J)} one further obtains the useful identity
\begin{equation}\label{SSSM-useful-relation}
L_{\mu}{}^{B}\partial_{\rho}L_{A}{}^{\mu}=4\partial_{\rho}K_{A}{}^{B}-L_{A}{}^{\mu}\partial_{\rho}L_{\mu}{}^{B} \,\, .
\end{equation}

\paragraph{Poisson brackets.}
For these models one obtains the following expressions:
\begin{equation}\label{SSSM-brackets-summary}
\begin{aligned}
\{ L_{\tau}{}^{A}(\sigma),L_{\tau}{}^{B}(\sigma') \} &= 2f^{AB}{}_{C} \, L_{\tau}{}^{C}(\sigma)\delta(\sigma-\sigma') \, ,
\\
\{ L_{\tau}{}^{A}(\sigma),L_{\sigma}{}^{B}(\sigma') \} &= 2f^{AB}{}_{C} \, L_{\sigma}{}^{C}(\sigma)\delta(\sigma-\sigma')-4K^{AB}(\sigma') \, \partial_{\sigma}\delta(\sigma-\sigma') \, ,
\\
\{ L_{\sigma}{}^{A}(\sigma),L_{\sigma}{}^{B}(\sigma') \} &= 0 \, ,
\\
\{ L_{\tau}{}^{A}(\sigma),K^{BC}(\sigma') \} &= 2\Bigl(f^{AB}{}_{D}K^{DC}(\sigma)+f^{AC}{}_{D}K^{DB}(\sigma)\Bigr)\delta(\sigma-\sigma') \,  .
\end{aligned}
\end{equation}
These were first found in \cite{Forger:1991cm} and we re-derive them explicitly in Appendix \ref{appendix:C-SSSM}.

\subsubsection{AF-Symmetric-space sigma models}\label{subsubsec:AFSSSM}
\paragraph{Generalities.}
AF deformations of SSSMs were constructed in \cite{Bielli:2024oif,Cesaro:2024ipq} by introducing auxiliary fields respecting the decomposition \eqref{SSSM-Lie-algebra-commutators} of the underlying Lie algebra,
\begin{equation}
v=\mathrm{P}^{(0)}(v)+\mathrm{P}^{(2)}(v)=v^{(0)}+v^{(2)} \,\, .
\end{equation} 
We refer to Appendix \ref{appendix:A} for relevant notation and conventions on this matter.

\paragraph{The $A$ and $B$ currents.} The Lax \eqref{gBIZZ-Lax} for these models is characterised by
\begin{equation}\label{AF-SSSM-J-Jtilde-def}
A=-2\mathrm{Ad}_{\mathrm{g}}(j^{(2)})
\quad 
\text{and}
\quad 
B=-2\mathrm{Ad}_{\mathrm{g}}(\mathfrak{J}^{(2)})
\qquad \text{with} \qquad \mathfrak{J}^{(2)}=-(j^{(2)}+2v^{(2)}) \,\, .
\end{equation}
The first current is the same flat current (off-shell) discussed in the undeformed setting, namely $A=L$, while the second one represents a deformed version of the first and is conserved on-shell. Notice its similarity with the deformed current \eqref{A-B-currents-AF-PCM} appearing in AF-PCMs, resulting from the appearance of the (appropriately projected) $\mathfrak{J}$ combination.

\paragraph{Lagrangian and momentum.} Using \eqref{def-lightcone-to-tau-sigma} the AF-SSSM Lagrangian reads
\begin{align}
\mathcal{L}_{\mathrm{AF-SSSM}}=&+\tfrac{1}{2}\mathrm{tr}[j^{(2)}_{+}j^{(2)}_{-}]+\mathrm{tr}[j^{(2)}_{+}v_{-}^{(2)}+j^{(2)}_{-}v_{+}^{(2)}]+\mathrm{tr}[v_{+}^{(2)}v_{-}^{(2)}]+E(v)
\notag \\
=&+\tfrac{1}{2}\mathrm{tr}\bigl[\bigl(j^{(2)}_{\tau}\bigr)^2-\bigl(j^{(2)}_{\sigma}\bigr)^2\bigr]+2\mathrm{tr}[j^{(2)}_{\tau}v_{\tau}^{(2)}-j^{(2)}_{\sigma}v_{\sigma}^{(2)}]+\mathrm{tr}\bigl[\bigl(v_{\tau}^{(2)}\bigr)^2-\bigl(v_{\sigma}^{(2)}\bigr)^2\bigr]+E(v)
\notag \\
=&+\tfrac{1}{2}\gamma_{ab}(\partial_{\tau}Z^{\mu}\partial_{\tau}Z^{\nu}-\partial_{\sigma}Z^{\mu}\partial_{\sigma}Z^{\nu})j_{\mu}^{(2)a}j_{\nu}^{(2)b}+2\gamma_{ab}\partial_{\tau}Z^{\mu}j_{\mu}^{(2)a}v_{\tau}^{(2)b}
\notag \\
&-2\gamma_{ab}\partial_{\sigma}Z^{\mu}j_{\mu}^{(2)a}v_{\sigma}^{(2)b}+\gamma_{ab}(v_{\tau}^{(2)a}v_{\tau}^{(2)b}-v_{\sigma}^{(2)a}v_{\sigma}^{(2)b})+E(v) \,\, ,
\end{align}
such that the momentum \eqref{def-canonical-momentum} takes the form
\begin{equation}
\pi_{\mu}=\gamma_{ab}\partial_{\tau}Z^{\nu}j_{\mu}^{(2)a}j_{\nu}^{(2)b}+2\gamma_{ab}j_{\mu}^{(2)a}v_{\tau}^{(2)b} \,\, .
\end{equation}
A quick look at \eqref{AF-SSSM-J-Jtilde-def}, and invertibility of the vielbeine $j_{\mu}^{(2)a}$, then leads to
\begin{equation}\label{AF-SSSM-canonical-momentum}
\pi_{\mu}=-\gamma_{ab}j_{\mu}^{(2)a}\mathfrak{J}_{\tau}^{(2)b}
\qquad \longleftrightarrow \qquad
\mathfrak{J}_{\tau}^{(2)a}=-\gamma^{ab}j_{b}^{(2)\mu}\pi_{\mu} \,\, ,
\end{equation}
which immediately reveals a close resemblance with \eqref{SSSM-canonical-momentum}. It is then natural to look for a generalisation of \eqref{SSSM-Jtau-momentum} in terms of the new current $B$, and indeed one finds
\begin{equation}
\begin{aligned}
B_{\tau}=&-2\mathrm{Ad}_{\mathfrak{g}}(\mathfrak{J}_{\tau}^{(2)})
\\
=&+2\mathrm{Ad}_{\mathfrak{g}}(\gamma^{ab}j_{b}^{(2)\mu}\pi_{\mu}T_{a})
\\
=&+2\mathrm{Ad}_{\mathfrak{g}}(j_{\nu}^{(2)a}G^{\nu\mu}\pi_{\mu}T_{a})
\\
=&-\pi_{\mu}G^{\mu\nu}A_{\nu} \,\, ,
\end{aligned}
\end{equation}
where in the first step we exploited \eqref{AF-SSSM-canonical-momentum}, in the second we substituted the definition of the inverse vielbeine $j_{a}^{(2)\mu}=\gamma_{ab}j_{\nu}^{(2)b}G^{\nu\mu}$, and in the last step we recombined the remaining quantities in terms of $A$ in \eqref{AF-SSSM-J-Jtilde-def}. This ultimately shows that
\begin{equation}\label{AF-SSSM-Jtau-momentum}
B_{\tau}{}^{A}=-\pi_{\mu}A_{B}{}^{\mu}\gamma^{BA} 
\qquad \text{with} \qquad 
A_{B}{}^{\mu}=\gamma_{BA}G^{\mu\nu}A_{\nu}{}^{A} \,\, ,
\end{equation}
which is exactly the same as \eqref{SSSM-Jtau-momentum}, after the replacement $L_{\tau}\rightarrow B_{\tau}$, and implies that the relevant components of the currents can be entirely expressed in terms of physical fields. The algebraic off-shell relations satisfied by $A=L$ remain obviously still true and hence one also arrives at the analogue of equation \eqref{SSSM-J-vielbeine-and-momentum(J)}, which in terms of $A$ and $B$ becomes
\begin{equation}\label{AF-SSSM-J-vielbeine-and-momentum(J)}
A_{\mu}{}^{A}A_{A}{}^{\nu}=4\delta_{\mu}{}^{\nu} \qquad , \qquad A_{\mu}{}^{B}A_{A}{}^{\mu}=4K_{A}{}^{B} \qquad \text{and} \qquad \pi_{\mu}=-\tfrac{1}{4}\gamma_{AB}A_{\mu}{}^{A}B_{\tau}{}^{B}\,\, .
\end{equation}

\paragraph{Poisson brackets.}
For these models one finds the structure
\begin{align}\label{AF-SSSM-brackets-summary}
\{ B_{\tau}{}^{A}(\sigma),B_{\tau}{}^{B}(\sigma') \} &= 2f^{AB}{}_{C} \, B_{\tau}{}^{C}(\sigma)\delta(\sigma-\sigma') \, ,
\notag \\
\{ B_{\tau}{}^{A}(\sigma),A_{\sigma}{}^{B}(\sigma') \} &= 2f^{AB}{}_{C} \, A_{\sigma}{}^{C}(\sigma)\delta(\sigma-\sigma')-4K^{AB}(\sigma') \, \partial_{\sigma}\delta(\sigma-\sigma') \, ,
\\
\{ A_{\sigma}{}^{A}(\sigma),A_{\sigma}{}^{B}(\sigma') \} &= 0 \, ,
\notag \\
\{ B_{\tau}{}^{A}(\sigma),K^{BC}(\sigma') \} &= 2\Bigl(f^{AB}{}_{D}K^{DC}(\sigma)+f^{AC}{}_{D}K^{DB}(\sigma)\Bigr)\delta(\sigma-\sigma') \, ,
\notag 
\end{align}
which again naively differs from the undeformed \eqref{SSSM-brackets-summary} by the replacements $L_{\tau}\rightarrow B_{\tau}$ and $L_{\sigma}\rightarrow A_{\sigma}$, keeping in mind that for this model $A\equiv L$ and hence the vanishing bracket is in practice unaffected. Almost the same is also true for the remaining brackets, whose derivation is briefly commented in Appendix \ref{appendix:C-AFSSSM}. We remark that \eqref{AF-SSSM-brackets-summary} falls into the class \eqref{Yangian-general-brackets} and satisfy the conditions \eqref{Yangian-theorem-condition} in light of the identification $C^{AB}(\sigma')=K^{AB}(\sigma')$ and the algebraic identity \eqref{J-K-algebraic-relation-components}. This ensures the existence of an underlying Yangian symmetry for each member of this infinite family of deformed theories. For the undeformed setting \eqref{SSSM-brackets-summary}, this had previously been shown in \cite{Klose:2016qfv}.

\subsubsection{Yang-Baxter sigma models}\label{subsubsec:YB}
\paragraph{Generalities.}
These models were introduced in \cite{Klimcik:2002zj}\footnote{See also \cite{Klimcik:2008eq,Klimcik:2014bta} for related studies and \cite{Yoshida:2021qfl} for a pedagogical treatment.} by deforming the PCM through the inclusion of an operator on the underlying Lie algebra $\mathcal{R}:\mathfrak{g} \longrightarrow\mathfrak{g}$, acting as $\mathcal{R}(X)=X^{A}\mathcal{R}_{A}{}^{B}T_{B} \,\, \forall\, X\in \mathfrak{g}$. This is known as an R-matrix and is required to be antisymmetric with respect to the Cartan-Killing form, which implies that
\begin{equation}\label{R-matrix-antisymmetry}
\mathcal{R}_{AB}=\mathcal{R}_{A}{}^{C}\gamma_{CB}=\mathrm{tr}[\mathcal{R}(T_{A})T_{B}]=-\mathrm{tr}[T_{A}\mathcal{R}(T_{B})]=-\mathcal{R}_{B}{}^{C}\gamma_{AC}=-\mathcal{R}_{BA} \,\, ,
\end{equation}
and is assumed to satisfy the modified Classical Yang-Baxter equation (mCYBE):
\begin{equation}
[\mathcal{R}(X),\mathcal{R}(Y)]-\mathcal{R}[X,\mathcal{R}(Y)]-\mathcal{R}[\mathcal{R}(X),Y]+c^2[X,Y]=0 \qquad \forall \, X,Y \in \mathfrak{g} \,\, .
\end{equation}
In fact, the field-independent R-matrix appears in the model in a dressed fashion
\begin{equation}
\mathcal{R}_{\mathrm{g}}=\mathrm{Ad}_{\mathrm{g}}^{-1} \circ\mathcal{R}\circ \mathrm{Ad}_{\mathrm{g}} \,\, ,
\end{equation}
which preserves the above two properties, and this dressed R-matrix appears in the model multiplied by a deformation parameter $\eta$ as
\begin{equation}
M\equiv \eta \mathcal{R}_{\mathrm{g}} \,\, .
\end{equation}
Exploiting antisymmetry of $M$, namely $M_{AB}=-M_{BA}$, the mCYBE 
\begin{equation}
[M(X),M(Y)]-M[X,M(Y)]-M[M(X),Y]+\eta^2c^2[X,Y]=0 
\end{equation}
can be written in components as
\begin{equation}\label{mCYBE-components}
M_{C}{}^{B}M_{D}{}^{A}f^{CDE}+M_{C}{}^{A}M_{D}{}^{E}f^{CDB}-M_{C}{}^{B}M_{D}{}^{E}f^{CDA}=\eta^2 c^2 f^{ABE} \,\, .
\end{equation}
By direct calculation it is then also easy to verify that the dressed R-matrix satisfies
\begin{equation}
\partial_{\mu} \bigl(\mathcal{R}_{\mathrm{g}}(X) \bigr) = \mathcal{R}_{\mathrm{g}}(\partial_{\mu}X) + [\mathcal{R}_{\mathrm{g}}(X),j_{\mu}]+\mathcal{R}_{\mathrm{g}}[j_{\mu},X] \,\, ,
\end{equation}
which after multiplying by $\eta$ and choosing $X=T_{A}$ becomes
\begin{equation}\label{M-derivative-components}
\partial_{\mu}M_{A}{}^{B}=j_{\mu}{}^{D}(M_{A}{}^{C}f_{CD}{}^{B}+f_{DA}{}^{C}M_{C}{}^{B}) \,\, .
\end{equation}
In terms of $M$ one can then introduce the operators
\begin{equation}
\mathcal{O}_{\pm}=\frac{1}{1\pm M}=\sum_{k=0}^{\infty}(\mp)^k
M^k \,\, , 
\quad \text{acting as} \quad \mathcal{O}_{\pm}(X)=X^{A}\mathcal{O}_{\pm A}{}^{B}T_{B} \quad \forall \, X\in \mathfrak{g} \,\, .
\end{equation}
The definition ensures that $\mathcal{O}_{\pm}$ and $M$ are all commuting and satisfy the relations
\begin{equation}\label{sum-difference-Opm}
\frac{1}{1-M}-\frac{1}{1+M}=\frac{2M}{1-M^2} \, ,
\qquad 
\qquad 
\frac{1}{1-M}+\frac{1}{1+M}=\frac{2}{1-M^2} \,\, .
\end{equation}
Due to the antisymmetry of $M$ they are furthermore related by transposition, namely
\begin{equation}
\mathrm{tr}[X\mathcal{O}_{\pm}(Y)]=\mathrm{tr}[\mathcal{O}_{\mp}(X)Y] 
\qquad \Longrightarrow \qquad
\mathcal{O}_{\pm AB}=\mathcal{O}_{\mp BA} \,\, .
\end{equation}

\paragraph{The $L$-current.} While the Maurer-Cartan form $j$ still plays an important role in these models, the on-shell conserved current is naturally a deformation of the latter via $M$. This is flat as well, when conserved and after using the flatness of $j$, and reads
\begin{equation}
L_{\pm}=(1-c^2\eta^2)J_{\pm}
\qquad \text{with} \qquad 
J_{\pm}=\frac{1}{1\pm M} \, j_{\pm} =\mathcal{O}_{\pm}(j_{\pm}) \,\, .
\end{equation}
For simplicity it will be convenient to work with $J$, keeping in mind that whenever one needs to translate to $L$ it is sufficient to multiply by $(1-c^2\eta^2)$. Converting from lightcone to $(\tau,\sigma)$ coordinates using \eqref{def-lightcone-to-tau-sigma} and \eqref{sum-difference-Opm} one finds that $J$ has components
\begin{equation}\label{YB-Jtausigma}
J_{\tau}=\frac{1}{1-M^2}\Bigl( j_{\tau}-M(j_{\sigma})\Bigr)
\qquad \text{and} \qquad 
J_{\sigma}=\frac{1}{1-M^2}\Bigl( j_{\sigma}-M(j_{\tau})\Bigr) \,\, ,
\end{equation}
which also satisfy the important relations
\begin{equation}\label{Jtausigma-relations}
J_{\sigma}=j_{\sigma}-M(J_{\tau}) 
\qquad \text{and} \qquad 
J_{\tau}=j_{\tau}-M(J_{\sigma}) \,\, .
\end{equation}

\paragraph{Lagrangian and momentum.} Using again \eqref{def-lightcone-to-tau-sigma} and \eqref{sum-difference-Opm} the Lagrangian reads
\begin{equation}\label{YB-Lagrangian}
\begin{aligned}
\mathcal{L}_{\text{YB}}=&-\tfrac{1}{2}\mathrm{tr}[j_{+}\frac{1}{1-M} j_{-}]
\\
=&-\tfrac{1}{2}\mathrm{tr}[j_{\tau}\frac{1}{1-M} j_{\tau}-j_{\tau}\frac{2M}{1-M^2} j_{\sigma}-j_{\sigma}\frac{1}{1-M} j_{\sigma}]
\\
=&-\tfrac{1}{2}\gamma_{AB}\partial_{\tau}Z^{\mu}\partial_{\tau}Z^{\nu}j_{\mu}{}^{A}j_{\nu}{}^{C}\mathcal{O}_{-C}{}^{B}-\tfrac{1}{2}\gamma_{AB}\partial_{\sigma}Z^{\mu}\partial_{\sigma}Z^{\nu}j_{\mu}{}^{A}j_{\nu}{}^{C}\mathcal{O}_{-C}{}^{B}
\\
&+\gamma_{AB}\partial_{\tau}Z^{\mu}\partial_{\sigma}Z^{\nu}j_{\mu}{}^{A}j_{\nu}{}^{C}\mathcal{O}_{-C}{}^{D}\mathcal{O}_{+D}{}^{E}M_{E}{}^{B} \,\, ,
\end{aligned} 
\end{equation}
such that the canonical momentum \eqref{def-canonical-momentum} takes the form
\begin{equation}
\pi_{\mu}=-\gamma_{AB}j_{\mu}{}^{A}j_{\tau}{}^{C}\mathcal{O}_{-C}{}^{D}\mathcal{O}_{+D}{}^{B}+\gamma_{AB}j_{\mu}{}^{A}j_{\sigma}{}^{C}\mathcal{O}_{-C}{}^{D}\mathcal{O}_{+D}{}^{E}M_{E}{^{B}} \,\, ,
\end{equation}
and after a quick look at the definitions \eqref{YB-Jtausigma} can be written as
\begin{equation}\label{YB-pi-J-relation}
\pi_{\mu}=-\gamma_{AB}j_{\mu}{}^{A}J_{\tau}{}^{B}
\qquad \longleftrightarrow \qquad
J_{\tau}{}^{A}=-\pi_{\mu}j_{B}{}^{\mu}\gamma^{BA} \,\, ,
\end{equation}
exploiting the fact that $j$ is a vielbeine. This relation has exactly the structure observed in PCMs and SSSMs, but in this setting also has the important implication that due to \eqref{Jtausigma-relations} both $J_{\tau}$ and $J_{\sigma}$ now exhibit dependence on the canonical momentum:
\begin{equation}\label{Jtau-and-Jsigma-functions-of-momentum}
J_{\tau}{}^{A}=-\pi_{\mu}j_{B}{}^{\mu}\gamma^{BA} 
\qquad \text{and} \qquad 
J_{\sigma}{}^{B}=j_{\sigma}{}^{B}-J_{\tau}{}^{C}M_{C}{^{B}} \,\, .
\end{equation}

\paragraph{Poisson brackets.}
For these models one finds the brackets
\begin{equation}\label{YB-brackets-summary}
\begin{aligned}
\{ J_{\tau}{}^{A}(\sigma),J_{\tau}{}^{B}(\sigma') \} &= f^{AB}{}_{C} \, J_{\tau}{}^{C}(\sigma)\delta(\sigma-\sigma') \, ,
\\
\{ J_{\tau}{}^{A}(\sigma),J_{\sigma}{}^{B}(\sigma') \} &= f^{AB}{}_{C} \, J_{\sigma}{}^{C}(\sigma)\delta(\sigma-\sigma')-\gamma^{AB} \, \partial_{\sigma}\delta(\sigma-\sigma') \, ,
\\
\{ J_{\sigma}{}^{A}(\sigma),J_{\sigma}{}^{B}(\sigma') \} &= \eta^2 c^2 f^{AB}{}_{C}J_{\tau}{}^{C}\delta(\sigma-\sigma') \, ,
\\
\{ J_{\tau}{}^{A}(\sigma),j_{\sigma}{}^{B}(\sigma') \} &= f^{AB}{}_{C}j_{\sigma}{}^{C}\delta(\sigma-\sigma')-\gamma^{AB}\partial_{\sigma}\delta(\sigma-\sigma') \, ,
\\
\{ J_{\tau}{}^{A}(\sigma),M_{C}{}^{B}(\sigma') \} &= \bigl(f^{AB}{}_{D}M_{C}{}^{D}-f_{C}{}^{AD}M_{D}{}^{B}\bigr)\delta(\sigma-\sigma') \, ,
\end{aligned}
\end{equation}
which are explicitly constructed in Appendix \ref{appendix:C-YB}, by direct computation and using the relations highlighted above, but were first derived in \cite{Delduc:2013fga} via different methods (see also \cite{Delduc:2013qra,Delduc:2014kha} for related works and applications). These brackets can be rewritten in terms of $L$ by multiplying the two sides with the appropriate powers of $(1-c^2\eta^2)$. Notice that these correctly reduce to the PCM \eqref{PCM-brackets-summary} in the limit $\eta \rightarrow 0$, in which $M\rightarrow 0$ and $J\rightarrow j $.

\subsubsection{AF-Yang-Baxter sigma models}\label{subsubsec:AFYB}
\paragraph{Generalities.} This class of models was constructed in \cite{Bielli:2024fnp} by coupling the Yang-Baxter models of the previous subsection to Lie-algebra-valued auxiliary fields $v$. Compared to the AF deformation of PCMs and SSSM discussed above in subsections \ref{subsubsec:AFPCM} and \ref{subsubsec:AFSSSM}, where we observed how the auxiliary fields induce a modification of the conserved current $B$ while leaving the flat ones $A$ unaltered, this model turns out to be characterised by a deformation of both currents. In particular, unlike previous examples, this implies that the whole set of Poisson brackets \eqref{YB-brackets-summary} has to be modified and one might in principle be worried about things getting more complicated. In fact it will turn out that the special relation \eqref{Jtausigma-relations} also has a deformed avatar, which together with the deformed version of \eqref{Jtau-and-Jsigma-functions-of-momentum} plays the crucial role of keeping things under control, preserving the previously observed pattern $L_{\tau}\rightarrow B_{\tau}$, $L_{\sigma}\rightarrow A_{\sigma}$ and in turn the whole structure.

\paragraph{The $A$ and $B$ currents.} The Lax \eqref{Lax-lightcone} of AF-YB models is characterised by
\begin{equation}
A=(1-c^2\eta^2)\tilde{J}
\qquad \text{and} \qquad 
B=(1-c^2\eta^2)\mathfrak{J} \,\, ,
\end{equation} 
defined in terms of 
\begin{equation}
\tilde{J}_{\pm}=\frac{1}{1\mp M}\Bigl( j_{\pm}\pm M(2v_{\pm}) \Bigr)
\qquad \text{and} \qquad 
\mathfrak{J}_{\pm}=-\frac{1}{1\mp M}\Bigl( j_{\pm}+2v_{\pm}  \Bigr) \,\, ,
\end{equation}
with $M=\eta \mathcal{R}_{\mathrm{g}}$ as in the undeformed setting, and satisfy the relation
\begin{equation}\label{AFYB-frakJ-tildeJ-relation}
\mathfrak{J}=-(\tilde{J}+2v) \,\, .
\end{equation}
As for undeformed YB models, it will be convenient to work in terms of $\tilde{J}$ and $\mathfrak{J}$, keeping in mind that translating to $A$ and $B$ simply requires multiplying by $(1-c^2\eta^2)$.
Converting from lightcone to $(\tau,\sigma)$ coordinates using \eqref{def-lightcone-to-tau-sigma} and \eqref{sum-difference-Opm} one finds
\begin{equation}\label{AF-YB-frakJ-tausigma-components}
\begin{aligned}
\mathfrak{J}_{\tau}&=-\frac{1}{1-M^2}\Bigl( j_{\tau}+2v_{\tau}+M(j_{\sigma}+2v_{\sigma}) \Bigr) \, ,
\\
\mathfrak{J}_{\sigma}&=-\frac{1}{1-M^2}\Bigl( j_{\sigma}+2v_{\sigma}+M(j_{\tau}+2v_{\tau}) \Bigr) \,\, ,
\end{aligned}
\end{equation}
and exploiting \eqref{AFYB-frakJ-tildeJ-relation} these imply that
\begin{equation}\label{AF-YB-tildeJ-frakJ-relation}
\tilde{J}_{\sigma}=j_{\sigma}-M(\mathfrak{J}_{\tau})
\qquad \text{and} \qquad 
\tilde{J}_{\tau}=j_{\tau}-M(\mathfrak{J}_{\sigma}) \,\, ,
\end{equation}
which clearly represent the deformed analogue of \eqref{Jtausigma-relations} and play the crucial role of re-encoding the relation \eqref{AFYB-frakJ-tildeJ-relation} with no explicit reference to the auxiliary fields.

\paragraph{Lagrangian and momentum.} Using again \eqref{def-lightcone-to-tau-sigma} and \eqref{sum-difference-Opm} the Lagrangian reads
\begin{align}\label{AF-YB-Lagrangian}
\mathcal{L}_{\mathrm{AFYB}}=&+\tfrac{1}{2}\mathrm{tr}[j_{-}\frac{1}{1-M} j_{+}]+\mathrm{tr}[j_{-}\frac{1}{1-M} v_{+}+v_{-}\frac{1}{1-M} j_{+}]+\mathrm{tr}[v_{-}\frac{1+M}{1-M} v_{+}]+E(v)
\notag \\
=&+\tfrac{1}{2}\mathrm{tr}[j_{\tau}\frac{1}{1-M} j_{\tau}+j_{\tau}\frac{2M}{1-M^2} j_{\sigma}-j_{\sigma}\frac{1}{1-M} j_{\sigma}]+
\notag \\
&+\mathrm{tr}[j_{\tau}\frac{2}{1-M^2} v_{\tau}+j_{\tau}\frac{2M}{1-M^2} v_{\sigma}-j_{\sigma}\frac{2M}{1-M^2} v_{\tau}-j_{\sigma}\frac{2}{1-M^2} v_{\sigma}]+
\notag \\
&+\mathrm{tr}[v_{\tau}\frac{1+M}{1-M} v_{\tau}+v_{\tau}\frac{4M}{1-M^2} v_{\sigma}-v_{\sigma}\frac{1+M}{1-M} v_{\sigma}]+ E(v) \,\, .
\end{align} 
Since we are interested in computing the momentum \eqref{def-canonical-momentum}, we only need to keep track of terms containing $j_{\tau}$. In components this means we should take into account
\begin{equation}
\begin{aligned}
\!\!\!\!\mathcal{L}_{\mathrm{\!AFYB}}\!|_{j_{\tau}}\!\!\!=\!&+\!\tfrac{1}{2}\gamma_{AB}\partial_{\tau}Z^{\mu}\partial_{\tau}Z^{\nu}j_{\mu}{}^{A}j_{\nu}{}^{C}\mathcal{O}_{-C}{}^{B} \!\!+\!\gamma_{AB}\partial_{\tau}Z^{\mu}\partial_{\sigma}Z^{\nu}j_{\mu}{}^{A}j_{\nu}{}^{C}\mathcal{O}_{-C}{}^{D}\mathcal{O}_{+D}{}^{E}M_{E}{}^{B}
\\
&+\!\gamma_{AB}\partial_{\tau}Z^{\mu}j_{\mu}{}^{A}(2v_{\tau}{}^{C})\mathcal{O}_{-C}{}^{D}\mathcal{O}_{+D}{}^{B}\!\!+\!\gamma_{AB}\partial_{\tau}Z^{\mu}j_{\mu}{}^{A}(2v_{\sigma}{}^{C})\mathcal{O}_{-C}{}^{D}\mathcal{O}_{+D}{}^{E}M_{E}{}^{B} \, ,
\end{aligned}
\end{equation}
hence leading to
\begin{equation}
\pi_{\mu}=\gamma_{AB}j_{\mu}{}^{A}(j_{\tau}{}^{C}+2v_{\tau}{}^{C})\mathcal{O}_{-C}{}^{D}\mathcal{O}_{+D}{}^{B}+\gamma_{AB}j_{\mu}{}^{A}(j_{\sigma}{}^{C}+2v_{\sigma}{}^{C})\mathcal{O}_{-C}{}^{D}\mathcal{O}_{+D}{}^{E}M_{E}{}^{B} \,\, .
\end{equation}
A quick look at the components \eqref{AF-YB-frakJ-tausigma-components} of $\mathfrak{J}$ then reveals that
\begin{equation}\label{AF-YB-pi-frakJ-relation}
\pi_{\mu}=-\gamma_{AB}j_{\mu}{}^{A}\mathfrak{J}_{\tau}{}^{B}
\qquad \longleftrightarrow \qquad
\mathfrak{J}_{\tau}{}^{A}=-\pi_{\mu}j_{B}{}^{\mu}\gamma^{BA} \,\, ,
\end{equation}
which once again is in close parallel with the undeformed identifications \eqref{YB-pi-J-relation} and importantly relies on the presence of $j_{\mu}{}^{A}$, which is unaffected by the deformation and being a vielbeine allows us to invert the relation for $\frak{J}_{\tau}$. This pattern is the same as in previously discussed models and using the relation \eqref{AF-YB-tildeJ-frakJ-relation} leads to the deformed version of \eqref{Jtau-and-Jsigma-functions-of-momentum},
\begin{equation}\label{AF-YB-Jtau-and-Jsigma-functions-of-momentum}
\mathfrak{J}_{\tau}{}^{A}=-\pi_{\mu}j_{B}{}^{\mu}\gamma^{BA} 
\qquad \text{and} \qquad 
\tilde{J}_{\sigma}{}^{B}=j_{\sigma}{}^{B}-\mathfrak{J}_{\tau}{}^{C}M_{C}{^{B}} \,\, .
\end{equation}
This highlights that, as in the undeformed YB setting, both the $\tau$-component of $\mathfrak{J}$ and the $\sigma$-component of $\tilde{J}$ depend on the momenta non-trivially, and at the same time these can be expressed in terms of the physical fields with no explicit reference to the auxiliaries.

\paragraph{Poisson brackets.}
For these models one finds the structure
\begin{align}\label{AF-YB-brackets-summary}
\{ \mathfrak{J}_{\tau}{}^{A}(\sigma),\mathfrak{J}_{\tau}{}^{B}(\sigma') \} &= f^{AB}{}_{C} \, \mathfrak{J}_{\tau}{}^{C}(\sigma)\delta(\sigma-\sigma') \, ,
\notag \\
\{ \mathfrak{J}_{\tau}{}^{A}(\sigma),\tilde{J}_{\sigma}{}^{B}(\sigma') \} &= f^{AB}{}_{C} \, \tilde{J}_{\sigma}{}^{C}(\sigma)\delta(\sigma-\sigma')-\gamma^{AB} \, \partial_{\sigma}\delta(\sigma-\sigma') \, ,
\notag \\
\{ \tilde{J}_{\sigma}{}^{A}(\sigma),\tilde{J}_{\sigma}{}^{B}(\sigma') \} &= \eta^2 c^2 f^{AB}{}_{C}\mathfrak{J}_{\tau}{}^{C}\delta(\sigma-\sigma') \, ,
\\
\{ \mathfrak{J}_{\tau}{}^{A}(\sigma),j_{\sigma}{}^{B}(\sigma') \} &= f^{AB}{}_{C}j_{\sigma}{}^{C}\delta(\sigma-\sigma')-\gamma^{AB}\partial_{\sigma}\delta(\sigma-\sigma') \, ,
\notag \\
\{ \mathfrak{J}_{\tau}{}^{A}(\sigma),M_{C}{}^{B}(\sigma') \} &= \bigl(f^{AB}{}_{D}M_{C}{}^{D}-f_{C}{}^{AD}M_{D}{}^{B}\bigr)\delta(\sigma-\sigma') \, .
\notag
\end{align}
It is an important sanity check to notice that
\begin{itemize}
\item For $E(v)=0$ the auxiliary field EOM of AF-YB models reads
\begin{equation}
v_{\pm}=-\frac{1}{1\pm M}j_{\pm} = -J_{\pm} \,\, ,
\end{equation}
where $J_{\pm}$ is the current characterising the undeformed YB models. This implies that both $\frak{J}_{\pm}$ and $\tilde{J}_{\pm}$ reduce to $J_{\pm}$ in this limit, ensuring that the above brackets correctly reduce to the ones found for the undeformed YB models \eqref{YB-brackets-summary}.
\item Taking the limit $\eta\rightarrow0$, one finds that 
\begin{equation}
\begin{aligned}
\tilde{J}_{\pm}=\frac{1}{1\mp M}\Bigl( j_{\pm}\pm M(2v_{\pm}) \Bigr) &\qquad \longrightarrow \qquad +j_{\pm} \, ,
\\
\mathfrak{J}_{\pm}=-\frac{1}{1\mp M}\Bigl( j_{\pm}+2v_{\pm}\Bigr)\quad&\qquad\longrightarrow\qquad-(j_{\pm}+2v_{\pm}) \,\, ,
\end{aligned}
\end{equation}
which ensures the correct reduction to the brackets of the AF-PCM.
\item Combining these limits leads back to the undeformed PCM, as one should expect.
\end{itemize} 
From naive expectations, as highlighted at the beginning of the subsection, one might be worried that the detailed derivation of the above brackets would exhibit complications due to the double action of the deformation on the two currents. As anticipated this is in fact practically not true, since the relations \eqref{AF-YB-Jtau-and-Jsigma-functions-of-momentum} automatically ensure that each one of the brackets in \eqref{AF-YB-brackets-summary} has exactly the same structure as the analogue bracket in \eqref{YB-brackets-summary} and can be evaluated by following the exact same sequence of steps as for the undeformed case in Appendix \ref{appendix:C-YB}, by simply replacing $J_{\tau} \rightarrow\mathfrak{J}_{\tau}$ and $J_{\sigma}\rightarrow \tilde{J}_{\sigma}$. We conclude by remarking, once again, how the brackets \eqref{AF-YB-brackets-summary} fall within the class \eqref{Yangian-general-brackets} and obey the conditions \eqref{Yangian-theorem-condition}, ensuring the existence of an underlying classical Yangian symmetry for each model in this family. This clearly includes the undeformed sector \eqref{YB-brackets-summary}, whose Yangian structure had not been previously highlighted, to the best of our knowledge.

\subsubsection{Non-Abelian T-dual sigma models}\label{subsubsec:TD}
\paragraph{Generalities.} These models can be obtained from the PCM on a Lie group G by performing non-Abelian T-duality \cite{Fridling:1983ha,Fradkin:1984ai,delaOssa:1992vci,Alvarez:1993qi}\footnote{Non-Abelian T-duality generalizes the Abelian case \cite{Buscher:1987sk,Buscher:1987qj} via the gauging procedure \cite{Rocek:1991ps}.} on a subsector of the isometry group G$_{L}\times$G$_{R}$, and here we will be concerned with dual models obtained by dualising the full G$_{L}$ sector. The T-dualisation procedure needed to obtain the T-dual model is characterised by an operator $M:\mathfrak{g}\rightarrow\mathfrak{g}$ on the underlying Lie algebraic structure
\begin{equation}
M\equiv \mathrm{ad}_{X} 
\qquad \text{such that} \qquad
\mathrm{ad}_{X}(Y)=[X,Y]=X^{B}Y^{C}f_{BC}{}^{A}T_{A} \qquad \forall \, Y\in \mathfrak{g} \,\, ,
\end{equation}
where $X=X^{A}T_{A}$ are Lie-algebra-valued coordinates on the T-dual background. In components, the above operator can be expressed as
\begin{equation}\label{TD-M-in-components}
M(T_{A})=M_{A}{}^{B}T_{B}=[X,T_{A}]=X^{C}f_{CA}{}^{B}T_{B}  
\qquad \Rightarrow \qquad 
M_{A}{}^{B}=X^{C}f_{CA}{^{B}} \,\, , 
\end{equation}
and is antisymmetric with respect to the Cartan-Killing form
\begin{equation}
\begin{aligned}
M_{AB}&=M_{A}{}^{D}\gamma_{DB}=X^{C}f_{CA}{}^{D}\gamma_{DB}=X^{C}f_{CAB}
\\
&=-X^{C}f_{CBA}=-X^{C}f_{CB}{}^{D}\gamma_{DA}=-M_{B}{}^{D}\gamma_{DA}=-M_{BA} \,\, .
\end{aligned}
\end{equation} 
Given the dependence of $M$ on the background coordinates $X^{A}$ one can recognise that
\begin{equation}\label{TD-M-derivative-components}
\partial_{C}M_{A}{}^{B}=f_{CA}{}^{B} \,\, .
\end{equation}
Finally, as in the case of YB, in terms of $M$ one can introduce the operators
\begin{equation}\label{TD-operators-definition}
\mathcal{O}_{\pm}=\frac{1}{1\pm M}=\sum_{k=0}^{\infty}(\mp)^k
M^k \,\, , 
\quad \text{acting as} \quad \mathcal{O}_{\pm}(X)=X^{A}\mathcal{O}_{\pm A}{}^{B}T_{B} \quad \forall \, X\in \mathfrak{g} \,\, .
\end{equation}
The definition ensures that $\mathcal{O}_{\pm}$ and $M$ are effectively commuting and satisfy the relations
\begin{equation}\label{TD-sum-difference-Opm}
\frac{1}{1-M}-\frac{1}{1+M}=\frac{2M}{1-M^2} \, , 
\qquad 
\qquad 
\frac{1}{1-M}+\frac{1}{1+M}=\frac{2}{1-M^2} \,\, .
\end{equation}
Due to the antisymmetry of $M$ they are furthermore related by transposition, namely
\begin{equation}
\mathrm{tr}[X\mathcal{O}_{\pm}(Y)]=\mathrm{tr}[\mathcal{O}_{\mp}(X)Y] 
\qquad \Longrightarrow \qquad
\mathcal{O}_{\pm AB}=\mathcal{O}_{\mp BA} \,\, .
\end{equation}

\paragraph{The $L$-current.} A known feature of T-duality is the exchange in role of EOM and Bianchi identities, which causes the T-dual EOM to coincide with the flatness of the current $L$. On-shell, this is also flat and hence the T-dual Lax \eqref{Lax-undeformed-models} is characterised by
\begin{equation}
L_{\pm}=\pm\frac{1}{1\pm M} \partial_{\pm}X = \pm \mathcal{O}_{\pm}(\partial_{\pm} X) \,\, ,
\end{equation}
which after converting from lightcone to $(\tau,\sigma)$ coordinates using \eqref{def-lightcone-to-tau-sigma} and \eqref{TD-sum-difference-Opm} reads
\begin{equation}\label{TD-jtilde-tau-sigma-coordinates}
L_{\tau}=\frac{1}{1-M^2}\Bigl( \partial_{\sigma}X - M(\partial_{\tau}X) \Bigr)
\qquad \text{and} \qquad
L_{\sigma}=\frac{1}{1-M^2}\Bigl( \partial_{\tau}X - M(\partial_{\sigma}X) \Bigr) \,\, .
\end{equation}
Notice that the structure is very similar to the one encountered for YB models in \eqref{YB-Jtausigma}\footnote{This is in fact no coincidence, given the deep connections between YB and T-dual models \cite{Osten:2016dvf,Borsato:2016pas,Borsato:2017qsx,Borsato:2018idb}.}, except that the role of $\tau$ and $\sigma$ is exchanged, as one would expect from T-duality. 
The components of the current then also satisfy the important relations
\begin{equation}\label{TD-tildejtau-tildejsigma-relation}
L_{\sigma}=\partial_{\tau}X-M(L_{\tau})
\qquad \text{and} \qquad
L_{\tau}=\partial_{\sigma}X-M(L_{\sigma}) \,\, .
\end{equation}

\paragraph{Lagrangian and momentum.} Using again \eqref{def-lightcone-to-tau-sigma} and \eqref{TD-sum-difference-Opm} the Lagrangian reads
\begin{equation}\label{TD-Lagrangian}
\begin{aligned}
\mathcal{L}_{\mathrm{TD}}=&-\tfrac{1}{2}\mathrm{tr}[\partial_{+}X\frac{1}{1-M} \partial_{-}X]
\\
=&-\tfrac{1}{2}\mathrm{tr}[\partial_{\tau}X\frac{1}{1-M} \partial_{\tau}X-\partial_{\tau}X\frac{2M}{1-M^2} \partial_{\sigma}X-\partial_{\sigma}X\frac{1}{1-M} \partial_{\sigma}X]
\\
=&-\tfrac{1}{2}\gamma_{AB}\partial_{\tau}X^{A}\partial_{\tau}X^{C}\mathcal{O}_{-C}{}^{B}-\tfrac{1}{2}\gamma_{AB}\partial_{\sigma}X^{A}\partial_{\sigma}X^{C}\mathcal{O}_{-C}{}^{B}
\\
&+\gamma_{AB}\partial_{\tau}X^{A}\partial_{\sigma}X^{C}\mathcal{O}_{-C}{}^{D}\mathcal{O}_{+D}{}^{E}M_{E}{}^{B} \,\, ,
\end{aligned}
\end{equation}
such that the canonical momentum \eqref{def-canonical-momentum} takes the form
\begin{equation}
\pi_{M}=-\gamma_{MB}\partial_{\tau}X^{C}\mathcal{O}_{-C}{}^{D}\mathcal{O}_{+D}{}^{B}+\gamma_{MB}\partial_{\sigma}X^{C}\mathcal{O}_{-C}{}^{D}\mathcal{O}_{+D}{}^{E}M_{E}{^{B}} \,\, ,
\end{equation}
and after a quick look at the definitions \eqref{TD-jtilde-tau-sigma-coordinates} can be written as
\begin{equation}\label{TD-momentum-jtilde-relation}
\pi_{M}=-\gamma_{MB}L_{\sigma}{}^{B}
\qquad \longleftrightarrow \qquad
L_{\sigma}{}^{A}=-\pi_{M}\gamma^{MA} \,\, .
\end{equation}
This relation has a structure very similar to the one observed in previous models, except for the fact that the usual quadratic dependence of the momenta on the vielbeine is now secretly hidden into the definition of $L$, which for the T-dual model is \emph{not} a vielbeine. One can indeed verify that the T-dual Lagrangian allows for four possible choices of vielbeine,
\begin{equation}\label{TD-vielbeine}
E_{1\pm}=\frac{s_{1}}{1-M}\partial_{\pm}X=s_{1}\mathcal{O}_{-}(\partial_{\pm}X) 
\quad \text{and} \quad
E_{2\pm}=\frac{s_{2}}{1+M}\partial_{\pm}X =s_{2}\mathcal{O}_{+}(\partial_{\pm}X) \,\, ,
\end{equation}
with $s_{1}=\pm$ and $s_{2}=\pm$ independent signs such that $s_{1}^2=s_{2}^2=1$. These choices are related by a Lorentz transformation $\Lambda$, such that $\Lambda\Lambda^{T}=\Lambda^{T}\Lambda=1$, taking the form
\begin{equation}
E_{1\pm}=\Lambda(E_{2\pm}) \, , 
\quad E_{2\pm}=\Lambda^{T}(E_{1\pm}) \, , \quad \text{with} \quad
\Lambda= \frac{s_{1}}{s_{2}}\frac{1+M}{1-M} \, , 
\quad 
\Lambda^{T}= \frac{s_{1}}{s_{2}}\frac{1-M}{1+M}\,\, ,
\end{equation}
which allows us to write the T-dual model as
\begin{equation}
\mathcal{L}_{\mathrm{TD}}=-\frac{1}{2}\mathrm{tr}[E_{1+}E_{1-}]-\frac{1}{2}\mathrm{tr}[E_{1+}M(E_{1-})]=-\frac{1}{2}\mathrm{tr}[E_{2+}E_{2-}]-\frac{1}{2}\mathrm{tr}[E_{2+}M(E_{2-})] \,\, .
\end{equation}
In terms of the dual vielbeine, $L_{\sigma}$ in \eqref{TD-jtilde-tau-sigma-coordinates} can be recast in the form 
\begin{equation}
L_{\sigma}=s_{1}\mathcal{O}_{+}\bigl( E_{1\tau}- M(E_{1\sigma})\bigr)=s_{2}\mathcal{O}_{-}\bigl( E_{2\tau}- M(E_{2\sigma})\bigr) \,\, ,
\end{equation}
such that $\pi_{M}$ in \eqref{TD-momentum-jtilde-relation} can be written as
\begin{equation}
\begin{aligned}
\pi_{M}&=-s_{1}\gamma_{MB}(E_{1\tau}{}^{C}\!-\!E_{1\sigma}{}^{D}M_{D}{}^{C})\mathcal{O}_{+C}{}^{B}
=-s_{2}\gamma_{MB}(E_{2\tau}{}^{C}\!-\!E_{2\sigma}{}^{D}M_{D}{}^{C})\mathcal{O}_{-C}{}^{B} \, .
\end{aligned}
\end{equation}
Comparison with \eqref{SSSM-canonical-momentum} and \eqref{YB-pi-J-relation} highlights how the components $\mathcal{O}_{\pm C}{}^{B}$ of the operators \eqref{TD-operators-definition} play the role of components of the T-dual vielbeine, as indeed confirmed by the definition \eqref{TD-vielbeine}, i.e. they play the role of $j_{\mu}^{(2)a}$ or $j_{\mu}{}^{A}$ in the SSSM or PCM/YB models. The role of $j_{\tau}^{(2)a}$ or $j_{\tau}{}^A$/$J_{\tau}{}^A$, on the other hand, is  now played by the combinations $E_{1\tau}{}^{C}-E_{1\sigma}{}^{D}M_{D}{}^{C}$ and $E_{2\tau}{}^{C}-E_{2\sigma}{}^{D}M_{D}{}^{C}$, which non-trivially mix the $\tau$ and $\sigma$ components of the dual vielbeine.
Finally, as observed in YB models but unlike PCMs and SSSMs, relations \eqref{TD-tildejtau-tildejsigma-relation} and \eqref{TD-momentum-jtilde-relation} imply that both $L_{\tau}$ and $L_{\sigma}$ depend on the canonical momentum:
\begin{equation}\label{TD-Jtau-and-Jsigma-functions-of-momentum}
L_{\sigma}{}^{A}=-\pi_{M}\gamma^{MA} 
\qquad \text{and} \qquad 
L_{\tau}{}^{B}=\partial_{\sigma}X^{B}-L_{\sigma}{}^{C}M_{C}{^{B}} \,\, .
\end{equation}

\paragraph{Poisson brackets.} As shown in detail in Appendix \ref{appendix:C-TD}, for these models one obtains
\begin{equation}\label{TD-brackets-summary}
\begin{aligned}
\{ L_{\tau}{}^{A}(\sigma),L_{\tau}{}^{B}(\sigma') \} &= f^{AB}{}_{C} \, L_{\tau}{}^{C}(\sigma)\delta(\sigma-\sigma') \, ,
\\
\{ L_{\tau}{}^{A}(\sigma),L_{\sigma}{}^{B}(\sigma') \} &= f^{AB}{}_{C} \, L_{\sigma}{}^{C}(\sigma)\delta(\sigma-\sigma')-\gamma^{AB} \, \partial_{\sigma}\delta(\sigma-\sigma') \, ,
\\
\{ L_{\sigma}{}^{A}(\sigma),L_{\sigma}{}^{B}(\sigma') \} &= 0 \, ,
\\
\{ \partial_{\sigma}X^{A}(\sigma),\pi_{B}(\sigma') \} &= -\delta_{B}{}^{A}\partial_{\sigma'}\delta(\sigma'-\sigma) \, ,
\\
\{ M^{AB}(\sigma),\pi_{C}(\sigma') \} &= f^{AB}{}_{C}\delta(\sigma-\sigma') \, ,
\end{aligned}
\end{equation}
and the first three brackets exhibit the same structure as the PCM, a feature that one might reasonably expect by viewing T-duality as a canonical transformation \cite{Alvarez:1994wj,Lozano:1995jx}.

\subsubsection{AF-non-Abelian T-dual sigma models}\label{subsubsec:AFTD}
\paragraph{Generalities.} These models were first constructed in \cite{Bielli:2024khq} and \cite{Bielli:2024ach} by applying non-Abelian T-duality to AF-PCMs in \cite{Ferko:2024ali}. As observed in the undeformed setting, the structure of T-dual models closely resembles that of YB and for this reason they will also be characterised by the splitting of the flat and conserved current $L$ into two new currents $A$ and $B$ which are both deformed by the presence of the auxiliary fields.

\paragraph{The $A$ and $B$ currents.}
The AF-TD model Lax \eqref{Lax-lightcone} is characterised by 
\begin{equation}
A_{\pm}=\mp\frac{1}{1\mp M}\Bigl( \partial_{\pm}X \pm 2v_{\pm}\Bigr) 
\qquad \text{and} \qquad 
B_{\pm}=\pm\frac{1}{1\mp M}\Bigl( \partial_{\pm}X+M(2v_{\pm}) \Bigr) \,\, ,
\end{equation}
with $M=\mathrm{ad}_{X}$ as in the undeformed setting, and satisfy the relation
\begin{equation}\label{AFTD-tildefrakJ-tildej-relation}
B=-(A+2v) \,\, .
\end{equation}
Converting from lightcone to $(\tau,\sigma)$ coordinates using \eqref{def-lightcone-to-tau-sigma} and \eqref{sum-difference-Opm} one finds
\begin{equation}\label{AF-TD-tildej-tausigma-components}
\begin{aligned}
A_{\tau}&=-\frac{1}{1-M^2}\Bigl( M(\partial_{\tau}X)+2v_{\tau}+\partial_{\sigma}X+M(2v_{\sigma}) \Bigr) \, ,
\\
A_{\sigma}&=-\frac{1}{1-M^2}\Bigl( \partial_{\tau}X+M(2v_{\tau})+M(\partial_{\sigma}X)+2v_{\sigma} \Bigr) \,\, ,
\end{aligned}
\end{equation}
and exploiting \eqref{AFTD-tildefrakJ-tildej-relation} these imply that
\begin{equation}\label{AF-TD-tildej-tildefrakJ-relation}
B_{\sigma}=\partial_{\tau}X-M(A_{\tau})
\qquad \text{and} \qquad 
B_{\tau}=\partial_{\sigma}X-M(A_{\sigma}) \,\, ,
\end{equation}
which clearly represent the deformed analogues of \eqref{TD-tildejtau-tildejsigma-relation} and for AF-YB models play the crucial role of re-encoding \eqref{AFTD-tildefrakJ-tildej-relation} with no explicit reference to the auxiliary fields.

\paragraph{Lagrangian and momentum.} Using again \eqref{def-lightcone-to-tau-sigma} and \eqref{sum-difference-Opm} the Lagrangian reads
\begin{align}\label{AF-TD-Lagrangian}
\mathcal{L}_{\mathrm{AFTD}}\!=\!& +\!\tfrac{1}{2}\mathrm{tr}[\partial_{-}X\frac{1}{1\!-\!M} \partial_{+}X]\!+\!\mathrm{tr}[\partial_{-}X\frac{1}{1\!-\!M} v_{+} \!-\! \partial_{+}X\frac{1}{1\!+\!M} v_{-}] \!-\! \mathrm{tr}[v_{-}\frac{1\!+\!M}{1\!-\!M} v_{+}]\!+\!E(v)
\notag \\
\!=\!&+\!\tfrac{1}{2}\mathrm{tr}[\partial_{\tau}X\frac{1}{1-M} \partial_{\tau}X\!+\!\partial_{\tau}X\frac{2M}{1\!-\!M^2} \partial_{\sigma}X \!-\! \partial_{\sigma}X\frac{1}{1\!-\!M} \partial_{\sigma}X]
\notag\\
&+\!\mathrm{tr}[\partial_{\tau}X\frac{M}{1\!-\!M^2}(2v_{\tau})\!+\!\partial_{\tau}X\frac{1}{1\!-\!M^2}(2v_{\sigma})\!-\!\partial_{\sigma}X\frac{1}{1\!-\!M^2}(2v_{\tau})\!-\!\partial_{\sigma}X\frac{M}{1\!-\!M^2}(2v_{\sigma})]
\notag \\
&-\!\mathrm{tr}[v_{\tau}\frac{1\!-\!M}{1\!+\!M}v_{\tau}\!+\!v_{\tau}\frac{4M}{1\!-\!M^2}v_{\sigma}\!-\!v_{\sigma}\frac{1\!-\!M}{1\!+\!M}v_{\sigma}] \!+\! E(v) \,\, .
\end{align}
Since we are interested in computing the momentum \eqref{def-canonical-momentum}, we only need to keep track of terms containing $\partial_{\tau}X$. In components we should hence look at
\begin{equation}
\begin{aligned}
\!\!\!\!\!\!\mathcal{L}_{\mathrm{AFTD}}|_{\partial_{\tau}X}\!\!=\!&+\!\tfrac{1}{2}\gamma_{AB}\partial_{\tau}X^{A}\partial_{\tau}X^{C}\mathcal{O}_{-C}{}^{B} \!+\!\gamma_{AB}\partial_{\tau}X^{A}\partial_{\sigma}X^{C}\mathcal{O}_{-C}{}^{D}\mathcal{O}_{+D}{}^{E}M_{E}{}^{B}
\\
&+\!\gamma_{AB}\partial_{\tau}X^{A}(2v_{\tau}{}^{C})\mathcal{O}_{-C}{}^{D}\mathcal{O}_{+D}{}^{E}M_{E}{}^{B}\!+\!\gamma_{AB}\partial_{\tau}X^{A}(2v_{\sigma}{}^{C})\mathcal{O}_{-C}{}^{D}\mathcal{O}_{+D}{}^{B} \, ,
\end{aligned}
\end{equation}
which leads to
\begin{equation}
\begin{aligned}
\pi_{M}=&+\gamma_{MB}\partial_{\tau}X^{C}\mathcal{O}_{-C}{}^{D}\mathcal{O}_{+D}{}^{B}+\gamma_{MB}\partial_{\sigma}X^{C}\mathcal{O}_{-C}{}^{D}\mathcal{O}_{+D}{}^{E}M_{E}{}^{B}
\\
&+\gamma_{MB}(2v_{\tau}{}^{C})\mathcal{O}_{-C}{}^{D}\mathcal{O}_{+D}{}^{E}M_{E}{}^{B}+\gamma_{MB}(2v_{\sigma}{}^{C})\mathcal{O}_{-C}{}^{D}\mathcal{O}_{+D}{}^{B} \,\, .
\end{aligned}
\end{equation}
A quick comparison with \eqref{AF-TD-tildej-tausigma-components} then reveals that
\begin{equation}\label{AFTD-momentum-jtilde-relation}
\pi_{M}=-\gamma_{MA}A_{\sigma}{}^{A}
\qquad \longleftrightarrow \qquad 
A_{\sigma}{}^{A}=-\pi_{M}\gamma^{MA} \,\, ,
\end{equation}
which is the deformed analogue of \eqref{TD-momentum-jtilde-relation}. Finally, combining this result with \eqref{AF-TD-tildej-tildefrakJ-relation} one can also recover the expected deformed version of \eqref{TD-Jtau-and-Jsigma-functions-of-momentum}\footnote{It should be noted that in this subsection we have used a slightly different convention as compared to the original derivation \cite{Bielli:2024khq,Bielli:2024ach} of the AF-T-dual model \eqref{AF-TD-Lagrangian}, replacing the Lagrange multipliers $X$ with $-X$. This is of course a mere choice of notation, which does not affect the final result, namely the Poisson brackets \eqref{AFTD-brackets-summary} satisfied by the deformed currents, but has the nice effect of bringing all the identities relevant to AF-TD models on the same footing as the other classes of sigma models. For example, it is not hard to check that in the original conventions \cite{Bielli:2024khq,Bielli:2024ach} the slightly different signs in the definitions of the Lagrangian, and of $A$ and $B$, would have led to a different definition of momenta, with $A_{\sigma}{}^{A}=+\pi_{M}\gamma^{MA}$ and $B_{\tau}{}^{B}=-\partial_{\sigma}X^{B}+A_{\sigma}{}^{C}M_{C}{^{B}}$ carrying different signs with respect to the other classes of sigma models, but in fact still leading to the same Poisson brackets \eqref{AFTD-brackets-summary}. This rescaling of the Lagrange multipliers has a simple interpretation: it brings them to have the same sign as the kinetic term of the pre-dualisation undeformed PCM, which is not true in the conventions of the original derivation, where the multipliers appear with the same sign as the kinetic term of the pre-dualisation deformed PCM, which is flipped with respect to the undeformed setting.
}
\begin{equation}\label{AFTD-Jtau-and-Jsigma-functions-of-momentum}
A_{\sigma}{}^{A}=-\pi_{M}\gamma^{MA} 
\qquad \text{and} \qquad 
B_{\tau}{}^{B}=\partial_{\sigma}X^{B}-A_{\sigma}{}^{C}M_{C}{^{B}} \,\, ,
\end{equation}
which once again shows that the relevant components of the currents $A$ and $B$ can be entirely expressed in terms of physical fields with no explicit reference to the auxiliaries.

\paragraph{Poisson brackets.} For this class of models one obtains
\begin{equation}\label{AFTD-brackets-summary}
\begin{aligned}
\{ B_{\tau}{}^{A}(\sigma),B_{\tau}{}^{B}(\sigma') \} &= f^{AB}{}_{C} \, B_{\tau}{}^{C}(\sigma)\delta(\sigma-\sigma') \, ,
\\
\{ B_{\tau}{}^{A}(\sigma),A_{\sigma}{}^{B}(\sigma') \} &= f^{AB}{}_{C} \, A_{\sigma}{}^{C}(\sigma)\delta(\sigma-\sigma')-\gamma^{AB} \, \partial_{\sigma}\delta(\sigma-\sigma') \, ,
\\
\{ A_{\sigma}{}^{A}(\sigma),A_{\sigma}{}^{B}(\sigma') \} &= 0 \, ,
\\
\{ \partial_{\sigma}X^{A}(\sigma),\pi_{B}(\sigma') \} &= -\delta_{B}{}^{A}\partial_{\sigma'}\delta(\sigma'-\sigma) \, ,
\\
\{ M^{AB}(\sigma),\pi_{C}(\sigma') \} &= f^{AB}{}_{C}\delta(\sigma-\sigma') \, .
\end{aligned}
\end{equation}
As noted for AF-YB models, it is straightforward to see that using \eqref{AFTD-Jtau-and-Jsigma-functions-of-momentum} all the brackets have in fact the same structure as in the undeformed T-dual case and the calculation of each of them goes through as highlighted in Appendix \ref{appendix:C-TD}. As one might again expect from the canonical interpretation of T-duality, the above brackets also have the same structure as the one of AF-PCMs \eqref{AFPCM-brackets-summary}, and indeed feature the replacements $L_{\tau}\rightarrow B_{\tau}$, $L_{\sigma}\rightarrow A_{\sigma}$ and respect Theorem \ref{Yangian-theorem}, which ensures the existence of a Yangian.

\subsubsection{Principal Chiral models with Wess-Zumino term}\label{subsubsec:PCMWZ}
\paragraph{Generalities.} These models \cite{Wess:1971yu,Novikov:1982ei} represent an important class of deformations of the PCM, preserving integrability for any value of the deformation parameter \cite{Abdalla:1982yd} and exhibiting quantum conformal symmetry for specific choices of it, leading to Wess-Zumino-Witten models \cite{Witten:1983ar}. The deformation is performed by adding to the PCM action a contribution living on a 3-manifold $\mathcal{M}_{3}$ whose boundary coincides with $\Sigma$, namely $\partial \mathcal{M}_{3}=\Sigma$. We will follow the notation in section 5 of \cite{Bielli:2024ach}, hence writing the action as
\begin{equation}\label{PCM_WZ}
S_{\text{PCM-WZ}} = - \frac{\hay}{2} \int_{\Sigma} \mathrm{d}^2 \sigma \, \tr ( j_+ j_- ) + \frac{\kay}{6} \int_{\mathcal{M}_3} \mathrm{d}^3 x \, \epsilon^{ijk} \tr \left( j_i [ j_j, j_k ] \right) \, .
\end{equation}
As reviewed in Appendix A.3 of \cite{Bielli:2024ach}, the variations of the WZ term correctly localise to the boundary of $\mathcal{M}_{3}$, namely to $\Sigma$, hence giving rise to a well-defined deformation of the PCM, which can be recovered by choosing $\hay = 1$ and $\kay=0$.

\paragraph{The $L$-current.} The presence of the WZ term induces a deformation of the flat and conserved current $L=j$ of the standard PCM, leading to 
\begin{equation}\label{PCM-WZ-BA-currents}
L_{\tau}{}^{A}(\sigma)=j_{\tau}{}^{A}(\sigma)-\tfrac{\kay}{\hay}j_{\sigma}{}^{A}(\sigma)
\qquad \text{and} \qquad 
L_{\sigma}{}^{A}(\sigma)=j_{\sigma}{}^{A}(\sigma)-\tfrac{\kay}{\hay}j_{\tau}{}^{A}(\sigma) \,\, .
\end{equation}

\paragraph{Lagrangian and momentum.} The new contribution can be interpreted as coupling the PCM to a background 3-form $H$ which is closed but not exact, and hence cannot be encoded in a globally-defined 2-form $B$. For our purposes it is however sufficient to consider a local description of the worldsheet theory, since the BIZZ construction of Theorem \ref{BIZZ-theorem} also relies on this assumption. We thus proceed by expressing the WZ contribution in terms of a locally-defined $B$-field on the worldsheet, arriving at the Lagrangian
\begin{equation}
\begin{aligned}
\mathcal{L}_{\text{PCM-WZ}}=&-\frac{\hay}{2} \mathrm{tr}(j_{+}j_{-})+\frac{\kay}{2}B_{\mu\nu} \epsilon^{\alpha\beta}\partial_{\alpha}Z^{\mu}\partial_{\beta}Z^{\nu}
\\
=&-\frac{\hay}{2}\gamma_{AB}(\partial_{\tau}Z^{\mu}\partial_{\tau}Z^{\nu}-\partial_{\sigma}Z^{\mu}\partial_{\sigma}Z^{\nu})j_{\mu}{}^{A}j_{\nu}{}^{B}+\kay B_{\mu\nu}\partial_{\tau}Z^{\mu}\partial_{\sigma}Z^{\nu} \,\, .
\end{aligned}
\end{equation}
The associated canonical momentum \eqref{def-canonical-momentum} then takes the form
\begin{equation}\label{PCM-WZ-canonical-momentum}
\!\!\!\pi_{\mu}\!\!=\! -\hay \gamma_{AB} \, j_{\mu}{}^{A}j_{\tau}{}^{B}\!+\!\kay B_{\mu\nu}\partial_{\sigma}Z^{\nu}
\,\,\, \longleftrightarrow \,\,\, 
j_{\tau}{}^{A}\!\!=\!-\tfrac{1}{\hay}\gamma^{AB}j_{B}{}^{\mu}\pi_{\mu}\!+\tfrac{\kay}{\hay}\gamma^{AB}j_{B}{}^{\mu}B_{\mu\nu}\partial_{\sigma}Z^{\nu}  ,
\end{equation}
where the second relation relies on invertibility of $j_{\mu}{}^{A}$.

\paragraph{Poisson brackets.} For these models one finds
\begin{align}\label{PCM-WZ-brackets-summary}
\{ L_{\tau}{}^{A}(\sigma),L_{\tau}{}^{B}(\sigma') \} =& \tfrac{1}{\hay}f^{AB}{}_{C} \, L_{\tau}{}^{C}(\sigma)\delta(\sigma-\sigma')+\tfrac{2\kay}{\hay^2}\gamma^{AB}\partial_{\sigma}\delta(\sigma-\sigma') \, ,
\notag \\
\{ L_{\tau}{}^{A}(\sigma),L_{\sigma}{}^{B}(\sigma') \} =& \tfrac{1}{\hay}f^{AB}{}_{C} \, L_{\sigma}{}^{C}(\sigma)\delta(\sigma-\sigma')-\tfrac{1}{\hay}(1+\tfrac{\kay^2}{\hay^2})\gamma^{AB} \, \partial_{\sigma}\delta(\sigma-\sigma') \, ,
\\
\{ L_{\sigma}{}^{A}(\sigma),L_{\sigma}{}^{B}(\sigma') \} =& f^{AB}{}_{C}\Bigl(-\tfrac{2\kay}{\hay^2}L_{\sigma}{}^{C}-\tfrac{\kay^2}{\hay^3}L_{\tau}{}^{C}\Bigr)\delta(\sigma-\sigma')+\tfrac{2\kay}{\hay^2}\gamma^{AB}\partial_{\sigma}\delta(\sigma-\sigma') \,\, .
\notag
\end{align}
To determine this structure, discussed for example in \cite{Abdalla:1993sn,Evans:2000hx}, one needs to make use of
\begin{align}\label{PCM-WZ-building-block-brackets}
\{ j_{A}{}^{\mu}\pi_{\mu}(\sigma),j_{B}{}^{\nu}\pi_{\nu}(\sigma') \} =& -f_{ABC}\gamma^{CD}j_{D}{}^{\mu}\pi_{\mu}\delta(\sigma-\sigma') \, ,
\notag \\
\{ j_{A}{}^{\mu}\pi_{\mu}(\sigma),j_{\sigma}{}^{B}(\sigma') \} =& -f_{A}{}^{B}{}_{C}j_{\sigma}{}^{C}\delta(\sigma-\sigma')+\delta_{A}{}^{B}\partial_{\sigma}\delta(\sigma-\sigma') \, ,
\\
\{ j_{A}{}^{\mu}\pi_{\mu}(\sigma),j_{B}{}^{\nu}B_{\nu\rho}\partial_{\sigma}Z^{\rho}(\sigma') \} =&+ \partial_{\sigma}Z^{\rho} j_{A}{}^{\mu}j_{B}{}^{\nu}(\partial_{\rho}B_{\nu\mu}-\partial_{\mu}B_{\nu\rho})\delta(\sigma-\sigma') 
\notag \\
&+\partial_{\sigma}Z^{\rho} j_{A}{}^{\mu}j_{B}{}^{\nu}j_{E}{}^{\lambda}(B_{\lambda\rho}\partial_{\mu}j_{\nu}{}^{E}-B_{\lambda\mu}\partial_{\rho}j_{\nu}{}^{E})\delta(\sigma-\sigma') 
\notag \\
&+ j_{A}{}^{\mu}(\sigma)j_{B}{}^{\nu}(\sigma)B_{\nu\mu}(\sigma)\partial_{\sigma}\delta(\sigma-\sigma')\,\, .
\notag
\end{align}
The first and second brackets in \eqref{PCM-WZ-building-block-brackets} are the same ones encountered in the case of undeformed PCMs and AF-PCMs and for this reason have simply been borrowed from Appendix C of \cite{Bielli:2024ach}, to which we refer for the details of their computation. The last bracket in \eqref{PCM-WZ-building-block-brackets} represents the main new intermediate ingredient and in fact, for our purposes, will always appear in a specific combination $I^{AB}$, which reads
\begin{equation}\label{IAB-combination-definition}
\!\!\!I^{AB}\!=\!\gamma^{AC}\gamma^{BD}\{j_{C}{}^{\mu}\pi_{\mu}(\sigma),j_{D}{}^{\nu}B_{\nu\rho}\partial_{\sigma}Z^{\rho}(\sigma')\}\!-\!\gamma^{AD}\gamma^{BC}\{j_{C}{}^{\mu}\pi_{\mu}(\sigma'),j_{D}{}^{\nu}B_{\nu\rho}\partial_{\sigma}Z^{\rho}(\sigma)\}.
\end{equation}
This can be recast in the form
\begin{equation}\label{IAB-combination-rearranged}
I^{AB}=-f^{AB}{}_{C}\gamma^{CD}j_{D}{}^{\mu}B_{\mu\nu}\partial_{\sigma}Z^{\nu}\delta(\sigma-\sigma')-f^{AB}{}_{C}j_{\sigma}{}^{C}\delta(\sigma-\sigma') \, ,
\end{equation}
and together with the first two brackets in \eqref{PCM-WZ-building-block-brackets} plays the role of a building block for the main three brackets in \eqref{PCM-WZ-brackets-summary}. In Appendix \ref{appendix:C-PCMWZ} we evaluate the last bracket in \eqref{PCM-WZ-building-block-brackets}, then show how $I^{AB}$ takes the form \eqref{IAB-combination-rearranged}, and finally compute \eqref{PCM-WZ-brackets-summary}.

\subsubsection{AF-Principal Chiral models with Wess-Zumino term}\label{subsubsec:AFPCMWZ}
\paragraph{Generalities.} These models were constructed in \cite{Fukushima:2024nxm} from 4D Chern-Simons and in \cite{Bielli:2024ach} by adding a Wess-Zumino term to the PCM deformed by auxiliary fields. Following the notation of \cite{Bielli:2024ach}, adding a WZ term to the AFSM action leads to
\begin{equation}\label{AF-PCM-WZ}
\!\!\!\! S_{\!\text{AFSM}}^{\text{WZ}} \!=\! \hay \!\! \int_{\Sigma} \!\!\! \mathrm{d}^2 \sigma  \Bigl( \! \tfrac{1}{2} \mathrm{tr} ( j_+ j_- ) \!+\! \mathrm{tr} ( v_+ v_- ) \!+\! \mathrm{tr} ( j_+ v_- \!+\! j_- v_+ ) \!+\! E  \! \Bigr)  
\!+ \frac{\kay}{6}\!\! \int_{\mathcal{M}_3} \!\!\!\!\!\!\! \mathrm{d}^3 x \, \epsilon^{ijk} \tr \left( j_i [ j_j, j_k ] \right) \,.
\end{equation}
When $\kay=0$ and $\hay=1$ one recovers the original AFSM, while for $E=0$ integrating out the auxiliary fields leads to the PCM-WZ model \eqref{PCM_WZ} of the previous subsection. For this latter step, it is also crucial that the WZ term exhibits no coupling to the auxiliaries.

\paragraph{The $A$ and $B$ currents.} Comparing with PCMs, the already modified current \eqref{PCM-WZ-BA-currents} is further deformed by the presence of auxiliary fields. Since the latter do not however couple to the WZ term, one should not expect all terms in \eqref{PCM-WZ-BA-currents} to receive auxiliary contributions, since otherwise letting $\kay\rightarrow0$ would not recover the simpler AF-PCM case. This leads to a Lax \eqref{Lax-lightcone} characterised by $\mathfrak{J}=-(j+2v)$, as for $\kay=0$, and with
\begin{equation}\label{AFPCM-WZ-BA-currents}
A_{\sigma}{}^{A}(\sigma)=j_{\sigma}{}^{A}(\sigma)-\tfrac{\kay}{\hay}\mathfrak{J}_{\tau}{}^{A}(\sigma)
\qquad \text{and} \qquad 
B_{\tau}{}^{A}(\sigma)=\mathfrak{J}_{\tau}{}^{A}(\sigma)-\tfrac{\kay}{\hay}j_{\sigma}{}^{A}(\sigma)  \,\, .
\end{equation}

\paragraph{Lagrangian and momentum.} Since, as mentioned above, the WZ term is decoupled from the auxiliary fields, it is natural to proceed as in the previous subsection, considering a Lagrangian which includes a $B$-field locally-defined on the worldsheet:
\begin{align}
\mathcal{L}_{\mathrm{AFSM}}^{\text{WZ}}\!\!=\!&+\!\hay\Bigl(\tfrac{1}{2}\mathrm{tr}[j_{+}j_{-}]+\mathrm{tr}[j_{+}v_{-}+j_{-}v_{+}]+\mathrm{tr}[v_{+}v_{-}]+E(v) \!\Bigr)+\frac{\kay}{2}B_{\mu\nu} \epsilon^{\alpha\beta}\partial_{\alpha}Z^{\mu}\partial_{\beta}Z^{\nu}
\notag \\
=\!&+\!\hay\Bigl(\tfrac{1}{2}\mathrm{tr}[j_{\tau}^2-j_{\sigma}^2]+2\mathrm{tr}[j_{\tau}v_{\tau}-j_{\sigma}v_{\sigma}]+\mathrm{tr}[v_{\tau}^{2}-v_{\sigma}^{2}]+E(v)\!\Bigr)+\kay B_{\mu\nu}\partial_{\tau}Z^{\mu}\partial_{\sigma}Z^{\nu}
\notag \\
=\!&+\!\tfrac{\hay}{2}\gamma_{AB}(\partial_{\tau}Z^{\mu}\partial_{\tau}Z^{\nu}\!-\!\partial_{\sigma}Z^{\mu}\partial_{\sigma}Z^{\nu})j_{\mu}{}^{A}j_{\nu}{}^{B}+2\hay\gamma_{AB}\partial_{\tau}Z^{\mu}j_{\mu}{}^{A}v_{\tau}^{B}
\\
\!&-\!2\hay\gamma_{AB}\partial_{\sigma}Z^{\mu}j_{\mu}{}^{A}v_{\sigma}^{B}+\hay\gamma_{AB}(v_{\tau}^{A}v_{\tau}^{B}\!-\!v_{\sigma}^{A}v_{\sigma}^{B})+\hay E(v) +\kay B_{\mu\nu}\partial_{\tau}Z^{\mu}\partial_{\sigma}Z^{\nu}  .
\notag
\end{align}
The associated canonical momentum \eqref{def-canonical-momentum} then takes the form
\begin{equation}\label{AFPCM-WZ-canonical-momentum}
\!\!\!\pi_{\mu}\!\!=\! -\hay \gamma_{AB} \, j_{\mu}{}^{A}\mathfrak{J}_{\tau}{}^{B}\!+\!\kay B_{\mu\nu}\partial_{\sigma}Z^{\nu}
\,\,\, \longleftrightarrow \,\,\,  
\mathfrak{J}_{\tau}{}^{A}\!\!=\!-\tfrac{1}{\hay}\gamma^{AB}j_{B}{}^{\mu}\pi_{\mu}\!+\tfrac{\kay}{\hay}\gamma^{AB}j_{B}{}^{\mu}B_{\mu\nu}\partial_{\sigma}Z^{\nu}  ,
\end{equation}
where we re-introduced the auxiliary-field-deformed current $\mathfrak{J}=-(j+2v)$ and the second relation results again from the fact that $j_{\mu}{}^{A}$ plays the role of a vielbeine. The relation \eqref{AFPCM-WZ-canonical-momentum} ensures that the relevant components of $A$ and $B$ in \eqref{AFPCM-WZ-BA-currents} exhibit no explicit reference to the auxiliary fields and can be expressed in terms of physical fields only.

\paragraph{Poisson brackets.}
For these models one then obtains
\begin{align}\label{AFPCM-WZ-brackets-summary}
\{ B_{\tau}{}^{A}(\sigma),B_{\tau}{}^{B}(\sigma') \} =& \tfrac{1}{\hay}f^{AB}{}_{C} \, B_{\tau}{}^{C}(\sigma)\delta(\sigma-\sigma')+\tfrac{2\kay}{\hay^2}\gamma^{AB}\partial_{\sigma}\delta(\sigma-\sigma') \, ,
\notag \\
\{ B_{\tau}{}^{A}(\sigma),A_{\sigma}{}^{B}(\sigma') \} =& \tfrac{1}{\hay}f^{AB}{}_{C} \, A_{\sigma}{}^{C}(\sigma)\delta(\sigma-\sigma')-\tfrac{1}{\hay}(1+\tfrac{\kay^2}{\hay^2})\gamma^{AB} \, \partial_{\sigma}\delta(\sigma-\sigma') \, ,
\\
\{ A_{\sigma}{}^{A}(\sigma),A_{\sigma}{}^{B}(\sigma') \} =& f^{AB}{}_{C}\Bigl(-\tfrac{2\kay}{\hay^2}A_{\sigma}{}^{C}-\tfrac{\kay^2}{\hay^3}B_{\tau}{}^{C}\Bigr)\delta(\sigma-\sigma')+\tfrac{2\kay}{\hay^2}\gamma^{AB}\partial_{\sigma}\delta(\sigma-\sigma') \,\, .
\notag
\end{align}
Thanks to the relation \eqref{AFPCM-WZ-canonical-momentum}, when expressed in terms of canonical momentum the currents $A$ and $B$ exhibit the same form as the current \eqref{PCM-WZ-BA-currents} from the undeformed setting, which are rearranged by making use of \eqref{PCM-WZ-canonical-momentum}. Hence the brackets exhibit the same structure, after once again using the intermediate brackets \eqref{PCM-WZ-building-block-brackets} and the $I^{AB}$ combination \eqref{IAB-combination-definition}, which can be explicitly evaluated to \eqref{IAB-combination-rearranged}. The details of the calculation are as in the undeformed case and we thus refer to Appendix \ref{appendix:C-PCMWZ}. We finally stress one last time how the above brackets only differ from the undeformed ones \eqref{PCM-WZ-brackets-summary} by the replacements $L_{\tau}\rightarrow B_{\tau}$, $L_{\sigma}\rightarrow A_{\sigma}$, respecting all previously observed patterns as well as falling into the class \eqref{Yangian-general-brackets}, ensuring the presence of an underlying Yangian.

\subsection{Patterns and observations}
In this subsection we would like to take a moment to summarise a few observations resulting from the above long survey of models and their respective auxiliary field deformations.

\paragraph{Current splitting mechanisms.} Each of the above families of integrable sigma models is characterised by a single current $L$, which is on-shell both flat and conserved, while their auxiliary field deformations feature two currents $A$ and $B$, which are on-shell respectively flat and conserved. These properties can be encoded in the flatness of Lax connections
\begin{equation}\label{current-splitting-mechanism}
\mathfrak{L}=\frac{L+z\star L}{1-z^2}
\qquad 
\xlongrightarrow[\text{coupling to auxiliary fields}]
\qquad \qquad 
\mathfrak{L}=\frac{A+z\star B}{1-z^2} \, ,
\end{equation}
where, for simplicity in lightcone coordinates, one finds that
\begin{itemize}
\item  \textbf{PCM:} The current $L_{\pm}:=j_{\pm}$ separates into
\begin{equation}
A_{\pm}:=j_{\pm} \qquad \text{and} \qquad B_{\pm}:=-(j_{\pm}+2v_{\pm}) \,\, .
\end{equation}
\item \textbf{SSSM:} The current $L_{\pm}:=-2\mathrm{Ad}_{\mathrm{g}}(j_{\pm}^{(2)})$ separates into
\begin{equation}
A_{\pm}:=-2\mathrm{Ad}_{\mathrm{g}}(j_{\pm}^{(2)}) \qquad \text{and} \qquad B_{\pm}:=+2\mathrm{Ad}_{\mathrm{g}}(j_{\pm}^{(2)}+2v_{\pm}^{(2)}) \,\, .
\end{equation}
\item \textbf{YB:} The current $L_{\pm}:=(1-c^2\eta^2)(1\pm \eta \mathcal{R}_{\mathrm{g}})^{-1}j_{\pm}$ separates into
\begin{equation}
A:=\frac{1-c^2\eta^2}{1\mp \eta \mathcal{R}_{\mathrm{g}}}\Bigl(j_{\pm}\pm\eta\mathcal{R}_{\mathrm{g}}(2v_{\pm})\Bigr) \qquad \text{and} \qquad B:=-\frac{1-c^2\eta^2}{1\mp \eta \mathcal{R}_{\mathrm{g}}}\Bigl(j_{\pm}+2v_{\pm}\Bigr) \,\, .
\end{equation}
\item \textbf{TD:} The current $L_{\pm}:=\pm(1\pm\mathrm{ad}_{X})^{-1}\partial_{\pm}X$ separates into
\begin{equation}
A_{\pm}:=\frac{\mp1}{1\mp \mathrm{ad}_{X}}\Bigl(\partial_{\pm}X\pm 2v_{\pm}\Bigr) \quad \text{and} \quad B_{\pm}:=\frac{\pm1}{1\mp \mathrm{ad}_{X}}\Bigl(\partial_{\pm}X+\mathrm{ad}_{X}(2v_{\pm})\Bigr) \,\, .
\end{equation}
\item \textbf{PCM-WZ:} The current $L_{\pm}:=(1\mp\tfrac{\kay}{\hay})j_{\pm} $ separates into 
\begin{equation}
A_{\pm}:=j_{\pm}\pm\tfrac{\kay}{\hay}(j_{\pm}+2v_{\pm}) \qquad \text{and} \qquad B_{\pm}:=-(j_{\pm}+2v_{\pm}) \mp \tfrac{\kay}{\hay}j_{\pm} \,\, .
\end{equation}
\end{itemize}
Looking at the above explicit expressions, one can observe the existence of two splitting mechanisms by which the auxiliary fields act on the flat and conserved current $L$ of the seed theory in order to generate the two currents $A$ and $B$:
\begin{enumerate}
\item In the first and simpler mechanism, characterising PCMs and SSSMs, the flat and conserved current $L$ of the undeformed theory still plays a role after introducing the deformation. This however only retains one of its defining properties -- presumably either flatness or conservation, in general, even though in both models taken into account the flatness is preserved -- while the second property is transferred to a newly appearing current which explicitly depends on the auxiliary fields $v$.
\item In the second and more involved mechanism, characterising YB, T-dual and PCM-WZ models, the flat and conserved current $L$ of the undeformed theory plays no more role after introducing the deformation. Its defining properties are instead transferred to two newly defined currents, one of which is flat while the other is conserved, both of which exhibit a non-trivial dependence on the auxiliary fields $v$. 
\end{enumerate} 
While substantially different, in the above models these two splitting mechanisms are characterised by the important common relation
\begin{equation}\label{to-be-recalled}
\mathcal{B}=-(\mathcal{A}+2v) \,\, ,
\end{equation}
where for SSSM
\begin{equation}
A=-2\mathrm{Ad}_{\mathrm{g}}(\mathcal{A})
\qquad \text{and} \qquad 
B=-2\mathrm{Ad}_{\mathrm{g}}(\mathcal{B}) \,\, ,
\end{equation}
while for the remaining models
\begin{equation}
A=a\mathcal{A}
\qquad \text{and} \qquad B=a\mathcal{B} \,\, ,
\end{equation}
with $\mathcal{A},\mathcal{B}$ and $a$ defined in \eqref{AF-general-EOM-for-identity} and case-by-case details are provided below that equation. It should be stressed that while in models characterised by the first mechanism the current $A$ is by construction independent of the auxiliary fields, and hence the relation \eqref{to-be-recalled} only plays a role when defining the physical momenta in terms of $B$, for models characterised by the second mechanism, the relation \eqref{to-be-recalled} plays a much more subtle role. This not only ensures the same dependence of the physical momenta on $B$ as in the first mechanism, but also that the non-trivially deformed $A$ can be entirely expressed in terms of physical fields as much as $B$. If this were not the case, the construction would be different, since the Poisson brackets involving $A$ would have to be replaced by Dirac brackets.

\paragraph{Canonical momenta.} Another interesting feature that one can notice from the above survey is that, independently of the specific splitting mechanism taking place in a given model, the auxiliary fields always couple to the physical ones in such a way that the physical canonical momentum of the deformed theory can be expressed in terms of the new currents without explicit reference to the auxiliary fields -- see equations \eqref{AF-PCM-momentum-inversion}, \eqref{AF-SSSM-canonical-momentum}, \eqref{AF-YB-pi-frakJ-relation}, \eqref{AFTD-momentum-jtilde-relation} and \eqref{AFPCM-WZ-canonical-momentum}. In particular, the canonical momentum can always be rewritten as a contraction, via the Cartan-Killing form, of the $\tau$-component of the conserved current $B$ with an invertible current\footnote{An exception is AF-TD models, where the $\sigma$-component of the flat current $A$ appears, rather than the $\tau$-component of the conserved current $B$. This difference is not only in agreement with the intuitive picture that T-duality should swap the role of $\tau$ and $\sigma$, which is already visible in the undeformed TD model, but also with the observation that even after the auxiliary field deformation the only relevant ingredients are encoded in $B_{\tau}$ and $A_{\sigma}$, while their complementary components $B_{\sigma}$ and $A_{\tau}$ never appear.} -- typically the Maurer-Cartan form, replaced by its coset projection in the case of SSSMs -- such that one can always invert the relation and obtain the deformed current as an explicit function of the momentum. \\
\indent
This phenomenon is absolutely crucial in being able to compute the Poisson brackets of $A$ and $B$, since, together with \eqref{to-be-recalled}, it completely hides the deformation introduced by the auxiliary fields and allows to reduce many computations which would in principle be new, and potentially very hard to deal with, to ones which are already known, ultimately providing a set of simple replacement rules $\{L_{\sigma}\rightarrow A_{\sigma} \, , \, L_{\tau}\rightarrow B_{\tau}\}$ which allows us to recover the Poisson brackets of the deformed theory given those of the undeformed one.

\paragraph{Relation to field-dependent metrics.}
We would like to close this subsection with a more speculative comment about the nature of the generalised BIZZ currents. To point out this structure, it may be helpful to compare the explicit expressions for the first several currents in the two towers. From the original BIZZ tower \eqref{original-BIZZ-tower} one has
\begin{equation}
\begin{aligned}
\star J^{(0)}&=\beta \, \star L \, ,
\\
\star J^{(1)}&= L+\tfrac{1}{2\beta}[\star L,\chi^{(0)}] \, ,
\\
\star J^{(2)}&=\star L+\tfrac{1}{2\beta}[ L,\chi^{(0)}]+[\star L,\chi^{(1)}] \, ,
\\
\star J^{(3)}&= L+\tfrac{1}{2\beta}[ \star L,\chi^{(0)}]+[ L,\chi^{(1)}]+[\star L,\chi^{(2)}] \, ,
\end{aligned}
\end{equation}
and it is easy to notice that for $n\geq 2$ the currents satisfy the relation
\begin{equation}\label{relation-original-BIZZ-currents}
\star J^{(n)}= J^{(n-1)}+[\star L,\chi^{(n-1)}] \,\, .
\end{equation}
On the other hand, from the generalised BIZZ tower \eqref{generalised-BIZZ-tower} one has
\begin{equation}
\begin{aligned}
\star J^{(0)}&=\beta \, \star B \, ,
\\
\star J^{(1)}&=  A + \tfrac{1}{2\beta}[\star B,\chi^{(0)}] \, ,
\\
\star J^{(2)}&=\star B+\tfrac{1}{2\beta}[ A,\chi^{(0)}]+[\star B,\chi^{(1)}] \, ,
\\
\star J^{(3)}&= A + \tfrac{1}{2\beta}[\star B,\chi^{(0)}]+[ A,\chi^{(1)}]+[\star B,\chi^{(2)}] \, ,
\end{aligned}
\end{equation}
As opposed to the original tower, it seems now that the relation \eqref{relation-original-BIZZ-currents} has been lost, but suppose that one could define a modified Hodge star operation $\hat{\star}$ such that
\begin{equation}\label{modified-Hodge-def}
\star B = \hat{\star} A
\qquad \qquad \text{and} \qquad \qquad  \hat{\star}^2=1
\qquad , \qquad 
\alpha \wedge \hat{\star}\beta = -\hat{\star}\alpha \wedge \beta\,\, .
\end{equation}
When $B=A$ this reduces to the usual Hodge star. However, in a more general setting, such as for AFSMs discussed above, this is a different involution which could perhaps be interpreted as Hodge duality with respect to a field-dependent metric and in terms of this quantity the starting point \eqref{A,B-definition} of the generalised BIZZ construction becomes
\begin{equation}
\mathrm{d}(\hat{\star}A)=0
\qquad \text{and} \qquad 
\mathrm{d}A+\tfrac{1}{2}[A,A]=0 \,\, ,
\end{equation}
leading to a tower
\begin{equation}
\begin{aligned}
\hat{\star} J^{(0)}&=\beta \, \hat{\star} A \, ,
\\
\hat{\star} J^{(1)}&=  A + \tfrac{1}{2\beta}[\hat{\star} A,\chi^{(0)}] \, ,
\\
\hat{\star} J^{(2)}&=\hat{\star} A+\tfrac{1}{2\beta}[ A,\chi^{(0)}]+[\hat{\star} A,\chi^{(1)}] \, ,
\\
\hat{\star} J^{(3)}&= A + \tfrac{1}{2\beta}[\hat{\star} A,\chi^{(0)}]+[ A,\chi^{(1)}]+[\hat{\star} A ,\chi^{(2)}] \, ,
\end{aligned}
\end{equation}
where the recursion relation \eqref{relation-original-BIZZ-currents} is recovered as 
\begin{equation}\label{modified-Hodge-BIZZ-currents-rule}
\hat{\star} J^{(n)}=J^{(n-1)}+[\hat{\star} A,\chi^{(n-1)}] \qquad \forall \, n\geq 2 
\end{equation}
and the Lax takes the form
\begin{equation}
\mathfrak{L}=\frac{A+z\, \hat{\star}A}{1-z^2} \,\, .
\end{equation}
Having just recalled, around \eqref{to-be-recalled}, how for AFSMs the $A$ and $B$ currents relate to $\mathcal{A}$ and $\mathcal{B}$, one can then write the Lax, in lightcone coordinates, as
\begin{equation}
\mathfrak{L}_{\pm}=\frac{A_{\pm}\pm z \, B_{\pm}}{1-z^2}=\mathcal{O}\Bigl( \frac{\mathcal{A}_{\pm}\pm z \, \mathcal{B}_{\pm}}{1-z^2} \Bigr) \,\, ,
\end{equation}
with $\mathcal{O}$ denoting either multiplication by a constant $a$ or the action of $-2\mathrm{Ad}_{\mathrm{g}}$. This now directly connects to an observation made in \cite{Ferko:2025bhv}, which can be extended to all the above classes of AFSM: using the relations in \eqref{AF-general-EOM-for-identity},
\begin{equation}
\mathcal{A}_{\pm}+v_{\pm}+\Delta_{\pm}=0
\qquad \text{and} \qquad 
\mathcal{B}_{\pm}=-(\mathcal{A}_{\pm}+2v_{\pm}) \, ,
\end{equation}
one can rewrite\footnote{As noted in \cite{Ferko:2025bhv}, this only works for $E(\nu_{-N},...\nu_{-2},\nu_{+2},...,\nu_{+N})\equiv E(\nu_{-2},\nu_{+2})$ with $\nu_{\pm 2}:=\mathrm{tr}(v_{\pm}^2)$.}
\begin{equation}
\mathcal{B}_{+}=h_{++}\mathcal{A}_{-}+h_{+-}\mathcal{A}_{+}
\qquad \text{and} \qquad 
\mathcal{B}_{-}=h_{--}\mathcal{A}_{+}+h_{+-}\mathcal{A}_{-} \, ,
\end{equation}
where 
\begin{equation}
h_{+-}\!=\!\frac{1\!+\!4\partial_{\nu_{-2}}E\partial_{\nu_{+2}}E}{1\!-\!4\partial_{\nu_{-2}}E\partial_{\nu_{+2}}E} \, ,
\quad 
h_{++}\!=\!\frac{-4\partial_{\nu_{-2}}E}{1\!-\!4\partial_{\nu_{-2}
}E\partial_{\nu_{+2}}E} \, ,
\quad 
h_{--}\!=\!\frac{-4\partial_{\nu_{+2}}E}{1\!-\!4\partial_{\nu_{-2}}E\partial_{\nu_{+2}}E} \, .
\end{equation}
are the entries of a Toeplitz matrix
\begin{equation}
h=
\begin{pmatrix}
h_{+-} & h_{++}
\\
h_{--} & h_{+-}
\end{pmatrix}
\qquad \text{such that} \qquad 
\mathrm{det}(h)=h_{+-}^2-h_{++}h_{--}=1 \,\,  .
\end{equation}
Comparison then leads to the expression
\begin{equation}
(\hat{\star}\mathcal{A})_{\pm}=\pm \mathcal{B}_{\pm}=\pm (h_{\pm\pm}\mathcal{A}_{\mp}+h_{+-}\mathcal{A}_{\pm}) \,\, ,
\end{equation}
which allows us to check the last two properties in \eqref{modified-Hodge-def}, hence providing the sought modified Hodge star operation, and in turn means that the Lax can be written as
\begin{equation}\label{Lax-field-dependent-metric}
\mathfrak{L}_{\pm}=\mathcal{O}\Bigl( \frac{\mathcal{A}_{\pm}\pm z (h_{\pm\pm}\mathcal{A}_{\mp}+h_{+-}\mathcal{A}_{\pm})}{1-z^2}\Bigr)=\frac{A_{\pm}\pm z(h_{\pm\pm}A_{\mp}+h_{+-}A_{\pm})}{1-z^2} \,\, .
\end{equation}
This has the nice result of encoding the information about the auxiliary fields contained in $B$ into the field dependent metric $h$, such that for those cases in which the splitting does not affect $A$, one can rewrite the Lax in terms of a current which only depends on physical fields. This observation seems however to break in those cases where also $A$ is affected by the deformation, since then the currents in \eqref{Lax-field-dependent-metric} still explicitly depend on the auxiliaries. AF-PCM-WZ models represent an exception also in this case, since $A$ and $B$ are deformed via $\mathfrak{J}=-(j+2v)$ and using the rescalings in footnote \ref{footnote-WZ-rescaling} one may get rid of the auxiliary field dependence in both of them by first using the above procedure and then applying to $\mathfrak{J}$ the same field dependent transformation as for AF-PCMs.

\subsection{Maillet brackets}
In previous subsections we have computed the Poisson brackets of the $A$ and $B$ currents characterising the Lax connections of various classes of AFSMs, showing their relation to the generalised BIZZ construction of Theorem \ref{gBIZZ-theorem} and how they fit the classical Yangian structure in Theorem \ref{Yangian-theorem}. It is then also natural to exploit the broad class of Poisson brackets \eqref{Yangian-general-brackets} to study when the $\sigma$-components of the generalised BIZZ Lax \eqref{gBIZZ-Lax},
\begin{equation}\label{gBIZZ-Lax-tau-sigma-coordinates}
\mathfrak{L}_{\sigma}(\sigma,z)=\frac{A_{\sigma}+z \, B_{\tau}}{1-z^2} \, ,
\end{equation}
exhibit the Maillet bracket structure which ensures Hamiltonian integrability. Notice that for the AFSM deformations studied in the previous subsection the Maillet bracket structure is ensured, when the respective undeformed theory also has such structure, by the fact that the deformation only alters the Poisson brackets via $\{L_{\sigma}\rightarrow A_{\sigma} \, , \, L_{\tau}\rightarrow B_{\tau}\}$, as had already been noticed in \cite{Cesaro:2024ipq,Cesaro:2025msv} in relation to AF-SSSM. 

\paragraph{Generalities.} Maillet brackets were first proposed in \cite{MAILLET198654,Maillet:1985ec} as a non-ultralocal extension of Sklyanin's sufficient condition for the involution of conserved charges \cite{Sklyanin:1980ij},
\begin{align}\label{general-Maillet-brackets}
\{ & \mathfrak{L}_{\sigma,1}(\sigma,z),\mathfrak{L}_{\sigma,2}(\sigma',z') \}
\notag \\
=&\!+\![r_{12}(\sigma,z,z'),\mathfrak{L}_{\sigma,1}(\sigma,z)]\delta(\sigma-\sigma')-[r_{21}(\sigma,z',z),\mathfrak{L}_{\sigma,2}(\sigma,z')]\delta(\sigma\!-\!\sigma')
\\
&\!-\!\tfrac{1}{2}\Bigl(\! s_{12}(\sigma,z,z')+s_{12}(\sigma',z,z') \!\Bigr)\partial_{\sigma}\delta(\sigma\!-\!\sigma')-\tfrac{1}{2}\partial_{\sigma}\Bigl( \!r_{12}(\sigma,z,z')-r_{21}(\sigma,z',z) \! \Bigr)\delta(\sigma\!-\!\sigma') \,\, , 
\notag 
\end{align}
with the standard notation
\begin{equation}
X_{1}=X\otimes \mathrm{1}
\qquad \text{and} \qquad 
X_{2}=\mathrm{1}\otimes X \qquad \forall \, X \in \mathfrak{g}\,\, ,
\end{equation}
and the definition
\begin{equation}\label{s12-definition}
s_{12}(\sigma,z,z')=r_{12}(\sigma,z,z')+r_{21}(\sigma,z',z)=s_{21}(\sigma,z',z)\,\, .
\end{equation}
In cases where the matrices $r_{12}$ and $r_{21}$ do not depend on $\sigma$, \eqref{general-Maillet-brackets} reduces to 
\begin{align}\label{sigma-independent-Maillet-bracket}
\{ & \mathfrak{L}_{\sigma,1}(\sigma,z),\mathfrak{L}_{\sigma,2}(\sigma',z') \}
\\
=&[r_{12}(z,z'),\mathfrak{L}_{\sigma,1}(\sigma,z)]\delta(\sigma-\sigma')-[r_{21}(z',z),\mathfrak{L}_{\sigma,2}(\sigma,z')]\delta(\sigma\!-\!\sigma')-s_{12}(z,z')\partial_{\sigma}\delta(\sigma-\sigma') \,\, ,
\notag
\end{align}
which agrees with conventions in \cite{Bielli:2024ach}\footnote{Notice that, for consistency with Section 3.3 and Appendix C of \cite{Bielli:2024ach}, from which we will also borrow some identities, in equation \eqref{general-Maillet-brackets} we use a notation slightly different from the original papers \cite{MAILLET198654,Maillet:1985ec}. Equation 1.5 in \cite{MAILLET198654} can be recovered from \eqref{general-Maillet-brackets} by letting $\mathfrak{L}\rightarrow-\mathfrak{L}$, using \eqref{s12-definition} and identifying
\begin{equation}\label{our-choice-of-r12}
\begin{aligned}
r_{12}(\sigma,z,z'):=s(\sigma,z,z')-r(\sigma,z,z')
\quad , \quad 
r_{21}(\sigma,z',z):=s(\sigma,z,z')+r(\sigma,z,z') \,\, ,
\end{aligned}
\end{equation}
finally using \eqref{def-delta-func-identity-3} on the derivatives of $r_{12}$ and $r_{21}$ to recombine them with the non-ultralocal contribution.} and will be useful in the next subsection.

\subsubsection{The case $C^{AB}(\sigma)\equiv \gamma^{AB}$} 
The brackets \eqref{Yangian-general-brackets} specialised to $C^{AB}(\sigma)\equiv \gamma^{AB}$ encompass various models in previous subsections, which exhibit an underlying Yangian structure encoded in Theorem \ref{Yangian-theorem}. To study compatibility with the Maillet structure \eqref{sigma-independent-Maillet-bracket}, it is easy to combine them with the Lax \eqref{gBIZZ-Lax-tau-sigma-coordinates} to compute
\begin{align}
\{&\mathfrak{L}_{\sigma}{}^{A}(\sigma,z),\mathfrak{L}_{\sigma}{}^{B}(\sigma',z')\}=
\notag\\
=&+\frac{1}{(1-z^2)(1-z'{}^{2})}f^{AB}{}_{C}\Bigl( \bigl(m_{3}(z+z')+m_{5}\bigr)A_{\sigma}{}^{C}(\sigma)+\bigl(m_{1}zz'+m_{6}\bigr)B_{\tau}{}^{C}(\sigma)  \Bigr)\delta(\sigma-\sigma')
\notag \\
&+\frac{1}{(1-z^2)(1-z'{}^{2})}\Bigl(  m_{4}(z+z')+m_{2}zz'+m_{7} \Bigr)\gamma^{AB}\partial_{\sigma}\delta(\sigma-\sigma') \,\, ,
\end{align}
which after contracting with the Lie algebra generators $T_{A}\otimes T_{B}$, defining the quadratic Casimir $C_{12}=\gamma^{AB}T_{A}\otimes T_{B}$, and using the identities
\begin{equation}\label{contraction-identities-1}
\begin{aligned}
\{X^{A},X^{B}\}T_{A}\otimes T_{B}=\{X_{1},X_{2}\} \, ,
\qquad & \qquad 
f^{AB}{}_{C}X^{C}T_{A}\otimes T_{B}=[X_{2},C_{12}] \, ,
\\
[C_{12},X_{1}] &=-[C_{12},X_{2}] \,\, ,
\end{aligned}
\end{equation}
directly taken from Appendix C in \cite{Bielli:2024ach}, leads to
\begin{equation}\label{to-compare-1}
\begin{aligned}
\{\mathfrak{L}_{\sigma,1}(\sigma,z),\mathfrak{L}_{\sigma,2}(\sigma',z')\}=
&+\frac{m_{3}(z+z')+m_{5}}{(1-z^2)(1-z'{}^{2})} \, [ A_{\sigma,2}(\sigma),C_{12}]\delta(\sigma-\sigma')
\\
& +\frac{m_{1}zz'+m_{6}}{(1-z^2)(1-z'{}^{2})} \, [ B_{\tau,2}(\sigma),C_{12}]\delta(\sigma-\sigma')
\\
&+\frac{m_{4}(z+z')+m_{2}zz'+m_{7}}{(1-z^2)(1-z'{}^{2})}\,C_{12}\partial_{\sigma}\delta(\sigma-\sigma') \,\, .
\end{aligned}
\end{equation}
This expression should then be compared with the $\sigma$-independent Maillet bracket \eqref{sigma-independent-Maillet-bracket}, which after substituting the Lax \eqref{gBIZZ-Lax-tau-sigma-coordinates}, using \eqref{contraction-identities-1} and the ansatz
\begin{equation}
r_{12}(z,z')=f(z,z') \,C_{12} \, ,
\qquad \qquad
r_{21}(z',z)=f(z',z) \,C_{12} \,\, ,
\end{equation}
takes the form
\begin{equation}\label{compare-with-1}
\begin{aligned}
\{\mathfrak{L}_{\sigma,1}(\sigma,z),\mathfrak{L}_{\sigma,2}(\sigma',z')\}=&+\Bigl( \frac{f(z,z')}{1-z^2} + \frac{f(z',z)}{1-z'{}^2}\Bigr)[A_{\sigma,2}(\sigma),C_{12}]\delta(\sigma-\sigma')
\\
&+\Bigl( \frac{zf(z,z')}{1-z^2} + \frac{z'f(z',z)}{1-z'{}^2}\Bigr)[B_{\tau,2}(\sigma),C_{12}]\delta(\sigma-\sigma')
\\
&-\Bigl(f(z,z')+f(z',z) \Bigr)C_{12}\partial_{\sigma}\delta(\sigma-\sigma') \,\, .
\end{aligned}
\end{equation}
Comparing \eqref{to-compare-1} and \eqref{compare-with-1} gives three conditions on $f(z,z')$. Two are solved by
\begin{equation}
f(z,z')=\frac{m_{6}-z'(m_{5}+z(m_{3}-m_{1})+z'm_{3})}{(1-z'{}^{2})(z-z')} \,\, .
\end{equation}
The third, coming from the non-ultralocal term, is generically not satisfied by the above choice, but it can be solved by separately setting to zero the coefficients of the resulting polynomial in $z$ and $z'$, which leads to the conditions
\begin{equation}\label{m-coefficients-conditions}
m_{3}-m_{1}=0
\quad , \quad 
m_{5}+m_{7}=0
\quad , \quad 
m_{2}+m_{5}=0
\quad, \quad 
m_{3}+m_{4}-m_{6}=0 \,\, ,
\end{equation}
the first of which had in fact already been found as a requirement for Theorem \ref{Yangian-theorem}. 
\\
\indent
Altogether one concludes that theories characterised by a flat current $A$ and a conserved current $B$ respecting the assumptions of Theorem \ref{gBIZZ-theorem}, and having Poisson brackets \eqref{Yangian-general-brackets} with $C^{AB}(\sigma)\equiv \gamma^{AB}$ and coefficients solving \eqref{m-coefficients-conditions}, exhibit an underlying Yangian symmetry as in Theorem \ref{Yangian-theorem} and a Lax of the form \eqref{gBIZZ-Lax} obeying the Maillet structure \eqref{sigma-independent-Maillet-bracket},
with $r$ and $s$ matrices which can be written as
\begin{equation}
\begin{aligned}
&r_{12}(z,z')=\frac{C_{12}}{z-z'} \, \varphi^{-1}(z') 
\qquad \text{with} \qquad 
\varphi(z')=\frac{(1-z'{}^2)}{m_{6}-m_{5}z'-m_{1}z'{}^2} \,\, ,
\\
&\text{and} \quad \qquad
s_{12}(z,z')=\frac{m_{5}(1+zz')-m_{4}(z+z')}{(1-z^2)(1-z'{}^{2})} C_{12}\,\, .
\end{aligned}
\end{equation}


\paragraph{AF-\{PCM, TD, YB, PCM-WZ\}.}
These models can be addressed at once by writing the coefficients in \eqref{Yangian-general-brackets} in terms of $a,\hay,\kay$, where $a\in \mathbb{R}$ was introduced in \eqref{AF-general-EOM-for-identity} and relates these models to the generalised BIZZ construction via \eqref{A,B-identifications-PCM-TD-YB-PCMWZ}, while $\hay,\kay$ are the relative coefficients between the PCM and WZ terms in \eqref{AF-PCM-WZ}. The brackets take the form 
\begin{align}\label{PCM-TD-YB-Poisson-brackets}
\{ B_{\tau}{}^{A}(\sigma),B_{\tau}{}^{B}(\sigma') \} \!\!=& \tfrac{a}{\hay}f^{AB}{}_{C} \, B_{\tau}{}^{C}(\sigma)\delta(\sigma\!-\!\sigma')\!+\!\tfrac{2\kay a^2}{\hay^2}\gamma^{AB}\partial_{\sigma}\delta(\sigma\!-\!\sigma') \, ,
\notag \\
\{ B_{\tau}{}^{A}(\sigma),A_{\sigma}{}^{B}(\sigma') \} \!\!=& \tfrac{a}{\hay}f^{AB}{}_{C} \, A_{\sigma}{}^{C}(\sigma)\delta(\sigma\!-\!\sigma')\!-\!\tfrac{a^2}{\hay}(1\!+\!\tfrac{\kay^2}{\hay^2})\gamma^{AB} \, \partial_{\sigma}\delta(\sigma\!-\!\sigma') \, ,
\\
\{ A_{\sigma}{}^{A}(\sigma),A_{\sigma}{}^{B}(\sigma') \} \!\!=& f^{AB}{}_{C}\Bigl(\!-\tfrac{2\kay a}{\hay^2}A_{\sigma}{}^{C}\!-\! a(\tfrac{\kay^2}{\hay^3}\!-\!\bigl(1-a)\bigr)B_{\tau}{}^{C}\!\Bigr)\delta(\sigma\!-\!\sigma')+\tfrac{2\kay a^2}{\hay^2}\gamma^{AB}\partial_{\sigma}\delta(\sigma\!-\!\sigma') \, ,
\notag
\end{align}
while the $r$-matrix reads
\begin{equation}
r_{12}(z,z')\!=\!\frac{C_{12}}{z-z'} \, \varphi^{-1}(z') 
\quad \text{with} \quad 
\varphi(z')=\frac{-\hay^3(1-z'{}^2)}{a\Bigl(\!\kay^2-\hay^3(1\!-\!a)-2\kay\hay z'+\hay^2 z'{}^{2}\! \Bigr)} \,\, .
\end{equation}
Notice that the relations \eqref{m-coefficients-conditions} are not satisfied for generic choices of $a,\hay,\kay$, but they are for those which recover the theories under consideration: AF-PCMs and AF-T-dual models correspond to $a=1,\hay=1,\kay=0$, AF-YB models to $a=1-c^2\eta^2,\hay=1,\kay=0$ \cite{Delduc:2013fga,Lacroix:2018njs}, and AF-PCM-WZ models to $a=1,\hay \neq0, \kay \neq 0$.

\paragraph{Relation to another deformation.} As a further sanity check of the above results, here we simply point out how the 2-parameter deformation of the PCM introduced in \cite{Rajeev:1988hq,Balog:1993es}, and whose Yangian structure has been studied in \cite{Itsios:2014vfa}, agrees with the results from the previous subsection. In particular, the Poisson brackets (3.5) in \cite{Itsios:2014vfa} are of the form \eqref{Yangian-general-brackets} with $C^{AB}(\sigma)\equiv \gamma^{AB}$ and the coefficients
\begin{equation}
\begin{aligned}
m_{1}&=-2e^2(1+\rho^2+(1-\rho^2)x)=m_{3}
\quad , \quad 
m_{2}=-8e^2\rho=m_{7}=-m_{5} \, ,
\\
m_{4}&=4e^2(1+\rho^2)
\quad , \quad 
m_{6}=2e^2(1+\rho^2-(1-\rho^2)x) \, ,
\end{aligned}
\end{equation}
meet the conditions in \eqref{m-coefficients-conditions}, as one would expect.

\subsubsection{The case $C^{AB}(\sigma)\neq \gamma^{AB}$} 
Recovering the Maillet structure \eqref{general-Maillet-brackets} for the generalised BIZZ Lax \eqref{gBIZZ-Lax-tau-sigma-coordinates} assuming Poisson brackets \eqref{Yangian-general-brackets} with general $C^{AB}(\sigma)$ is not an easy task, and, as observed in Section \ref{section:Yangians}, one should expect that further conditions would arise. For simplicity, in this subsection we will not carry out a general study, but focus on a symmetric-space-like restriction of  $C^{AB}(\sigma)$ which, as noticed around \eqref{B-C-relation}, already partly emerges by demanding compatibility with an underlying Yangian symmetry.

More specifically, we will take inspiration from the derivation in \cite{Forger:1991ty} and consider the bracket \eqref{Yangian-general-brackets} with the coefficients $m_{1},...,m_{7}$ restricted to
\begin{equation}
\begin{aligned}\label{SSSM-brackets-Lax-section}
\{ B_{\tau}{}^{A}(\sigma),B_{\tau}{}^{B}(\sigma') \} &= m_{1}f^{AB}{}_{C} \, B_{\tau}{}^{C}(\sigma)\delta(\sigma-\sigma') \, ,
\\
\{ B_{\tau}{}^{A}(\sigma),A_{\sigma}{}^{B}(\sigma') \} &= m_{1}f^{AB}{}_{C} \, A_{\sigma}{}^{C}(\sigma)\delta(\sigma-\sigma')-m_{1}^2C^{AB}(\sigma') \, \partial_{\sigma}\delta(\sigma-\sigma') \, ,
\\
\{ A_{\sigma}{}^{A}(\sigma),A_{\sigma}{}^{B}(\sigma') \} &= m_{1}\Lambda f^{AB}{}_{C}B_{\tau}{}^{C}\delta(\sigma-\sigma') \,\, .
\end{aligned}
\end{equation}
We will furthermore assume that $C^{AB}(\sigma')=C^{BA}(\sigma')$ satisfies
\begin{equation}
B_{\tau}{}^{D}(\sigma')\Bigl(\!f_{D}{}^{AC}C_{C}{}^{B}(\sigma')\!-\!f_{D}{}^{BC}C_{C}{}^{A}(\sigma')\!\Bigr)=B_{\tau}{}^{D}(\sigma')f^{AB}{}_{D} \,\,  ,
\end{equation}
which guarantees the assumptions of Theorem \ref{Yangian-theorem},
and demand other conditions needed for the argument in \cite{Forger:1991ty} to work. These are
\begin{equation}\label{C-differential-identity}
\partial_{\mu} C^{AB}(\sigma)=\tfrac{1}{2}A_{\mu}{}^{C}\Bigl(\!f_{C}{}^{AD}C_{D}{}^{B}(\sigma)+f_{C}{}^{BD}C_{D}{}^{A}(\sigma)\!\Bigr) \,\, ,
\end{equation}
and 
\begin{equation}\label{C-structure-constant-identities}
C_{B}{}^{D}f_{DA}{}^{E}=-C_{A}{}^{C}C_{B}{}^{D}f_{CD}{}^{E}-C_{B}{}^{D}f_{AD}{}^{F}C_{F}{}^{E} 
\quad , \quad 
X^{B}(\sigma)C_{B}{}^{A}(\sigma)=X{}^{A}(\sigma) \,\, ,
\end{equation}
for $X$ either the current $A$ or $B$. All of the above are satisfied by SSSM and AF-SSSM, where $C^{AB}(\sigma')=K^{AB}(\sigma')$, the constants take values $m_{1}=2$, $\Lambda=0$ and \eqref{C-differential-identity} is the analogue of \eqref{J-K-differential-relation} while \eqref{C-structure-constant-identities} are the analogues of \eqref{important-K-relation} and \eqref{J-eigenvector-components}.
Let us then start by computing the Poisson brackets of the Lax \eqref{gBIZZ-Lax-tau-sigma-coordinates} using \eqref{SSSM-brackets-Lax-section},
\begin{equation}\label{intermediate-SSSM-Lax}
\begin{aligned}
\{& \mathfrak{L}_{\sigma}{}^{A}(\sigma,z),\mathfrak{L}_{\sigma}{}^{B}(\sigma',z') \}=
\\
=&+\frac{m_{1}}{(1-z^2)(1-z'{}^2)}f^{AB}{}_{C}\Bigl(\! (z+z')A_{\sigma}{}^{C}(\sigma)+(zz'+\Lambda)B_{\tau}{}^{C}(\sigma) \! \Bigr)\delta(\sigma-\sigma')
\\
&-
\frac{m_{1}^2}{(1-z^2)(1-z'{}^2)}\Bigl(\!z'C^{AB}(\sigma)+zC^{AB}(\sigma') \! \Bigr)\partial_{\sigma}\delta(\sigma-\sigma') \,\, ,
\end{aligned}
\end{equation}
where we used $C^{BA}=C^{AB}$ and the identity \eqref{def-delta-func-identity-2} to arrange the second line. There are two useful rewritings that can be applied to \eqref{intermediate-SSSM-Lax}: on the first line one has
\begin{equation}
\begin{aligned}
&\frac{1}{(1-z^2)(1-z'{}^2)}\Bigl(\! (z+z')A_{\sigma}{}^{C}(\sigma)+(zz'+\Lambda)B_{\tau}{}^{C}(\sigma) \! \Bigr)=
\\
=&-\frac{(z'{}^2-\Lambda)}{(z-z')(1-z'{}^2)}\mathfrak{L}_{\sigma}{}^{C}(\sigma,z)+\frac{(z^2-\Lambda)}{(z-z')(1-z^2)}\mathfrak{L}_{\sigma}{}^{C}(\sigma,z') \,\, ,
\end{aligned}
\end{equation}
while on the second 
\begin{align}\label{int-1}
&\Bigl(\!z'C^{AB}(\sigma)+zC^{AB}(\sigma') \! \Bigr)\partial_{\sigma}\delta(\sigma-\sigma')
\\
=&\tfrac{1}{2}(z\!+\!z')\Bigl( \! C^{AB}(\sigma)\!+\!C^{AB}(\sigma') \! \Bigr)\partial_{\sigma}\delta(\sigma-\sigma')-\tfrac{1}{2}(z\!-\!z')\Bigl( \! C^{AB}(\sigma)\!-\!C^{AB}(\sigma') \! \Bigr)\partial_{\sigma}\delta(\sigma-\sigma') 
\notag \\
=& \tfrac{1}{2}(z\!+\!z')\Bigl( \! C^{AB}(\sigma)\!+\!C^{AB}(\sigma') \! \Bigr)\partial_{\sigma}\delta(\sigma-\sigma')+\tfrac{1}{2}\partial_{\sigma}C^{AB}(\sigma)\delta(\sigma-\sigma')
\notag \\
=&\tfrac{1}{2}(z\!+\!z')\Bigl( \! C^{AB}(\sigma)\!+\!C^{AB}(\sigma') \! \Bigr)\partial_{\sigma}\delta(\sigma-\sigma')+\tfrac{1}{4}A_{\sigma}{}^{C}\Bigl(\!f_{C}{}^{AD}C_{D}{}^{B}(\sigma)\!+\!f_{C}{}^{BD}C_{D}{}^{A}(\sigma)\!\Bigr)\delta(\sigma-\sigma') \, ,
\notag 
\end{align}
where in the last two steps we first used the identity \eqref{def-delta-func-identity-3} and then the assumption \eqref{C-differential-identity}. At this point one can proceed by contracting the resulting expression with the Lie algebra generators by again using \eqref{contraction-identities-1} and noting that
\begin{equation}\label{int-2}
[C(\sigma),T_{C}\otimes \mathrm{1}]=f_{C}{}^{AD}C_{D}{}^{B}(\sigma)T_{A}\otimes T_{B}
\quad , \quad 
[C(\sigma),\mathrm{1}\otimes T_{C}]=f_{C}{}^{BD}C_{D}{}^{A}(\sigma)T_{A}\otimes T_{B} \, ,
\end{equation}
which leads to
\begin{equation}\label{SSSM-Maillet-bracket-intermediate}
\begin{aligned}
\{& \mathfrak{L}_{\sigma,1}(\sigma,z),\mathfrak{L}_{\sigma,2}(\sigma',z') \}
\\
=&-m_{1}\Bigl[C_{12}, \frac{(z'{}^2-\Lambda)}{(z-z')(1-z'{}^2)}\mathfrak{L}_{\sigma,1}(\sigma,z)+\frac{(z^2-\Lambda)}{(z-z')(1-z^2)}\mathfrak{L}_{\sigma,2}(\sigma,z') \Bigr]\delta(\sigma-\sigma')
\\
&-\frac{m_{1}^2(z+z')}{2(1-z^2)(1-z'{}^2)}\Bigl(\! C(\sigma)+C(\sigma') \! \Bigr)\partial_{\sigma}\delta(\sigma-\sigma')
\\
&-\frac{m_{1}^2(z-z')}{4(1-z^2)(1-z'{}^2)}[C(\sigma),A_{\sigma,1}(\sigma)+A_{\sigma,2}(\sigma)]\delta(\sigma-\sigma') \,\, .
\end{aligned}
\end{equation}
The next step is combining the first and third lines in the above expression, which first requires establishing a relation between $[C(\sigma),X_{i}]$ and $[C_{12},X_{i}]$ for $i=1,2$,
\begin{equation}\label{relation-C-C12}
[C_{12},X_{1}(\sigma)]=[C(\sigma),X_{1}(\sigma)-X_{2}(\sigma)] \,\, ,
\end{equation}
which can be easily obtained using the assumptions \eqref{C-structure-constant-identities}:
\begin{equation}
\begin{aligned}
[C_{12},X_{1}(\sigma)]=&-X{}^{C}f_{C}{}^{BD}T_{D}\otimes T_{B}
\\
=&X^{C}\gamma^{BE}\Bigl(C_{E}{}^{H}C_{C}{}^{F}f_{HF}{}^{D}+C_{C}{}^{F}f_{EF}{}^{H}C_{H}{}^{D}\Bigr)T_{D}\otimes T_{B}
\\
=&X^{C}\Bigl( f_{C}{}^{AD}C_{D}{}^{B}-f_{C}{}^{BD}C_{D}{}^{A} \Bigr)T_{A}\otimes T_{B}
\\
=&[C(\sigma),X_{1}(\sigma)-X_{2}(\sigma)] \,\, .
\end{aligned}
\end{equation}
Combined with the third relation in \eqref{contraction-identities-1}, equation \eqref{relation-C-C12} implies 
\begin{equation}
[C(\sigma),X_{1}(\sigma)]=[C(\sigma)-C_{12},X_{2}(\sigma)]
\quad \text{and} \quad 
[C(\sigma),X_{2}(\sigma)]=[C(\sigma)-C_{12},X_{1}(\sigma)] \, ,
\end{equation}
which can be readily exploited to obtain 
\begin{equation}\label{C-A-to-substitute-in-brackets}
\begin{aligned}
[C(\sigma)&,A_{\sigma,1}(\sigma)+A_{\sigma,2}(\sigma)]
\\
&=-\frac{(1\!-\!z^2)(1\!-\!z'{}^2)}{(z\!-\!z')}\Bigl[ 2C(\sigma)-C_{12} , \frac{z'}{1-z'{}^2}\mathfrak{L}_{\sigma,1}(\sigma,z) -\frac{z}{1-z^2}\mathfrak{L}_{\sigma,2}(\sigma,z')\Bigr] \,\, .
\end{aligned}
\end{equation}
Together with \eqref{int-1} and \eqref{int-2} this shows that
\begin{equation}\label{C-derivative}
\partial_{\sigma}C(\sigma)=-\frac{(1-z^2)(1-z'{}^2)}{2(z-z')}[2C(\sigma)-C_{12},\frac{z'}{1-z'{}^2}\mathfrak{L}_{\sigma,1}(\sigma,z)-\frac{z}{1-z^2}\mathfrak{L}_{\sigma,2}(\sigma,z')] \,\, .
\end{equation}
Substituting then \eqref{C-A-to-substitute-in-brackets} into \eqref{SSSM-Maillet-bracket-intermediate} and collecting terms one obtains
\begin{equation}\label{SSSM-Maillet-to-compare}
\begin{aligned}
\{ \mathfrak{L}_{\sigma,1}(\sigma,z),\mathfrak{L}_{\sigma,2}(\sigma',z') \}=&
-\frac{m_{1}}{1-z'{}^2}\Bigl( \frac{z'{}^2-\Lambda}{z-z'}+\frac{m_{1}z'}{4} \Bigr)[C_{12},\mathfrak{L}_{\sigma,1}(\sigma,z)]\delta(\sigma-\sigma')
\\
&-\frac{m_{1}}{1-z^2}\Bigl( \frac{z^2-\Lambda}{z-z'}-\frac{m_{1}z}{4} \Bigr)[C_{12},\mathfrak{L}_{\sigma,2}(\sigma,z')]\delta(\sigma-\sigma')
\\
&+\frac{m_{1}^2z'}{2(1-z'{}^2)}[C(\sigma),\mathfrak{L}_{\sigma,1}(\sigma,z)]\delta(\sigma-\sigma')
\\
&-\frac{m_{1}^2z}{2(1-z^2)}[C(\sigma),\mathfrak{L}_{\sigma,2}(\sigma,z')]\delta(\sigma-\sigma')
\\
&-\frac{m_{1}^2(z+z')}{2(1-z^2)(1-z'{}^2)}\Bigl(\! C(\sigma)+C(\sigma') \! \Bigr)\partial_{\sigma}\delta(\sigma-\sigma') \,\, ,
\end{aligned}
\end{equation}
which should now be directly compared with the general Maillet brackets \eqref{general-Maillet-brackets}. To this end one can consider the ansatz for the $r$-matrix
\begin{equation}\label{SSSM-r-matrix-ansatz}
\begin{aligned}
r_{12}(\sigma,z,z')&=f(z,z')C_{12}+g(z,z')C(\sigma) \, ,
\\
r_{21}(\sigma,z',z)&=f(z',z)C_{12}+g(z',z)C(\sigma) \, , 
\\
s_{12}(\sigma,z,z')&=\frac{m_{1}^2(z+z')}{(1-z^2)(1-z'{}^2)}C(\sigma)=s_{21}(\sigma,z',z) \,\, ,
\end{aligned}
\end{equation}
which substituted into \eqref{general-Maillet-brackets} and combined with identity \eqref{def-delta-func-identity-3} on the derivatives of $r$, and again using the relation \eqref{C-derivative}, brings the Maillet brackets to the form
\begin{align}\label{SSSM-Maillet-compare-with}
\{& \mathfrak{L}_{\sigma,1}(\sigma,z),\mathfrak{L}_{\sigma,2}(\sigma',z') \}
\\
=&+\Bigl(f(z,z')-\bigl(g(z,z')-g(z',z) \bigr)\frac{z'(1-z^2)}{4(z-z')}  \Bigr)[C_{12},\mathfrak{L}_{\sigma,1}(\sigma,z)]\delta(\sigma-\sigma')
\notag \\
&+\Bigl(-f(z,z')+\bigl(g(z,z')-g(z',z) \bigr)\frac{z(1-z'{}^2)}{4(z-z')}  \Bigr)[C_{12},\mathfrak{L}_{\sigma,2}(\sigma,z')]\delta(\sigma-\sigma')
\notag \\
&+\Bigl(g(z,z')+\bigl(g(z,z')-g(z',z) \bigr)\frac{z'(1-z^2)}{2(z-z')}  \Bigr)[C(\sigma),\mathfrak{L}_{\sigma,1}(\sigma,z)]\delta(\sigma-\sigma')
\notag \\
&+\Bigl(-g(z,z')-\bigl(g(z,z')-g(z',z) \bigr)\frac{z(1-z'{}^2)}{2(z-z')}  \Bigr)[C(\sigma),\mathfrak{L}_{\sigma,2}(\sigma,z')]\delta(\sigma-\sigma')
\notag \\
&-\frac{m_{1}^2(z+z')}{2(1-z^2)(1-z'{}^2)}\Bigl( \!C(\sigma)+C(\sigma')\!\Bigr)\partial_{\sigma}\delta(\sigma-\sigma') \,\, .
\notag
\end{align}
Comparing \eqref{SSSM-Maillet-to-compare} and \eqref{SSSM-Maillet-compare-with} leads to four conditions on $f$ and $g$, which are solved by
\begin{equation}
f(z,z')\!=\! \frac{2m_{1}(\Lambda\!-\!z'{}^2)(1\!-\!zz')-m_{1}^2z'(z\!-\!z')}{2(1\!-\!z'{}^2)(z\!-\!z')(1\!-\!zz')}
\quad , \quad 
g(z,z')\!=\!\frac{m_{1}^2z'}{(1\!-\!z'{}^2)(1\!-\!zz')} \,\, .
\end{equation}
We conclude by noting that setting $m_{1}=2$, $\Lambda=0$ and choosing $C^{AB}(\sigma)=K^{AB}(\sigma)$, we recover the case of SSSMs and AF-SSSMs leading to
\begin{equation}
f(z,z')=-\frac{2zz'}{(z-z')(1-zz')} 
\qquad \text{and} \qquad 
g(z,z')=\frac{4z'}{(1-z'{}^2)(1-zz')} \,\, ,
\end{equation}
and recalling the ansatz \eqref{SSSM-r-matrix-ansatz} and the relations \eqref{s12-definition} and \eqref{our-choice-of-r12} between our $r$-matrix and the variables in \cite{Forger:1991ty}, allows us to recover the expressions
\begin{equation}
\begin{aligned}
r(\sigma,z,z')&=\frac{2zz'}{(z-z')(1-zz')}C_{12}+\frac{2(z-z')(1+zz')}{(1-z^2)(1-z'{}^2)(1-zz')}C(\sigma) \, ,
\\
s(\sigma,z,z')&=\frac{2(z+z')}{(1-z^2)(1-z'{}^2)}C(\sigma) \,\, .
\end{aligned}
\end{equation}

\section{Conclusions}\label{section:conclusion}

In this article, we have proposed a generalization of one standard construction of the Yangian charges which underlie the integrability of several $2d$ sigma models. Building upon the seminal work of Brezin, Itzykson, Zinn-Justin, and Zuber (BIZZ) \cite{BREZIN1979442}, we have shown in Theorem \ref{gBIZZ-theorem} that an infinite tower of conserved charges can be recursively constructed from a pair of $1$-forms, one which is flat and one which is conserved, and which satisfy appropriate commutator identities. However, the mere existence of a tower of conserved charges does not immediately imply the existence of a Yangian structure. In Theorem \ref{Yangian-theorem}, we have also given sufficient conditions for this tower of charges to generate a classical Yangian algebra, excluding the edge case $\mathfrak{g} = \mathfrak{sl}(2)$.

In addition to giving proofs of these theorems, we have also attempted to illustrate their usefulness. Our view is that this generalized construction provides a unifying framework for studying the Hamiltonian integrability and Yangian structure in a large class of $2d$ integrable sigma models. In particular, our discussion applies rather uniformly both to many well-studied undeformed (``seed'') integrable models (the PCM with or without WZ term, symmetric space sigma models, Yang-Baxter models, and non-Abelian T-dual sigma models) and to the infinite family of deformed models that can be generated by coupling each such seed to auxiliary fields in a prescribed way. Although the precise mechanism by which each of these examples achieves the assumptions of our general theorem varies, in each case we conclude that the corresponding example indeed exhibits an underlying classical Yangian symmetry, and that each model's Lax connection enjoys the Maillet structure which guarantees involution of its charges. Establishing these results for the family of auxiliary field deformed models represents one of the key applications of our general formalism and the primary motivation for carrying out this work.

There remain several interesting directions for future research. First, our discussion has been entirely classical, but an important long-term goal for this research program is to study the integrability of auxiliary field sigma models at the quantum level. Imitating the analogous analysis for the PCM \cite{Luscher:1977uq}, one potential starting point for this direction is to study the nonlocal charges in auxiliary field sigma models in the quantum theory using point-splitting to handle coincident points. The condition that the Yangian charges remain conserved at the quantum level might introduce additional constraints on the interaction function, which would signal that only a subset of the classically-integrable auxiliary field sigma models are also quantum-integrable.

Second, we briefly commented around equation (\ref{Lax-field-dependent-metric}) that part of the deformation of the currents in our construction can be interpreted via a modified Hodge star operation, or field-dependent metric.\footnote{More precisely, this interpretation is available for a certain restricted class of interaction functions.} There has been considerable recent work on studying various related deformations such as $\TT$ from such a geometric perspective \cite{Conti_2019,Conti:2022egv,Morone:2024ffm,Ran:2024vgl,Ran:2025xas,Conti:2019dxg,Tsolakidis:2024wut,Babaei-Aghbolagh:2024hti,Caputa:2020lpa,Brizio:2024arr,Morone:2024sdg}, including in higher dimensions. In settings such as that of \cite{Ferko:2025bhv}, this field-dependent metric seems sufficient to capture the \emph{entire} effect of the auxiliary field deformation, but in some of the more complicated examples studied here, the geometric perspective only seems to suffice for capturing part but not all of the deformation. It would be intriguing to see whether some extension of this geometrical interpretation can encode auxiliary field deformations for broader classes of models, or for more general interaction functions.

Finally, we have restricted attention to $2d$ sigma models, which are only a subclass of integrable models in two dimensions. It would be useful to extend the method of auxiliary fields to other types of $2d$ IQFTs, especially those which are not classically conformally invariant, such as the sine-Gordon model or more general affine Toda theories. One way to approach auxiliary field deformations of the sine-Gordon model may be to realize this theory via Pohlmeyer reduction of the $S^2$ coset sigma model \cite{osti_5772113}, since as we have reviewed, auxiliary field deformations of coset models are comparatively well-understood.

We hope to return to some of these intriguing directions in future work.

\section*{Acknowledgements}

We thank Nicola Baglioni, Mattia Cesàro, Lewis Cole, Ben Hoare, Parita Shah and Alessandro Torrielli for stimulating interactions and collaboration on various topics related to this project. D.B. is grateful to Mattia Cesàro and Alessandro Torrielli for particularly helpful discussions and feedbacks.
D.B. has been supported by a Young Scientist Training (YST) Fellowship from Asia Pacific Center for Theoretical Physics (APCTP) and the Thailand NSRF via PMU-B, grant number B13F680083. C.\,F. is supported by the National Science Foundation under Cooperative Agreement PHY-2019786 (the NSF AI Institute for Artificial Intelligence and Fundamental Interactions).
G.\,T.-M., and M.\,G. have been supported by the Australian Research Council (ARC) Discovery Project DP240101409, and a faculty start-up funding of UQ’s School of Mathematics and Physics.

\newpage

\appendix
\addtocontents{toc}{\protect\setcounter{tocdepth}{1}}

\section{Conventions and useful identities}\label{appendix:A}
In this appendix we collect a series of conventions, useful definitions, and identities.

\subsection{Lie groups and algebras}

\paragraph{Generalities.} The models considered in this work are based on some Lie group $G$ with associated Lie algebra $\mathfrak{g}$, and are characterised by the left-invariant Maurer-Cartan form 
\begin{equation}\label{MC-form-def}
j:=\mathrm{g^{-1}dg}=\mathrm{d}Z^{\mu}j_{\mu}{}^{A}T_{A}
\quad \in \quad  \Omega^{1}(G,\mathfrak{g}) \,\, ,
\end{equation}
with $\{Z^{\mu}\}$ denoting coordinates on $G$, $\{T_{A}\}$ the generators of $\mathfrak{g}$ and $\Omega^{p}(\mathcal{M},\mathfrak{g})$ the set of $\mathfrak{g}$-valued $p$-forms on a manifold $\mathcal{M}$. Einstein summation convention over repeated indices is understood, unless otherwise specified, and $j$ is flat by construction,
\begin{equation}\label{j-flatness}
\mathrm{d}j+\tfrac{1}{2}[j,j]=0
\qquad \longleftrightarrow \qquad 
\partial_{\mu}j_{\nu}{}^{A}-\partial_{\nu}j_{\mu}{}^{A}+j_{\mu}{}^{B}j_{\nu}{}^{C}f_{BC}{}^{A}=0 \,\, .
\end{equation}
The component notation makes explicit use of the structure constants of $\mathfrak{g}$, defined via
\begin{equation}
[T_{A},T_{B}]=f_{AB}{}^{C}T_{C} \,\, .
\end{equation}
The Cartan-Killing form on $\mathfrak{g}$, and its inverse, are respectively denoted by $\gamma_{AB}$ and $\gamma^{AB}$. These are both symmetric in the exchange of their indices, defined through
\begin{equation}\label{Cartan-Killing-def}
\gamma_{AB}:=\mathrm{tr}[T_{A}T_{B}]
\qquad \text{and} \qquad \gamma_{AC}\gamma^{CB}:=\delta_{A}{}^{B} \,\, ,
\end{equation}
and used to raise and lower indices, with $\delta_{A}{}^{B}$ the Kronecker-symbol.
Cyclicity of the trace ensures invariance under the adjoint action of $\mathrm{g}\in G$, meaning that
\begin{equation}\label{Ad-invariance-Killing-form}
\begin{aligned}
\mathrm{tr}[T_{A}T_{B}]=&\mathrm{tr}[\mathrm{Ad}_{\mathrm{g}}(T_{A})\mathrm{Ad}_{\mathrm{g}}(T_{B})] =\mathrm{tr}[\mathrm{Ad}_{\mathrm{g}}^{-1}(T_{A})\mathrm{Ad}_{\mathrm{g}}^{-1}(T_{B})] \,\, ,
\\
\text{with} \quad
\mathrm{Ad}_{\mathrm{g}}(X):=&\mathrm{g} \, X \, \mathrm{g}^{-1} 
\quad  \text{and}   \quad
\mathrm{Ad}_{\mathrm{g}}^{-1}(X):=\mathrm{g}^{-1} \, X \, \mathrm{g}=\mathrm{Ad}_{\mathrm{g}^{-1}}(X) 
\quad \forall \, X\in \mathfrak{g} \,\, .
\end{aligned}
\end{equation}
The components of the Ad-map, and its inverse, will be denoted by
\begin{equation}
\mathrm{Ad}_{\mathrm{g}}(X)=X^{A}W_{A}{}^{B}T_{B}
\qquad \text{and} \qquad
\mathrm{Ad}_{\mathrm{g}}^{-1}(X)=X^{A}(W^{-1})_{A}{}^{B}T_{B}
\qquad 
\forall \, X\in \mathfrak{g} \,\, ,
\end{equation}
such that
\begin{equation}
\mathrm{Ad}_{\mathrm{g}}\circ \mathrm{Ad}_{\mathrm{g}}^{-1}=\mathrm{Ad}_{\mathrm{g}}^{-1}\circ \mathrm{Ad}_{\mathrm{g}}= \mathbb{1} 
\qquad \Rightarrow \qquad
W_{A}{}^{C}(W^{-1})_{C}{}^{B}=(W^{-1})_{A}{}^{C}W_{C}{}^{B}=\delta_{A}{}^{B} \, ,
\end{equation}
and invariance \eqref{Ad-invariance-Killing-form} of the Cartan-Killing form translates into
\begin{equation}
\gamma_{AB} = W_{A}{}^{C}W_{B}{}^{D}\gamma_{CD} = (W^{-1})_{A}{}^{C}(W^{-1})_{B}{}^{D}\gamma_{CD} \,\, ,
\end{equation}
leading to the relations
\begin{equation}
W_{A}{}^{B}=\gamma^{BC}(W^{-1})_{C}{}^{D}\gamma_{DA}
\qquad \text{and} \qquad 
(W^{-1})_{A}{}^{B}=\gamma^{BC}W_{C}{}^{D}\gamma_{DA} \,\, .
\end{equation}
The generators of $\mathfrak{g}$ satisfy the Jacobi identity
\begin{equation}
[T_{A},[T_{B},T_{C}]]+[T_{B},[T_{C},T_{A}]]+[T_{C},[T_{A},T_{B}]]=0 \,\, ,
\end{equation}
which in terms of the structure constants takes the form
\begin{equation}\label{def-Jacobi-components}
f_{BC}{}^{D}f_{AD}{}^{E}+f_{CA}{}^{D}f_{BD}{}^{E}+f_{AB}{}^{D}f_{CD}{}^{E}=0 \,\, .
\end{equation}

\paragraph{$\mathfrak{g}$-valued forms.}
Here we briefly collect the main properties used throughout Section \ref{section:BIZZ-charges}.
To begin, the commutator of two Lie-algebra valued forms on a manifold $\mathcal{M}$, $\alpha\in \Omega^{p}(\mathcal{M},\mathfrak{g})$ and $\beta\in \Omega^{q}(\mathcal{M},\mathfrak{g})$, with $p$ and $q$ their respective degrees, can be defined as
\begin{equation}\label{forms-commutator-def}
\begin{aligned}
[\alpha,\beta] &=\alpha^{A}\wedge\beta^{B}[T_{A},T_{B}]
\\
&=(-1)^{pq}\beta^{B}\wedge \alpha^{A}[T_{A},T_{B}]
\\
&=-(-1)^{pq}\beta^{B}\wedge \alpha^{A}[T_{B},T_{A}]
\\
&=-(-1)^{pq}[\beta,\alpha] \,\, ,
\end{aligned}
\end{equation}
and the Jacobi identity, considering $\gamma\in \Omega^{l}(\mathcal{M},\mathfrak{g})$, reads
\begin{equation}\label{Jacobi-forms}
(-1)^{pl}[\alpha,[\beta,\gamma]]+(-1)^{qp}[\beta,[\gamma,\alpha]]+(-1)^{lq}[\gamma,[\alpha,\beta]]=0 \,\, .
\end{equation}
The definition \eqref{forms-commutator-def}, in particular, implies that
\begin{equation}\label{form-commutator-relations}
\begin{aligned}
[\alpha,\beta]&=[\beta,\alpha] \quad\,\,\,\, \text{for} \quad \alpha,\beta \in \Omega^{1}(\mathcal{M},\mathfrak{g}) \, ,
\\
[\alpha,f]&=-[f,\alpha] 
\quad \text{for} \quad \,\,\,\,\,\,
f\in\Omega^{0}(\mathcal{M},\mathfrak{g}) 
\quad \text{and} \quad 
\alpha\in \Omega^{p}(\mathcal{M},\mathfrak{g}) \,\,\,\, \forall p\in\mathbb{N} \,\, .
\end{aligned}
\end{equation}
In this work we only consider the case where $\mathcal{M} = \Sigma$ is a $2d$ Lorentzian manifold, on which the Hodge star $\star$ is an involution antisymmetric with respect to the wedge product:
\begin{equation}\label{properties-Hodge-star}
\star^2=1 
\qquad \text{and} \qquad 
\alpha \wedge \star \beta =-\star\alpha \wedge \beta \qquad \text{for} \qquad \alpha,\beta \in \Omega^{1}(\Sigma,\mathfrak{g}) \,\, .
\end{equation} 
Combining \eqref{form-commutator-relations} with \eqref{properties-Hodge-star} for $\alpha,\beta\in\Omega^{1}(\Sigma,\mathfrak{g})$ leads to the useful relations
\begin{equation}\label{form-commutator-star-identities}
\begin{aligned}
[\alpha,\star\beta]&=-[\star\alpha,\beta]=-[\beta,\star\alpha] 
\qquad , \qquad 
[\alpha,\star\alpha]=-[\star\alpha,\alpha]=-[\alpha,\star \alpha]=0 \,\, ,
\\
&\qquad \text{and} \quad\qquad [\star \alpha,\star \beta]=-[\star^2\alpha,\beta]=-[\alpha,\beta] \,\, .
\end{aligned}
\end{equation}
The Jacobi identity \eqref{Jacobi-forms} then also implies
\begin{equation}\label{useful-Jacobi-1}
[\alpha,[\alpha,f]]=-\tfrac{1}{2}[f,[\alpha,\alpha]] 
\quad \text{for} \quad \beta=\alpha\in \Omega^{1}(\Sigma,\mathfrak{g})
\quad \text{and} \quad 
f\in \Omega^{0}(\Sigma,\mathfrak{g}) \,\, ,
\end{equation}
as well as
\begin{equation}\label{useful-Jacobi-2}
[\alpha,[\star \beta,f]]=-[\star \beta,[\alpha,f]] \,\, ,
\qquad \text{for} \qquad \alpha,\beta\in \Omega^{1}(\Sigma,\mathfrak{g}) \qquad \text{and} \qquad
[\alpha,\star\beta]\equiv0 \,\, .
\end{equation}

\subsection{Symmetric-spaces $G/H$}

\paragraph{Generalities.} 
For these models the Lie algebra enjoys an orthogonal decomposition
\begin{equation}\label{SSSM-Lie-algebra-commutators}
\begin{aligned}
\mathfrak{g}=\mathfrak{g}^{(0)}\oplus& \mathfrak{g}^{(2)} \,\, , 
\qquad   \text{with commutation relations} 
\\
[\mathfrak{g}^{(0)},\mathfrak{g}^{(0)}] \subseteq \mathfrak{g}^{(0)} \qquad &, \qquad [\mathfrak{g}^{(0)},\mathfrak{g}^{(2)}] \subseteq \mathfrak{g}^{(2)}
\qquad , \qquad [\mathfrak{g}^{(2)},\mathfrak{g}^{(2)}] \subseteq \mathfrak{g}^{(0)} \,\, ,
\end{aligned}
\end{equation}
and the generators $\{T_{A}\}$ associated to each subspace are denoted by
\begin{equation}
\mathfrak{g}^{(0)} =\text{span\{}T_{\dot{a}}\}
\qquad \qquad
\mathfrak{g}^{(2)} =\text{span\{}T_{a}\} \,\, .
\end{equation}
Introducing projectors on each subspace
\begin{equation}\label{projectors}
\mathrm{P}^{(0)}
\qquad \text{and} \qquad \mathrm{P}^{(2)} \qquad \text{such that }\qquad  1=\mathrm{P}^{(0)}+\mathrm{P}^{(2)} \,\, ,  
\end{equation}
the Maurer-Cartan form decomposes as
\begin{equation}\label{j-projections}
j= j^{(0)}+j^{(2)} \,\, , \quad \text{with} \quad  j^{(0)}=j^{(0)\dot{a}}T_{\dot{a}}=\mathrm{P}^{(0)}(j) \quad , \quad j^{(2)}=j^{(2)a}T_{a}=\mathrm{P}^{(2)}(j) \,\, ,
\end{equation}
while the flatness condition \eqref{j-flatness} becomes
\begin{equation} \label{def-sssm-MC-eq-decomposition}
\begin{aligned}
\mathfrak{g}^{(0)}:& \quad 0=  \partial_{\mu}j^{(0)}_{\nu}-\partial_{\nu}j^{(0)}_{\mu}+[j^{(0)}_{\mu},j^{(0)}_{\nu}]+[j^{(2)}_{\mu},j^{(2)}_{\nu}] \, ,
\\
\mathfrak{g}^{(2)}:& \quad 0=\partial_{\mu}j^{(2)}_{\nu}-\partial_{\nu}j^{(2)}_{\mu}+[j^{(0)}_{\mu},j^{(2)}_{\nu}]-[j^{(0)}_{\nu},j^{(2)}_{\mu}] \,\, .
\end{aligned}
\end{equation}
The Cartan-Killing form is assumed to be compatible with the decomposition, meaning that it takes a block-diagonal form $\gamma_{a\dot{a}}=0$, which implies that
\begin{equation}
\delta_{A}{}^{B}=\gamma_{AC}\gamma^{CB} \qquad \text{leads to} \qquad
\delta_{a}{}^{b}=\gamma_{ac}\gamma^{cb}
\qquad \text{and} \qquad
\delta_{\dot{a}}{}^{\dot{b}}=\gamma_{\dot{a}\dot{c}}\gamma^{\dot{c}\dot{b}} \,\, .
\end{equation}
We describe below a set of quantities relevant to the computation of the Poisson brackets reported in Appendices \ref{appendix:C-SSSM} and \ref{appendix:C-AFSSSM}. These were first introduced in \cite{Forger:1991cm} and we will use a notation mixing the original one with the more recent \cite{Klose:2016qfv}.

\paragraph{The $K$-map.}
It is an automorphism of the Lie algebra $K:\mathfrak{g}\rightarrow \mathfrak{g}$ defined as
\begin{equation}\label{K-map-definition}
K=\mathrm{Ad}_{\mathrm{g}} \circ\mathrm{P}^{(2)}\circ \mathrm{Ad}_{\mathrm{g}}^{-1}
\qquad \text{such that} \qquad
K(X)=\mathrm{g} \, \mathrm{P}^{(2)}(\mathrm{g}^{-1}X \mathrm{g}) \, \mathrm{g}^{-1}
\qquad \forall \, X \in \mathfrak{g} \,\, .
\end{equation}
It has components $K(T_{A})=K_{A}{}^{B}T_{B}$, which are symmetric when both indices are raised or lowered using the Cartan-Killing form or its inverse:
\begin{equation}
\begin{aligned}
K_{AB}&=K_{A}{}^{C}\gamma_{BC}=\mathrm{tr}[T_{B}K(T_{A})]=\mathrm{tr}[\mathrm{P}^{(2)}(\mathrm{g}^{-1}T_{B}\mathrm{g}) \, \mathrm{P}^{(2)}(\mathrm{g}^{-1}T_{A}\mathrm{g})] 
\\
& = (W^{-1})_{B}{}^{a}(W^{-1})_{A}{}^{b}\gamma_{ab} = \gamma_{BC}W_{b}{}^{C}(W^{-1})_{A}{}^{b} \,\, .
\end{aligned}
\end{equation}
Raising back the second index using the Cartan-Killing form leads to
\begin{equation}\label{K-in-terms-of-W}
K_{A}{}^{B}=(W^{-1})_{A}{}^{a}W_{a}{}^{B} 
\qquad \text{and} \qquad 
K_{A}{}^{C}K_{C}{}^{B}=K_{A}{}^{B} \,\, ,
\end{equation}
the latter following from the relation $W_{a}{}^{C}(W^{-1})_{C}{}^{b}=\delta_{a}{}^{b}$. An important identity follows from the definition of $K$ and the general Lie-algebraic properties described above:
\begin{equation}\label{important-K-relation}
K_{A}{}^{C}K_{B}{}^{D}f_{CD}{}^{E}=K_{B}{}^{D}f_{AD}{}^{E}-K_{B}{}^{D}f_{AD}{}^{F}K_{F}{}^{E} \,\, .
\end{equation}
This can be derived by noting that
\begin{equation}
K_{A}{}^{C}K_{B}{}^{D}f_{CD}{}^{E} = \bigl( [K(T_{A}),K(T_{B})] \bigr)^{E} \, ,
\end{equation}
and computing
\begin{equation}
\begin{aligned}
[K(T_{A}),K(T_{B})]&=[\mathrm{Ad}_{\mathrm{g}} \, \mathrm{P}^{(2)} \, \mathrm{Ad}_{\mathrm{g}}^{-1}(T_{A}),\mathrm{Ad}_{\mathrm{g}} \, \mathrm{P}^{(2)} \, \mathrm{Ad}_{\mathrm{g}}^{-1}(T_{B})]
\\
&=\mathrm{Ad}_{\mathrm{g}}\Bigl([\mathrm{P}^{(2)} \, \mathrm{Ad}_{\mathrm{g}}^{-1}(T_{A}),\mathrm{P}^{(2)} \, \mathrm{Ad}_{\mathrm{g}}^{-1}(T_{B})] \Bigr)
\\
&= \mathrm{Ad}_{\mathrm{g}} \circ \mathrm{P}^{(0)} \Bigl( [\mathrm{Ad}_{\mathrm{g}}^{-1}(T_{A}),\mathrm{P}^{(2)}\mathrm{Ad}_{\mathrm{g}}^{-1}(T_{B})]\Bigr)
\\
&= \mathrm{Ad}_{\mathrm{g}} \circ \mathrm{P}^{(0)} \circ \mathrm{Ad}_{\mathrm{g}}^{-1} \Bigl( [T_{A},\mathrm{Ad}_{\mathrm{g}} \mathrm{P}^{(2)}\mathrm{Ad}_{\mathrm{g}}^{-1}(T_{B})]\Bigr)
\\
&=[T_{A},K(T_{B})]-K\Bigl([T_{A},K(T_{B})] \Bigr) \,\, .
\end{aligned}
\end{equation}
In the second line we used the commutators \eqref{SSSM-Lie-algebra-commutators} and the projectors \eqref{projectors} to write $[\mathrm{P}^{(2)}(X),\mathrm{P}^{(2)}(Y)]=\mathrm{P}^{(0)}[X,\mathrm{P}^{(2)}(Y)]$ for any $X,Y \in \mathfrak{g}$ and in the fourth line we used again \eqref{projectors} to get rid of $\mathrm{P}^{(0)} $ and obtain two terms involving $K$ only.

\paragraph{The $L$-current.} The Noether current associated to the invariance of symmetric-space sigma models under the global left action of the group takes the form
\begin{equation}\label{SSSM-J-definition}
L = -2 \, \mathrm{Ad}_{\mathrm{g}}(j^{(2)}) = -2 \, j^{(2)a}W_{a}{}^{A}T_{A} \,\, .
\end{equation}
As a result of the second equation in \eqref{def-sssm-MC-eq-decomposition}, it is flat off-shell:
\begin{equation}\label{SSSM-J-flat}
\partial_{\mu}L_{\nu}-\partial_{\nu}L_{\mu}+[L_{\mu},L_{\nu}]=0  \qquad \Longrightarrow \qquad \partial_{\mu}L_{\nu}{}^{A}-\partial_{\nu}L_{\mu}{}^{A}=-L_{\mu}{}^{B}L_{\nu}{}^{C}f_{BC}{}^{A} .
\end{equation}

\paragraph{Relations between $K$ and $L$.} There are three main relations that these objects satisfy:
\begin{itemize}
\item $L$ is an eigenvector of $K$ with eigenvalue 1, namely $K(L)=L$, resulting from the definition of $L$ and $\mathrm{P}^{(2)}(j^{(2)})=j^{(2)}$. In components this reads
\begin{equation}\label{J-eigenvector-components}
K(L)=L^{A}K_{A}{}^{B}T_{B}=L^{B}T_{B}
\qquad \Rightarrow \qquad
L^{A}K_{A}{}^{B}=L^{B} \,\, .
\end{equation}
\item An algebraic relation
\begin{equation}
\mathrm{ad}_{L} = K \circ \mathrm{ad}_{L}+\mathrm{ad}_{L}\circ K \,\, ,
\end{equation}
which can be checked explicitly by acting on any $X\in \mathfrak{g}$ and in components reads
\begin{equation}\label{J-K-algebraic-relation-components}
L^{C}f_{C}{}^{AB}=L^{C}(f_{C}{}^{AD}K_{D}{}^{B}-f_{C}{}^{BD}K_{D}{}^{A}) \,\, .
\end{equation}
\item A differential relation
\begin{equation}\label{J-K-differential-relation}
\partial_{\mu} \bigl( K(X)\bigr)=K(\partial_{\mu}X)+\tfrac{1}{2}K[L_{\mu},X]-\tfrac{1}{2}[L_{\mu},K(X)] \,\, ,
\end{equation}
which can again be checked explicitly upon noting that
\begin{equation}
\partial_{\mu}\mathrm{g} \, \mathrm{g}^{-1} = \mathrm{Ad}_{\mathrm{g}}(j^{(0)}_{\mu})-\tfrac{1}{2}L_{\mu} \quad\,\, , \quad\,\, [\mathrm{Ad}_{\mathrm{g}}(j^{(0)}_{\mu}),K(X)]-K[\mathrm{Ad}_{\mathrm{g}}(j^{(0)}_{\mu}),X]=0 \,\, ,
\end{equation}
and in components leads to
\begin{equation}\label{J-K-differential-relation-components}
\partial_{\mu}K^{AB}=\tfrac{1}{2}L_{\mu}{}^{C}(f_{C}{}^{AD}K_{D}{}^{B}+f_{C}{}^{BD}K_{D}{}^{A}) \,\, .
\end{equation}
\end{itemize}

\newpage

\section{Details about the Yangian}\label{appendix:B}
In this appendix we collect some details of the computation related to Section \ref{section:Yangians}.

\subsection{About Jacobi identities}
Here we briefly comment on the compatibility of the brackets \eqref{Yangian-general-brackets} with Jacobi identities:
\begin{equation}
\begin{aligned}
1)\qquad 0=&\bigl\{B_{\tau}{}^{A}(\sigma),\{B_{\tau}{}^{B}(\sigma'),B_{\tau}{}^{C}(\sigma'') \}\bigr\}+\bigl\{B_{\tau}{}^{B}(\sigma'),\{B_{\tau}{}^{C}(\sigma''),B_{\tau}{}^{A}(\sigma) \}\bigr\}
\\
& \qquad \qquad \qquad 
+\bigl\{B_{\tau}{}^{C}(\sigma''),\{B_{\tau}{}^{A}(\sigma),B_{\tau}{}^{B}(\sigma') \}\bigr\}
\\
2)\qquad 0=&\bigl\{B_{\tau}{}^{A}(\sigma),\{B_{\tau}{}^{B}(\sigma'),A_{\sigma}{}^{C}(\sigma'') \}\bigr\}+\bigl\{B_{\tau}{}^{B}(\sigma'),\{A_{\sigma}{}^{C}(\sigma''),B_{\tau}{}^{A}(\sigma) \}\bigr\}
\\
& \qquad \qquad \qquad 
+\bigl\{A_{\sigma}{}^{C}(\sigma''),\{B_{\tau}{}^{A}(\sigma),B_{\tau}{}^{B}(\sigma') \}\bigr\}
\\
3) \qquad 0=&\bigl\{B_{\tau}{}^{A}(\sigma),\{A_{\sigma}{}^{B}(\sigma'),A_{\sigma}{}^{C}(\sigma'') \}\bigr\}+\bigl\{A_{\sigma}{}^{B}(\sigma'),\{A_{\sigma}{}^{C}(\sigma''),B_{\tau}{}^{A}(\sigma) \}\bigr\}
\\
& \qquad \qquad \qquad 
+\bigl\{A_{\sigma}{}^{C}(\sigma''),\{B_{\tau}{}^{A}(\sigma),A_{\sigma}{}^{B}(\sigma') \}\bigr\}
\\
4)\qquad 0=&\bigl\{A_{\sigma}{}^{A}(\sigma),\{A_{\sigma}{}^{B}(\sigma'),A_{\sigma}{}^{C}(\sigma'') \}\bigr\}+\bigl\{A_{\sigma}{}^{B}(\sigma'),\{A_{\sigma}{}^{C}(\sigma''),A_{\sigma}{}^{A}(\sigma) \}\bigr\}
\\
& \qquad \qquad \qquad 
+\bigl\{A_{\sigma}{}^{C}(\sigma''),\{A_{\sigma}{}^{A}(\sigma),A_{\sigma}{}^{B}(\sigma') \}\bigr\}
\end{aligned}
\end{equation}
The first of these relations is satisfied due to the Jacobi identity \eqref{def-Jacobi-components} of $\mathfrak{g}$, while the remaining three impose, after simplifying with \eqref{def-Jacobi-components} and using the requirement $m_{3}=m_{1}$ found from the Yangian in \eqref{Yangian-theorem-condition}, the following conditions on $C^{AB}(\sigma)$:
\begin{align}
2) \qquad 0=&+m_{4}\{B_{\tau}{}^{A}(\sigma),C^{BC}(\sigma'')\}\partial_{\sigma'}\delta(\sigma'-\sigma'')-m_{4}\{B_{\tau}{}^{B}(\sigma'),C^{AC}(\sigma'')\}\partial_{\sigma}\delta(\sigma-\sigma'')
\notag \\
&+m_{3}m_{4}\Bigl(f^{BC}{}_{D}C^{AD}(\sigma')\delta(\sigma'-\sigma'')\partial_{\sigma}\delta(\sigma-\sigma')
\notag \\
& \qquad \qquad \qquad +f^{CA}{}_{D}C^{BD}(\sigma)\delta(\sigma-\sigma'')\partial_{\sigma'}\delta(\sigma'-\sigma) 
\notag \\
& \qquad \qquad \qquad \qquad \qquad -f^{AB}{}_{D}C^{CD}(\sigma'')\delta(\sigma-\sigma')\partial_{\sigma}\delta(\sigma-\sigma'') \Bigr) \, ,
\\
\notag \\
3) \qquad 0=& +m_{4}\{A_{\sigma}{}^{C}(\sigma''),C^{AB}(\sigma')\}\partial_{\sigma}\delta(\sigma-\sigma')-m_{4}\{A_{\sigma}{}^{B}(\sigma'),C^{AC}(\sigma'')\}\partial_{\sigma}\delta(\sigma-\sigma'')
\notag \\
&+m_{4}m_{5}f^{BC}{}_{D}C^{AD}(\sigma')\delta(\sigma'-\sigma'')\partial_{\sigma}\delta(\sigma-\sigma') \, ,
\\
\notag \\
4) \qquad 0=& m_{4}m_{6}\Bigl(f^{BC}{}_{D}C^{AD}(\sigma)\delta(\sigma'-\sigma'')\partial_{\sigma'}\delta(\sigma'-\sigma)
\notag \\
& \qquad \qquad \qquad +f^{CA}{}_{D}C^{BD}(\sigma')\delta(\sigma''-\sigma)\partial_{\sigma''}\delta(\sigma''-\sigma') 
\notag \\
& \qquad \qquad \qquad \qquad \qquad +f^{AB}{}_{D}C^{CD}(\sigma'')\delta(\sigma-\sigma')\partial_{\sigma}\delta(\sigma-\sigma'') \Bigr) \, .
\end{align}
While these conditions -- when understood in the distributional sense and hence integrated against a test function $\varphi(\sigma,\sigma'\sigma'')$ which vanishes at infinity -- are satisfied for $C^{AB}(\sigma)\equiv \gamma^{AB}$, they should be verified on a case-by-case basis for non-trivial choices. In Section \ref{section:application-to-AFSM} all the models under consideration fall in the trivial case, except for SSSMs and their auxiliary field deformations. For this class of models $C^{AB}(\sigma)\equiv K^{AB}(\sigma)$, defined in \eqref{K-map-definition}, and the conditions arising from 3) and 4) are trivially satisfied due to $m_{5}=m_{6}=0$ and Poisson commutativity of the current $A_{\sigma}^{A}$ with $K^{AB}$. The condition resulting from 2) can be verified, again in the distributional sense, by using the non-vanishing Poisson brackets between $B_{\tau}{}^{A}$ and $K^{AB}$ in \eqref{AF-SSSM-brackets-summary}, together with $m_{3}=m_{4}=2$.

\subsection{About relevant tensorial structures}
As discussed around equation \eqref{level1-bracket-tensorial-structures-def}, recovering the Serre relations \eqref{Serre-relations} highlights the existence of two main tensorial structures which exhibit important properties:
\begin{equation}\label{tensorial-structrures-appendix}
C_{E_{1}..E_{n}}{}^{AB}:=X_{E_{1}...E_{n}}{}^{P}f_{P}{}^{AB}
\qquad \text{and} \qquad
T_{CFG}{}^{AB}:=f_{CD}{}^{[A}f_{EF}{}^{B]}f^{DE}{}_{G}
 \,\, .
\end{equation}
\paragraph{$C$-structures are central.} The centrality condition \eqref{central-tensors-def} follows in a few steps,
\begin{align}\label{vanishing-central-terms}
&f_{B}{}^{[CD}C_{E_{1}...E_{n}}{}^{A]B}=
\notag \\
=&f_{B}{}^{[CD}f_{P}{}^{A]B}X_{E_{1}...E_{n}}{}^{P}
\notag \\
=& \tfrac{1}{3}\bigl( f_{B}{}^{CD}f_{P}{}^{AB}+f_{B}{}^{AC}f_{P}{}^{DB}+f_{B}{}^{DA}f_{P}{}^{CB} \bigr)X_{E_{1}...E_{n}}{}^{P}
\notag \\
=&\tfrac{1}{3}\bigl( -\gamma^{AH}\gamma^{CL}f_{PH}{}^{B}f_{LB}{}^{D}+f_{B}{}^{AC}f_{P}{}^{DB}+f_{B}{}^{DA}f_{P}{}^{CB} \bigr)X_{E_{1}...E_{n}}{}^{P}
\notag \\
=&\tfrac{1}{3}\bigl( +\gamma^{AH}\gamma^{CL}(f_{HL}{}^{B}f_{PB}{}^{D}+f_{LP}{}^{B}f_{HB}{}^{D})+f_{B}{}^{AC}f_{P}{}^{DB}+f_{B}{}^{DA}f_{P}{}^{CB} \bigr)X_{E_{1}...E_{n}}{}^{P}
\notag \\
=&0 \,\, ,
\end{align}
where in the third line we explicitly wrote the 6 antisymmetrised contributions, which in our conventions include an overall normalisation factor that in this case reads $\tfrac{1}{3!}$, and combined them exploiting antisymmetry of the structure constants. In the fourth line we simply lowered some of the indices on the first term so as to more easily exploit the Jacobi identity \eqref{def-Jacobi-components} in the successive step. In the penultimate line the four terms cancel pairwise, hence giving the desired result.

\paragraph{Symmetry of $T$-structures.}
The $T_{CFG}{}^{AB}$ tensor \eqref{tensorial-structrures-appendix} is fully symmetric, up to central contributions, in the three lower indices. This can also be checked explicitly, starting from the exchange in the first two lower indices:
\begin{equation}\label{T-sym-first-two-indices}
\begin{aligned}
T_{FCG}{}^{AB}=&f_{FD}{}^{[A}f_{EC}{}^{B]}f^{DE}{}_{G}
\\
=&-f_{FE}{}^{[A}f_{DC}{}^{B]}f^{DE}{}_{G}
\\
=&f_{CD}{}^{[A}f_{EF}{}^{B]}f^{DE}{}_{G}
\\
=&T_{CFG}{}^{AB} \,\, .
\end{aligned}
\end{equation}
The exchange in the last two lower indices is only slightly more involved,
\begin{equation}\label{T-sym-last-two-indices}
\begin{aligned}
T_{CGF}{}^{AB}=&f_{CD}{}^{[A}f_{EG}{}^{B]}f^{DE}{}_{F}
\\
=&f_{DF}{}^{E}f_{GE}{}^{[B}f_{C}{}^{D|A]}
\\
=&-\bigl( f_{FG}{}^{E}f_{DE}{}^{[B}+f_{GD}{}^{E}f_{FE}{}^{[B} \bigr)f_{C}{}^{D|A]}
\\
=&T_{CFG}{}^{AB}-\tfrac{1}{2}f_{FG}{}^{E}\bigl( \gamma^{AH}f_{CH}{}^{D}f_{ED}{}^{B}-f_{C}{}^{BD}f_{ED}{}^{A} \bigr)
\\
=&T_{CFG}{}^{AB}+\tfrac{1}{2}f_{FG}{}^{E}\bigl( f^{A}{}_{E}{}^{D}f_{CD}{}^{B}+f_{EC}{}^{D}f^{A}{}_{D}{}^{B}+f_{C}{}^{BD}f_{ED}{}^{A}\bigr)
\\
=&T_{CFG}{}^{AB}+X_{CFG}{}^{D}f_{D}{}^{AB} \,\, ,
\end{aligned}
\end{equation}
where we defined $X_{CFG}{}^{D}=-\tfrac{1}{2}f_{FG}{}^{E}f_{EC}{}^{D}$ for the central contribution and used the Jacobi identity in going from the second to the third line and from the fourth to fifth. The exchange in the first and third lower index then follows from the above results as
\begin{equation}\label{T-sym-first-and-last-indices}
\begin{aligned}
T_{GFC}{}^{AB}=&T_{GCF}{}^{AB}-\tfrac{1}{2}f_{CF}{}^{E}f_{EG}{}^{D}f_{D}{}^{AB}
\\
=& T_{CGF}{}^{AB}-\tfrac{1}{2}f_{CF}{}^{E}f_{EG}{}^{D}f_{D}{}^{AB}
\\
=& T_{CFG}{}^{AB}-\tfrac{1}{2}f_{FG}{}^{E}f_{EC}{}^{D}f_{D}{}^{AB}-\tfrac{1}{2}f_{CF}{}^{E}f_{EG}{}^{D}f_{D}{}^{AB}
\\
=& T_{CFG}{}^{AB}+Y_{CFG}{}^{D}f_{D}{}^{AB} \,\, ,
\end{aligned}
\end{equation}
where we defined $Y_{CFG}{}^{D}=f_{F[G}{}^{E}f_{C]E}{}^{D}$ for the central contribution.

\paragraph{Contraction of $T$-structures.} As noted around equation \eqref{T-tensor-contraction}, when its lower indices are contracted with three copies of the same object, $T$ becomes automatically antisymmetric in the upper indices and the emergence of \eqref{Serre-structure-T-tensor-contraction} can be seen in two steps. The first is contraction with a further copy of the structure constants as
\begin{equation}\label{T-contraction-intermediate}
\begin{aligned}
f_{D}&{}^{AB}T_{PQR}{}^{CD}O^{P}O^{Q}O^{R}=
\\
=&f_{D}{}^{AB}f^{C}{}_{PE}f^{D}{}_{FQ}f^{EF}{}_{R}O^{P}O^{Q}O^{R}
\\
=&-f^{ABD}f_{FDQ}f^{C}{}_{PE}f^{EF}{}_{R}O^{P}O^{Q}O^{R}
\\
=& \bigl( f^{B}{}_{F}{}^{D}f^{A}{}_{DQ}+f_{F}{}^{AD}f^{B}{}_{DQ} \bigr) f^{C}{}_{PE}f^{EF}{}_{R}O^{P}O^{Q}O^{R}
\\
=&\bigl( -f^{BDF}f^{E}{}_{FR}f^{A}{}_{DQ}+f^{ADF}f^{E}{}_{FR}f^{B}{}_{DQ} \bigr) f^{C}{}_{PE}O^{P}O^{Q}O^{R}
\\
=&\bigl( f^{DEF}f^{B}{}_{FR}f^{A}{}_{DQ}+f^{EBF}f^{D}{}_{FR}f^{A}{}_{DQ}+f^{ADF}f^{E}{}_{FR}f^{B}{}_{DQ} \bigr) f^{C}{}_{PE}O^{P}O^{Q}O^{R}
\\
=& -f^{A}{}_{PI}f^{B}{}_{QJ}f^{C}{}_{RK}f^{IJK}O^{P}O^{Q}O^{R} - S_{PQR}{}^{ABC}O^{P}O^{Q}O^{R} \,\, ,
\end{aligned}
\end{equation}
where we used the Jacobi identity \eqref{def-Jacobi-components} in the third and fifth line, defining in the end
\begin{equation}
S_{PQR}{}^{ABC}:=\bigl(f^{A}{}_{DQ}f^{B}{}_{EF}f^{C}{}_{P}{}^{E}f^{DF}{}_{R}-f^{B}{}_{DQ}f^{C}{}_{PE}f^{ADF}f^{E}{}_{FR} \bigr) \,\, .
\end{equation}
The second step consists of antisymmetrising the upper $ABC$ indices in \eqref{T-contraction-intermediate}; this does not affect the first term, which is already antisymmetric by construction, but causes the $S$-contribution to vanish, ensuring the recovery of \eqref{Serre-structure-T-tensor-contraction}:
\begin{equation}\label{T-contraction-antisymmetrisation}
\begin{aligned}
S_{PQR}&{}^{[ABC]}O^{P}O^{Q}O^{R}=
\\
=&\bigl(f^{[A}{}_{DQ}f^{B}{}_{EF}f^{C]}{}_{P}{}^{E}f^{DF}{}_{R}-f^{[B}{}_{DQ}f^{C}{}_{PE}f^{A]DF}f^{E}{}_{FR} \bigr)O^{P}O^{Q}O^{R}
\\
=&\bigl(f^{[A}{}_{DQ}f^{B}{}_{EF}f^{C]}{}_{P}{}^{E}f^{DF}{}_{R}+f^{[C}{}_{DQ}f^{B}{}_{PE}f^{A]DF}f^{E}{}_{FR} \bigr)O^{P}O^{Q}O^{R}
\\
=&\bigl(f^{[A}{}_{DQ}f^{B}{}_{EF}f^{C]}{}_{P}{}^{E}f^{DF}{}_{R}-f^{[C}{}_{DQ}f^{A}{}_{PE}f^{B]DF}f^{E}{}_{FR} \bigr)O^{P}O^{Q}O^{R}
\\
=&\bigl(f^{[A}{}_{DQ}f^{B}{}_{EF}f^{C]}{}_{P}{}^{E}f^{DF}{}_{R}-f^{[C}{}_{EP}f^{A}{}_{QD}f^{B]EF}f^{D}{}_{FR} \bigr)O^{P}O^{Q}O^{R}
\\
=&\bigl(f^{[A}{}_{DQ}f^{B}{}_{EF}f^{C]}{}_{P}{}^{E}f^{DF}{}_{R}-f^{[C}{}_{PE}f^{A}{}_{DQ}f^{B]EF}f^{D}{}_{FR} \bigr)O^{P}O^{Q}O^{R}
\\
=&\bigl(f^{[A}{}_{DQ}f^{B}{}_{EF}f^{C]}{}_{P}{}^{E}f^{DF}{}_{R}-f^{[C|}{}_{P}{}^{E}f^{A}{}_{DQ}f^{B]}{}_{EF}f^{DF}{}_{R} \bigr)O^{P}O^{Q}O^{R}
\\
=& 0 \,\, .
\end{aligned}
\end{equation}
Here we have exclusively worked on the second set of contracted structure constants. In the second and third line we used antisymmetrisation over $ABC$ to swap them, in the fourth line we relabelled a few indices and used symmetry of the $O$'s, in the fifth line we swapped the lower indices $PE$ and $DQ$, finally rising and lowering $E$ and $F$ to obtain cancellation between the two terms.

\subsection{About the proof of Theorem \ref{Yangian-theorem}}
Here we collect technical steps underlying the proof of Theorem \ref{Yangian-theorem}.

\paragraph{Contributions in $\{Q^{(0)A}(\sigma),Q^{(1)B}(\sigma')\}$.} \quad
Using \eqref{charges-with-boundaries} and \eqref{Yangian-general-brackets} one finds
\begin{align}
\{&Q^{(0)A}(\sigma),Q^{(1)B}(\sigma')\}=
\notag \\
=&+\beta s \!\!\int_{-\sigma_{1}}^{+\sigma_{2}}\!\!\!\!\!\!\!\!\!\!\mathrm{d}\sigma
\!\!\int_{-\sigma_{5}}^{+\sigma_{6}}\!\!\!\!\!\!\!\!\!\!\mathrm{d}\sigma' \,\{B_{\tau}{}^{A}(\sigma),A_{\sigma}{}^{B}(\sigma')\}
+ \tfrac{1}{2} \!\! \int_{-\sigma_{1}}^{+\sigma_{2}}\!\!\!\!\!\!\!\!\!\!\mathrm{d}\sigma \!\! \int_{-\sigma_{3}}^{+\sigma_{4}}\!\!\!\!\!\!\!\!\!\!\mathrm{d}\sigma' \, \{B_{\tau}{}^{A}(\sigma),[B_{\tau}{}(\sigma'),\chi^{(0)}(\sigma')]^{B}\}
\notag \\
=&+\beta s \!\! \int_{-\sigma_{1}}^{+\sigma_{2}}\!\!\!\!\!\!\!\!\!\!\mathrm{d}\sigma
\!\! \int_{-\sigma_{5}}^{+\sigma_{6}}\!\!\!\!\!\!\!\!\!\!\mathrm{d}\sigma' \,\{B_{\tau}{}^{A}(\sigma),A_{\sigma}{}^{B}(\sigma')\}
+ \tfrac{1}{2}f_{CD}{}^{B}\!\!\! \int_{-\sigma_{1}}^{+\sigma_{2}}\!\!\!\!\!\!\!\!\!\!\mathrm{d}\sigma \!\! \int_{-\sigma_{3}}^{+\sigma_{4}}\!\!\!\!\!\!\!\!\!\!\mathrm{d}\sigma' \, \{B_{\tau}{}^{A}(\sigma),B_{\tau}{}^{C}(\sigma')\chi^{(0)D}(\sigma')\}
\notag \\
=&+\beta s \!\! \int_{-\sigma_{1}}^{+\sigma_{2}}\!\!\!\!\!\!\!\!\!\!\mathrm{d}\sigma
\!\! \int_{-\sigma_{5}}^{+\sigma_{6}}\!\!\!\!\!\!\!\!\!\!\mathrm{d}\sigma' \,\{B_{\tau}{}^{A}(\sigma),A_{\sigma}{}^{B}(\sigma')\}
\notag \\
& \qquad \qquad \!+\!\tfrac{\beta s}{2}f_{CD}{}^{B}\!\!\!\int_{-\sigma_{1}}^{+\sigma_{2}}\!\!\!\!\!\!\!\!\!\!\mathrm{d}\sigma \!\! \int_{-\sigma_{3}}^{+\sigma_{4}}\!\!\!\!\!\!\!\!\!\!\mathrm{d}\sigma'\mathrm{d}\sigma''  \theta(\sigma'\!-\!\sigma'') \{B_{\tau}{}^{A}(\sigma),B_{\tau}{}^{C}(\sigma')B_{\tau}{}^{D}(\sigma'')\}
\notag \\
=&+\beta s \!\! \int_{-\sigma_{1}}^{+\sigma_{2}}\!\!\!\!\!\!\!\!\!\!\mathrm{d}\sigma
\!\! \int_{-\sigma_{5}}^{+\sigma_{6}}\!\!\!\!\!\!\!\!\!\!\mathrm{d}\sigma' \,\{B_{\tau}{}^{A}(\sigma),A_{\sigma}{}^{B}(\sigma')\}
\notag \\
&\qquad \qquad \!+\!\tfrac{\beta s}{2}f_{CD}{}^{B}\!\!\!\int_{-\sigma_{1}}^{+\sigma_{2}}\!\!\!\!\!\!\!\!\!\!\mathrm{d}\sigma \!\! \int_{-\sigma_{3}}^{+\sigma_{4}}\!\!\!\!\!\!\!\!\!\!\mathrm{d}\sigma'\mathrm{d}\sigma''  \epsilon(\sigma'\!-\!\sigma'') B_{\tau}{}^{D}(\sigma'') \{B_{\tau}{}^{A}(\sigma),B_{\tau}{}^{C}(\sigma')\}
\notag \\
=&+m_{3}\beta s \!\! \int_{-\sigma_{1}}^{+\sigma_{2}}\!\!\!\!\!\!\!\!\!\!\mathrm{d}\sigma
\!\! \int_{-\sigma_{5}}^{+\sigma_{6}}\!\!\!\!\!\!\!\!\!\!\mathrm{d}\sigma' \,f^{AB}{}_{C}A_{\sigma}{}^{C}(\sigma)\delta(\sigma \!-\! \sigma')+m_{4} \beta s\!\!\! \int_{-\sigma_{1}}^{+\sigma_{2}}\!\!\!\!\!\!\!\!\!\!\mathrm{d}\sigma
\!\! \int_{-\sigma_{5}}^{+\sigma_{6}}\!\!\!\!\!\!\!\!\!\!\mathrm{d}\sigma' \,C^{AB}(\sigma')\partial_{\sigma}\delta(\sigma-\sigma')
\notag \\
&\qquad \qquad +\tfrac{m_{1}\beta s}{2}f_{CD}{}^{B}f^{AC}{}_{E}\!\!\!\int_{-\sigma_{1}}^{+\sigma_{2}}\!\!\!\!\!\!\!\!\!\!\mathrm{d}\sigma \!\! \int_{-\sigma_{3}}^{+\sigma_{4}}\!\!\!\!\!\!\!\!\!\!\mathrm{d}\sigma'\mathrm{d}\sigma''  \epsilon(\sigma'\!-\!\sigma'') B_{\tau}{}^{D}(\sigma'')B_{\tau}{}^{E}(\sigma)\delta(\sigma-\sigma')
\notag \\
&\qquad \qquad +\tfrac{m_{2}\beta s}{2}f_{CD}{}^{B}\!\!\!\int_{-\sigma_{1}}^{+\sigma_{2}}\!\!\!\!\!\!\!\!\!\!\mathrm{d}\sigma \!\! \int_{-\sigma_{3}}^{+\sigma_{4}}\!\!\!\!\!\!\!\!\!\!\mathrm{d}\sigma'\mathrm{d}\sigma''  \epsilon(\sigma'\!-\!\sigma'') B_{\tau}{}^{D}(\sigma'')\gamma^{AC}\partial_{\sigma}\delta(\sigma-\sigma')
\notag \\
=& Y_{1}^{AB}+Y_{2}^{AB}+X_{1}^{AB}+X_{2}^{AB} \, ,
\label{Q0-Q1-bracket-intermediate}
\end{align}
where in the first step we expanded the commutator in the second integral and in the second step we substituted $\chi^{(0)}$ using the relation \eqref{Yangian-B-chi0-relation}. In the third one we used Leibniz rule inside the second integral and reassembled the resulting two contributions by relabelling indices on the structure constants, the integration variable $\sigma'\leftrightarrow\sigma''$, and defining
\begin{equation}\label{epsilon-def}
\epsilon(\sigma-\sigma'):=\theta(\sigma-\sigma')-\theta(\sigma'-\sigma) \,\, .
\end{equation}
In the fourth step we used again the brackets \eqref{Yangian-general-brackets} and finally defined the quantities $Y_{1}^{AB},Y_{2}^{AB}$ and $X_{1}^{AB},X_{2}^{AB}$ given in \eqref{Q0Q1-first-step}.

\paragraph{Rearrangement of $Y_{2}^{AB}$.} Starting from the definition \eqref{Q0Q1-first-step} one finds
\begin{equation}\label{manipulation-Y2AB}
\begin{aligned}
Y_{2}^{AB}=&\tfrac{m_{1}\beta s}{2}f_{CD}{}^{B}f^{AC}{}_{E}\!\!\!\int_{-\sigma_{1}}^{+\sigma_{2}}\!\!\!\!\!\!\!\!\!\!\mathrm{d}\sigma \!\! \int_{-\sigma_{3}}^{+\sigma_{4}}\!\!\!\!\!\!\!\!\!\!\mathrm{d}\sigma'\mathrm{d}\sigma''  \epsilon(\sigma'-\sigma'')B_{\tau}{}^{D}(\sigma'')B_{\tau}{}^{E}(\sigma)\delta(\sigma-\sigma')
\\
=&\tfrac{m_{1}\beta s}{2}f_{CD}{}^{B}f^{AC}{}_{E}\!\!\! \int_{-\sigma_{3}}^{+\sigma_{4}}\!\!\!\!\!\!\!\!\!\!\mathrm{d}\sigma'\mathrm{d}\sigma''  \Bigl( \! \theta(\sigma'\!-\!\sigma'')\!-\!\theta(\sigma'' \!-\!\sigma') \!\Bigr)B_{\tau}{}^{D}(\sigma'')B_{\tau}{}^{E}(\sigma')
\\
=&\tfrac{m_{1}\beta s}{2}\gamma^{AF} \Bigl( f_{EF}{}^{D}f_{CD}{}^{B}-f_{CF}{}^{D}f_{ED}{}^{B} \Bigr) \!\!\! \int_{-\sigma_{3}}^{+\sigma_{4}}\!\!\!\!\!\!\!\!\!\!\mathrm{d}\sigma'\mathrm{d}\sigma''\theta(\sigma'-\sigma'') B_{\tau}{}^{E}(\sigma'')B_{\tau}{}^{C}(\sigma')
\\
=&-\tfrac{m_{1}}{2}\gamma^{AF} f_{CE}{}^{D}f_{FD}{}^{B} \int_{-\sigma_{3}}^{+\sigma_{4}}\!\!\!\!\!\!\!\!\!\!\mathrm{d}\sigma B_{\tau}{}^{C}(\sigma)\chi^{(0)E}(\sigma)
\\
=& + m_{1}\beta f^{AB}{}_{C} \int_{-\sigma_{3}}^{+\sigma_{4}}\!\!\!\!\!\!\!\!\!\!\mathrm{d}\sigma \, \tfrac{1}{2\beta}[B_{\tau}(\sigma),\chi^{(0)}(\sigma)]^{C} \,\, .
\end{aligned}
\end{equation}
In the first line we used \eqref{epsilon-def} to write $\epsilon(\sigma'-\sigma'')$ explicitly and chose the boundaries to satisfy $\sigma_{4}<\sigma_{2}$ and $\sigma_{3}<\sigma_{1}$, so as to get rid of the integration from $-\sigma_{1}$ to $+\sigma_{2}$ using the $\delta$-function, as needed to reproduce the second term of $Q^{(1)}$ in \eqref{charges-with-boundaries}. In the second line we relabelled the integration variables and the indices on the structure constants so as to combine terms inside the integral and in the third step we exploited the relation \eqref{Yangian-B-chi0-relation} and Jacobi identity \eqref{Jacobi-forms}. In the final line we then expressed the integrand as the desired commutator in \eqref{charges-with-boundaries}.

\paragraph{Vanishing of $X_{1}^{AB}$ and $X_{2}^{AB}$.}
Starting from $X_{1}^{AB}$, one finds
\begin{equation}\label{vanishing-X1AB}
\begin{aligned}
X_{1}^{AB}=&m_{4}\beta s \int_{-\sigma_{1}}^{+\sigma_{2}}\!\!\!\!\!\!\mathrm{d}\sigma
\int_{-\sigma_{5}}^{+\sigma_{6}}\!\!\!\!\!\!\mathrm{d}\sigma' \,C^{AB}(\sigma')\partial_{\sigma}\delta(\sigma-\sigma') 
\\
=&m_{4}\beta s \int_{-\sigma_{5}}^{+\sigma_{6}}\!\!\!\!\!\!\mathrm{d}\sigma' \,C^{AB}(\sigma') \Bigl( \delta(+\sigma_{2}-\sigma')-\delta(-\sigma_{1}-\sigma') \Bigr)
\\
=&m_{4}\beta s \Bigl( C^{AB}(\sigma_{2})\theta(\sigma_{6}-\sigma_{2})-C^{AB}(-\sigma_{1})\theta(\sigma_{5}-\sigma_{1}) \Bigr) \,\, ,
\end{aligned}
\end{equation}
where in the first line we explicitly wrote down the boundary terms arising from the derivative of the $\delta$-function and in the second line we used the fact that the resulting contributions $\delta(+\sigma_{2}-\sigma')$ and $\delta(-\sigma_{1}-\sigma')$ can only be non-vanishing if the domain $\sigma' \in [-\sigma_{5},+\sigma_{6}]$ respectively contains $\sigma_{2}$ and $-\sigma_{1}$, namely only if $\sigma_{6}\geq \sigma_{2}$ and $\sigma_{5}\geq \sigma_{1}$. These conditions can be rewritten in terms of the $\theta$-functions on the last line, which show that the boundary contribution $X_{1}^{AB}$ is generically non-vanishing unless one chooses $\sigma_{6}<\sigma_{2}$ and $\sigma_{5}<\sigma_{1}$. Such a choice is coherent with the requirements \eqref{boundary-order-intermediate} found from $Y_{1}^{AB},Y_{2}^{AB}$ and in agreement with \cite{Klose:2016qfv}, where the above term was first discussed. This ensures the correct vanishing of $X_{1}^{AB}$, such that recovery of \eqref{Yangian-general-level0-and-level1-brackets} is now subject to the vanishing of $X_{2}^{AB}$. Looking explicitly at this term from \eqref{Q0Q1-first-step} one then finds
\begin{equation}\label{vanishing-X2AB}
\begin{aligned}
\!X_{2}^{AB}=&\tfrac{m_{2}\beta s}{2}f_{CD}{}^{B}\!\!\!\int_{-\sigma_{1}}^{+\sigma_{2}}\!\!\!\!\!\!\!\!\!\!\mathrm{d}\sigma \!\! \int_{-\sigma_{3}}^{+\sigma_{4}}\!\!\!\!\!\!\!\!\!\!\mathrm{d}\sigma'\mathrm{d}\sigma''  \epsilon(\sigma'\!-\!\sigma'') B_{\tau}{}^{D}(\sigma'')\gamma^{AC}\partial_{\sigma}\delta(\sigma-\sigma')
\\
=&-\tfrac{m_{2}\beta s}{2}f^{AB}{}_{D}\!\!\!\int_{-\sigma_{3}}^{+\sigma_{4}}\!\!\!\!\!\!\!\!\!\!\mathrm{d}\sigma'\mathrm{d}\sigma''  \epsilon(\sigma'\!-\!\sigma'') B_{\tau}{}^{D}(\sigma'')\Bigl(\delta(+\sigma_{2}-\sigma')-\delta(-\sigma_{1}-\sigma')\Bigr)
\\
=&-\tfrac{m_{2}\beta s}{2}f^{AB}{}_{D}\!\!\!\int_{-\sigma_{3}}^{+\sigma_{4}}\!\!\!\!\!\!\!\!\!\!\mathrm{d}\sigma'' B_{\tau}{}^{D}(\sigma'')\Bigl(\epsilon(\sigma_{2}\!-\!\sigma'')\theta(\sigma_{4}\!-\!\sigma_{2})-\epsilon(-\sigma_{1}\!-\!\sigma'')\theta(\sigma_{3}\!-\!\sigma_{1})\Bigr) \,\, ,
\end{aligned}
\end{equation}
after having followed exactly the same reasoning as for $X_{1}^{AB}$ above, with the different domain $\sigma'' \in [-\sigma_{3},+\sigma_{4}]$. Now the boundary terms can only contribute if $\sigma_{4}\geq \sigma_{2}$ and $\sigma_{3}\geq \sigma_{1}$, such that making the opposite choice, namely $\sigma_{4}<\sigma_{2}$ and $\sigma_{3}<\sigma_{1}$, ensures the vanishing of the last line in agreement with the conditions \eqref{boundary-order-intermediate} arising from $Y_{1}^{AB},Y_{2}^{AB}$ .

\paragraph{Evaluation of $U^{AB}$.} This computation can be done explicitly by using \eqref{Yangian-general-brackets}:
\begin{align}\label{UAB}
U^{AB}=&\int_{-\sigma_{5}}^{+\sigma_{6}}\!\!\!\!\!\!\!\!\!\!\!\mathrm{d}\sigma\mathrm{d}\sigma' \, \{ A_{\sigma}{}^{A}(\sigma),A_{\sigma}{}^{B}(\sigma') \}
\\
=&  f^{AB}{}_{C}\int_{-\sigma_{5}}^{+\sigma_{6}}\!\!\!\!\!\!\!\!\!\!\!\mathrm{d}\sigma\mathrm{d}\sigma' \Bigl(\!m_{5}A_{\sigma}{}^{C}+m_{6}B_{\tau}{}^{C}\!\Bigr)\delta(\sigma\!-\!\sigma')
+m_{7}\gamma^{AB}\int_{-\sigma_{5}}^{+\sigma_{6}}\!\!\!\!\!\!\!\!\!\!\!\mathrm{d}\sigma\mathrm{d}\sigma' \, \partial_{\sigma}\delta(\sigma\!-\!\sigma') \,\, .
\notag 
\end{align}
The second integral contains boundary terms which cancel each other, while the first one exhibits a tensorial structure of the type \eqref{level1-bracket-tensorial-structures-def}, with $X_{E_{1}...E_{N}}{}^{P}$ carrying no lower indices, and satisfies \eqref{central-tensors-def}. $U^{AB}$ hence does not contribute to the Serre relations \eqref{Serre-relations}, as it vanishes after contraction with the structure constants and antisymmetrisation.

\paragraph{Evaluation of $Z_{1}^{AB}$.}
Starting from the definition \eqref{Z1-Z2-def} one can exploit the $\delta$-function to get rid of the integration over $\sigma'''$ and then perform the rearrangement
\begin{align}\label{Z1-first-rearrangement}
Z_{1}^{AB}\!\!=&\tfrac{m_{1}}{4}f_{CD}{}^{A}f_{EF}{}^{B}f^{DF}{}_{G}\!\!\!\int_{-\sigma_{3}}^{+\sigma_{4}}\!\!\!\!\!\!\!\!\!\!\mathrm{d}\sigma\mathrm{d}\sigma' \mathrm{d}\sigma''  \, \epsilon(\sigma\!-\!\sigma'')\epsilon(\sigma'\!-\!\sigma'')B_{\tau}{}^{C}(\sigma)B_{\tau}{}^{E}(\sigma')B_{\tau}{}^{G}(\sigma'')
\notag \\
=& -\tfrac{m_{1}}{8}\Bigl( f_{CD}{}^{A}f_{EF}{}^{B}f^{DE}{}_{G} +f_{FD}{}^{A}f_{EC}{}^{B}f^{DE}{}_{G}\Bigr)\times
\notag \\
& \qquad \qquad \quad \times \!\!\!\int_{-\sigma_{3}}^{+\sigma_{4}}\!\!\!\!\!\!\!\!\!\!\mathrm{d}\sigma\mathrm{d}\sigma' \mathrm{d}\sigma''  \, \epsilon(\sigma-\sigma'')\epsilon(\sigma'-\sigma'')B_{\tau}{}^{C}(\sigma)B_{\tau}{}^{F}(\sigma')B_{\tau}{}^{G}(\sigma'')
\notag \\
=& -\tfrac{m_{1}}{4}T_{CFG}{}^{AB}\!\!\!\int_{-\sigma_{3}}^{+\sigma_{4}}\!\!\!\!\!\!\!\!\!\!\mathrm{d}\sigma\mathrm{d}\sigma' \mathrm{d}\sigma''  \, \epsilon(\sigma-\sigma'')\epsilon(\sigma'-\sigma'')B_{\tau}{}^{C}(\sigma)B_{\tau}{}^{F}(\sigma')B_{\tau}{}^{G}(\sigma'') \,\, ,
\end{align}
where in the first line we split the integral into two equal pieces and relabelled the indices on the structure constants of the second piece by using symmetry of the integral under the simultaneous exchange of $\sigma \leftrightarrow \sigma'$ and $C\leftrightarrow E$. In the second line we recombined the two pieces into the $T$-tensor introduced in \eqref{level1-bracket-tensorial-structures-def}. At this point it is convenient to explicitly write down the product of $\epsilon$-functions recalling the definition \eqref{epsilon-def},
\begin{equation}
\begin{aligned}
\epsilon(\sigma-\sigma'')\epsilon(\sigma'-\sigma'')=&+1-2\theta(\sigma''-\sigma)-2\theta(\sigma''-\sigma')+4\theta(\sigma''-\sigma)\theta(\sigma''-\sigma')
\\
=& -1+2\theta(\sigma-\sigma'')-2\theta(\sigma''-\sigma')+4\theta(\sigma''-\sigma)\theta(\sigma''-\sigma') \,\, ,
\end{aligned}
\end{equation}
having used the identity in \eqref{Heaviside-definition} to go from the first to the second line. Substituting this expression into the integral and recalling that the $T$-tensor is fully antisymmetric in the lower indices up to central terms \eqref{T-symmetry}, which we will ignore since due to \eqref{central-tensors-def} they do not contribute to the Serre relations, one can realise that the contributions linear in $\theta$ cancel each other after a simple relabelling of integration variables, leaving us with
\begin{equation}
Z_{1}^{AB}=\tfrac{m_{1}}{4}T_{CFG}{}^{AB}\!\!\!\int_{-\sigma_{3}}^{+\sigma_{4}}\!\!\!\!\!\!\!\!\!\!\mathrm{d}\sigma\mathrm{d}\sigma' \mathrm{d}\sigma''  \, \Bigl( 1-4\theta(\sigma''-\sigma)\theta(\sigma''-\sigma') \Bigr)B_{\tau}{}^{C}(\sigma)B_{\tau}{}^{F}(\sigma')B_{\tau}{}^{G}(\sigma'') \,\, .
\end{equation}
At this point, recalling the relations \eqref{Yangian-B-chi0-relation} we explicitly write down the two contributions
\begin{equation}
\begin{aligned}
Z_{1}^{AB}=&+\tfrac{m_{1}s^3}{4\beta^3}T_{CFG}{}^{AB} \int_{-\sigma_{3}}^{+\sigma_{4}}\!\!\!\!\!\!\!\!\mathrm{d}\sigma  \, \partial_{\sigma} \chi^{(0)C}(\sigma)\int_{-\sigma_{3}}^{+\sigma_{4}}\!\!\!\!\!\!\!\!\mathrm{d}\sigma'  \, \partial_{\sigma'} \chi^{(0)F}(\sigma')\int_{-\sigma_{3}}^{+\sigma_{4}}\!\!\!\!\!\!\!\!\mathrm{d}\sigma''  \, \partial_{\sigma''} \chi^{(0)G}(\sigma'') 
\\
&-\tfrac{m_{1}s^3}{\beta^3}T_{CFG}{}^{AB}\int_{-\sigma_{3}}^{+\sigma_{4}}\!\!\!\!\!\!\!\!\mathrm{d}\sigma'' \, \chi^{(0)C}(\sigma'')\chi^{(0)F}(\sigma'') \,\partial_{\sigma''}\chi^{(0)G}(\sigma'') \,\, ,
\end{aligned}
\end{equation}
and use again the symmetry, up to central terms, of $T$ in the lower indices to rearrange the integrand on the second line as a total derivative
\begin{equation}
\begin{aligned}
Z_{1}^{AB}=&+\tfrac{m_{1}s^3}{4\beta^3}T_{CFG}{}^{AB} \int_{-\sigma_{3}}^{+\sigma_{4}}\!\!\!\!\!\!\!\!\mathrm{d}\sigma  \, \partial_{\sigma} \chi^{(0)C}(\sigma)\int_{-\sigma_{3}}^{+\sigma_{4}}\!\!\!\!\!\!\!\!\mathrm{d}\sigma'  \, \partial_{\sigma'} \chi^{(0)F}(\sigma')\int_{-\sigma_{3}}^{+\sigma_{4}}\!\!\!\!\!\!\!\!\mathrm{d}\sigma''  \, \partial_{\sigma''} \chi^{(0)G}(\sigma'')
\\
&-\tfrac{m_{1}s^3}{3\beta^3}T_{CFG}{}^{AB}\int_{-\sigma_{3}}^{+\sigma_{4}}\!\!\!\!\!\!\!\!\mathrm{d}\sigma'' \, \partial_{\sigma''}\Bigl( \chi^{(0)C}(\sigma'')\chi^{(0)F}(\sigma'') \chi^{(0)G}(\sigma'') \Bigr) \,\, .
\end{aligned}
\end{equation}
We can now directly evaluate all the integrals, arriving at \eqref{ZAB-final-result}:
\begin{equation}
Z_{1}^{AB}=-\tfrac{m_{1}s^3}{12\beta^3}T_{CFG}{}^{AB} \, \chi^{(0)C}(+\sigma_{4}) \chi^{(0)F}(+\sigma_{4}) \chi^{(0)G}(+\sigma_{4}) +\Bigl(\text{terms}\propto\chi^{(0)}(-\sigma_{3})\Bigr) \,\, .
\end{equation}

\paragraph{Evaluation of $Z_{2}^{AB}$.}
From the definition of $Z_{2}^{AB}$ in \eqref{Z1-Z2-def} one finds that
\begin{align}\label{Z2AB-intermediate}
Z_{2}^{AB}\!\!\!=&\tfrac{m_{2}}{4}f_{CD}{}^{A}\!f_{EF}{}^{B}\gamma^{DF}\!\!\!\int_{-\sigma_{3}}^{+\sigma_{4}}\!\!\!\!\!\!\!\!\!\!\!\mathrm{d}\sigma\mathrm{d}\sigma' \mathrm{d}\sigma''\mathrm{d}\sigma'''   \epsilon(\sigma\!-\!\sigma'')\epsilon(\sigma'\!\!-\!\sigma''')B_{\tau}{}^{C}(\sigma)B_{\tau}{}^{E}(\sigma')\partial_{\sigma''}\delta(\sigma''\!\!-\!\sigma''') \,\,
\notag \\
=&\tfrac{m_{2}}{4}f_{C}{}^{DA}\!f_{ED}{}^{B}\!\!\!\int_{-\sigma_{3}}^{+\sigma_{4}}\!\!\!\!\!\!\!\!\!\!\!\mathrm{d}\sigma\mathrm{d}\sigma' \mathrm{d}\sigma''\mathrm{d}\sigma'''   \epsilon(\sigma'\!\!-\!\sigma''')B_{\tau}{}^{C}(\sigma)B_{\tau}{}^{E}(\sigma')\times
\notag \\
&\qquad\qquad\qquad\qquad \times\Bigl[\partial_{\sigma''}\Bigl(\epsilon(\sigma\!-\!\sigma'')\delta(\sigma''\!\!-\!\sigma''')\Bigr)-\delta(\sigma''-\sigma''')\partial_{\sigma''}\epsilon(\sigma-\sigma'')\Bigr]
\notag \\
=&\tfrac{m_{2}}{4}f_{C}{}^{DA}\!f_{ED}{}^{B}\!\!\!\int_{-\sigma_{3}}^{+\sigma_{4}}\!\!\!\!\!\!\!\!\!\!\!\mathrm{d}\sigma\mathrm{d}\sigma' \mathrm{d}\sigma'''   \epsilon(\sigma'\!\!-\!\sigma''')B_{\tau}{}^{C}(\sigma)B_{\tau}{}^{E}(\sigma')\times
\notag \\
&\qquad\qquad\qquad\qquad \times 
\Bigl[ \epsilon(\sigma-\sigma_{4})\delta(\sigma_{4}-\sigma''')-\epsilon(\sigma+\sigma_{3})\delta(-\sigma_{3}-\sigma''')+2\delta(\sigma'''-\sigma) \Bigr]
\notag \\
=&\tfrac{m_{2}}{4}f_{C}{}^{DA}\!f_{ED}{}^{B}\!\!\!\int_{-\sigma_{3}}^{+\sigma_{4}}\!\!\!\!\!\!\!\!\!\!\!\mathrm{d}\sigma\mathrm{d}\sigma' \! \Bigl[ \! \epsilon(\sigma\!-\!\sigma_{4})\epsilon(\sigma'\!-\!\sigma_{4})\!-\!\epsilon(\sigma\!+\!\sigma_{3})\epsilon(\sigma'\!+\!\sigma_{3})\!+\!2\epsilon(\sigma'\!-\!\sigma)\!\Bigr]\!B_{\tau}{}^{C}\!(\sigma)B_{\tau}{}^{E}\!(\sigma')
\notag \\
=& Z_{2-\text{vanish}}^{AB}+Z_{2-\text{central}}^{AB} \,\, ,
\end{align}
where in the first step we integrated by parts and in the second step we performed the integration over $\sigma''$, using the definition of $\epsilon$ in \eqref{epsilon-def} and \eqref{Heaviside-definition} in the last term to find
\begin{equation}
\partial_{\sigma''}\epsilon(\sigma-\sigma'')=\partial_{\sigma''}\bigl(\theta(\sigma-\sigma'')-\theta(\sigma''-\sigma)\bigr)=\partial_{\sigma''}\bigl(1-2\theta(\sigma''-\sigma)\bigr)=-2\delta(\sigma''-\sigma) \, .
\end{equation}
In the third step in \eqref{Z2AB-intermediate} we performed the integration over $\sigma'''$, finally introducing
\begin{align}\label{Z2vanish-Z2central-definition}
Z_{2-\text{vanish}}^{AB}\!\!=&\tfrac{m_{2}}{4}f_{C}{}^{DA}\!f_{ED}{}^{B}\!\!\!\int_{-\sigma_{3}}^{+\sigma_{4}}\!\!\!\!\!\!\!\!\!\!\!\mathrm{d}\sigma\mathrm{d}\sigma' \! \Bigl[ \! \epsilon(\sigma\!-\!\sigma_{4})\epsilon(\sigma'\!-\!\sigma_{4})\!-\!\epsilon(\sigma\!+\!\sigma_{3})\epsilon(\sigma'\!+\!\sigma_{3})\!\Bigr]\!B_{\tau}{}^{C}\!(\sigma)B_{\tau}{}^{E}\!(\sigma')
\notag\\
Z_{2-\text{central}}^{AB}\!\!=&\tfrac{m_{2}}{2}f_{C}{}^{DA}\!f_{ED}{}^{B}\!\!\!\int_{-\sigma_{3}}^{+\sigma_{4}}\!\!\!\!\!\!\!\!\!\!\!\mathrm{d}\sigma\mathrm{d}\sigma' \, \epsilon(\sigma'-\sigma)B_{\tau}{}^{C}(\sigma)B_{\tau}{}^{E}(\sigma') \,\, .
\end{align}
As suggested by their labels, these two contributions respectively vanish identically and exhibit central structure, ensuring that $Z_{2}^{AB}$ does not contribute to \eqref{Q1-Q1-contraction-almost-Serre}. To see how the first one vanishes, one can use \eqref{Heaviside-definition} to notice that
\begin{equation}\label{epsilon-useful-rel}
\begin{aligned}
\epsilon(\sigma-\sigma_{4})&:=\theta(\sigma-\sigma_{4})-\theta(\sigma_{4}-\sigma)=2\theta(\sigma-\sigma_{4})-1 \, ,
\\
\epsilon(\sigma+\sigma_{3})&:=\theta(\sigma+\sigma_{3})-\theta(-\sigma_{3}-\sigma)=1-2\theta(-\sigma_{3}-\sigma) \,\, ,
\end{aligned}
\end{equation}
with similar expressions holding for $\epsilon(\sigma'-\sigma_{4})$ and $\epsilon(\sigma'+\sigma_{3})$. The key point is that each term exhibiting one of the $\theta$-functions $\theta(\sigma-\sigma_{4}),\theta(\sigma'-\sigma_{4})$ or $\theta(-\sigma_{3}-\sigma),\theta(-\sigma_{3}-\sigma')$ will vanish by construction, since the conditions $\sigma>\sigma_{4},\sigma'>\sigma_{4}$ or $\sigma<-\sigma_{3},\sigma'<-\sigma_{3}$ are never satisfied on the domain of integration. Only the identity parts in the relations \eqref{epsilon-useful-rel} can contribute to the integral, but in fact these end up cancelling each other:
\begin{equation}
Z_{2-\text{vanish}}^{AB}\!\!=\tfrac{m_{2}}{4}f_{C}{}^{DA}\!f_{ED}{}^{B}\!\!\!\int_{-\sigma_{3}}^{+\sigma_{4}}\!\!\!\!\!\!\!\!\!\!\!\mathrm{d}\sigma\mathrm{d}\sigma'  \bigl[ 1-1\bigr]\!B_{\tau}{}^{C}\!(\sigma)B_{\tau}{}^{E}\!(\sigma') = 0 \,\, .
\end{equation}
We are thus left to show that the second line in \eqref{Z2vanish-Z2central-definition} is indeed central:
\begin{equation}
\begin{aligned}
Z_{2-\text{central}}^{AB}=&\tfrac{m_{2}}{2}f_{C}{}^{DA}f_{ED}{}^{B}\int_{-\sigma_{3}}^{+\sigma_{4}}\!\!\!\!\!\!\!\!\!\!\!\mathrm{d}\sigma\mathrm{d}\sigma' \, \epsilon(\sigma'-\sigma)B_{\tau}{}^{C}(\sigma)B_{\tau}{}^{E}(\sigma')
\\
=&\tfrac{m_{2}}{4}\Bigl(f_{C}{}^{DA}f_{ED}{}^{B}-f_{E}{}^{DA}f_{CD}{}^{B}\Bigr)\int_{-\sigma_{3}}^{+\sigma_{4}}\!\!\!\!\!\!\!\!\!\!\!\mathrm{d}\sigma\mathrm{d}\sigma' \, \epsilon(\sigma'-\sigma)B_{\tau}{}^{C}(\sigma)B_{\tau}{}^{E}(\sigma')
\\
=&\tfrac{m_{2}}{4}f^{AB}{}_{D}f_{CE}{}^{D}\int_{-\sigma_{3}}^{+\sigma_{4}}\!\!\!\!\!\!\!\!\!\!\!\mathrm{d}\sigma\mathrm{d}\sigma' \, \epsilon(\sigma'-\sigma)B_{\tau}{}^{C}(\sigma)B_{\tau}{}^{E}(\sigma') \,\, .
\end{aligned}
\end{equation}
In the first step we split the integral into two equal pieces and relabelled the integration variables and the indices on the structure constants of the second piece. This produces a minus sign, since by definition $\epsilon(\sigma'-\sigma):=\theta(\sigma'-\sigma)-\theta(\sigma-\sigma')=-\epsilon(\sigma-\sigma')$. In the second step we used the Jacobi identity \eqref{def-Jacobi-components}, arriving at a central structure \eqref{level1-bracket-tensorial-structures-def}.

\paragraph{Contributions to $V^{AB}$.}  This term has been defined as $V^{AB}:=V_{1}^{AB}+V_{2}^{AB}$, but $V_{1}^{AB}$ and $V_{2}^{AB}$ in \eqref{Q1-Q1-bracket-explicit-contributions-list} are related by $V_{2}^{AB}=-V_{1}^{BA}$ via antisymmetry of the Poisson brackets \eqref{def-PB-antisymmetry} and relabelling of the integration variables $\sigma \leftrightarrow \sigma'$, such that one only needs to explicitly compute either of them. Proceeding with the second, one finds
\begin{align}\label{V2AB-manipulations}
V_{2}^{AB}=&\tfrac{1}{2}f_{CD}{}^{A}\int_{-\sigma_{5}}^{+\sigma_{6}}\!\!\!\!\!\!\!\!\mathrm{d}\sigma' \int_{-\sigma_{3}}^{+\sigma_{4}}\!\!\!\!\!\!\!\!\mathrm{d}\sigma\mathrm{d}\sigma'' \, \theta(\sigma\!-\!\sigma'')\{B_{\tau}{}^{C}(\sigma)B_{\tau}{}^{D}(\sigma''),A_{\sigma}{}^{B}(\sigma')\}
\\
=& \tfrac{1}{2}f_{CD}{}^{A}\int_{-\sigma_{5}}^{+\sigma_{6}}\!\!\!\!\!\!\!\!\mathrm{d}\sigma' \int_{-\sigma_{3}}^{+\sigma_{4}}\!\!\!\!\!\!\!\!\mathrm{d}\sigma\mathrm{d}\sigma'' \, \Bigl(\theta(\sigma\!-\!\sigma'')\!-\!\theta(\sigma''\!-\!\sigma)\Bigr)B_{\tau}{}^{D}(\sigma'')\{B_{\tau}{}^{C}(\sigma),A_{\sigma}{}^{B}(\sigma')\}
\notag \\
=& \!-\!\tfrac{m_{3}}{2}f_{D}{}^{AC}f^{B}{}_{CE}\!\!\!\int_{-\sigma_{5}}^{+\sigma_{6}}\!\!\!\!\!\!\!\!\!\!\!\mathrm{d}\sigma' \!\!\int_{-\sigma_{3}}^{+\sigma_{4}}\!\!\!\!\!\!\!\!\!\!\!\mathrm{d}\sigma\mathrm{d}\sigma'' \, \Bigl(\!\theta(\sigma\!-\!\sigma'')\!-\!\theta(\sigma''\!-\!\sigma)\!\Bigr)B_{\tau}{}^{D}(\sigma'')A_{\sigma}{}^{E}(\sigma)\delta(\sigma\!-\!\sigma')
\notag \\
& \!+\!\tfrac{m_{4}}{2}\!\!\!\int_{-\sigma_{5}}^{+\sigma_{6}}\!\!\!\!\!\!\!\!\!\!\!\mathrm{d}\sigma' \!\! \int_{-\sigma_{3}}^{+\sigma_{4}}\!\!\!\!\!\!\!\!\!\!\!\mathrm{d}\sigma\mathrm{d}\sigma'' \! \Bigl(\!\theta(\sigma\!-\!\sigma'')\!-\!\theta(\sigma''\!-\!\sigma)\!\Bigr)\!B_{\tau}{}^{D}(\sigma'')f_{D}{}^{AE}\gamma_{EC}C^{CB}\!(\sigma')\partial_{\sigma}\delta(\sigma\!-\!\sigma') \,  ,
\notag
\end{align}
where in the first step we used the Leibniz rule \eqref{def-PB-Leibniz} on the Poisson bracket and recombined the two resulting terms via antisymmetry of the structure constants and relabelling of the integration variables $\sigma\leftrightarrow\sigma''$. In the second step we used the brackets \eqref{Yangian-general-brackets} and slightly rearranged the order of the indices. At this point it is convenient to re-introduce $\epsilon(\sigma-\sigma'')$, defined in \eqref{epsilon-def}, such that after using $V_{2}{}^{AB}=-V_{1}{}^{BA}$ one obtains
\begin{align}\label{VAB-result-intermediate}
V^{AB}\!\!=& \!-\!\tfrac{m_{3}}{2}(f_{D}{}^{AC}f^{B}{}_{CE}\!-\!f_{D}{}^{BC}f^{A}{}_{CE})\!\!\!\int_{-\sigma_{5}}^{+\sigma_{6}}\!\!\!\!\!\!\!\!\!\!\mathrm{d}\sigma' \!\!\! \int_{-\sigma_{3}}^{+\sigma_{4}}\!\!\!\!\!\!\!\!\!\!\mathrm{d}\sigma\mathrm{d}\sigma''  \epsilon(\sigma\!-\!\sigma'')B_{\tau}{}^{D}\!(\sigma'')A_{\sigma}{}^{E}\!(\sigma)\delta(\sigma\!-\!\sigma')
\notag \\
& \!+\!\tfrac{m_{4}}{2}\!\!\!\int_{-\sigma_{5}}^{+\sigma_{6}}\!\!\!\!\!\!\!\!\!\!\mathrm{d}\sigma' \!\!\! \int_{-\sigma_{3}}^{+\sigma_{4}}\!\!\!\!\!\!\!\!\!\!\mathrm{d}\sigma\mathrm{d}\sigma''   \epsilon(\sigma\!-\!\sigma'')B_{\tau}{}^{D}(\sigma'')\partial_{\sigma}\delta(\sigma\!-\!\sigma') \times
\notag \\
& \qquad \qquad \qquad \qquad\qquad \qquad \times \Bigl(\!f_{D}{}^{AE}\gamma_{EC}C^{CB}(\sigma')\!-\!f_{D}{}^{BE}\gamma_{EC}C^{CA}(\sigma')\!\Bigr) \, .
\end{align}
Integrating by parts in the second line of \eqref{VAB-result-intermediate}, one finally recovers the contributions in \eqref{VAB-contributions}, where $\mathcal{V}_{1}^{AB}$ is obtained after using the properties in \eqref{Heaviside-definition}, which imply that
\begin{equation}
\partial_{\sigma}\epsilon(\sigma-\sigma'')=\partial_{\sigma}\bigl( \theta(\sigma-\sigma'')-\theta(\sigma''-\sigma)\bigr)=\partial_{\sigma}\bigl(2\theta(\sigma-\sigma'')-1 \bigr)=2\delta(\sigma-\sigma'') \,\, .
\end{equation}

\paragraph{Vanishing of $\mathcal{V}_{2}^{AB}$.} 
We first repeat the definition \eqref{VAB-contributions} for convenience:
\begin{equation}\label{calV2-appendix}
\begin{aligned}
\mathcal{V}_{2}^{AB}:=&+\tfrac{m_{4}}{2} \!\! \int_{-\sigma_{5}}^{+\sigma_{6}}\!\!\!\!\!\!\!\!\!\!\mathrm{d}\sigma' \!\!\! \int_{-\sigma_{3}}^{+\sigma_{4}}\!\!\!\!\!\!\!\!\!\!\mathrm{d}\sigma\mathrm{d}\sigma'' \,\,  \partial_{\sigma}\Bigl(\!\epsilon(\sigma\!-\!\sigma'')\delta(\sigma\!-\!\sigma')\!\Bigr) \times
\\
& \qquad \qquad \qquad \qquad \qquad \times B_{\tau}{}^{D}\!(\sigma'')\Bigl(\!f_{D}{}^{AC}C_{C}{}^{B}\!(\sigma')\!-\!f_{D}{}^{BC}C_{C}{}^{A}\!(\sigma')\!\Bigr) \, .
\end{aligned}
\end{equation}
This contribution contains the boundary integral
\begin{equation}
\begin{aligned}
\int_{-\sigma_{3}}^{+\sigma_{4}}\!\!\!\!\!\!\!\!\!\!\mathrm{d}\sigma \partial_{\sigma}\Bigl(\!\epsilon(\sigma\!-\!\sigma'')\delta(\sigma\!-\!\sigma')\!\Bigr)\!=\!
 \Bigl(\!\epsilon(\sigma_{4}\!-\!\sigma'')\delta(\sigma_{4}\!-\!\sigma')-\epsilon(-\sigma_{3}\!-\!\sigma'')\delta(-\sigma_{3}\!-\!\sigma')\!\Bigr) \, ,
\end{aligned}
\end{equation}
such that the two resulting terms are respectively non-vanishing for $\sigma'=\sigma_{4}$ and $\sigma'=-\sigma_{3}$. Since $\sigma'\in[-\sigma_{5},\sigma_{6}]$, these terms can only be non-vanishing if we impose some restriction on the boundaries, namely $\sigma_{4}<\sigma_{6}$ and $\sigma_{3}<\sigma_{5}$. Hence the two $\delta$-functions are respectively replaced by Heaviside functions $\theta(\sigma_{6}-\sigma_{4})$ and $\theta(\sigma_{5}-\sigma_{3})$, similarly to what was observed in the evaluation of $X_{1}^{AB}$ in \eqref{vanishing-X1AB}. At this point one is left with a single integral over $\sigma''$, which after using the relations in \eqref{Heaviside-definition} to rewrite
\begin{equation}
\epsilon(\sigma_{4}-\sigma'')=1-2\theta(\sigma''-\sigma_{4}) 
\qquad \text{and} \qquad 
\epsilon(-\sigma_{3}-\sigma'')=2\theta(-\sigma_{3}-\sigma'')-1 \,\, ,
\end{equation}
takes the following expression: 
\begin{align}
\mathcal{V}_{2}^{AB}\!\!\!=\!\!\!\int_{-\sigma_{3}}^{+\sigma_{4}}\!\!\!\!\!\!\!\!\!\!\mathrm{d}\sigma'' \Bigl\{&\!\!\!+\theta(\sigma_{6}\!-\!\sigma_{4})\Bigl(1\!-\!2\theta(\sigma''\!-\!\sigma_{4}) \Bigr) B_{\tau}{}^{D}(\sigma'')\Bigl(\!f_{D}{}^{AC}C_{C}{}^{B}(\sigma_{4})\!-\!f_{D}{}^{BC}C_{C}{}^{A}(\sigma_{4})\!\Bigr)
\\
&\!\!\!-\theta(\sigma_{5}\!-\!\sigma_{3})\Bigl(2\theta(-\sigma_{3}\!-\!\sigma'')\!-\!1 \Bigr) B_{\tau}{}^{D}(\sigma'')\Bigl(\!f_{D}{}^{AC}C_{C}{}^{B}(-\sigma_{3})\!-\!f_{D}{}^{BC}C_{C}{}^{A}(-\sigma_{3})\!\Bigr) \!\! \Bigr\} \, .
\notag
\end{align}
Clearly, the two Heaviside functions $\theta(\sigma''-\sigma_{4})$ and $\theta(-\sigma_{3}-\sigma'')$ cannot contribute to the above integral, since by construction $\sigma''\in [-\sigma_{3},+\sigma_{4}]$ and hence neither $\sigma''>\sigma_{4}$ nor $\sigma''<-\sigma_{3}$ can ever hold. This finally leads to the expression
\begin{align}\label{VAB-first-line-final-result}
\mathcal{V}_{2}^{AB}=\!\!\int_{-\sigma_{3}}^{+\sigma_{4}}\!\!\!\!\!\!\!\!\!\!\mathrm{d}\sigma'' \Bigl\{&\!\!+\theta(\sigma_{6}\!-\!\sigma_{4}) B_{\tau}{}^{D}(\sigma'')\Bigl(\!f_{D}{}^{AC}C_{C}{}^{B}(\sigma_{4})\!-\!f_{D}{}^{BC}C_{C}{}^{A}(\sigma_{4})\!\Bigr)
\\
&\!\!\!+\theta(\sigma_{5}\!-\!\sigma_{3}) B_{\tau}{}^{D}(\sigma'')\Bigl(\!f_{D}{}^{AC}C_{C}{}^{B}(-\sigma_{3})\!-\!f_{D}{}^{BC}C_{C}{}^{A}(-\sigma_{3})\!\Bigr) \! \Bigr\} \, ,
\notag
\end{align}
which is of the form (5.31) in \cite{Klose:2016qfv} and generically non-vanishing even after contraction with the structure constants and antisymmetrisation. Getting rid of \eqref{VAB-first-line-final-result} thus requires imposing that the boundaries should satisfy $\sigma_{6}<\sigma_{4}$ and $\sigma_{5}<\sigma_{3}$, recovering \eqref{conditions-on-boundaries-summary}.

\section{Details about the Poisson brackets}\label{appendix:C}
In this appendix we explicitly compute the Poisson brackets, for various classes of sigma models, which are necessary to highlight their classical Yangian structure. We will use the same notation as in \cite{Bielli:2024ach}, defining the Poisson bracket as
\begin{equation}\label{def-PB}
\{ A(\sigma),B(\sigma') \} := \int\mathrm{d}\sigma'' \, \Bigl( \frac{\partial A(\sigma)}{\partial Z^{\rho}(\sigma'')} \frac{\partial B(\sigma')}{\partial \pi_{\rho}(\sigma'')} - \frac{\partial B(\sigma')}{\partial Z^{\rho}(\sigma'')} \frac{\partial A(\sigma)}{\partial \pi_{\rho}(\sigma'')}\Bigr) \,\, ,
\end{equation}
with $Z^{\rho}(\tau,\sigma)$ coordinates on the background and conjugate momentum defined as
\begin{equation}\label{def-canonical-momentum}
\pi_{\rho}(\tau,\sigma):=\frac{\partial \mathcal{L}}{\partial (\partial_{\tau}Z^{\rho})} \,\, .
\end{equation}
By construction, the Poisson bracket is antisymmetric,
\begin{equation}\label{def-PB-antisymmetry}
\{ A(\sigma),B(\sigma') \} = - \{ B(\sigma'),A(\sigma) \} \, ,
\end{equation}
and satisfies the Leibniz rule
\begin{equation}\label{def-PB-Leibniz}
\begin{aligned}
\{ A(\sigma)C(\sigma),B(\sigma')D(\sigma') \}&=A(\sigma)B(\sigma') \{ C(\sigma),D(\sigma') \} + A(\sigma)D(\sigma') \{ C(\sigma),B(\sigma') \}+
\\
& +C(\sigma)B(\sigma') \{ A(\sigma),D(\sigma') \}+C(\sigma)D(\sigma') \{ A(\sigma),B(\sigma') \} \,\, .
\end{aligned}
\end{equation}
We will also need to use the following identities involving $\delta$-functions:
\begin{equation}\label{def-delta-func-identity-1}
\frac{\delta f(x)}{\delta f(y)}=\delta(x-y) \, ,
\qquad \qquad
\frac{\delta(\partial_{x}f(x))}{\delta f(y)}=-\partial_{y}\delta(y-x) \, ,
\end{equation}
\begin{equation}\label{def-delta-func-identity-2}
\delta(x-y)=\delta(y-x) \, ,\qquad \qquad
\partial_{x}\delta(x-y)=-\partial_{y}\delta(y-x) \, ,
\end{equation}
\begin{equation}\label{def-delta-func-identity-3}
f(y)\partial_{x}\delta(x-y)=f(x)\partial_{x}\delta(x-y)+f'(x)\delta(x-y) \,\, ,
\end{equation}
and the following conventions to convert between lightcone and $(\tau,\sigma)$ coordinates:
\begin{equation}\label{def-lightcone-to-tau-sigma}
V_{+}=V_{\tau}+V_{\sigma} \qquad V_{-}=V_{\tau}-V_{\sigma} \qquad \text{and} \qquad V_{\tau}=\tfrac{1}{2}(V_{+}+V_{-}) \qquad V_{\sigma}=\tfrac{1}{2}(V_{+}-V_{-}) \,\, .
\end{equation}

\subsection{Symmetric-space sigma models}\label{appendix:C-SSSM}
\paragraph{Calculation of $\{ L_{\tau}{}^{A}(\sigma),L_{\tau}^{B}(\sigma') \}$.}
Using the relation \eqref{SSSM-Jtau-momentum} to write $L_{\tau}^{A}$ in terms of momenta and the definition \eqref{def-PB} of Poisson brackets one obtains
\begin{align}\label{SSSM-Jtau-Jtau-bracket-starting-point}
&\{ L_{\tau}{}^{A}(\sigma) ,L_{\tau}^{B}(\sigma') \}=\gamma^{AC}\gamma^{BD}\{ \pi_{\mu}(\sigma)L_{C}{}^{\mu}(\sigma),\pi_{\nu}(\sigma')L_{D}{}^{\nu}(\sigma') \} =
\notag \\
& = \gamma^{AC}\gamma^{BD} \int \mathrm{d}\sigma'' \, \Bigl( \pi_{\mu} L_{D}{}^{\rho}\, \partial_{\rho}L_{C}{}^{\mu} \delta(\sigma-\sigma'')\delta(\sigma'-\sigma'')- \pi_{\mu} L_{C}{}^{\rho}\, \partial_{\rho}L_{D}{}^{\mu} \delta(\sigma'-\sigma'')\delta(\sigma-\sigma'')\Bigr)
\notag \\
& =-\tfrac{1}{4}\gamma^{AC}\gamma^{BD} \gamma_{FE}L_{\tau}{}^{F}L_{\mu}{}^{E}\bigl(L_{D}{}^{\rho}\, \partial_{\rho}L_{C}{}^{\mu}-L_{C}{}^{\rho}\, \partial_{\rho}L_{D}{}^{\mu}\bigr)\delta(\sigma-\sigma') \,\, ,
\end{align}
where in the last step we used the third relation in \eqref{SSSM-J-vielbeine-and-momentum(J)} to express the momenta in terms of $L$. Bringing $L_{\mu}{}^{E}$ into the parentheses and exploiting \eqref{SSSM-useful-relation} one then finds
\begin{equation}\label{intermediate-1}
\begin{aligned}
\{ L_{\tau}{}^{A}(\sigma) ,L_{\tau}^{B}(\sigma') \} =&+  \tfrac{1}{4}\gamma^{AC}\gamma^{BD} \gamma_{FE}L_{\tau}{}^{F}L_{D}{}^{\mu}L_{C}{}^{\nu}(\partial_{\mu}L_{\nu}{}^{E}-\partial_{\nu}L_{\mu}{}^{E})\delta(\sigma-\sigma')+
\\
& -\gamma^{AC}\gamma^{BD} \gamma_{FE}L_{\tau}{}^{F}(L_{D}{}^{\rho}\partial_{\rho}K_{C}{}^{E}-L_{C}{}^{\rho}\partial_{\rho}K_{D}{}^{E})\delta(\sigma-\sigma') \,\, .
\end{aligned}
\end{equation}
At this point, on the first line of \eqref{intermediate-1} one can use the off-shell flatness of $J$ in \eqref{SSSM-J-flat} and the second relation in \eqref{SSSM-J-vielbeine-and-momentum(J)} twice, while on the second line both terms involving derivatives of $K$ can be exchanged for contractions between $K,L$ and the structure constants by exploiting \eqref{J-K-differential-relation-components}, thus arriving at 
\begin{align}\label{intermediate-2}
\{ L_{\tau}{}^{A}(\sigma) ,L_{\tau}^{B}(\sigma') \} =&-4\gamma^{AC}\gamma^{BD}\gamma_{FE}L_{\tau}{}^{F}K_{D}{}^{G}K_{C}{}^{H}f_{GH}{}^{E}\delta(\sigma-\sigma')+
\\
& -\tfrac{1}{2}\gamma^{AC}\gamma^{BD}\gamma_{FE}L_{\tau}{}^{F}L_{D}{}^{\rho}L_{\rho}{}^{G}(f_{GC}{}^{H}K_{H}{}^{E}-K_{C}{}^{H}f_{GH}{}^{E})\delta(\sigma-\sigma')+
\notag \\
& +\tfrac{1}{2}\gamma^{AC}\gamma^{BD}\gamma_{FE}L_{\tau}{}^{F}L_{C}{}^{\rho}L_{\rho}{}^{G}(f_{GD}{}^{H}K_{H}{}^{E}-K_{D}{}^{H}f_{GH}{}^{E})\delta(\sigma-\sigma') \,\, .
\notag
\end{align}
Using $L_{A}{}^{\rho}L_{\rho}^{B}=4K_{A}{}^{B}$ on the last two lines of \eqref{intermediate-2}, one can recognise that the second term on each of these two lines sum up, precisely cancelling the first line and leaving 
\begin{equation}\label{intermediate-3}
\{ L_{\tau}{}^{A}(\sigma) ,L_{\tau}^{B}(\sigma') \} = -2 \gamma^{AC}\gamma^{BD}\gamma_{FE}L_{\tau}{}^{F}(K_{D}{}^{G}f_{GC}{}^{H}-K_{C}{}^{G}f_{GD}{}^{H})K_{H}{}^{E}\delta(\sigma-\sigma') \,\, .
\end{equation}
Using now that $K_{AB}$ is symmetric and $L$ is an eigenvector of $K$ \eqref{J-eigenvector-components}, one finds
\begin{equation}
\gamma_{FE}L_{\tau}{}^{F}K_{H}{}^{E}=L_{\tau}{}^{F}K_{HF}=L_{\tau}{}^{F}K_{FH}=L_{\tau}{}^{F}K_{F}{}^{E}\gamma_{EH}=L_{\tau}{}^{E}\gamma_{EH} \,\, ,
\end{equation}
which upon substitution into \eqref{intermediate-3} leads to the desired relation
\begin{equation}\label{SSSM-Jtau-Jtau-bracket-last-steps}
\begin{aligned}
\{ L_{\tau}{}^{A}(\sigma) ,L_{\tau}^{B}(\sigma') \} =&  -2 \gamma^{AC}\gamma^{BD}L_{\tau}{}^{E}\gamma_{EH}(K_{D}{}^{G}f_{GC}{}^{H}-K_{C}{}^{G}f_{GD}{}^{H})\delta(\sigma-\sigma') 
\\
=& -2 \gamma^{AC}\gamma^{BD}L_{\tau}{}^{E}(f_{ED}{}^{G}K_{GC}-f_{EC}{}^{G}K_{GD})\delta(\sigma-\sigma')
\\
=& -2L_{\tau}{}^{E}(f_{E}{}^{BG}K_{G}{}^{A}-f_{E}{}^{AG}K_{G}{}^{B})\delta(\sigma-\sigma')
\\
=&+2f^{AB}{}_{C}L_{\tau}{}^{C}\delta(\sigma-\sigma') \,\, ,
\end{aligned}
\end{equation}
after rearranging indices in the first step, using symmetry of $K_{AB}$ and antisymmetry of $f_{ABC}$, and exploiting \eqref{J-K-algebraic-relation-components} in the last step.

\paragraph{Calculation of $\{ L_{\tau}{}^{A}(\sigma),L_{\sigma}^{B}(\sigma') \}$.} We begin by again using the relation \eqref{SSSM-Jtau-momentum} for $L_{\tau}{}^{A}$ and exploiting that $L_{\sigma}{}^{B}$ is independent of the momentum, so that only the second term in the general Poisson bracket \eqref{def-PB} survives:
\begin{align}\label{SSSM-Jtau-Jsigma-bracket-first-step}
&\!\!\!\!\!\{ L_{\tau}{}^{A}(\sigma),L_{\sigma}^{B}(\sigma') \} = -\gamma^{AC} \{ \pi_{\mu}(\sigma)L_{C}{}^{\mu}(\sigma),L_{\sigma}{}^{B}(\sigma') \}=\gamma^{AC}\!\!\!\!\int \mathrm{d}\sigma'' \, L_{C}{}^{\rho}(\sigma)\delta(\sigma-\sigma'') \frac{\partial L_{\sigma}{}^{B}(\sigma')}{\partial Z^{\rho}(\sigma'')}
\notag \\
&\hspace{-15pt} = \gamma^{AC}\!\!\!\! \int \!\! \mathrm{d}\sigma'' L_{C}{}^{\rho}(\sigma)\delta(\sigma \!-\!\sigma'') \Biggl( \!\frac{\partial (\partial_{\sigma'} Z^{\mu}(\sigma'))}{\partial Z^{\rho}(\sigma'')} \, L_{\mu}{}^{B}(\sigma')\!+\!\partial_{\sigma'}Z^{\mu}(\sigma')\frac{\partial L_{\mu}{}^{B}(\sigma')}{\partial Z^{\rho}(\sigma'')}\delta(\sigma'\!-\!\sigma'') \!\!\Biggr) \! .
\end{align}
The second line is obtained by using $L_{\sigma}{}^{B}(\sigma')=\partial_{\sigma'}Z^{\mu}(\sigma') L_{\mu}{}^{B}(\sigma')$. Using then the second relation in \eqref{def-delta-func-identity-1} on the first term in the bracket, and successively the identity \eqref{def-delta-func-identity-3} on the resulting term $L_{\rho}{}^{B}(\sigma')\partial_{\sigma''} \delta(\sigma''-\sigma')$ we arrive at
\begin{align}
&\{ L_{\tau}{}^{A}(\sigma),L_{\sigma}^{B}(\sigma') \} =\gamma^{AC}\int \mathrm{d}\sigma'' \, L_{C}{}^{\rho}(\sigma)\delta(\sigma-\sigma'')
\\
& \Biggl( -L_{\rho}{}^{B}(\sigma'')\partial_{\sigma''}\delta(\sigma''-\sigma')-\delta(\sigma''-\sigma')\partial_{\sigma''}L_{\rho}{}^{B}(\sigma'')+\partial_{\sigma'}Z^{\mu}(\sigma')\frac{\partial L_{\mu}{}^{B}(\sigma')}{\partial Z^{\rho}(\sigma'')}\delta(\sigma'-\sigma'') \Biggr) \,\, ,
\notag
\end{align}
which using $\partial_{\sigma''}L_{\rho}{}^{B}(\sigma'')=\partial_{\sigma''}Z^{\mu}(\sigma'')\partial_{\mu}L_{\rho}{}^{B}(\sigma'')$ and $\delta(\sigma''-\sigma')=\delta(\sigma'-\sigma'')$ leads to
\begin{align}\label{intermediate-4}
\{ L_{\tau}{}^{A}(\sigma),L_{\sigma}^{B}(\sigma') \} =& \gamma^{AC}L_{C}{}^{\rho}(\sigma) \partial_{\sigma}Z^{\mu}(\sigma)(\partial_{\rho}L_{\mu}{}^{B}\!-\!\partial_{\mu}L_{\rho}{}^{B})\delta(\sigma\!-\!\sigma')-\gamma^{AC}L_{C}{}^{\rho}L_{\rho}{}^{B}\partial_{\sigma}\delta(\sigma\!-\!\sigma')
\notag\\
=&-4\gamma^{AC} K_{C}{}^{G}\partial_{\sigma}Z^{\mu}L_{\mu}{}^{H}f_{GH}{}^{B}\delta(\sigma\!-\!\sigma')-4K^{AB}(\sigma)\partial_{\sigma}\delta(\sigma\!-\!\sigma') \,\, ,
\end{align}
where the second line is obtained by using the off-shell flatness of $L$ \eqref{SSSM-J-flat} in the first term and $L_{C}{}^{\rho}(\sigma)L_{\rho}{}^{B}(\sigma)=4K_{C}{}^{B}(\sigma)$ in the second term. To proceed we note that the first term in \eqref{intermediate-4} can be rearranged as
\begin{equation}
\begin{aligned}
4\gamma^{AC}K_{C}{}^{G}L_{\mu}{}^{H}f_{GH}{}^{B}=& 2\gamma^{AC}K_{C}{}^{G}L_{\mu}{}^{H}f_{GH}{}^{B}+2\gamma^{AC}K_{C}{}^{G}L_{\mu}{}^{H}f_{GH}{}^{B}
\\
=&2L_{\mu}{}^{H}f_{H}{}^{BG}K_{G}{}^{A}+2\gamma^{AC}L_{\mu}{}^{E}K_{C}{}^{G}K_{E}{}^{H}f_{GH}{}^{B}
\\
=& 2L_{\mu}{}^{H}f_{H}{}^{BG}K_{G}{}^{A}+2\gamma^{AC}L_{\mu}{}^{E}(K_{E}{}^{H}f_{CH}{}^{B}-K_{E}{}^{H}f_{CH}{}^{G}K_{G}{}^{B})
\\
=&2L_{\mu}{}^{H}(f_{H}{}^{BG}K_{G}{}^{A}+f_{H}{}^{AG}K_{G}{}^{B})-2f^{AB}{}_{C}L_{\mu}{}^{C}
\\
=& 4\partial_{\mu} K^{AB} -2f^{AB}{}_{C}L_{\mu}{}^{C} \,\, ,
\end{aligned}
\end{equation}
where in the first step we split the term into two equal contributions, in the second step we rearranged the first contribution and exploited \eqref{J-eigenvector-components} on the second one, in the third step we used the identity \eqref{important-K-relation}, in the fourth we re-used \eqref{J-eigenvector-components} and collected terms, while in the last we exploited the differential relation \eqref{J-K-differential-relation-components} between $K$ and $L$. This leads to
\begin{equation}
\begin{aligned}
\{ L_{\tau}{}^{A}(\sigma),L_{\sigma}^{B}(\sigma') \}=&+2f^{AB}{}_{C}\partial_{\sigma}Z^{\mu}(\sigma)L_{\mu}{}^{C}\delta(\sigma-\sigma')
\\
&-4\partial_{\sigma}Z^{\mu}(\sigma) \partial_{\mu}K^{AB}(\sigma)\delta(\sigma-\sigma')-4K^{AB}(\sigma)\partial_{\sigma}\delta(\sigma-\sigma')
\\
=& +2f^{AB}{}_{C}L_{\sigma}{}^{C}\delta(\sigma-\sigma')-4K^{AB}(\sigma')\partial_{\sigma}\delta(\sigma-\sigma') \,\, ,
\end{aligned}
\end{equation}
where in the last step we used $\partial_{\sigma}Z^{\mu}L_{\mu}{}^{C}=L_{\sigma}{}^{C}$, $\partial_{\sigma}Z^{\mu}\partial_{\mu}K^{AB}=\partial_{\sigma}K^{AB}$ and finally exploited the identity \eqref{def-delta-func-identity-3} to combine the two contributions involving $K$.

\paragraph{Calculation of $\{ L_{\tau}{}^{A}(\sigma),K^{BC}(\sigma') \}$.} We start by using \eqref{SSSM-Jtau-momentum} and since $K$ is independent of $\pi$, only the second term in the Poisson bracket definition \eqref{def-PB} survives,
\begin{equation}\label{SSSM-Jtau-K-bracket}
\begin{aligned}
\{ L_{\tau}{}^{A}(\sigma),K^{BC}(\sigma') \}=&-\gamma^{AD} \{ \pi_{\mu}(\sigma)L_{D}{}^{\mu}(\sigma),K^{BC}(\sigma') \}
\\
=&\gamma^{DA}\int \mathrm{d}\sigma'' \, \delta(\sigma'-\sigma'')\partial_{\rho}K^{BC}(\sigma') \, L_{D}{}^{\rho}(\sigma)\delta(\sigma-\sigma'')
\\
=&\gamma^{AD}L_{D}{}^{\rho}\partial_{\rho}K^{BC} \delta(\sigma-\sigma')
\\
=&\tfrac{1}{2}\gamma^{AD}L_{D}{}^{\rho}L_{\rho}{}^{E}(f_{E}{}^{BF}K_{F}{}^{C}+f_{E}{}^{CF}K_{F}{}^{B})\delta(\sigma-\sigma')
\\
=& 2\gamma^{AD}K_{D}{}^{E}(f_{E}{}^{BF}K_{F}{}^{C}+f_{E}{}^{CF}K_{F}{}^{B})\delta(\sigma-\sigma') \,\, ,
\end{aligned}
\end{equation}
where in the third step we exploited \eqref{J-K-differential-relation-components} and in the last one $L_{D}{}^{\rho}L_{\rho}{}^{E}=K_{D}{}^{E}$. At this point we can slightly rearrange the indices so as to use the relation \eqref{important-K-relation} on both terms,
\begin{equation}
\begin{aligned}
\{ &L_{\tau}{}^{A}(\sigma),K^{BC}(\sigma') \}=
\\
=&-2\gamma^{AD}(\gamma^{CG}K_{D}{}^{E}K_{G}{}^{F}f_{EF}{}^{B}+\gamma^{BG}K_{D}{}^{E}K_{G}{}^{F}f_{EF}{}^{C})\delta(\sigma-\sigma')
\\
=& -2\gamma^{AD}(\gamma^{CG}f_{DG}{}^{B}-\gamma^{CG}f_{DG}{}^{H}K_{H}{}^{B}+\gamma^{BG}f_{DG}{}^{C}-\gamma^{BG}f_{DG}{}^{H}K_{H}{}^{C})\delta(\sigma-\sigma')
\\
=&+2\gamma^{AD}(f_{D}{}^{CH}K_{H}{}^{B}+f_{D}{}^{BH}K_{H}{}^{C})\delta(\sigma-\sigma')
\\
=&+2(f^{AB}{}_{D}K^{DC}+f^{AC}{}_{D}K^{DB})\delta(\sigma-\sigma') \,\, ,
\end{aligned}
\end{equation}
obtaining the desired result.

\subsection{AF-Symmetric-space sigma models}\label{appendix:C-AFSSSM}
\paragraph{Calculation of $\{ B_{\tau}{}^{A}(\sigma),B_{\tau}{}^{B}(\sigma') \}$.} Using the relation \eqref{AF-SSSM-Jtau-momentum} one finds
\begin{align}
\{ B_{\tau}{}^{A}(\sigma),B_{\tau}{}^{B}(\sigma') \} =& \gamma^{AC}\gamma^{BD}\{ \pi_{\mu}(\sigma)A_{C}{}^{\mu}(\sigma),\pi_{\nu}(\sigma')A_{D}{}^{\nu}(\sigma') \} 
\\
=& -\tfrac{1}{4}\gamma^{AC}\gamma^{BD} \gamma_{FE}B_{\tau}{}^{F}A_{\mu}{}^{E}\bigl(A_{D}{}^{\rho}\, \partial_{\rho}A_{C}{}^{\mu}-A_{C}{}^{\rho}\, \partial_{\rho}A_{D}{}^{\mu}\bigr)\delta(\sigma-\sigma') \,\, ,
\notag 
\end{align}
where we used that the starting point is the same as in the undeformed setting \eqref{SSSM-Jtau-Jtau-bracket-starting-point}, and obtained a $B_{\tau}$ term in the second line using \eqref{AF-SSSM-J-vielbeine-and-momentum(J)} to rewrite $\pi_{\mu}$, after evaluating the integral. Successive manipulations solely rely on off-shell properties of $A\equiv L$, which remain unchanged, and hence we immediately end-up with the deformed version of \eqref{intermediate-3},
\begin{equation}
\{ B_{\tau}{}^{A}(\sigma),B_{\tau}{}^{B}(\sigma') \}=-2 \gamma^{AC}\gamma^{BD}\gamma_{FE}B_{\tau}{}^{F}(K_{D}{}^{G}f_{GC}{}^{H}-K_{C}{}^{G}f_{GD}{}^{H})K_{H}{}^{E}\delta(\sigma-\sigma') \,\, .
\end{equation}
The final manipulations needed in the last few steps of the undeformed calculation \eqref{SSSM-Jtau-Jtau-bracket-last-steps} rely on the fact that $A\equiv L$ is an eigenvector of $K$, simple rearrangements and the identity \eqref{J-K-algebraic-relation-components}; the same steps can once again be performed straightforwardly after recalling that
\begin{equation}
B_{\tau}{}^{A}=-\pi_{\mu}G^{\mu\nu}A_{\nu}{}^{A} \,\, ,
\end{equation}
since this implies that any identity involving the Lie-algebra index $A$ of $A_{\nu}{}^{A}$ automatically transfers to $B_{\tau}{}^{A}$. From the definition \eqref{AF-SSSM-J-Jtilde-def} it is indeed also easy to recognise that $B$ is also an eigenvector of $K$, as is any object defined as $X\propto \mathrm{Ad}_{\mathfrak{g}}\circ \mathrm{P}_{\mathfrak{m}}(Y)$. Repeating the steps after \eqref{intermediate-3} one finally reaches the expected result:
\begin{equation}
\{ B_{\tau}{}^{A}(\sigma),B_{\tau}{}^{B}(\sigma') \} = 2f^{AB}{}_{C} \, B_{\tau}{}^{C}(\sigma)\delta(\sigma-\sigma') \,\, .
\end{equation}

\paragraph{Calculation of $\{ B_{\tau}{}^{A}(\sigma),A_{\sigma}{}^{B}(\sigma') \}$.} Again using the relation \eqref{AF-SSSM-Jtau-momentum} one finds
\begin{equation}
\{ B_{\tau}{}^{A}(\sigma),A_{\sigma}{}^{B}(\sigma') \}=-\gamma^{AC} \{ \pi_{\mu}(\sigma)A_{C}{}^{\mu}(\sigma),A_{\sigma}{}^{B}(\sigma') \} \,\, ,
\end{equation}
and also in this case the bracket has exactly the same initial form \eqref{SSSM-Jtau-Jsigma-bracket-first-step} as in the undeformed setting. Now, however, all the steps after \eqref{SSSM-Jtau-Jsigma-bracket-first-step} can be repeated verbatim, since they only involve $A\equiv L$, which is not modified by the deformation. We hence obtain
\begin{equation}
\{ B_{\tau}{}^{A}(\sigma),A_{\sigma}{}^{B}(\sigma') \}=2f^{AB}{}_{C} \, A_{\sigma}{}^{C}(\sigma)\delta(\sigma-\sigma')-4K^{AB}(\sigma') \, \partial_{\sigma}\delta(\sigma-\sigma') \,\, .
\end{equation}

\paragraph{Calculation of $\{ B_{\tau}{}^{A}(\sigma),K^{BC}(\sigma') \}$.} Also in this last case the relation \eqref{AF-SSSM-Jtau-momentum} allows us to recast the bracket in the same form as undeformed SSSM \eqref{SSSM-Jtau-K-bracket}. All steps follow, since once again they only involve the unmodified current $A\equiv L$. We immediately obtain
\begin{equation}
\{ B_{\tau}{}^{A}(\sigma),K^{BC}(\sigma') \} = 2\Bigl(f^{AB}{}_{D}K^{DC}(\sigma)+f^{AC}{}_{D}K^{DB}(\sigma)\Bigr)\delta(\sigma-\sigma') \,\,  .
\end{equation}

\subsection{Yang-Baxter sigma models}\label{appendix:C-YB}
\paragraph{Calculation of $\{ J_{\tau}{}^{A}(\sigma),j_{\sigma}{}^{B}(\sigma') \}$.} Thanks to the first relation in \eqref{Jtau-and-Jsigma-functions-of-momentum}, this Poisson bracket becomes exactly the same as $\{ j_{\tau}{}^{A}(\sigma),j_{\sigma}{}^{B}(\sigma') \}$ for the PCM or $\{ \mathfrak{J}_{\tau}{}^{A}(\sigma),j_{\sigma}{}^{B}(\sigma') \}$ for the AF-PCM, as can be seen for instance around equation (C.33) in \cite{Bielli:2024ach}. For this reason we just import the result, with no need to actually do any computation:
\begin{equation}\label{YB-Jtau-jsigma-bracket}
\{ J_{\tau}{}^{A}(\sigma),j_{\sigma}{}^{B}(\sigma') \} = f^{AB}{}_{C}j_{\sigma}{}^{C}\delta(\sigma-\sigma')-\gamma^{AB}\partial_{\sigma}\delta(\sigma-\sigma') \,\, .
\end{equation}

\paragraph{Calculation of $\{ J_{\tau}{}^{A}(\sigma),M_{C}{}^{B}(\sigma') \}$.} Using \eqref{Jtau-and-Jsigma-functions-of-momentum} and noting that $M$ is independent of $\pi_{\mu}$, only the second contribution to the Poisson bracket \eqref{def-PB} remains, leading to
\begin{equation}
\begin{aligned}
\{ J_{\tau}{}^{A}(\sigma),M_{C}{}^{B}(\sigma') \}=&-\gamma^{AD}\{ \pi_{\mu}(\sigma)j_{D}{}^{\mu}(\sigma),M_{C}{}^{B}(\sigma') \}=
\\
=&\gamma^{AD}\int \mathrm{d}\sigma'' \, j_{D}{}^{\rho}(\sigma) \delta(\sigma-\sigma'') \partial_{\rho}M_{C}{}^{B}(\sigma')\delta(\sigma'-\sigma'')
\\
=&\gamma^{AD}j_{D}{}^{\rho}\partial_{\rho}M_{C}{}^{B} \delta(\sigma-\sigma')
\\
=&\gamma^{AD}j_{D}{}^{\rho}j_{\rho}{}^{E}(M_{C}{}^{F}f_{FE}{}^{B}+f_{EC}{}^{F}M_{F}{}^{B})\delta(\sigma-\sigma'),
\end{aligned}
\end{equation}
where in the last step we used the relation \eqref{M-derivative-components}. Using now that $j_{D}{}^{\rho}j_{\rho}{}^{E}=\delta_{D}{}^{E}$ and slightly rearranging the indices one obtains the desired result:
\begin{equation}\label{YB-Jtau-M-bracket}
\{ J_{\tau}{}^{A}(\sigma),M_{C}{}^{B}(\sigma') \} = \bigl(f^{AB}{}_{D}M_{C}{}^{D}-f_{C}{}^{AD}M_{D}{}^{B}\bigr)\delta(\sigma-\sigma') \,\, . 
\end{equation}

\paragraph{Calculation of $\{ J_{\tau}{}^{A}(\sigma),J_{\tau}^{B}(\sigma') \}$.} Also in this case the relation \eqref{Jtau-and-Jsigma-functions-of-momentum} is sufficient to bring this Poisson bracket to exactly the same form as $\{ j_{\tau}{}^{A}(\sigma),j_{\tau}^{B}(\sigma') \}$ for the PCM or $\{ \mathfrak{J}_{\tau}{}^{A}(\sigma),\mathfrak{J}_{\tau}^{B}(\sigma') \}$ for the AF-PCM, again by repeating arguments around equation (C.23) in \cite{Bielli:2024ach}. For this reason we again simply import the result:
\begin{equation}\label{YB-Jtau-Jtau-bracket}
\{ J_{\tau}{}^{A}(\sigma),J_{\tau}^{B}(\sigma') \} = f^{AB}{}_{C} \, J_{\tau}{}^{C}(\sigma)\delta(\sigma-\sigma') \,\, .
\end{equation}

\paragraph{Calculation of $\{ J_{\tau}{}^{A}(\sigma),J_{\sigma}^{B}(\sigma') \}$.} For this bracket we first exploit the second relation in \eqref{Jtau-and-Jsigma-functions-of-momentum} and successively use the Leibniz rule \eqref{def-PB-Leibniz} to recast all contributions in terms of brackets that have already been computed,
\begin{align}
\{ J_{\tau}{}^{A}(\sigma)&,J_{\sigma}^{B}(\sigma') \}=\{ J_{\tau}{}^{A}(\sigma), j_{\sigma}{}^{B}(\sigma')-J_{\tau}{}^{C}(\sigma')M_{C}{}^{B}(\sigma') \}
\\
=& \{ J_{\tau}{}^{A}(\sigma),j_{\sigma}{}^{B}(\sigma') \} - \{J_{\tau}{}^{A}(\sigma),J_{\tau}{}^{C}(\sigma')\}M_{C}{}^{B}(\sigma') - \{J_{\tau}{}^{A}(\sigma),M_{C}{}^{B}(\sigma')\}J_{\tau}{}^{C}(\sigma')
\notag\\
=& +f^{AB}{}_{C}j_{\sigma}{}^{C}\delta(\sigma-\sigma')-\gamma^{AB}\partial_{\sigma}\delta(\sigma-\sigma')-f^{AC}{}_{D} \, J_{\tau}{}^{D}(\sigma)\delta(\sigma-\sigma')M_{C}{}^{B}(\sigma')
\notag\\
&-\bigl(f^{AB}{}_{D}M_{C}{}^{D}-f_{C}{}^{AD}M_{D}{}^{B}\bigr)\delta(\sigma-\sigma')J_{\tau}{}^{C}(\sigma')
\notag\\
=&+f^{AB}{}_{C}(j_{\sigma}{}^{C}-J_{\tau}{}^{D}M_{D}{}^{C})\delta(\sigma-\sigma')-\gamma^{AB}\partial_{\sigma}\delta(\sigma-\sigma')
\notag\\
&+J_{\tau}{}^{D}M_{C}{}^{B}(f_{D}{}^{AC}-f^{AC}{}_{D})\delta(\sigma-\sigma') \,\, 
\notag
\end{align}
where in the last step we collected similar structures after some relabelling. The last line clearly vanishes, while the one above can be rewritten in terms of $J_{\sigma}$ using again the second relation in \eqref{Jtau-and-Jsigma-functions-of-momentum}, thus obtaining
\begin{equation}\label{YB-Jtau-Jsigma-bracket}
\{ J_{\tau}{}^{A}(\sigma),J_{\sigma}^{B}(\sigma') \}=f^{AB}{}_{C}J_{\sigma}{}^{C}\delta(\sigma-\sigma')-\gamma^{AB}\partial_{\sigma}\delta(\sigma-\sigma') \,\, .
\end{equation}

\paragraph{Calculation of $\{ J_{\sigma}{}^{A}(\sigma),J_{\sigma}^{B}(\sigma') \}$.} This final bracket is slightly trickier than the previous ones, since we will need to carefully keep track of the $\sigma$ and $\sigma'$ dependencies of each quantity in order to be able to combine them in the correct way. We start by exploiting the second relation in \eqref{Jtau-and-Jsigma-functions-of-momentum}, the Leibniz rule \eqref{def-PB-Leibniz}, antisymmetry \eqref{def-PB-antisymmetry} and the vanishing of the last three brackets in \eqref{YB-brackets-summary} to obtain the expression
\begin{align}
\{ J_{\sigma}{}^{A}(\sigma)&,J_{\sigma}^{B}(\sigma') \} = \{ j_{\sigma}{}^{A}(\sigma)-J_{\tau}{}^{C}(\sigma)M_{C}{}^{A}(\sigma) \, , \, j_{\sigma}{}^{B}(\sigma')-J_{\tau}{}^{D}(\sigma')M_{D}{}^{B}(\sigma') \}
\\
=& -M_{C}{}^{A}(\sigma)\{J_{\tau}{}^{C}(\sigma),j_{\sigma}{}^{B}(\sigma')\}+M_{C}{}^{B}(\sigma')\{J_{\tau}{}^{C}(\sigma'),j_{\sigma}{}^{A}(\sigma)\}+
\notag\\
&+M_{C}{}^{A}(\sigma)M_{D}{}^{B}(\sigma')\{ J_{\tau}{}^{C}(\sigma),J_{\tau}{}^{D}(\sigma') \}+
\notag\\
&+J_{\tau}{}^{D}(\sigma')M_{C}{}^{A}(\sigma)\{J_{\tau}{}^{C}(\sigma),M_{D}{}^{B}(\sigma')\}-J_{\tau}{}^{D}(\sigma)M_{C}{}^{B}(\sigma')\{J_{\tau}{}^{C}(\sigma'),M_{D}{}^{A}(\sigma)\} \,\, .
\notag
\end{align}
At this point we can proceed by simply substituting the results from the other Poisson brackets, namely \eqref{YB-Jtau-jsigma-bracket}, \eqref{YB-Jtau-Jtau-bracket} and \eqref{YB-Jtau-M-bracket}, thus obtaining
\begin{align}\label{YB-intermediate-step}
\{ J_{\sigma}{}^{A}(\sigma),J_{\sigma}^{B}(\sigma') \} =& -M_{C}{}^{A}(\sigma)\Bigl( f^{CB}{}_{D}j_{\sigma}{}^{D}(\sigma)\delta(\sigma-\sigma')-\gamma^{CB}\partial_{\sigma}\delta(\sigma-\sigma')\Bigr)
\\
& +M_{C}{}^{B}(\sigma')\Bigl( f^{CA}{}_{D}j_{\sigma}{}^{D}(\sigma')\delta(\sigma'-\sigma)-\gamma^{CA}\partial_{\sigma'}\delta(\sigma'-\sigma) \Bigr)
\notag \\
&+ M_{C}{}^{A}(\sigma)M_{D}{}^{B}(\sigma')f^{CD}{}_{E}J_{\tau}{}^{E}(\sigma)\delta(\sigma-\sigma')
\notag \\
&+J_{\tau}{}^{D}(\sigma')M_{C}{}^{A}(\sigma) \Bigl( f^{CB}{}_{E}M_{D}{}^{E}(\sigma)\delta(\sigma\!-\!\sigma')-f_{D}{}^{CE}M_{E}{}^{B}(\sigma)\delta(\sigma\!-\!\sigma') \Bigr)
\notag \\
&-J_{\tau}{}^{D}(\sigma)M_{C}{}^{B}(\sigma') \Bigl( f^{CA}{}_{E}M_{D}{}^{E}(\sigma')\delta(\sigma'\!-\!\sigma)-f_{D}{}^{CE}M_{E}{}^{A}(\sigma')\delta(\sigma'\!-\!\sigma) \Bigr)
\notag  .
\end{align}
Given that $\delta(\sigma-\sigma')=\delta(\sigma'-\sigma)$, in most of the terms we can at this point simply forget about $\sigma'$ and set all dependences to $\sigma$. This is however not true for the last two terms on the first two lines, which involve derivatives of the $\delta$ function. Starting from these terms,
\begin{equation}
\begin{aligned}
&M^{BA}(\sigma)\partial_{\sigma}\delta(\sigma-\sigma')-M^{AB}(\sigma')\partial_{\sigma'}\delta(\sigma'-\sigma)
\\
&=-M^{AB}(\sigma)\partial_{\sigma}\delta(\sigma-\sigma')+M^{AB}(\sigma')\partial_{\sigma}\delta(\sigma-\sigma')
\\
&=\partial_{\sigma}M^{AB}(\sigma)\delta(\sigma-\sigma')
\\
&=\gamma^{AC}\partial_{\sigma}Z^{\mu}(\sigma)\partial_{\mu}M_{C}{}^{B}(\sigma)\delta(\sigma-\sigma')
\\
&=\gamma^{AC}\partial_{\sigma}Z^{\mu}(\sigma)j_{\mu}{}^{E}(\sigma)\Bigl( M_{C}{}^{D}(\sigma)f_{DE}{}^{B}+f_{EC}{}^{D}M_{D}{}^{B}(\sigma) \Bigr)\delta(\sigma-\sigma')
\\
&=j_{\sigma}{}^{E}(\sigma)\Bigl( M^{AD}(\sigma)f_{DE}{}^{B}+f_{E}{}^{AD}M_{D}{}^{B}(\sigma) \Bigr) \delta(\sigma-\sigma')
\\
&= -j_{\sigma}{}^{E}(\sigma)\Bigl( M_{C}{}^{B}(\sigma)f^{CA}{}_{E}-M_{C}{}^{A}(\sigma)f^{CB}{}_{E} \Bigr) \delta(\sigma-\sigma')\,\,  .
\end{aligned}
\end{equation}
In the first step we exploited antisymmetry of $M^{AB}$ on the first term and the second identity in \eqref{def-delta-func-identity-2} on the second term. In the second step we combined the two terms via the identity \eqref{def-delta-func-identity-3}, taken with $x=\sigma$ and $y=\sigma'$. In the third and fourth steps we used the chain rule and \eqref{M-derivative-components}. In the fifth step we again used  the chain rule in the reversed direction and in the final step we simply relabelled some indices and rearranged using antisymmetry of $M^{AB}$ and of the structure constants. At this point all terms can be treated as functions of $\sigma$ only and combining the first two and the last two lines of \eqref{YB-intermediate-step} in structures which are antisymmetric in $A\leftrightarrow B$ leads to
\begin{equation}\label{YB-intermediate-step2}
\begin{aligned}
\{ J_{\sigma}{}^{A}(\sigma),J_{\sigma}^{B}(\sigma') \}=& -j_{\sigma}{}^{E}\Bigl( M_{C}{}^{B}f^{CA}{}_{E}-M_{C}{}^{A}f^{CB}{}_{E} \Bigr) \delta(\sigma-\sigma')
\\
&+
j_{\sigma}{}^{D}\Bigl( M_{C}{}^{B}f^{CA}{}_{D} -M_{C}{}^{A}f^{CB}{}_{D}\Bigr)\delta(\sigma-\sigma')
\\
&+M_{C}{}^{A}M_{D}{}^{B}f^{CD}{}_{E}J_{\tau}{}^{E}\delta(\sigma-\sigma')
\\
&+J_{\tau}{}^{D}M_{D}{}^{E}\Bigl( M_{C}{}^{A}f^{CB}{}_{E}-M_{C}{}^{B}f^{CA}{}_{E} \Bigr)\delta(\sigma-\sigma')
\\
&+J_{\tau}{}^{D}f_{D}{}^{CE}\Bigl( M_{C}{}^{B}M_{E}{}^{A}-M_{C}{}^{A}M_{E}{}^{B} \Bigr)\delta(\sigma-\sigma') \,\, .
\end{aligned}
\end{equation}
After a quick index relabelling it is clear that the first two lines cancel each other, while the third line cancels the last term on the last line, leaving us with three terms only,
\begin{align}
\{ J_{\sigma}{}^{A}(\sigma)&,J_{\sigma}^{B}(\sigma') \}= 
\notag\\
=&J_{\tau}{}^{D} \Bigl( M_{D}{}^{E}M_{C}{}^{A}f^{CB}{}_{E}-M_{D}{}^{E}M_{C}{}^{B}f^{CA}{}_{E}+f_{D}{}^{CE}M_{C}{}^{B}M_{E}{}^{A}\Bigr)\delta(\sigma-\sigma')
\notag\\
=&J_{\tau}{}^{D} \Bigl( M_{C}{}^{B}M_{E}{}^{A}f^{CE}{}_{D}+M_{C}{}^{A}M_{ED}f^{CEB}-M_{C}{}^{B}M_{ED}f^{CEA}\Bigr)\delta(\sigma-\sigma')
\notag\\
=&J_{\tau}{}^{F}\gamma_{FE} \Bigl( M_{C}{}^{B}M_{D}{}^{A}f^{CDE}+M_{C}{}^{A}M_{D}{}^{E}f^{CDB}-M_{C}{}^{B}M_{D}{}^{E}f^{CDA}\Bigr)\delta(\sigma-\sigma')
\notag\\
=&J_{\tau}{}^{F}\gamma_{FE} \,\, \eta^2 c^2 f^{ABE} \delta(\sigma-\sigma') \,\, .
\end{align}
In the first step we simply copied the fourth line and first term of the fifth line in \eqref{YB-intermediate-step2}. In the second step we slightly relabelled some indices and used antisymmetry of $M_{AB}$ and the structure constants. In the third step we performed some further relabelling and raising of indices so as to be able to directly exploit \eqref{mCYBE-components} in the last step, obtaining
\begin{equation}
\{ J_{\sigma}{}^{A}(\sigma),J_{\sigma}^{B}(\sigma') \}= \eta^2 c^2f^{AB}{}_{C}J_{\tau}{}^{C}\delta(\sigma-\sigma') \,\, .
\end{equation}

\subsection{Non-Abelian T-dual sigma models}\label{appendix:C-TD}
\paragraph{Calculation of $\{ \partial_{\sigma}X^{A}(\sigma),\pi_{B}(\sigma') \}$.} Only the first term survives in the definition of the Poisson bracket \eqref{def-PB} and using the second identity in \eqref{def-delta-func-identity-2} one finds
\begin{equation}
\begin{aligned}
\{ \partial_{\sigma}X^{A}(\sigma),\pi_{B}(\sigma') \}&=\int \mathrm{d}\sigma'' \, \frac{\partial(\partial_{\sigma}X^{A}(\sigma))}{\partial X^{C}(\sigma'')}\frac{\partial \pi_{B}(\sigma')}{\partial\pi_{C}(\sigma'')}
\\
&= -\int \mathrm{d}\sigma'' \, \delta_{C}{}^{A}\partial_{\sigma''}\delta(\sigma''-\sigma) \, \delta_{B}{}^{C}\delta(\sigma'-\sigma'') \,\, ,
\end{aligned}
\end{equation}
hence obtaining the desired result
\begin{equation}\label{TD-partialX-pi-bracket}
\{ \partial_{\sigma}X^{A}(\sigma),\pi_{B}(\sigma') \}=-\delta_{B}{}^{A}\partial_{\sigma'}\delta(\sigma'-\sigma) \,\, .
\end{equation}

\paragraph{Calculation of $\{ M^{AB}(\sigma),\pi_{C}(\sigma') \}$.} Since $M$ is independent of the momenta, also in this case it is clear from the definition of Poisson bracket \eqref{def-PB} that the first term vanishes. Using the relation \eqref{TD-M-derivative-components} one then finds
\begin{equation}
\begin{aligned}
\{ M^{AB}(\sigma),\pi_{C}(\sigma') \}&=\int \mathrm{d}\sigma'' \, \frac{\partial M^{AB}(\sigma)}{\partial X^{D}(\sigma'')}\frac{\partial \pi_{C}(\sigma')
}{\partial\pi_{D}(\sigma'')}
\\
&=\int \mathrm{d}\sigma'' \, f_{D}{}^{AB}\delta(\sigma-\sigma'')\delta_{C}{}^{D}\delta(\sigma'-\sigma'') \, ,
\end{aligned}
\end{equation}
hence giving
\begin{equation}\label{TD-M-pi-bracket}
\{ M^{AB}(\sigma),\pi_{C}(\sigma') \}=f^{AB}{}_{C}\delta(\sigma-\sigma') \,\, .
\end{equation}

\paragraph{Calculation of $\{ L_{\tau}{}^{A}(\sigma),L_{\tau}{}^{B}(\sigma') \}$.} Substituting for $L_{\tau}$ using \eqref{TD-Jtau-and-Jsigma-functions-of-momentum}, invoking the Leibniz rule \eqref{def-PB-Leibniz} and the trivial vanishing of the last four brackets in \eqref{TD-brackets-summary}, we find
\begin{equation}
\begin{aligned}
\!\!\{ L_{\tau}{}^{A}(\sigma)&,L_{\tau}{}^{B}(\sigma') \} = \{\partial_{\sigma}X^{A}(\sigma)+\pi_{C}(\sigma)M^{CA}(\sigma) , \partial_{\sigma}X^{B}(\sigma')+\pi_{D}(\sigma')M^{DB}(\sigma') \}
\\
=& +M^{DB}(\sigma')\{ \partial_{\sigma}X^{A}(\sigma), \pi_{D}(\sigma') \}-M^{DA}(\sigma)\{ \partial_{\sigma}X^{B}(\sigma'), \pi_{D}(\sigma) \}+
\\&+M^{DB}(\sigma')\pi_{C}(\sigma)\{M^{CA}(\sigma),\pi_{D}(\sigma')\}-M^{DA}(\sigma)\pi_{C}(\sigma')\{M^{CB}(\sigma'),\pi_{D}(\sigma)\}  .
\end{aligned}
\end{equation}
At this point one can exploit the results of the brackets \eqref{TD-partialX-pi-bracket} and \eqref{TD-M-pi-bracket}, being careful in keeping track of the $\sigma$ and $\sigma'$ dependences, to find
\begin{align}
\{ L_{\tau}{}^{A}(\sigma), L_{\tau}{}^{B}(\sigma') \} =&-M^{AB}(\sigma')\partial_{\sigma'}\delta(\sigma'-\sigma)+M^{BA}(\sigma)\partial_{\sigma}\delta(\sigma-\sigma')+
\notag \\
&+M^{DB}(\sigma')\pi_{C}(\sigma)f^{CA}{}_{D}\delta(\sigma-\sigma')-M^{DA}(\sigma)\pi_{C}(\sigma')f^{CB}{}_{D}\delta(\sigma'-\sigma)
\notag \\
=& +M^{AB}(\sigma')\partial_{\sigma}\delta(\sigma-\sigma')-M^{AB}(\sigma)\partial_{\sigma}\delta(\sigma-\sigma')+
\notag \\
&+(M^{BD}f_{D}{}^{AC}-M^{AD}f_{D}{}^{BC})\pi_{C}\delta(\sigma-\sigma') \,\, ,
\end{align}
where in the second step we used the second identity in \eqref{def-delta-func-identity-2} and antisymmetry of $M^{AB}$ and the structure constants to rearrange. Now we can use \eqref{def-delta-func-identity-3} to combine the two terms on the first line and the explicit component form of $M^{AB}$ on the second line \eqref{TD-M-in-components} to find
\begin{equation}
\begin{aligned}
\{ L_{\tau}{}^{A}(\sigma),L_{\tau}{}^{B}(\sigma') \} =&+\partial_{\sigma}M^{AB}(\sigma) \delta(\sigma-\sigma')+
\\
&+\gamma^{AE}\gamma^{BF}X^{G}(f_{GF}{}^{D}f_{DE}{}^{C}-f_{GE}{}^{D}f_{DF}{}^{C})\pi_{C}\delta(\sigma-\sigma') \,\, .
\end{aligned}
\end{equation}
Using \eqref{TD-M-derivative-components} on the first line and the Jacobi identity \eqref{def-Jacobi-components} on the second line we see
\begin{equation}
\begin{aligned}
\{ L_{\tau}{}^{A}(\sigma),L_{\tau}{}^{B}(\sigma') \} =&f^{AB}{}_{C}\partial_{\sigma}X^{C}\delta(\sigma-\sigma')-\gamma^{AE}\gamma^{BF}X^{G}f_{EF}{}^{D}f_{GD}{}^{C}\pi_{C}\delta(\sigma-\sigma')
\\
=&f^{AB}{}_{C}\partial_{\sigma}X^{C}\delta(\sigma-\sigma')-f^{AB}{}_{D}M^{DC}\pi_{C}\delta(\sigma-\sigma') \,\, ,
\end{aligned}
\end{equation}
which gives the desired result after again using \eqref{TD-momentum-jtilde-relation}:
\begin{equation}
\{ L_{\tau}{}^{A}(\sigma),L_{\tau}{}^{B}(\sigma') \} = f^{AB}{}_{C}L_{\tau}{}^{C}\delta(\sigma-\sigma') \,\, .
\end{equation}

\paragraph{Calculation of $\{ L_{\tau}{}^{A}(\sigma),L_{\sigma}{}^{B}(\sigma') \}$.} Exploiting the relations \eqref{TD-Jtau-and-Jsigma-functions-of-momentum}, the Leibniz rule \eqref{def-PB-Leibniz}, the vanishing of the bracket of two momenta and \eqref{TD-partialX-pi-bracket} and \eqref{TD-M-pi-bracket}, one finds
\begin{equation}
\begin{aligned}
\{ L_{\tau}{}^{A}(\sigma),L_{\sigma}{}^{B}(\sigma') \} &= \{\partial_{\sigma}X^{A}(\sigma)+\pi_{C}(\sigma)M^{CA}(\sigma) ,-\pi_{D}(\sigma')\gamma^{DB} \}
\\
&=-\gamma^{DB} \{\partial_{\sigma}X^{A}(\sigma),\pi_{D}(\sigma')\}-\gamma^{DB}\pi_{C}(\sigma) \{M^{CA}(\sigma),\pi_{D}(\sigma')\}
\\
&=\gamma^{AB}\partial_{\sigma'}\delta(\sigma'-\sigma)-\gamma^{DB}\pi_{C}(\sigma)f^{CA}{}_{D}\delta(\sigma-\sigma') 
\\
&=\gamma^{AB}\partial_{\sigma'}\delta(\sigma'-\sigma)-f^{AB}{}_{C}\pi_{D}(\sigma)\gamma^{DC}\delta(\sigma-\sigma') \,\, ,
\end{aligned}
\end{equation}
finally arriving at the desired expression after using the second relation in \eqref{def-delta-func-identity-2} and the first relation in \eqref{TD-Jtau-and-Jsigma-functions-of-momentum},
\begin{equation}\label{TD-jtildetau-jtildesigma-bracket}
\{ L_{\tau}{}^{A}(\sigma),L_{\sigma}{}^{B}(\sigma') \}=f^{AB}{}_{C}L_{\sigma}{}^{C}\delta(\sigma-\sigma')-\gamma^{AB}\partial_{\sigma}\delta(\sigma-\sigma') \,\, .
\end{equation}

\paragraph{Calculation of $\{ L_{\sigma}{}^{A}(\sigma),L_{\sigma}{}^{B}(\sigma') \}$.} Exploiting \eqref{TD-Jtau-and-Jsigma-functions-of-momentum} it is straightforward to see that
\begin{equation}
\{ L_{\sigma}{}^{A}(\sigma),L_{\sigma}{}^{B}(\sigma') \}=\gamma^{CA}\gamma^{DB}\{ \pi_{C}(\sigma),\pi_{D}(\sigma') \}=0 \,\, .
\end{equation}

\subsection{Principal Chiral models with Wess-Zumino term}\label{appendix:C-PCMWZ}
\paragraph{Calculation of $\{ j_{A}{}^{\mu}\pi_{\mu}(\sigma),j_{B}{}^{\nu}B_{\nu\rho}\partial_{\sigma}Z^{\rho}(\sigma') \}$.} Since the second entry in this bracket is independent of the momentum $\pi$, after using \eqref{def-PB-Leibniz} the definition \eqref{def-PB} becomes
\begin{align}
\{ &j_{A}{}^{\mu}\pi_{\mu}(\sigma),j_{B}{}^{\nu}B_{\nu\rho}\partial_{\sigma}Z^{\rho}(\sigma') \} =
\notag \\
=&+j_{A}{}^{\mu}(\sigma)\{ \pi_{\mu}(\sigma),j_{B}{}^{\nu}B_{\nu\rho}\partial_{\sigma}Z^{\rho}(\sigma') \}
\notag \\
=&-j_{A}{}^{\mu}(\sigma) \int \mathrm{d}\sigma'' \, \frac{\partial \pi_{\mu}(\sigma)}{\partial \pi_{\lambda}(\sigma'')} \, \frac{\partial\bigl(j_{B}{}^{\nu}B_{\nu\rho}\partial_{\sigma}Z^{\rho}(\sigma')\bigr)}{\partial Z^{\lambda}(\sigma'')}
\notag \\
=&-j_{A}{}^{\mu}(\sigma)\int \mathrm{d}\sigma'' \delta(\sigma-\sigma'') 
\notag \\
& \Bigl( B_{\nu\rho}(\sigma')\partial_{\sigma'}Z^{\rho}(\sigma')\frac{\partial j_{B}{}^{\nu}(\sigma')}{\partial Z^{\mu}(\sigma'')} +j_{B}{}^{\nu}(\sigma')\partial_{\sigma'}Z^{\rho}(\sigma')\frac{\partial B_{\nu\rho}(\sigma')}{\partial Z^{\mu}(\sigma'')}+j_{B}{}^{\nu}(\sigma')B_{\nu\rho}(\sigma')\frac{\partial \bigl(\partial_{\sigma'}Z^{\rho}(\sigma')\bigr)}{\partial Z^{\mu}(\sigma'')}\Bigr)
\notag \\
=&-j_{A}{}^{\mu}(\sigma)\int \mathrm{d}\sigma'' \delta(\sigma-\sigma'') 
\notag \\
& \Bigl(\!-\partial_{\sigma'}Z^{\rho}(\sigma')B_{\delta\rho}(\sigma')j_{E}{}^{\delta}(\sigma')j_{B}{}^{\nu}(\sigma')\partial_{\mu}j_{\nu}{}^{E}(\sigma')\delta(\sigma''\!\!\!-\!\sigma')\!+\!j_{B}{}^{\nu}(\sigma')\partial_{\sigma'}Z^{\rho}(\sigma')\partial_{\mu}B_{\nu\rho}(\sigma')\delta(\sigma''\!\!\!-\!\sigma') 
\notag \\
& \,\, - j_{B}{}^{\nu}(\sigma'')B_{\nu\mu}(\sigma'')\partial_{\sigma''}\delta(\sigma''-\sigma')-\partial_{\sigma''}\bigl(j_{B}{}^{\nu}(\sigma'')B_{\nu\mu}(\sigma'')\bigr)\delta(\sigma''-\sigma')\Bigr) \,\, .
\end{align}
To obtain the last line we performed two main manipulations. In the first term on the line above the last, we introduced an identity in the form $\delta_{\nu}{}^{\delta}=j_{\nu}{}^{E}(\sigma')j_{E}{}^{\delta}(\sigma')$ and then exploited $j_{\nu}{}^{E}(\sigma')\partial_{\lambda}j_{B}{}^{\nu}(\sigma')=-j_{B}{}^{\nu}(\sigma')\partial_{\lambda}j_{\nu}{}^{E}(\sigma')$, which follows by differentiating $j_{B}{}^{\nu}(\sigma')j_{\nu}{}^{E}(\sigma')=\delta_{B}{}^{E}$. In the third term on the line above the last, we first exploited the second $\delta$-function identity in equation \eqref{def-delta-func-identity-2} and successively the identity \eqref{def-delta-func-identity-3} with $y=\sigma'$ and $x=\sigma''$. At this point the integral collapses to $\sigma''=\sigma$ thanks to the $\delta$-function and the last term translates into two new terms, one containing $\partial_{\sigma}j_{B}{}^{\nu}$ and the other $\partial_{\sigma}B_{\nu\lambda}$. The former can be manipulated exactly as done above, namely introducing an identity in the form $\delta_{\nu}{}^{\delta}=j_{\nu}{}^{E}(\sigma')j_{E}{}^{\delta}(\sigma')$ and following subsequent steps. In the end all this leads to five contributions, two of which contain derivatives of $j_{\nu}{}^{E}$ and two of which involve derivatives of the $B$-field, leading to the desired result
\begin{align}\label{B-field-brackets-intermediate}
\{ j_{A}{}^{\mu}\pi_{\mu}(\sigma),j_{B}{}^{\nu}B_{\nu\rho}\partial_{\sigma}Z^{\rho}(\sigma') \} =&+ \partial_{\sigma}Z^{\rho} j_{A}{}^{\mu}j_{B}{}^{\nu}(\partial_{\rho}B_{\nu\mu}-\partial_{\mu}B_{\nu\rho})\delta(\sigma-\sigma') 
\notag \\
&+\partial_{\sigma}Z^{\rho} j_{A}{}^{\mu}j_{B}{}^{\nu}j_{E}{}^{\lambda}(B_{\lambda\rho}\partial_{\mu}j_{\nu}{}^{E}-B_{\lambda\mu}\partial_{\rho}j_{\nu}{}^{E})\delta(\sigma-\sigma') 
\notag \\
&+ j_{A}{}^{\mu}(\sigma)j_{B}{}^{\nu}(\sigma)B_{\nu\mu}(\sigma)\partial_{\sigma}\delta(\sigma-\sigma')\,\, .
\end{align}

\paragraph{Calculation of $I^{AB}$.} Substituting \eqref{B-field-brackets-intermediate} in the definition \eqref{IAB-combination-definition} of $I^{AB}$ one obtains
\begin{equation}
I^{AB}= I^{AB}_{1}+I^{AB}_{2}+I^{AB}_{3} \,\, .
\end{equation}
The first contribution is defined, and rearranged, as
\begin{equation}
\begin{aligned}
I^{AB}_{1}\equiv &+(\gamma^{AC}\gamma^{BD}-\gamma^{AD}\gamma^{BC})\partial_{\sigma}Z^{\rho}j_{C}{}^{\mu}j_{D}{}^{\nu}(\partial_{\rho}B_{\nu\mu}-\partial_{\mu}B_{\nu\rho})\delta(\sigma-\sigma')
\\
=&+\gamma^{AC}\gamma^{BD}\partial_{\sigma}Z^{\rho}j_{C}{}^{\mu}j_{D}{}^{\nu}(\partial_{\rho}B_{\nu\mu}-\partial_{\mu}B_{\nu\rho}-\partial_{\rho}B_{\mu\nu}+\partial_{\nu}B_{\mu\rho})\delta(\sigma-\sigma')
\\
=&+\gamma^{AC}\gamma^{BD}\partial_{\sigma}Z^{\rho}j_{C}{}^{\mu}j_{D}{}^{\nu}(\partial_{\rho}B_{\nu\mu}+H_{\rho\nu\mu})\delta(\sigma-\sigma')
\\
=&+\gamma^{AC}\gamma^{BD}\partial_{\sigma}Z^{\rho}j_{C}{}^{\mu}j_{D}{}^{\nu}\partial_{\rho}B_{\nu\mu}\delta(\sigma-\sigma')-f^{AB}{}_{C}j_{\sigma}{}^{C}\delta(\sigma-\sigma') \,\, ,
\end{aligned}
\end{equation}
where in the first step we relabelled the indices on $\gamma$ to combine the two terms, in the second step we introduced $H_{\rho\nu\mu}\equiv \partial_{\rho}B_{\nu\mu}+\partial_{\nu}B_{\mu\rho}+\partial_{\mu}B_{\rho\nu}$ and in the last step we exploited
\begin{equation}
H_{\rho\nu\mu}=j_{\rho}{}^{E}j_{\nu}{}^{F}j_{\mu}{}^{G}H_{EFG}=j_{\rho}{}^{E}j_{\nu}{}^{F}j_{\mu}{}^{G}f_{EFG} \,\, ,
\end{equation}
which on group manifolds is ensured by the Maurer-Cartan equation, and finally exploited that $j_{\mu}{}^{A}$ is a vielbeine. The second contribution is defined, and rearranged, as
\begin{align}
I^{AB}_{2}=& +(\gamma^{AC}\gamma^{BD}-\gamma^{AD}\gamma^{BC})\partial_{\sigma}Z^{\rho}j_{C}{}^{\mu}j_{D}{}^{\nu}j_{E}{}^{\lambda}(B_{\lambda\rho}\partial_{\mu}j_{\nu}{}^{E}-B_{\lambda\mu}\partial_{\rho}j_{\nu}{}^{E})\delta(\sigma-\sigma')
\notag \\
=& +\gamma^{AC}\gamma^{BD}\partial_{\sigma}Z^{\rho}j_{C}{}^{\mu}j_{D}{}^{\nu}j_{E}{}^{\lambda}(B_{\lambda\rho}\partial_{\mu}j_{\nu}{}^{E}-B_{\lambda\mu}\partial_{\rho}j_{\nu}{}^{E}-B_{\lambda\rho}\partial_{\nu}j_{\mu}{}^{E}+B_{\lambda\nu}\partial_{\rho}j_{\mu}{}^{E})\delta(\sigma-\sigma')
\notag \\
=&+\gamma^{AC}\gamma^{BD}\partial_{\sigma}Z^{\rho}j_{C}{}^{\mu}j_{D}{}^{\nu}j_{E}{}^{\lambda}B_{\lambda\rho}(\partial_{\mu}j_{\nu}{}^{E}-\partial_{\nu}j_{\mu}{}^{E})\delta(\sigma-\sigma')
\notag \\
&+\gamma^{AC}\gamma^{BD}\partial_{\sigma}Z^{\rho}j_{C}{}^{\mu}j_{D}{}^{\nu}j_{E}{}^{\lambda}(B_{\lambda\nu}\partial_{\rho}j_{\mu}{}^{E}-B_{\lambda\mu}\partial_{\rho}j_{\nu}{}^{E})\delta(\sigma-\sigma')
\notag \\
=&-f^{AB}{}_{C}\gamma^{CD}j_{D}{}^{\mu}B_{\mu\nu}\partial_{\sigma}Z^{\nu}\delta(\sigma-\sigma')
\notag \\
&+\gamma^{AC}\gamma^{BD}\partial_{\sigma}Z^{\rho}j_{C}{}^{\mu}j_{D}{}^{\nu}j_{E}{}^{\lambda}(B_{\lambda\nu}\partial_{\rho}j_{\mu}{}^{E}-B_{\lambda\mu}\partial_{\rho}j_{\nu}{}^{E})\delta(\sigma-\sigma') \,\, ,
\end{align}
where in the first step we relabelled the indices on $\gamma$ to combine the two terms, in the second step we combined structures of similar type and in the third step we simplified the first type by exploiting the Maurer-Cartan equation in components, namely $\partial_{\mu}j_{\nu}{}^{E}-\partial_{\nu}j_{\mu}{}^{E}=-j_{\mu}{}^{F}j_{\nu}{}^{G}f_{FG}{}^{E}$, finally rearranging using that $j_{\mu}{}^{A}$ is a vielbeine. The last contribution is then defined, and rearranged, as
\begin{align}
I^{AB}_{3}\!\!=\!&+\!\gamma^{AC}\gamma^{BD}j_{C}{}^{\mu}(\sigma)j_{D}{}^{\nu}(\sigma)B_{\nu\mu}(\sigma)\partial_{\sigma}\delta(\sigma\!-\!\sigma')\!-\!\gamma^{AD}\gamma^{BC}j_{C}{}^{\mu}(\sigma')j_{D}{}^{\nu}(\sigma')B_{\nu\mu}(\sigma')\partial_{\sigma'}\delta(\sigma'\!-\!\sigma)
\notag \\
=&+\gamma^{AC}\gamma^{BD}B_{DC}(\sigma)\partial_{\sigma}\delta(\sigma-\sigma')-\gamma^{AD}\gamma^{BC}B_{DC}(\sigma')\partial_{\sigma'}\delta(\sigma'-\sigma)
\notag \\
=&-B^{AB}(\sigma)\partial_{\sigma}\delta(\sigma-\sigma')-B^{AB}(\sigma')\partial_{\sigma'}\delta(\sigma'-\sigma)
\notag \\
=&-B^{AB}(\sigma)\partial_{\sigma}\delta(\sigma-\sigma')+B^{AB}(\sigma')\partial_{\sigma}\delta(\sigma-\sigma')
\notag \\
=&+\partial_{\sigma}B^{AB}(\sigma)\delta(\sigma-\sigma') \,\, ,
\end{align}
where in the first step we converted from curved to flat indices using the inverse vielbeine $j_{A}{}^{\mu}$, in the second step raised the indices with $\gamma$ and exploited antisymmetry of the $B$-field, in the third step we used the second identity in \eqref{def-delta-func-identity-2} on the second term and finally in the fourth step we combined the two pieces using \eqref{def-delta-func-identity-3}. At this point one can combine the three results above to obtain
\begin{align}\label{IAB-rearranged-intermediate}
I^{AB}=&-f^{AB}{}_{C}\gamma^{CD}j_{D}{}^{\mu}B_{\mu\nu}\partial_{\sigma}Z^{\nu}\delta(\sigma-\sigma')-f^{AB}{}_{C}j_{\sigma}{}^{C}\delta(\sigma-\sigma')+\partial_{\sigma}B^{AB}(\sigma)\delta(\sigma-\sigma')
\notag \\
&+\gamma^{AC}\gamma^{BD}\partial_{\sigma}Z^{\rho}j_{C}{}^{\mu}j_{D}{}^{\nu}(\partial_{\rho}B_{\nu\mu}+j_{E}{}^{\lambda}B_{\lambda\nu}\partial_{\rho}j_{\mu}{}^{E}-j_{E}{}^{\lambda}B_{\lambda\mu}\partial_{\rho}j_{\nu}{}^{E})\delta(\sigma-\sigma') \,\, ,
\end{align}
which clearly reproduces the result \eqref{IAB-combination-rearranged} provided that the second line cancels the last term on the first line. This is indeed the case, as one can easily find out by noting that
\begin{equation}
\partial_{\rho}B_{\nu\mu}=\partial_{\rho}(j_{\nu}{}^{E}j_{\mu}{}^{F}B_{EF})=j_{\mu}{}^{F}B_{EF}\partial_{\rho}j_{\nu}{}^{E}+j_{\nu}{}^{E}B_{EF}\partial_{\rho}j_{\mu}{}^{F}+j_{\nu}{}^{E}j_{\mu}{}^{F}\partial_{\rho}B_{EF} \,\, ,
\end{equation}
such that after converting all curved indices to flat ones, the second line in \eqref{IAB-rearranged-intermediate} precisely reduces to $-\partial_{\sigma}Z^{\rho}\partial_{\rho}B^{AB}(\sigma)\delta(\sigma-\sigma')$, which cancels the last term on the first line and leaves us with the desired expression \eqref{IAB-combination-rearranged}.

\paragraph{Calculation of $\{L_{\tau}{}^{A}(\sigma),L_{\tau}{}^{B}(\sigma')\}$.} Using the definition of $L$ in \eqref{PCM-WZ-BA-currents}, the relation \eqref{PCM-WZ-canonical-momentum} to the canonical momentum, and the intermediate brackets \eqref{PCM-WZ-building-block-brackets} one finds
\begin{align}
\{L_{\tau}{}^{A}(\sigma),L_{\tau}{}^{B}(\sigma')\} =&+\tfrac{1}{\hay^2}\gamma^{AC}\gamma^{BD}\{j_{C}{}^{\mu}\pi_{\mu}(\sigma),j_{D}{}^{\nu}\pi_{\nu}(\sigma')\}-\tfrac{\kay}{\hay^2}I^{AB}
\notag\\
&+\tfrac{\kay}{\hay^2}\Bigl( \gamma^{AC}\{j_{C}{}^{\mu}\pi_{\mu}(\sigma),j_{\sigma}{}^{B}(\sigma')\}-\gamma^{BC}\{j_{C}{}^{\mu}\pi_{\mu}(\sigma'),j_{\sigma}{}^{A}(\sigma)\} \Bigr)
\notag\\
=&-\tfrac{1}{\hay^2}f^{AB}{}_{C}\gamma^{CD}j_{D}{}^{\mu}\pi_{\mu}\delta(\sigma-\sigma')
\notag\\
&+\tfrac{\kay}{\hay^2}f^{AB}{}_{C}\gamma^{CD}j_{D}{}^{\mu}B_{\mu\nu}\partial_{\sigma}Z^{\nu}\delta(\sigma-\sigma')+\tfrac{\kay}{\hay^2}f^{AB}{}_{C}j_{\sigma}{}^{C}\delta(\sigma-\sigma')
\notag\\
&-\tfrac{\kay}{\hay^2}f^{AB}{}_{C}j_{\sigma}{}^{C}\delta(\sigma-\sigma')+\tfrac{\kay}{\hay^2}\gamma^{AB}\partial_{\sigma}\delta(\sigma-\sigma')
\notag\\
&+\tfrac{\kay}{\hay^2}f^{BA}{}_{C}j_{\sigma}{}^{C}\delta(\sigma'-\sigma)-\tfrac{\kay}{\hay^2}\gamma^{BA}\partial_{\sigma'}\delta(\sigma'-\sigma) \,\, .
\end{align}
At this point, the two non-ultralocal terms on the last two lines combine using the second identity in \eqref{def-delta-func-identity-2}, while all the other terms reassemble into
\begin{equation}
\tfrac{1}{\hay}L_{\tau}{}^{C}=\tfrac{1}{\hay}j_{\tau}{}^{C}-\tfrac{\kay}{\hay^2}j_{\sigma}{}^{C}=-\tfrac{1}{\hay^2}\gamma^{CD}j_{D}{}^{\mu}\pi_{\mu}+\tfrac{\kay}{\hay^2}\gamma^{CD}j_{D}{}^{\mu}B_{\mu\nu}\partial_{\sigma}Z^{\nu}-\tfrac{\kay}{\hay^2}j_{\sigma}{}^{C} \,\, ,
\end{equation}
as can be noted from $L_{\tau}{}^{C}$ in \eqref{PCM-WZ-BA-currents} and the relation \eqref{PCM-WZ-canonical-momentum}, thus leading to
\begin{equation}
\{L_{\tau}{}^{A}(\sigma),L_{\tau}{}^{B}(\sigma')\} = \tfrac{1}{\hay}f^{AB}{}_{C}L_{\tau}{}^{C}\delta(\sigma-\sigma')+\tfrac{2\kay}{\hay^2}\gamma^{AB}\partial_{\sigma}\delta(\sigma-\sigma') \,\, .
\end{equation}

\paragraph{Calculation of $\{L_{\tau}{}^{A}(\sigma),L_{\sigma}{}^{B}(\sigma')\}$.} Using the definition \eqref{PCM-WZ-BA-currents}, the relation \eqref{PCM-WZ-canonical-momentum} to the canonical momentum, and the intermediate brackets \eqref{PCM-WZ-building-block-brackets} one finds
\begin{align}
\{L_{\tau}{}^{A}(\sigma),L_{\sigma}{}^{B}(\sigma')\}=&-\tfrac{1}{\hay}\gamma^{AC}\{j_{C}{}^{\mu}\pi_{\mu},j_{\sigma}{}^{B}(\sigma')\}+\tfrac{\kay^2}{\hay^3}\gamma^{BC}\{j_{C}{}^{\mu}\pi_{\mu}(\sigma'),j_{\sigma}{}^{A}(\sigma)\}
\notag \\
&-\tfrac{\kay}{\hay^3}\gamma^{AC}\gamma^{BD}\{j_{C}{}^{\mu}\pi_{\mu}(\sigma),j_{D}{}^{\nu}\pi_{\nu}(\sigma')\}+\tfrac{\kay^2}{\hay^3}I^{AB}
\notag\\
=&+\tfrac{1}{\hay}f^{AB}{}_{C}j_{\sigma}{}^{C}\delta(\sigma-\sigma')-\tfrac{1}{\hay}\gamma^{AB}\partial_{\sigma}\delta(\sigma-\sigma')
\\
&-\tfrac{\kay^2}{\hay^3}f^{BA}{}_{C}j_{\sigma}{}^{C}\delta(\sigma'-\sigma)+\tfrac{\kay^2}{\hay^3}\gamma^{BA}\partial_{\sigma'}\delta(\sigma'-\sigma)
\notag \\
&+\tfrac{\kay}{\hay^3}f^{AB}{}_{C}\gamma^{CD}j_{D}{}^{\mu}\pi_{\mu}\delta(\sigma-\sigma')
\notag \\
&-\tfrac{\kay^2}{\hay^3}f^{AB}{}_{C}\gamma^{CD}j_{D}{}^{\mu}B_{\mu\nu}\partial_{\sigma}Z^{\nu}\delta(\sigma-\sigma')-\tfrac{\kay^2}{\hay^3}f^{AB}{}_{C}j_{\sigma}{}^{C}\delta(\sigma-\sigma') \,\, .
\notag
\end{align}
Also in this case the two non-ultralocal terms combine after using the second identity in \eqref{def-delta-func-identity-2}, while the other terms reassemble into
\begin{equation}
\tfrac{1}{\hay}L_{\sigma}{}^{C}=\tfrac{1}{\hay}j_{\sigma}{}^{C}-\tfrac{\kay}{\hay^2}j_{\tau}{}^{C}=\tfrac{1}{\hay}j_{\sigma}{}^{C}+\tfrac{\kay}{\hay^3}\gamma^{CD}j_{D}{}^{\mu}\pi_{\mu}-\tfrac{\kay^2}{\hay^3}\gamma^{CD}j_{D}{}^{\mu}B_{\mu\nu}\partial_{\sigma}Z^{\nu} \,\, ,
\end{equation}
as can be noted from $L_{\sigma}{}^{C}$ in \eqref{PCM-WZ-BA-currents} and the relation \eqref{PCM-WZ-canonical-momentum}, thus leading to
\begin{equation}
\{L_{\tau}{}^{A}(\sigma),L_{\sigma}{}^{B}(\sigma')\} = \tfrac{1}{\hay}f^{AB}{}_{C}L_{\sigma}{}^{C}\delta(\sigma-\sigma')-\tfrac{1}{\hay}(1+\tfrac{\kay^2}{\hay^2})\gamma^{AB}\partial_{\sigma}\delta(\sigma-\sigma') \,\, .
\end{equation}

\paragraph{Calculation of $\{L_{\sigma}{}^{A}(\sigma),L_{\sigma}{}^{B}(\sigma')\}$.}  Using the definition \eqref{PCM-WZ-BA-currents}, the relation \eqref{PCM-WZ-canonical-momentum} to the canonical momentum, and the intermediate brackets \eqref{PCM-WZ-building-block-brackets} one finds
\begin{align}\label{PCM-WZ-AA_bracket-intermediate}
\{L_{\sigma}{}^{A}(\sigma),L_{\sigma}{}^{B}(\sigma')\}=&+\tfrac{\kay^2}{\hay^4}\gamma^{AC}\gamma^{BD}\{j_{C}{}^{\mu}\pi_{\mu}(\sigma),j_{D}{}^{\nu}\pi_{\nu}(\sigma')\}-\tfrac{\kay^3}{\hay^4}I^{AB}
\notag \\
&+\tfrac{\kay}{\hay^2}\Bigl( \gamma^{AC}\{j_{C}{}^{\mu}\pi_{\mu}(\sigma),j_{\sigma}{}^{B}(\sigma')\}-\gamma^{BC}\{j_{C}{}^{\mu}\pi_{\mu}(\sigma'),j_{\sigma}{}^{A}(\sigma)\}\Bigr)
\notag \\
=&-\tfrac{\kay^2}{\hay^4}f^{AB}{}_{C}\gamma^{CD}j_{D}{}^{\mu}\pi_{\mu}\delta(\sigma-\sigma')
\notag\\
&+\tfrac{\kay^3}{\hay^4}f^{AB}{}_{C}\gamma^{CD}j_{D}{}^{\mu}B_{\mu\nu}\partial_{\sigma}Z^{\nu}\delta(\sigma-\sigma')+\tfrac{\kay^3}{\hay^4}f^{AB}{}_{C}j_{\sigma}{}^{C}\delta(\sigma-\sigma')
\notag\\
&-\tfrac{\kay}{\hay^2}f^{AB}{}_{C}j_{\sigma}{}^{C}\delta(\sigma-\sigma')+\tfrac{\kay}{\hay^2}\gamma^{AB}\partial_{\sigma}\delta(\sigma-\sigma')
\notag\\
&+\tfrac{\kay}{\hay^2}f^{BA}{}_{C}j_{\sigma}{}^{C}\delta(\sigma'-\sigma)-\tfrac{\kay}{\hay^2}\gamma^{BA}\partial_{\sigma'}\delta(\sigma'-\sigma) \,\, .
\end{align}
The non-ultralocal terms sum up once again using the second relation in \eqref{def-delta-func-identity-2}, while all the other contributions neither combine into $L_{\tau}$ nor into $L_{\sigma}$, but a combination of the two. Collecting similar terms and matching coefficients, it is not hard to realise what this combination should be: using $L_{\tau}{}^{C},L_{\sigma}{}^{C}$ in \eqref{PCM-WZ-BA-currents} and the relation \eqref{PCM-WZ-canonical-momentum} one finds
\begin{equation}
-\tfrac{2\kay}{\hay^2}L_{\sigma}{}^{C}-\tfrac{\kay^2}{\hay^3}L_{\tau}{}^{C}=(\tfrac{\kay^3}{\hay^4}-\tfrac{2\kay}{\hay^2})j_{\sigma}{}^{C}-\tfrac{\kay^2}{\hay^4}\gamma^{CD}j_{D}{}^{\mu}\pi_{\mu}+\tfrac{\kay^3}{\hay^4}\gamma^{CD}j_{D}{}^{\mu}B_{\mu\nu}\partial_{\sigma}Z^{\nu} \,\, ,
\end{equation}
which is precisely what similar terms in \eqref{PCM-WZ-AA_bracket-intermediate} sum up to. This finally leads to
\begin{equation}
\{L_{\sigma}{}^{A}(\sigma),L_{\sigma}{}^{B}(\sigma')\}=f^{AB}{}_{C}\Bigl( -\tfrac{2\kay}{\hay^2}L_{\sigma}{}^{C}-\tfrac{\kay^2}{\hay^3}L_{\tau}{}^{C}\Bigr)\delta(\sigma-\sigma')+\tfrac{2\kay}{\hay^2}\gamma^{AB}\partial_{\sigma}\delta(\sigma-\sigma') \,\, .
\end{equation}

\bibliographystyle{utphys}
\bibliography{master}

@article{Conti_2019,
   title={The $\mathrm{T}\overline{\mathrm{T}
}$ perturbation and its geometric interpretation},
   volume={2019},
   ISSN={1029-8479},
   url={http://dx.doi.org/10.1007/JHEP02(2019)085},
   DOI={10.1007/jhep02(2019)085},
   number={2},
   journal={Journal of High Energy Physics},
   publisher={Springer Science and Business Media LLC},
   author={Conti, Riccardo and Negro, Stefano and Tateo, Roberto},
   year={2019},
   month=feb }

@article{Ran:2024vgl,
    author = "Ran, Xi-Yang and Hao, Feng and Yamada, Masatoshi",
    title = "{Geometric realization via irrelevant deformations induced by the stress-energy tensor}",
    eprint = "2410.02537",
    archivePrefix = "arXiv",
    primaryClass = "hep-th",
    doi = "10.1103/PhysRevD.111.085033",
    journal = "Phys. Rev. D",
    volume = "111",
    number = "8",
    pages = "085033",
    year = "2025"
}

@article{Caputa:2020lpa,
    author = "Caputa, Pawel and Datta, Shouvik and Jiang, Yunfeng and Kraus, Per",
    title = "{Geometrizing $ T\overline{T} $}",
    eprint = "2011.04664",
    archivePrefix = "arXiv",
    primaryClass = "hep-th",
    reportNumber = "CERN-TH-2020-188",
    doi = "10.1007/JHEP03(2021)140",
    journal = "JHEP",
    volume = "03",
    pages = "140",
    year = "2021",
    note = "[Erratum: JHEP 09, 110 (2022)]"
}

@article{Ran:2025xas,
    author = "Ran, Xi-Yang and Hao, Feng and Ouyang, Hao",
    title = "{Holography for stress-energy tensor flows}",
    eprint = "2508.12275",
    archivePrefix = "arXiv",
    primaryClass = "hep-th",
    doi = "10.1103/v2g6-l983",
    journal = "Phys. Rev. D",
    volume = "112",
    number = "8",
    pages = "L081905",
    year = "2025"
}

@book{Babelon:2003qtg,
    author = "Babelon, Olivier and Bernard, Denis and Talon, Michel",
    title = "{Introduction to Classical Integrable Systems}",
    doi = "10.1017/CBO9780511535024",
    isbn = "978-0-521-03670-2, 978-0-511-53502-4",
    publisher = "Cambridge University Press",
    series = "Cambridge Monographs on Mathematical Physics",
    year = "2003"
}

@article{BREZIN1979442,
title = {Remarks about the existence of non-local charges in two-dimensional models},
journal = {Physics Letters B},
volume = {82},
number = {3},
pages = {442-444},
year = {1979},
issn = {0370-2693},
doi = {https://doi.org/10.1016/0370-2693(79)90263-6},
url = {https://www.sciencedirect.com/science/article/pii/0370269379902636},
author = {E. Brézin and C. Itzykson and J. Zinn-Justin and J.-B. Zuber}
}

@article{Driezen:2021cpd,
    author = "Driezen, Sibylle",
    title = "{Modave Lectures on Classical Integrability in 2d Field Theories}",
    eprint = "2112.14628",
    archivePrefix = "arXiv",
    primaryClass = "hep-th",
    doi = "10.22323/1.404.0002",
    journal = "PoS",
    volume = "Modave2021",
    pages = "002",
    year = "2022"
}

@article{Lacroix:2021iit,
    author = "Lacroix, Sylvain",
    title = "{Four-dimensional Chern\textendash{}Simons theory and integrable field theories}",
    eprint = "2109.14278",
    archivePrefix = "arXiv",
    primaryClass = "hep-th",
    doi = "10.1088/1751-8121/ac48ed",
    journal = "J. Phys. A",
    volume = "55",
    number = "8",
    pages = "083001",
    year = "2022"
}

@article{Drinfeld:1985rx,
    author = "Drinfeld, V. G.",
    title = "{Hopf algebras and the quantum Yang-Baxter equation}",
    journal = "Sov. Math. Dokl.",
    volume = "32",
    pages = "254--258",
    year = "1985"
}

@article{osti_5772113,
  author       = {Pohlmeyer, K and Rehren, K},
  title        = {Reduction of the two-dimensional O(n) nonlinear sigma-model},
  doi          = {10.1063/1.524026},
  url          = {https://www.osti.gov/biblio/5772113},
  journal      = {J. Math. Phys. (N.Y.); (United States)},
  issn         = {ISSN JMAPA},
  volume       = {20:12},
  place        = {United States},
  year         = {1979},
  month        = {12}}

@article{Bernard:1990jw,
    author = "Bernard, Denis",
    title = "{Hidden Yangians in 2-D massive current algebras}",
    reportNumber = "SACLAY-SPH-T-90-109",
    doi = "10.1007/BF02099123",
    journal = "Commun. Math. Phys.",
    volume = "137",
    pages = "191--208",
    year = "1991"
}

@article{Morone:2024ffm,
    author = "Morone, Tommaso and Negro, Stefano and Tateo, Roberto",
    title = "{Gravity and TT flows in higher dimensions}",
    eprint = "2401.16400",
    archivePrefix = "arXiv",
    primaryClass = "hep-th",
    doi = "10.1016/j.nuclphysb.2024.116605",
    journal = "Nucl. Phys. B",
    volume = "1005",
    pages = "116605",
    year = "2024"
}

@Article{MAILLET198654,
title = {New integrable canonical structures in two-dimensional models},
journal = {Nuclear Physics B},
volume = {269},
number = {1},
pages = {54-76},
year = {1986},
issn = {0550-3213},
doi = {https://doi.org/10.1016/0550-3213(86)90365-2},
url = {https://www.sciencedirect.com/science/article/pii/0550321386903652},
author = {Jean-Michael Maillet}
}

@article{Tsolakidis:2024wut,
    author = "Tsolakidis, Evangelos",
    title = "{Massive gravity generalization of $ T\overline{T} $ deformations}",
    eprint = "2405.07967",
    archivePrefix = "arXiv",
    primaryClass = "hep-th",
    doi = "10.1007/JHEP09(2024)167",
    journal = "JHEP",
    volume = "09",
    pages = "167",
    year = "2024"
}

@article{Morone:2024sdg,
    author = "Morone, Tommaso and Tateo, Roberto",
    title = "{Solutions to the Ricci Flow via Einstein Field Equations}",
    eprint = "2411.10265",
    archivePrefix = "arXiv",
    primaryClass = "hep-th",
    month = "11",
    year = "2024"
}

@article{Brizio:2024arr,
    author = "Brizio, Nicol\`o and Morone, Tommaso and Tateo, Roberto",
    title = "{Stress-energy tensor deformations, Ricci flows and black holes}",
    eprint = "2408.06031",
    archivePrefix = "arXiv",
    primaryClass = "hep-th",
    month = "8",
    year = "2024"
}

@article{Babaei-Aghbolagh:2024hti,
    author = "Babaei-Aghbolagh, H. and He, Song and Morone, Tommaso and Ouyang, Hao and Tateo, Roberto",
    title = "{Geometric Formulation of Generalized Root-TT Deformations}",
    eprint = "2405.03465",
    archivePrefix = "arXiv",
    primaryClass = "hep-th",
    doi = "10.1103/PhysRevLett.133.111602",
    journal = "Phys. Rev. Lett.",
    volume = "133",
    number = "11",
    pages = "111602",
    year = "2024"
}

@article{Conti:2022egv,
    author = "Conti, Riccardo and Romano, Jacopo and Tateo, Roberto",
    title = "{Metric approach to a $ \mathrm{T}\overline{\mathrm{T}} $-like deformation in arbitrary dimensions}",
    eprint = "2206.03415",
    archivePrefix = "arXiv",
    primaryClass = "hep-th",
    doi = "10.1007/JHEP09(2022)085",
    journal = "JHEP",
    volume = "09",
    pages = "085",
    year = "2022"
}

@article{Ivanov:2003uj,
    author = "Ivanov, E. A. and Zupnik, B. M.",
    title = "{New approach to nonlinear electrodynamics: Dualities as symmetries of interaction}",
    eprint = "hep-th/0303192",
    archivePrefix = "arXiv",
    doi = "10.1134/1.1842299",
    journal = "Phys. Atom. Nucl.",
    volume = "67",
    pages = "2188--2199",
    year = "2004"
}

@article{Conti:2019dxg,
    author = "Conti, Riccardo and Negro, Stefano and Tateo, Roberto",
    title = "{Conserved currents and $\text{T}\bar{\text{T}}_s$ irrelevant deformations of 2D integrable field theories}",
    eprint = "1904.09141",
    archivePrefix = "arXiv",
    primaryClass = "hep-th",
    doi = "10.1007/JHEP11(2019)120",
    journal = "JHEP",
    volume = "11",
    pages = "120",
    year = "2019"
}

@article{Guica:2019vnb,
      author         = "Guica, Monica",
      title          = "{On correlation functions in $J\bar T$-deformed CFTs}",
      journal        = "J. Phys.",
      volume         = "A52",
      year           = "2019",
      number         = "18",
      pages          = "184003",
      doi            = "10.1088/1751-8121/ab0ef3",
      eprint         = "1902.01434",
      archivePrefix  = "arXiv",
      primaryClass   = "hep-th",
      SLACcitation   = "%%CITATION = ARXIV:1902.01434;%%"
}

@article{Cavaglia:2016oda,
      author         = "Cavagli\`a, Andrea and Negro, Stefano and Sz\'ecs\'enyi, Istv\'an M. and Tateo, Robertoo",
      title          = "{$T \bar{T}$-deformed 2D Quantum Field Theories}",
      journal        = "JHEP",
      volume         = "10",
      year           = "2016",
      pages          = "112",
      doi            = "10.1007/JHEP10(2016)112",
      eprint         = "1608.05534",
      archivePrefix  = "arXiv",
      primaryClass   = "hep-th",
      SLACcitation   = "%%CITATION = ARXIV:1608.05534;%%"
}

@article{Zamolodchikov:2004ce,
      author         = "Zamolodchikov, Alexander B.",
      title          = "{Expectation value of composite field T anti-T in
                        two-dimensional quantum field theory}",
      year           = "2004",
      eprint         = "hep-th/0401146",
      archivePrefix  = "arXiv",
      primaryClass   = "hep-th",
      reportNumber   = "BONN-TH-2004-02",
      SLACcitation   = "%%CITATION = HEP-TH/0401146;%%"
}

@article{Smirnov:2016lqw,
      author         = "Smirnov, F. A. and Zamolodchikov, A. B.",
      title          = "{On space of integrable quantum field theories}",
      journal        = "Nucl. Phys.",
      volume         = "B915",
      year           = "2017",
      pages          = "363-383",
      doi            = "10.1016/j.nuclphysb.2016.12.014",
      eprint         = "1608.05499",
      archivePrefix  = "arXiv",
      primaryClass   = "hep-th",
      SLACcitation   = "%%CITATION = ARXIV:1608.05499;%%"
}

@article{Borsato:2023dis,
    author = "Borsato, Riccardo",
    title = "{Lecture notes on current\textendash{}current deformations}",
    eprint = "2312.13847",
    archivePrefix = "arXiv",
    primaryClass = "hep-th",
    doi = "10.1140/epjc/s10052-024-12966-5",
    journal = "Eur. Phys. J. C",
    volume = "84",
    number = "6",
    pages = "648",
    year = "2024"
}

@article{Hoare:2021dix,
    author = "Hoare, Ben",
    title = "{Integrable deformations of sigma models}",
    eprint = "2109.14284",
    archivePrefix = "arXiv",
    primaryClass = "hep-th",
    doi = "10.1088/1751-8121/ac4a1e",
    journal = "J. Phys. A",
    volume = "55",
    number = "9",
    pages = "093001",
    year = "2022"
}

@article{Borsato:2018idb,
	Archiveprefix = {arXiv},
	Author = {Borsato, Riccardo and Wulff, Linus},
	Doi = {10.1007/JHEP08(2018)027},
	Eprint = {1806.04083},
	Journal = {JHEP},
	Pages = {027},
	Primaryclass = {hep-th},
	Reportnumber = {NORDITA 2018-041, NORDITA-2018-041},
	Slaccitation = {%%CITATION = ARXIV:1806.04083;%%},
	Title = {{Non-abelian T-duality and Yang-Baxter deformations of Green-Schwarz strings}},
	Volume = {08},
	Year = {2018}}

@article{Borsato:2016pas,
	Archiveprefix = {arXiv},
	Author = {Borsato, Riccardo and Wulff, Linus},
	Doi = {10.1103/PhysRevLett.117.251602},
	Eprint = {1609.09834},
	Journal = {Phys. Rev. Lett.},
	Number = {25},
	Pages = {251602},
	Primaryclass = {hep-th},
	Reportnumber = {IMPERIAL-TP-RB-2016-06},
	Slaccitation = {%%CITATION = ARXIV:1609.09834;%%},
	Title = {{Integrable Deformations of $T$-Dual $\sigma$ Models}},
	Volume = {117},
	Year = {2016}}

@article{Yang:1968rm,
	Author = {Yang, Chen-Ning and Yang, C. P.},
	Doi = {10.1063/1.1664947},
	Journal = {J. Math. Phys.},
	Pages = {1115-1122},
	Slaccitation = {%%CITATION = JMAPA,10,1115;%%},
	Title = {{Thermodynamics of one-dimensional system of bosons with repulsive delta function interaction}},
	Volume = {10},
	Year = {1969}}

@article{Abdalla:1982yd,
	Author = {Abdalla, E. and Forger, M. and Gomes, M.},
	Doi = {10.1016/0550-3213(82)90238-3},
	Journal = {Nucl. Phys.},
	Pages = {181},
	Slaccitation = {%%CITATION = NUPHA,B210,181;%%},
	Title = {On the origin of anomalies in the quantum nonlocal charge for the generalized nonlinear sigma models},
	Volume = {B210},
	Year = {1982}}

@article{Arutyunov:2009ga,
	Archiveprefix = {arXiv},
	Author = {Arutyunov, Gleb and Frolov, Sergey},
	Doi = {10.1088/1751-8113/42/25/254003},
	Eprint = {0901.4937},
	Journal = {J. Phys. A},
	Pages = {254003},
	Primaryclass = {hep-th},
	Slaccitation = {%%CITATION = ARXIV:0901.4937;%%},
	Title = {Foundations of the {$AdS_5 \times S^5$} Superstring. Part {I}},
	Volume = {A42},
	Year = {2009}}

@article{Arutyunov:2009mi,
	Archiveprefix = {arXiv},
	Author = {Arutyunov, Gleb and de Leeuw, Marius and Torrielli, Alessandro},
	Doi = {10.1016/j.nuclphysb.2009.03.024},
	Eprint = {0902.0183},
	Journal = {Nucl. Phys.},
	Pages = {319-350},
	Primaryclass = {hep-th},
	Slaccitation = {%%CITATION = 0902.0183;%%},
	Title = {The Bound State {S}-Matrix for {$AdS_5 \times S^5$} Superstring},
	Volume = {B819},
	Year = {2009}}

@article{Beisert:2005fw,
	Archiveprefix = {arXiv},
	Author = {Beisert, Niklas and Staudacher, Matthias},
	Doi = {10.1016/j.nuclphysb.2005.06.038},
	Eprint = {hep-th/0504190},
	Journal = {Nucl. Phys.},
	Pages = {1-62},
	Primaryclass = {hep-th},
	Slaccitation = {%%CITATION = HEP-TH/0504190;%%},
	Title = {Long-range {$PSU(2,2|4)$} {B}ethe ansaetze for gauge theory and strings},
	Volume = {B727},
	Year = {2005}}

@article{Beisert:2005tm,
	Archiveprefix = {arXiv},
	Author = {Beisert, Niklas},
	Eprint = {hep-th/0511082},
	Journal = {Adv. Theor. Math. Phys.},
	Pages = {945-979},
	Primaryclass = {hep-th},
	Slaccitation = {%%CITATION = HEP-TH/0511082;%%},
	Title = {The {$SU(2|2)$} dynamic {$S$}-matrix},
	Volume = {12},
	Year = {2008}}

@article{Beisert:2007ds,
	Archiveprefix = {arXiv},
	Author = {Beisert, Niklas},
	Eprint = {0704.0400},
	Journal = {PoS},
	Pages = {002},
	Primaryclass = {nlin.SI},
	Slaccitation = {%%CITATION = ARXIV:0704.0400;%%},
	Title = {The {S}-matrix of {$AdS/CFT$} and {Y}angian symmetry},
	Volume = {SOLVAY},
	Year = {2006}}

@article{Beisert:2010jr,
    author = "Beisert, Niklas and others",
    title = "{Review of AdS/CFT Integrability: An Overview}",
    eprint = "1012.3982",
    archivePrefix = "arXiv",
    primaryClass = "hep-th",
    reportNumber = "AEI-2010-175, CERN-PH-TH-2010-306, HU-EP-10-87, HU-MATH-2010-22, KCL-MTH-10-10, UMTG-270, UUITP-41-10",
    doi = "10.1007/s11005-011-0529-2",
    journal = "Lett. Math. Phys.",
    volume = "99",
    pages = "3--32",
    year = "2012"
}

@article{Bena:2003wd,
	Archiveprefix = {arXiv},
	Author = {Bena, Iosif and Polchinski, Joseph and Roiban, Radu},
	Doi = {10.1103/PhysRevD.69.046002},
	Eprint = {hep-th/0305116},
	Journal = {Phys. Rev.},
	Pages = {046002},
	Slaccitation = {%%CITATION = HEP-TH/0305116;%%},
	Title = {Hidden symmetries of the {$AdS_5 \times S^5$} superstring},
	Volume = {D69},
	Year = {2004}}

@article{Bernard:1992ya,
	Archiveprefix = {arXiv},
	Author = {Bernard, Denis},
	Doi = {10.1142/S0217979293003371},
	Eprint = {hep-th/9211133},
	Journal = {Int. J. Mod. Phys.},
	Pages = {3517-3530},
	Slaccitation = {%%CITATION = HEP-TH/9211133;%%},
	Title = {An Introduction to {Y}angian Symmetries},
	Volume = {B7},
	Year = {1993}}

@article{Delduc:2013qra,
	Archiveprefix = {arXiv},
	Author = {Delduc, Francois and Magro, Marc and Vicedo, Benoit},
	Doi = {10.1103/PhysRevLett.112.051601},
	Eprint = {1309.5850},
	Journal = {Phys. Rev. Lett.},
	Number = {5},
	Pages = {051601},
	Primaryclass = {hep-th},
	Slaccitation = {%%CITATION = ARXIV:1309.5850;%%},
	Title = {An integrable deformation of the {$AdS{5} \times S^5$} superstring action},
	Volume = {112},
	Year = {2014}}

@article{Dolan:2003uh,
	Archiveprefix = {arXiv},
	Author = {Dolan, Louise and Nappi, Chiara R. and Witten, Edward},
	Eprint = {hep-th/0308089},
	Journal = {JHEP},
	Pages = {017},
	Slaccitation = {%%CITATION = HEP-TH/0308089;%%},
	Title = {A relation between approaches to integrability in superconformal {Y}ang-{M}ills theory},
	Volume = {0310},
	Year = {2003}}

@article{Dorey:1996gd,
	Archiveprefix = {arXiv},
	Author = {Dorey, P.},
	Eprint = {hep-th/9810026},
	Pages = {85-125},
	Primaryclass = {hep-th},
	Slaccitation = {%%CITATION = HEP-TH/9810026;%%},
	Title = {Exact {$S$} matrices},
	Year = {1996}}

@article{Drinfeld:1987sy,
	Author = {Drinfeld, V.G.},
	Journal = {Sov. Math. Dokl.},
	Pages = {212-216},
	Slaccitation = {%%CITATION = SVMDA,36,212;%%},
	Title = {A New realization of {Y}angians and quantized affine algebras},
	Volume = {36},
	Year = {1988}}

@article{Ferko:2022cix,
    author = "Ferko, Christian and Sfondrini, Alessandro and Smith, Liam and Tartaglino-Mazzucchelli, Gabriele",
    title = "{Root-$T \bar T$ Deformations in Two-Dimensional Quantum Field Theories}",
    eprint = "2206.10515",
    archivePrefix = "arXiv",
    primaryClass = "hep-th",
    doi = "10.1103/PhysRevLett.129.201604",
    journal = "Phys. Rev. Lett.",
    volume = "129",
    number = "20",
    pages = "201604",
    year = "2022"
}

@article{Faddeev:1985qu,
	Author = {Faddeev, L.D. and Reshetikhin, N. Yu.},
	Doi = {10.1016/0003-4916(86)90201-0},
	Journal = {Annals Phys.},
	Pages = {227},
	Slaccitation = {%%CITATION = APNYA,167,227;%%},
	Title = {Integrability of the principal chiral field model in {$1+1$}-dimension},
	Volume = {167},
	Year = {1986}}

@article{Frolov:2003qc,
	Archiveprefix = {arXiv},
	Author = {Frolov, S. and Tseytlin, A. A.},
	Doi = {10.1016/S0550-3213(03)00580-7},
	Eprint = {hep-th/0304255},
	Journal = {Nucl. Phys.},
	Pages = {77-110},
	Slaccitation = {%%CITATION = HEP-TH/0304255;%%},
	Title = {Multi-spin string solutions in {$AdS_5 \times S^5$}},
	Volume = {B668},
	Year = {2003}}

@article{Kazakov:2004qf,
	Archiveprefix = {arXiv},
	Author = {Kazakov, V. A. and Marshakov, A. and Minahan, J. A. and Zarembo, K.},
	Eprint = {hep-th/0402207},
	Journal = {JHEP},
	Pages = {24},
	Primaryclass = {hep-th},
	Slaccitation = {%%CITATION = HEP-TH/0402207;%%},
	Title = {Classical/quantum integrability in {AdS/CFT}},
	Volume = {5},
	Year = {2004}}

@article{Cardy:2018sdv,
	Archiveprefix = {arXiv},
	Author = {Cardy, John},
	Doi = {10.1007/JHEP10(2018)186},
	Eprint = {1801.06895},
	Journal = {JHEP},
	Pages = {186},
	Primaryclass = {hep-th},
	Slaccitation = {%%CITATION = ARXIV:1801.06895;%%},
	Title = {{The $ T\overline{T} $ deformation of quantum field theory as random geometry}},
	Volume = {10},
	Year = {2018}}

@article{Klimcik:2002zj,
	Archiveprefix = {arXiv},
	Author = {Klimcik, Ctirad},
	Doi = {10.1088/1126-6708/2002/12/051},
	Eprint = {hep-th/0210095},
	Journal = {JHEP},
	Pages = {051},
	Primaryclass = {hep-th},
	Reportnumber = {IML-02-XY},
	Slaccitation = {%%CITATION = HEP-TH/0210095;%%},
	Title = {{Yang-Baxter sigma models and dS/AdS T duality}},
	Volume = {12},
	Year = {2002}}

@article{Klimcik:2008eq,
    author = "Klimcik, Ctirad",
    title = "{On integrability of the Yang-Baxter sigma-model}",
    eprint = "0802.3518",
    archivePrefix = "arXiv",
    primaryClass = "hep-th",
    doi = "10.1063/1.3116242",
    journal = "J. Math. Phys.",
    volume = "50",
    pages = "043508",
    year = "2009"
}

@article{Klimcik:2014bta,
    author = "Klimcik, Ctirad",
    title = "{Integrability of the bi-Yang-Baxter sigma-model}",
    eprint = "1402.2105",
    archivePrefix = "arXiv",
    primaryClass = "math-ph",
    doi = "10.1007/s11005-014-0709-y",
    journal = "Lett. Math. Phys.",
    volume = "104",
    pages = "1095--1106",
    year = "2014"
}

@book{Yoshida:2021qfl,
    author = "Yoshida, Kentaroh",
    title = "{Yang{\textendash}Baxter Deformation of 2D Non-Linear Sigma Models}: {Towards Applications to AdS/CFT}",
    doi = "10.1007/978-981-16-1703-4",
    isbn = "978-981-16-1702-7, 978-981-16-1703-4",
    publisher = "Springer",
    address = "Singapore",
    series = "SpringerBriefs in Mathematical Physics",
    volume = "40",
    year = "2021"
}

@article{Luscher:1977uq,
	Author = {L{\"u}scher, M.},
	Doi = {10.1016/0550-3213(78)90211-0},
	Journal = {Nucl. Phys.},
	Pages = {1-19},
	Slaccitation = {%%CITATION = NUPHA,B135,1;%%},
	Title = {Quantum Nonlocal Charges and Absence of Particle Production in the Two-Dimensional Nonlinear Sigma Model},
	Volume = {B135},
	Year = {1978}}

@article{MacKay:2004tc,
	Archiveprefix = {arXiv},
	Author = {MacKay, N. J.},
	Eprint = {hep-th/0409183},
	Journal = {Int. J. Mod. Phys.},
	Pages = {7189-7218},
	Slaccitation = {%%CITATION = HEP-TH/0409183;%%},
	Title = {Introduction to {Y}angian symmetry in integrable field theory},
	Volume = {A20},
	Year = {2005}}

@article{Metsaev:1998it,
	Archiveprefix = {arXiv},
	Author = {Metsaev, R. R. and Tseytlin, A. A.},
	Doi = {10.1016/S0550-3213(98)00570-7},
	Eprint = {hep-th/9805028},
	Journal = {Nucl. Phys.},
	Pages = {109-126},
	Primaryclass = {hep-th},
	Slaccitation = {%%CITATION = HEP-TH/9805028;%%},
	Title = {Type {IIB} superstring action in {$AdS_5 \times S^5$} background},
	Volume = {B533},
	Year = {1998}}

@article{Minahan:2002ve,
	Archiveprefix = {arXiv},
	Author = {Minahan, J. A. and Zarembo, K.},
	Eprint = {hep-th/0212208},
	Journal = {JHEP},
	Pages = {013},
	Slaccitation = {%%CITATION = HEP-TH/0212208;%%},
	Title = {The Bethe-ansatz for {$N = 4$} super Yang-Mills},
	Volume = {0303},
	Year = {2003}}

@article{Novikov:1982ei,
	Author = {Novikov, S.P.},
	Journal = {Usp. Mat. Nauk},
	Pages = {3-49},
	Slaccitation = {%%CITATION = UMANA,37N5,3;%%},
	Title = {The {H}amiltonian formalism and a many valued analog of {M}orse theory},
	Volume = {37N5},
	Year = {1982}}

@article{MacKay:1992he,
    author = "MacKay, N. J.",
    title = "{On the classical origins of Yangian symmetry in integrable field theory}",
    reportNumber = "DTP-92-09",
    doi = "10.1016/0370-2693(92)90280-H",
    journal = "Phys. Lett. B",
    volume = "281",
    pages = "90--97",
    year = "1992",
    note = "[Erratum: Phys.Lett.B 308, 444--444 (1993)]"
}

@article{Polyakov:1984et,
	Author = {Polyakov, Alexander M. and Wiegmann, P. B.},
	Doi = {10.1016/0370-2693(84)90206-5},
	Journal = {Phys. Lett.},
	Pages = {223-228},
	Slaccitation = {%%CITATION = PHLTA,B141,223;%%},
	Title = {Goldstone Fields in Two-Dimensions with Multivalued Actions},
	Volume = {B141},
	Year = {1984}}

@article{Torrielli:2010kq,
	Archiveprefix = {arXiv},
	Author = {Alessandro Torrielli},
	Doi = {10.1007/s11005-011-0491-z},
	Eprint = {1012.4005},
	Journal = {Lett. Math. Phys.},
	Pages = {547-565},
	Primaryclass = {hep-th},
	Slaccitation = {%%CITATION = 1012.4005;%%},
	Title = {Review of {AdS/CFT} Integrability, {C}hapter {VI.2}: {Y}angian Algebra},
	Volume = {99},
	Year = {2010}}

@article{Torrielli:2011gg,
	Archiveprefix = {arXiv},
	Author = {Torrielli, Alessandro},
	Doi = {10.1088/1751-8113/44/26/263001},
	Eprint = {1104.2474},
	Journal = {J. Phys.},
	Pages = {263001},
	Primaryclass = {hep-th},
	Slaccitation = {%%CITATION = ARXIV:1104.2474;%%},
	Title = {Yangians, S-matrices and {$AdS/CFT$}},
	Volume = {A44},
	Year = {2011}}

@article{Witten:1983ar,
	Author = {Witten, Edward},
	Doi = {10.1007/BF01215276},
	Journal = {Commun. Math. Phys.},
	Pages = {455-472},
	Slaccitation = {%%CITATION = CMPHA,92,455;%%},
	Title = {Nonabelian Bosonization in Two-Dimensions},
	Volume = {92},
	Year = {1984}}

@article{Yang:1967bm,
	Author = {Yang, Chen-Ning},
	Doi = {10.1103/PhysRevLett.19.1312},
	Journal = {Phys. Rev. Lett.},
	Pages = {1312-1314},
	Slaccitation = {%%CITATION = PRLTA,19,1312;%%},
	Title = {Some exact results for the many body problems in one dimension with repulsive delta function interaction},
	Volume = {19},
	Year = {1967}}

@article{Zamolodchikov:1978xm,
	Author = {Zamolodchikov, Alexander B. and Zamolodchikov, Alexei B.},
	Doi = {10.1016/0003-4916(79)90391-9},
	Journal = {Annals Phys.},
	Pages = {253-291},
	Slaccitation = {%%CITATION = APNYA,120,253;%%},
	Title = {Factorized {S}-matrices in two dimensions as the exact solutions of certain relativistic quantum field models},
	Volume = {120},
	Year = {1979}}

@article{Borsato:2022tmu,
    author = "Borsato, Riccardo and Ferko, Christian and Sfondrini, Alessandro",
    title = "{Classical integrability of root-TT flows}",
    eprint = "2209.14274",
    archivePrefix = "arXiv",
    primaryClass = "hep-th",
    doi = "10.1103/PhysRevD.107.086011",
    journal = "Phys. Rev. D",
    volume = "107",
    number = "8",
    pages = "086011",
    year = "2023"
}

@article{Ferko:2023wyi,
    author = "Ferko, Christian and Kuzenko, Sergei M. and Smith, Liam and Tartaglino-Mazzucchelli, Gabriele",
    title = "{Duality-invariant nonlinear electrodynamics and stress tensor flows}",
    eprint = "2309.04253",
    archivePrefix = "arXiv",
    primaryClass = "hep-th",
    doi = "10.1103/PhysRevD.108.106021",
    journal = "Phys. Rev. D",
    volume = "108",
    number = "10",
    pages = "106021",
    year = "2023"
}

@inproceedings{Ivanov:2002ab,
      author         = "Ivanov, E. A. and Zupnik, B. M.",
      title          = "{New representation for Lagrangians of selfdual nonlinear
                        electrodynamics}",
      booktitle      = "{Supersymmetries and Quantum Symmetries. Proceedings,
                        16th Max Born Symposium, SQS'01: Karpacz, Poland,
                        September 21-25, 2001}",
      year           = "2002",
      pages          = "235-250",
      eprint         = "hep-th/0202203",
      archivePrefix  = "arXiv",
      primaryClass   = "hep-th",
      SLACcitation   = "%%CITATION = HEP-TH/0202203;%%"
}

@article{Zarembo:2017muf,
    author = "Zarembo, K.",
    editor = "Dorey, Patrick and Korchemsky, Gregory and Nekrasov, Nikita and Schomerus, Volker and Serban, Didina and Cugliandolo, Leticia",
    title = "{Integrability in Sigma-Models}",
    eprint = "1712.07725",
    archivePrefix = "arXiv",
    primaryClass = "hep-th",
    reportNumber = "NORDITA-2017-137, UUITP-04-17",
    month = "12",
    year = "2017"
}

@phdthesis{Seibold:2020ouf,
    author = "Seibold, Fiona K.",
    title = "{Integrable deformations of sigma models and superstrings}",
    doi = "10.3929/ethz-b-000440825",
    school = "Zurich, ETH, Zurich, ETH",
    year = "2020"
}

@article{Bielli:2024khq,
    author = "Bielli, Daniele and Ferko, Christian and Smith, Liam and Tartaglino-Mazzucchelli, Gabriele",
    title = "{T Duality and TT-like Deformations of Sigma Models}",
    eprint = "2407.11636",
    archivePrefix = "arXiv",
    primaryClass = "hep-th",
    doi = "10.1103/PhysRevLett.134.101601",
    journal = "Phys. Rev. Lett.",
    volume = "134",
    number = "10",
    pages = "101601",
    year = "2025"
}

@article{Bielli:2024ach,
    author = "Bielli, Daniele and Ferko, Christian and Smith, Liam and Tartaglino-Mazzucchelli, Gabriele",
    title = "{Integrable higher-spin deformations of sigma models from auxiliary fields}",
    eprint = "2407.16338",
    archivePrefix = "arXiv",
    primaryClass = "hep-th",
    doi = "10.1103/PhysRevD.111.066010",
    journal = "Phys. Rev. D",
    volume = "111",
    number = "6",
    pages = "066010",
    year = "2025"
}

@article{Bielli:2024fnp,
    author = "Bielli, Daniele and Ferko, Christian and Smith, Liam and Tartaglino-Mazzucchelli, Gabriele",
    title = "{Auxiliary Field Sigma Models and Yang-Baxter Deformations}",
    eprint = "2408.09714",
    archivePrefix = "arXiv",
    primaryClass = "hep-th",
    month = "8",
    year = "2024"
}

@article{Bielli:2024oif,
    author = "Bielli, Daniele and Ferko, Christian and Smith, Liam and Tartaglino-Mazzucchelli, Gabriele",
    title = "{Auxiliary field deformations of (semi-)symmetric space sigma models}",
    eprint = "2409.05704",
    archivePrefix = "arXiv",
    primaryClass = "hep-th",
    doi = "10.1007/JHEP01(2025)096",
    journal = "JHEP",
    volume = "01",
    pages = "096",
    year = "2025"
}

@article{Bielli:2025uiv,
    author = "Bielli, Daniele and Ferko, Christian and Galli, Michele and Tartaglino-Mazzucchelli, Gabriele",
    title = "{Higher-spin currents and flows in auxiliary field sigma models}",
    eprint = "2504.17294",
    archivePrefix = "arXiv",
    primaryClass = "hep-th",
    doi = "10.1007/JHEP08(2025)078",
    journal = "JHEP",
    volume = "08",
    pages = "078",
    year = "2025"
}

@article{Ferko:2024ali,
    author = "Ferko, Christian and Smith, Liam",
    title = "{Infinite Family of Integrable Sigma Models Using Auxiliary Fields}",
    eprint = "2405.05899",
    archivePrefix = "arXiv",
    primaryClass = "hep-th",
    doi = "10.1103/PhysRevLett.133.131602",
    journal = "Phys. Rev. Lett.",
    volume = "133",
    number = "13",
    pages = "131602",
    year = "2024"
}

@article{Cesaro:2024ipq,
    author = "Ces\`aro, Mattia and Kleinschmidt, Axel and Osten, David",
    title = "{Integrable auxiliary field deformations of coset models}",
    eprint = "2409.04523",
    archivePrefix = "arXiv",
    primaryClass = "hep-th",
    doi = "10.1007/JHEP11(2024)028",
    journal = "JHEP",
    volume = "11",
    pages = "028",
    year = "2024"
}

@article{Cesaro:2025msv,
    author = "Ces{\`a}ro, Mattia and Osten, David",
    title = "{Integrable deformations of dimensionally reduced gravity}",
    eprint = "2502.01750",
    archivePrefix = "arXiv",
    primaryClass = "hep-th",
    doi = "10.1007/JHEP06(2025)064",
    journal = "JHEP",
    volume = "06",
    pages = "064",
    year = "2025"
}

@article{Itsios:2014vfa,
    author = "Itsios, Georgios and Sfetsos, Konstantinos and Siampos, Konstantinos and Torrielli, Alessandro",
    title = "{The classical Yang{\textendash}Baxter equation and the associated Yangian symmetry of gauged WZW-type theories}",
    eprint = "1409.0554",
    archivePrefix = "arXiv",
    primaryClass = "hep-th",
    reportNumber = "DMUS-MP-14-10",
    doi = "10.1016/j.nuclphysb.2014.10.004",
    journal = "Nucl. Phys. B",
    volume = "889",
    pages = "64--86",
    year = "2014"
}

@article{Loebbert:2016cdm,
    author = "Loebbert, Florian",
    title = "{Lectures on Yangian Symmetry}",
    eprint = "1606.02947",
    archivePrefix = "arXiv",
    primaryClass = "hep-th",
    reportNumber = "HU-EP-16-12",
    doi = "10.1088/1751-8113/49/32/323002",
    journal = "J. Phys. A",
    volume = "49",
    number = "32",
    pages = "323002",
    year = "2016"
}

@article{Fukushima:2024nxm,
    author = "Fukushima, Osamu and Yoshida, Kentaroh",
    title = "{4D Chern-Simons theory with auxiliary fields}",
    eprint = "2407.02204",
    archivePrefix = "arXiv",
    primaryClass = "hep-th",
    reportNumber = "RIKEN-iTHEMS-Report-24, STUPP-24-270",
    doi = "10.1007/JHEP09(2025)001",
    journal = "JHEP",
    volume = "09",
    pages = "001",
    year = "2025"
}

@article{Klose:2016qfv,
    author = {Klose, Thomas and Loebbert, Florian and M{\"u}nkler, Hagen},
    title = "{Nonlocal Symmetries, Spectral Parameter and Minimal Surfaces in AdS/CFT}",
    eprint = "1610.01161",
    archivePrefix = "arXiv",
    primaryClass = "hep-th",
    reportNumber = "HU-EP-16-30",
    doi = "10.1016/j.nuclphysb.2017.01.008",
    journal = "Nucl. Phys. B",
    volume = "916",
    pages = "320--372",
    year = "2017"
}

@article{Forger:1991cm,
    author = "Forger, M. and Laartz, J. and Schaper, U.",
    title = "{Current algebra of classical nonlinear sigma models}",
    eprint = "hep-th/9201025",
    archivePrefix = "arXiv",
    reportNumber = "FREIBURG-THEP-91-10",
    doi = "10.1007/BF02102634",
    journal = "Commun. Math. Phys.",
    volume = "146",
    pages = "397--402",
    year = "1992"
}

@article{Forger:1991ty,
    author = "Forger, M. and Bordemann, M. and Laartz, J. and Schaper, U.",
    title = "{The Lie-Poisson structure of integrable classical nonlinear sigma models}",
    eprint = "hep-th/9201051",
    archivePrefix = "arXiv",
    reportNumber = "FREIBURG-THEP-91-11",
    doi = "10.1007/BF02097062",
    journal = "Commun. Math. Phys.",
    volume = "152",
    pages = "167--190",
    year = "1993"
}

@article{Lozano:1995jx,
    author = "Lozano, Y.",
    title = "{NonAbelian duality and canonical transformations}",
    eprint = "hep-th/9503045",
    archivePrefix = "arXiv",
    reportNumber = "PUPT-1532",
    doi = "10.1016/0370-2693(95)00777-I",
    journal = "Phys. Lett. B",
    volume = "355",
    pages = "165--170",
    year = "1995"
}

@article{Sklyanin:1980ij,
    author = "Sklyanin, E. K.",
    title = "{Quantum version of the method of inverse scattering problem}",
    doi = "10.1007/BF01091462",
    journal = "Zap. Nauchn. Semin.",
    volume = "95",
    pages = "55--128",
    year = "1980"
}

@article{Delduc:2013fga,
    author = "Delduc, Francois and Magro, Marc and Vicedo, Benoit",
    title = "{On classical $q$-deformations of integrable sigma-models}",
    eprint = "1308.3581",
    archivePrefix = "arXiv",
    primaryClass = "hep-th",
    doi = "10.1007/JHEP11(2013)192",
    journal = "JHEP",
    volume = "11",
    pages = "192",
    year = "2013"
}

@article{Delduc:2014kha,
    author = "Delduc, Francois and Magro, Marc and Vicedo, Benoit",
    title = "{Derivation of the action and symmetries of the $q$-deformed $AdS_{5} \times S^{5}$ superstring}",
    eprint = "1406.6286",
    archivePrefix = "arXiv",
    primaryClass = "hep-th",
    doi = "10.1007/JHEP10(2014)132",
    journal = "JHEP",
    volume = "10",
    pages = "132",
    year = "2014"
}

@phdthesis{Lacroix:2018njs,
    author = "Lacroix, Sylvain",
    title = "{Integrable models with twist function and affine Gaudin models}",
    eprint = "1809.06811",
    archivePrefix = "arXiv",
    primaryClass = "hep-th",
    reportNumber = "tel-01900498, 2018LYSEN014",
    school = "Lyon, Ecole Normale Superieure",
    year = "2018"
}

@article{Wess:1971yu,
    author = "Wess, J. and Zumino, B.",
    title = "{Consequences of anomalous Ward identities}",
    doi = "10.1016/0370-2693(71)90582-X",
    journal = "Phys. Lett. B",
    volume = "37",
    pages = "95--97",
    year = "1971"
}

@article{Abdalla:1993sn,
    author = "Abdalla, E. and Abdalla, M. C. B. and Branco, O. H. G. and Saltini, L. E.",
    title = "{Current algebra of super WZNW models}",
    eprint = "hep-th/9301108",
    archivePrefix = "arXiv",
    reportNumber = "IFUSP-P-1026",
    doi = "10.1088/0305-4470/27/13/043",
    journal = "J. Phys. A",
    volume = "27",
    pages = "4709--4716",
    year = "1994"
}

@article{Evans:2000hx,
    author = "Evans, J. M. and Hassan, M. and MacKay, N. J. and Mountain, A. J.",
    title = "{Conserved charges and supersymmetry in principal chiral and WZW models}",
    eprint = "hep-th/0001222",
    archivePrefix = "arXiv",
    reportNumber = "PUPT-1908, DAMTP-99-178, IMPERIAL-TP-99-00-15",
    doi = "10.1016/S0550-3213(00)00257-1",
    journal = "Nucl. Phys. B",
    volume = "580",
    pages = "605--646",
    year = "2000"
}

@article{Ferko:2025bhv,
    author = "Ferko, Christian and Galli, Michele and Huang, Zejun and Tartaglino-Mazzucchelli, Gabriele",
    title = "{Soliton surfaces and the geometry of integrable deformations of the $ {\mathbbm{CP}}^{N-1} $ model}",
    eprint = "2509.05081",
    archivePrefix = "arXiv",
    primaryClass = "hep-th",
    doi = "10.1007/JHEP03(2026)144",
    journal = "JHEP",
    volume = "03",
    pages = "144",
    year = "2026"
}

@article{Sfetsos:2013wia,
    author = "Sfetsos, Konstadinos",
    title = "{Integrable interpolations: From exact CFTs to non-Abelian T-duals}",
    eprint = "1312.4560",
    archivePrefix = "arXiv",
    primaryClass = "hep-th",
    reportNumber = "DMUS-MP-13-23, DMUS--MP--13-23",
    doi = "10.1016/j.nuclphysb.2014.01.004",
    journal = "Nucl. Phys. B",
    volume = "880",
    pages = "225--246",
    year = "2014"
}

@article{Luscher:1977rq,
    author = "Luscher, M. and Pohlmeyer, K.",
    title = "{Scattering of Massless Lumps and Nonlocal Charges in the Two-Dimensional Classical Nonlinear Sigma Model}",
    reportNumber = "DESY-77-65",
    doi = "10.1016/0550-3213(78)90049-4",
    journal = "Nucl. Phys. B",
    volume = "137",
    pages = "46--54",
    year = "1978"
}

@article{Wiegmann:1984pw,
    author = "Wiegmann, P. B.",
    title = "{On the Theory of Nonabelian Goldstone Bosons in Two-dimensions: Exact Solution of the O(3) Nonlinear $\sigma$ Model}",
    reportNumber = "NORDITA-84/32",
    doi = "10.1016/0370-2693(84)90205-3",
    journal = "Phys. Lett. B",
    volume = "141",
    pages = "217",
    year = "1984"
}

@article{Ogievetsky:1987vv,
    author = "Ogievetsky, E. and Reshetikhin, N. and Wiegmann, P.",
    title = "{The principal chiral field in two dimensions on classical lie algebras: The Bethe-ansatz solution and factorized theory of scattering}",
    doi = "10.1016/0550-3213(87)90138-6",
    journal = "Nucl. Phys. B",
    volume = "280",
    pages = "45--96",
    year = "1987"
}

@article{Eichenherr:1979ci,
    author = "Eichenherr, H. and Forger, M.",
    title = "{On the Dual Symmetry of the Nonlinear Sigma Models}",
    reportNumber = "FREIBURG-THEP 79/2a",
    doi = "10.1016/0550-3213(79)90276-1",
    journal = "Nucl. Phys. B",
    volume = "155",
    pages = "381--393",
    year = "1979"
}

@article{Eichenherr:1979hz,
    author = "Eichenherr, H. and Forger, M.",
    title = "{More about non-linear sigma models on symmetric spaces}",
    reportNumber = "FREIBURG-THEP-79-7",
    doi = "10.1016/0550-3213(80)90525-8",
    journal = "Nucl. Phys. B",
    volume = "164",
    pages = "528--535",
    year = "1980",
    note = "[Erratum: Nucl.Phys.B 282, 745--745 (1987)]"
}

@inproceedings{Laartz:1992rw,
    author = "Laartz, J. and Bordemann, M. and Forger, M. and Schaper, U.",
    title = "{New classical r matrices from integrable nonlinear sigma models}",
    booktitle = "{19th International Colloquium on Group Theoretical Methods in Physics}",
    eprint = "hep-th/9209108",
    archivePrefix = "arXiv",
    reportNumber = "FREIBURG-THEP-92-20",
    month = "9",
    year = "1992"
}

@article{Forger:1992kn,
    author = "Forger, M. and Laartz, J. and Schaper, U.",
    title = "{The Algebra of the energy momentum tensor and the Noether currents in classical nonlinear sigma models}",
    eprint = "hep-th/9210130",
    archivePrefix = "arXiv",
    reportNumber = "FREIBURG-THEP-92-24",
    doi = "10.1007/BF02102641",
    journal = "Commun. Math. Phys.",
    volume = "159",
    pages = "319--328",
    year = "1994"
}

@article{delaOssa:1992vci,
    author = "de la Ossa, Xenia C. and Quevedo, Fernando",
    title = "{Duality symmetries from nonAbelian isometries in string theory}",
    eprint = "hep-th/9210021",
    archivePrefix = "arXiv",
    reportNumber = "NEIP-92-004",
    doi = "10.1016/0550-3213(93)90041-M",
    journal = "Nucl. Phys. B",
    volume = "403",
    pages = "377--394",
    year = "1993"
}

@article{Alvarez:1993qi,
    author = "Alvarez, E. and Alvarez-Gaume, Luis and Barbon, J. L. F. and Lozano, Y.",
    title = "{Some global aspects of duality in string theory}",
    eprint = "hep-th/9309039",
    archivePrefix = "arXiv",
    reportNumber = "CERN-TH-6991-93, FTUAM-93-28",
    doi = "10.1016/0550-3213(94)90067-1",
    journal = "Nucl. Phys. B",
    volume = "415",
    pages = "71--100",
    year = "1994"
}

@article{Buscher:1987sk,
    author = "Buscher, T. H.",
    title = "{A Symmetry of the String Background Field Equations}",
    reportNumber = "ITP-SB-87-21",
    doi = "10.1016/0370-2693(87)90769-6",
    journal = "Phys. Lett. B",
    volume = "194",
    pages = "59--62",
    year = "1987"
}

@article{Buscher:1987qj,
    author = "Buscher, T. H.",
    title = "{Path Integral Derivation of Quantum Duality in Nonlinear Sigma Models}",
    reportNumber = "ITP-SB-87-61",
    doi = "10.1016/0370-2693(88)90602-8",
    journal = "Phys. Lett. B",
    volume = "201",
    pages = "466--472",
    year = "1988"
}

@article{Rocek:1991ps,
    author = "Rocek, Martin and Verlinde, Erik P.",
    title = "{Duality, quotients, and currents}",
    eprint = "hep-th/9110053",
    archivePrefix = "arXiv",
    reportNumber = "IASSNS-HEP-91-68, ITP-SB-91-53",
    doi = "10.1016/0550-3213(92)90269-H",
    journal = "Nucl. Phys. B",
    volume = "373",
    pages = "630--646",
    year = "1992"
}

@article{Borsato:2017qsx,
    author = "Borsato, Riccardo and Wulff, Linus",
    title = "{On non-abelian T-duality and deformations of supercoset string sigma-models}",
    eprint = "1706.10169",
    archivePrefix = "arXiv",
    primaryClass = "hep-th",
    reportNumber = "NORDITA-2017-065",
    doi = "10.1007/JHEP10(2017)024",
    journal = "JHEP",
    volume = "10",
    pages = "024",
    year = "2017"
}

@article{Osten:2016dvf,
    author = "Osten, David and van Tongeren, Stijn J.",
    title = "{Abelian Yang{\textendash}Baxter deformations and TsT transformations}",
    eprint = "1608.08504",
    archivePrefix = "arXiv",
    primaryClass = "hep-th",
    reportNumber = "HU-EP-16-29",
    doi = "10.1016/j.nuclphysb.2016.12.007",
    journal = "Nucl. Phys. B",
    volume = "915",
    pages = "184--205",
    year = "2017"
}

@article{Alvarez:1994wj,
    author = "Alvarez, Enrique and Alvarez-Gaume, Luis and Lozano, Yolanda",
    title = "{A Canonical approach to duality transformations}",
    eprint = "hep-th/9406206",
    archivePrefix = "arXiv",
    reportNumber = "CERN-TH-7337-94, FTUAM-94-15",
    doi = "10.1016/0370-2693(94)00982-1",
    journal = "Phys. Lett. B",
    volume = "336",
    pages = "183--189",
    year = "1994"
}

@article{Maillet:1985ec,
    author = "Maillet, Jean Michel",
    title = "{Hamiltonian Structures for Integrable Classical Theories From Graded Kac-moody Algebras}",
    reportNumber = "PAR-LPTHE-85/40",
    doi = "10.1016/0370-2693(86)91289-X",
    journal = "Phys. Lett. B",
    volume = "167",
    pages = "401--405",
    year = "1986"
}

@article{Rajeev:1988hq,
    author = "Rajeev, S. G.",
    title = "{NONABELIAN BOSONIZATION WITHOUT WESS-ZUMINO TERMS. 1. NEW CURRENT ALGEBRA}",
    reportNumber = "UR-1075, ER13065-551",
    doi = "10.1016/0370-2693(89)91528-1",
    journal = "Phys. Lett. B",
    volume = "217",
    pages = "123--128",
    year = "1989"
}

@article{Balog:1993es,
    author = "Balog, J. and Forgacs, P. and Horvath, Z. and Palla, L.",
    editor = "Lust, D. and Weigt, G.",
    title = "{A New family of SU(2) symmetric integrable sigma models}",
    eprint = "hep-th/9307030",
    archivePrefix = "arXiv",
    reportNumber = "ITP-502-BUDAPEST",
    doi = "10.1016/0370-2693(94)90213-5",
    journal = "Phys. Lett. B",
    volume = "324",
    pages = "403--408",
    year = "1994"
}

@article{Fridling:1983ha,
    author = "Fridling, B. E. and Jevicki, A.",
    title = "{Dual Representations and Ultraviolet Divergences in Nonlinear $\sigma$ Models}",
    reportNumber = "BROWN-HET-510",
    doi = "10.1016/0370-2693(84)90987-0",
    journal = "Phys. Lett. B",
    volume = "134",
    pages = "70--74",
    year = "1984"
}

@article{Fradkin:1984ai,
    author = "Fradkin, E. S. and Tseytlin, Arkady A.",
    title = "{Quantum Equivalence of Dual Field Theories}",
    reportNumber = "LEBEDEV-84-67",
    doi = "10.1016/0003-4916(85)90225-8",
    journal = "Annals Phys.",
    volume = "162",
    pages = "31",
    year = "1985"
}

@article{Drummond:2008vq,
    author = "Drummond, J. M. and Henn, J. and Korchemsky, G. P. and Sokatchev, E.",
    title = "{Dual superconformal symmetry of scattering amplitudes in N=4 super-Yang-Mills theory}",
    eprint = "0807.1095",
    archivePrefix = "arXiv",
    primaryClass = "hep-th",
    reportNumber = "LAPTH-1257-08, LPT-ORSAY-08-60",
    doi = "10.1016/j.nuclphysb.2009.11.022",
    journal = "Nucl. Phys. B",
    volume = "828",
    pages = "317--374",
    year = "2010"
}

@article{Drummond:2009fd,
    author = "Drummond, James M. and Henn, Johannes M. and Plefka, Jan",
    editor = "Liu, Feng and Xiao, Zhigang and Zhuang, Pengfei",
    title = "{Yangian symmetry of scattering amplitudes in N=4 super Yang-Mills theory}",
    eprint = "0902.2987",
    archivePrefix = "arXiv",
    primaryClass = "hep-th",
    reportNumber = "HU-EP-09-06, LAPTH-1308-09",
    doi = "10.1088/1126-6708/2009/05/046",
    journal = "JHEP",
    volume = "05",
    pages = "046",
    year = "2009"
}

@article{deLeeuw:2008dp,
    author = "de Leeuw, Marius",
    title = "{Bound States, Yangian Symmetry and Classical r-matrix for the AdS$_5 \times $S$^5$ Superstring}",
    eprint = "0804.1047",
    archivePrefix = "arXiv",
    primaryClass = "hep-th",
    reportNumber = "ITP-UU-08-18, SPIN-08-17",
    doi = "10.1088/1126-6708/2008/06/085",
    journal = "JHEP",
    volume = "06",
    pages = "085",
    year = "2008"
}

@article{Li:2026ecl,
    author = "Li, Bo-Rui and He, Song and Liu, Yu-Xiao",
    title = "{Correlators in $T\bar{T}$ and Root-$T\bar{T}$ Deformed CFTs}",
    eprint = "2604.14939",
    archivePrefix = "arXiv",
    primaryClass = "hep-th",
    month = "4",
    year = "2026"
}

@article{Chen:2025jzb,
    author = "Chen, Liangyu and Du, Zhengyuan and Liu, Kangning and Song, Wei",
    title = "{Symmetries and operators in $T\bar{T}$ deformed CFTs}",
    eprint = "2507.08588",
    archivePrefix = "arXiv",
    primaryClass = "hep-th",
    month = "7",
    year = "2025"
}

@article{Bombardelli:2016rwb,
    author = "Bombardelli, Diego and Cagnazzo, Alessandra and Frassek, Rouven and Levkovich-Maslyuk, Fedor and Loebbert, Florian and Negro, Stefano and Sz{\'e}cs{\'e}nyi, Istvan M. and Sfondrini, Alessandro and van Tongeren, Stijn J. and Torrielli, Alessandro",
    title = "{An integrability primer for the gauge-gravity correspondence: An introduction}",
    eprint = "1606.02945",
    archivePrefix = "arXiv",
    primaryClass = "hep-th",
    reportNumber = "CNRS-16-03, DCPT-16-19, DESY-16-083, DMUS-MP-16-09, HU-EP-16-13, NORDITA-2016-33, HU-MATH-16-08",
    doi = "10.1088/1751-8113/49/32/320301",
    journal = "J. Phys. A",
    volume = "49",
    number = "32",
    pages = "320301",
    year = "2016"
}

@article{Demulder:2023bux,
    author = "Demulder, Saskia and Driezen, Sibylle and Knighton, Bob and Oling, Gerben and Retore, Ana L. and Seibold, Fiona K. and Sfondrini, Alessandro and Yan, Ziqi",
    title = "{Exact approaches on the string worldsheet}",
    eprint = "2312.12930",
    archivePrefix = "arXiv",
    primaryClass = "hep-th",
    reportNumber = "NORDITA 2023-083",
    doi = "10.1088/1751-8121/ad72be",
    journal = "J. Phys. A",
    volume = "57",
    number = "42",
    pages = "423001",
    year = "2024"
}

@Book{Dirac:1964:LQM,
  author =       "Paul A. M. Dirac",
  title =        "Lectures on quantum mechanics",
  volume =       "2",
  publisher =    "Belfer Graduate School of Science, New York",
  year =         "1964",
  MRclass =      "81-01",
  MRnumber =     "MR2220894",
  bibdate =      "Sun May 15 17:38:32 2011",
  series =       "Belfer Graduate School of Science Monographs Series"
}

@book{henneaux1992quantization,
  title={Quantization of gauge systems},
  author={Henneaux, Marc and Teitelboim, Claudio},
  year={1992},
  publisher={Princeton university press}
}

@book{mann2018lagrangianandhamiltoniandynamics,
  title={Lagrangian \& Hamiltonian Dynamics},
  author={Mann, Peter},
  year={2018},
  publisher={Oxford University Press}
}

@article{Drinfeld:1986in,
    author = "Drinfeld, V. G.",
    title = "{Quantum groups}",
    doi = "10.1007/BF01247086",
    journal = "Zap. Nauchn. Semin.",
    volume = "155",
    pages = "18--49",
    year = "1986"
}

@article{Baxter:1972hz,
    author = "Baxter, Rodney J.",
    title = "{Partition function of the eight vertex lattice model}",
    doi = "10.1016/0003-4916(72)90335-1",
    journal = "Annals Phys.",
    volume = "70",
    pages = "193--228",
    year = "1972"
}

@article{Bombardelli:2016scq,
    author = "Bombardelli, Diego",
    title = "{S-matrices and integrability}",
    eprint = "1606.02949",
    archivePrefix = "arXiv",
    primaryClass = "hep-th",
    doi = "10.1088/1751-8113/49/32/323003",
    journal = "J. Phys. A",
    volume = "49",
    number = "32",
    pages = "323003",
    year = "2016"
}

@article{deVega:1984wk,
    author = "de Vega, H. J. and Eichenherr, H. and Maillet, J. M.",
    title = "{{Yang-Baxter} Algebras of Monodromy Matrices in Integrable Quantum Field Theories}",
    reportNumber = "PAR-LPTHE-84/05",
    doi = "10.1016/0550-3213(84)90272-4",
    journal = "Nucl. Phys. B",
    volume = "240",
    pages = "377--399",
    year = "1984"
}

@article{deVega:1983ogx,
    author = "de Vega, H. J. and Eichenherr, H. and Maillet, J. M.",
    title = "{Classical and Quantum Algebras of Nonlocal Charges in $\sigma$ Models}",
    reportNumber = "PAR LPTHE 83-9",
    doi = "10.1007/BF01215281",
    journal = "Commun. Math. Phys.",
    volume = "92",
    pages = "507",
    year = "1984"
}

@article{Kawaguchi:2010jg,
    author = "Kawaguchi, Io and Yoshida, Kentaroh",
    title = "{Hidden Yangian symmetry in sigma model on squashed sphere}",
    eprint = "1008.0776",
    archivePrefix = "arXiv",
    primaryClass = "hep-th",
    reportNumber = "KUNS-2286",
    doi = "10.1007/JHEP11(2010)032",
    journal = "JHEP",
    volume = "11",
    pages = "032",
    year = "2010"
}

@article{Kawaguchi:2011mz,
    author = "Kawaguchi, Io and Orlando, Domenico and Yoshida, Kentaroh",
    title = "{Yangian symmetry in deformed WZNW models on squashed spheres}",
    eprint = "1104.0738",
    archivePrefix = "arXiv",
    primaryClass = "hep-th",
    reportNumber = "KUNS-2328, IPMU11-0054",
    doi = "10.1016/j.physletb.2011.06.007",
    journal = "Phys. Lett. B",
    volume = "701",
    pages = "475--480",
    year = "2011"
}

@article{Abdalla:1993sc,
    author = "Abdalla, E. and Abdalla, M. C. B. and Brunelli, J. C. and Zadra, A.",
    title = "{The Algebra of nonlocal charges in nonlinear sigma models}",
    eprint = "hep-th/9307071",
    archivePrefix = "arXiv",
    reportNumber = "IFUSP-P-1061, IC-93-220, CERN-TH-7061-93",
    doi = "10.1007/BF02112321",
    journal = "Commun. Math. Phys.",
    volume = "166",
    pages = "379--396",
    year = "1994"
}

@article{Saltini:1995xr,
    author = "Saltini, L. E. and Zadra, A.",
    title = "{Algebra of nonlocal charges in supersymmetric nonlinear sigma models}",
    eprint = "solv-int/9511007",
    archivePrefix = "arXiv",
    reportNumber = "IFUSP-P-1188",
    doi = "10.1142/S0217751X97000487",
    journal = "Int. J. Mod. Phys. A",
    volume = "12",
    pages = "419--436",
    year = "1997"
}

@article{Hatsuda:2004it,
    author = "Hatsuda, Machiko and Yoshida, Kentaroh",
    title = "{Classical integrability and super Yangian of superstring on AdS(5) x S**5}",
    eprint = "hep-th/0407044",
    archivePrefix = "arXiv",
    reportNumber = "KEK-TH-970",
    doi = "10.4310/ATMP.2005.v9.n5.a2",
    journal = "Adv. Theor. Math. Phys.",
    volume = "9",
    number = "5",
    pages = "703--728",
    year = "2005"
}

@article{Kawaguchi:2012ve,
    author = "Kawaguchi, Io and Matsumoto, Takuya and Yoshida, Kentaroh",
    title = "{The classical origin of quantum affine algebra in squashed sigma models}",
    eprint = "1201.3058",
    archivePrefix = "arXiv",
    primaryClass = "hep-th",
    reportNumber = "KUNS-2379",
    doi = "10.1007/JHEP04(2012)115",
    journal = "JHEP",
    volume = "04",
    pages = "115",
    year = "2012"
}

@inproceedings{Dolan:2004ps,
    author = "Dolan, Louise and Nappi, Chiara R. and Witten, Edward",
    title = "{Yangian symmetry in D = 4 superconformal Yang-Mills theory}",
    booktitle = "{3rd International Symposium on Quantum Theory and Symmetries}",
    eprint = "hep-th/0401243",
    archivePrefix = "arXiv",
    doi = "10.1142/9789812702340_0036",
    pages = "300--315",
    month = "1",
    year = "2004"
}

@article{Dolan:2004ys,
    author = "Dolan, Louise and Nappi, Chiara R.",
    title = "{Spin models and superconformal Yang-Mills theory}",
    eprint = "hep-th/0411020",
    archivePrefix = "arXiv",
    doi = "10.1016/j.nuclphysb.2005.04.006",
    journal = "Nucl. Phys. B",
    volume = "717",
    pages = "361--386",
    year = "2005"
}

@article{Torrielli:2016ufi,
    author = "Torrielli, Alessandro",
    title = "{Lectures on Classical Integrability}",
    eprint = "1606.02946",
    archivePrefix = "arXiv",
    primaryClass = "hep-th",
    reportNumber = "DMUS-MP-16-04",
    doi = "10.1088/1751-8113/49/32/323001",
    journal = "J. Phys. A",
    volume = "49",
    number = "32",
    pages = "323001",
    year = "2016"
}

@article{Costello:2019tri,
    author = "Costello, Kevin and Yamazaki, Masahito",
    title = "{Gauge Theory And Integrability, III}",
    eprint = "1908.02289",
    archivePrefix = "arXiv",
    primaryClass = "hep-th",
    reportNumber = "IPMU19-0110",
    month = "8",
    year = "2019"
}

@article{Vicedo:2019dej,
    author = "Vicedo, Benoit",
    title = "{4D Chern{\textendash}Simons theory and affine Gaudin models}",
    eprint = "1908.07511",
    archivePrefix = "arXiv",
    primaryClass = "hep-th",
    doi = "10.1007/s11005-021-01354-9",
    journal = "Lett. Math. Phys.",
    volume = "111",
    number = "1",
    pages = "24",
    year = "2021"
}

@article{Delduc:2019whp,
    author = "Delduc, Francois and Lacroix, Sylvain and Magro, Marc and Vicedo, Benoit",
    title = "{A unifying 2D action for integrable $\sigma $-models from 4D Chern{\textendash}Simons theory}",
    eprint = "1909.13824",
    archivePrefix = "arXiv",
    primaryClass = "hep-th",
    doi = "10.1007/s11005-020-01268-y",
    journal = "Lett. Math. Phys.",
    volume = "110",
    number = "7",
    pages = "1645--1687",
    year = "2020"
}

@phdthesis{Cole:2024hyt,
    author = "Cole, Lewis",
    title = "{Applications of Chern-Simons Theories to Integrable Sigma Models}",
    doi = "10.23889/SUThesis.69223",
    school = "Swansea U.",
    month = "9",
    year = "2024"
}

@mastersthesis{Liniado:2025sin,
    author = "Liniado, Joaquin",
    title = "{Chern-Simons Theories and Integrability}",
    eprint = "2509.19188",
    archivePrefix = "arXiv",
    primaryClass = "hep-th",
    type = "Other thesis",
    month = "9",
    year = "2025"
}

@article{Boujakhrout:2023qlz,
    author = "Boujakhrout, Y. and Saidi, E. H. and Laamara, R. Ahl and Drissi, L. B.",
    title = "{Superspin chains solutions from 4D Chern-Simons theory}",
    eprint = "2309.04337",
    archivePrefix = "arXiv",
    primaryClass = "hep-th",
    doi = "10.1007/JHEP04(2024)043",
    journal = "JHEP",
    volume = "04",
    pages = "043",
    year = "2024"
}

@article{Cole:2024sje,
    author = "Cole, Lewis T. and Cullinan, Ryan A. and Hoare, Ben and Liniado, Joaquin and Thompson, Daniel C.",
    title = "{Gauging the diamond: integrable coset models from twistor space}",
    eprint = "2407.09479",
    archivePrefix = "arXiv",
    primaryClass = "hep-th",
    doi = "10.1007/JHEP12(2024)202",
    journal = "JHEP",
    volume = "12",
    pages = "202",
    year = "2024"
}

@article{Lacroix:2024wrd,
    author = "Lacroix, Sylvain and Wallberg, Anders",
    title = "{Geometry of the spectral parameter and renormalisation of integrable sigma-models}",
    eprint = "2401.13741",
    archivePrefix = "arXiv",
    primaryClass = "hep-th",
    reportNumber = "CERN-TH-2024-008",
    doi = "10.1007/JHEP05(2024)108",
    journal = "JHEP",
    volume = "05",
    pages = "108",
    year = "2024"
}

@article{Lacroix:2025ias,
    author = "Lacroix, Sylvain and Levine, Nat and Wallberg, Anders",
    title = "{1-loop renormalisability of integrable sigma-models from 4d Chern-Simons theory}",
    eprint = "2505.23890",
    archivePrefix = "arXiv",
    primaryClass = "hep-th",
    reportNumber = "CERN-TH-2025-107",
    doi = "10.1007/JHEP09(2025)153",
    journal = "JHEP",
    volume = "09",
    pages = "153",
    year = "2025"
}

@article{Sakamoto:2025hwi,
    author = "Sakamoto, Jun-ichi and Tateo, Roberto and Yamazaki, Masahito",
    title = "{$T\overline{T}$ and root-$T\overline{T}$ deformations in four-dimensional Chern-Simons theory}",
    eprint = "2509.12303",
    archivePrefix = "arXiv",
    primaryClass = "hep-th",
    reportNumber = "TTI-MATHPHYS-39",
    doi = "10.1007/JHEP01(2026)084",
    journal = "JHEP",
    volume = "01",
    pages = "084",
    year = "2026"
}

@article{Fukushima:2025tlj,
    author = "Fukushima, Osamu and Matsumoto, Takaki and Yoshida, Kentaroh",
    title = "{The Courant-Hilbert construction in 4D Chern-Simons theory}",
    eprint = "2509.22080",
    archivePrefix = "arXiv",
    primaryClass = "hep-th",
    reportNumber = "RIKEN-iTHEMS-Report-25, STUPP-25-288",
    doi = "10.1007/JHEP01(2026)122",
    journal = "JHEP",
    volume = "01",
    pages = "122",
    year = "2026"
}

@article{Cole:2025zmq,
    author = "Cole, Lewis T. and Hoare, Ben",
    title = "{Integrable models from 4d holomorphic BF theory}",
    eprint = "2512.15566",
    archivePrefix = "arXiv",
    primaryClass = "hep-th",
    month = "12",
    year = "2025"
}

@article{Bittleston:2026tdr,
    author = "Bittleston, Roland and Mason, Lionel and Moosavian, Seyed Faroogh",
    title = "{The Self-Duality Equations on a Riemann Surface and Four-Dimensional Chern-Simons Theory}",
    eprint = "2601.05309",
    archivePrefix = "arXiv",
    primaryClass = "hep-th",
    month = "1",
    year = "2026"
}

@article{Stedman:2026awg,
    author = "Stedman, Jake",
    title = "{Hamiltonian Analysis of Doubled 4d Chern-Simons}",
    eprint = "2601.18647",
    archivePrefix = "arXiv",
    primaryClass = "hep-th",
    month = "1",
    year = "2026"
}

@article{Ashwinkumar:2026dwd,
    author = "Ashwinkumar, Meer and Blau, Matthias",
    title = "{Integrable Deformations of the Breitenlohner-Maison Model from 4d Chern-Simons Theory}",
    eprint = "2604.26452",
    archivePrefix = "arXiv",
    primaryClass = "hep-th",
    month = "4",
    year = "2026"
}

@article{Baglioni:2025tsc,
    author = "Baglioni, Nicola and Bielli, Daniele and Galli, Michele and Tartaglino-Mazzucchelli, Gabriele",
    title = "{Relating auxiliary field formulations of $4d$ duality-invariant and $2d$ integrable field theories}",
    eprint = "2512.21982",
    archivePrefix = "arXiv",
    primaryClass = "hep-th",
    month = "12",
    year = "2025"
}
\end{document}